\newtheorem{theo}{Theorem}
\newtheorem{pro}{Proposition}[section]
\newtheorem{lem}[pro]{Lemma}
\newtheorem{coro}[theo]{Corollary}
\newtheorem{remark}[pro]{Remark}
\newtheorem{defi}[pro]{Definition}
\def\Xint#1{\mathchoice
   {\XXint\displaystyle\textstyle{#1}}%
   {\XXint\textstyle\scriptstyle{#1}}%
   {\XXint\scriptstyle\scriptscriptstyle{#1}}%
   {\XXint\scriptscriptstyle\scriptscriptstyle{#1}}%
   \!\int}
\def\XXint#1#2#3{{\setbox0=\hbox{$#1{#2#3}{\int}$}
     \vcenter{\hbox{$#2#3$}}\kern-.5\wd0}}
\def\dashint{\Xint-}
\DeclareMathOperator{\supp}{\mathrm{Supp}}
\def\1{\mathds{1}}
\def\admissible{{\mathcal{A}_m}}
\def\dist{\text{dist}\ }
\def\div{\mathrm{div} \ }
\def\D{\displaystyle}
\def\({\left(}
\def\){\right)}
\def\E{\Sigma}
\def\ep{\varepsilon}
\def\hal{\frac{1}{2}}
\def\indic{\mathds{1}}
\def\Ld{{L^2_{loc}(\mr^d,\mr^d)}}
\def\Lp{{L^p_{loc}(\mr^d,\mr^d)}}
\def\mn{\mathbb{N}}
\def\mr{\mathbb{R}}
\def\nab{\nabla}
\def\p{\partial}
\def\ro{\rho}
\def\supp{\text{Supp}}
\def\vp{\varphi}
\def\w{{H_n}}
\def\W{\mathcal{W}}
\def\z{\zeta}
\def\Z{Z_n^\beta}
\def\mK{{\mathcal{K}}}
\def\xbf{{\mathbf x}}
\def\I{\mathcal{E}}
\def\j{\mathbf{E}}
\def\cK{\mathcal{K}}
\def\hj{\widehat{\j}}
\def\g{w}
\def\Q{\mathbb{P}_{n,\beta}}
\def\Qk{\mathbb{P}_{n,\beta} ^{(k)}}
\def\Qone{\mathbb{P}_{n,\beta} ^{(1)}}
\def\hel{h_{n,\ell}}
\def\het{h'_{n,\eta}}
\def\ba{\mathcal{A}}
\def\bam{\overline{\mathcal{A}}_m}
\def\bai{\overline{\mathcal{A}}_1}
\def\P{\mathcal{P}}
\def\F{\mathcal{F}}
\def\En{\mathcal{E}}
\def\In{\mathcal{I}_n}
\newcommand{\Fnbeta}{\F_{n,\beta}}
\newcommand{\Fnbetae}{F_{n,\beta}}
\newcommand{\mubet}{\mu_{\beta}}
\newcommand{\R}{\mathbb{R}}
\newcommand{\wto}{\rightharpoonup}
\def\dist{\mathrm{dist}}
\def \La{\Lambda}
\def\Lat{\tilde{\Lambda}}
\def\Lah{\hat{\Lambda}}
\newcommand{\hti}{\tilde{h}}
\newcommand{\hb}{\bar{h}}
\newcommand{\EetR}{F_{\eta,R}}
\newcommand{\deleta}{\delta ^{(\eta)}}
\newcommand{\delell}{\delta ^{(\ell)}}
\newcommand{\hext}{h^{\rm rem}}
\newcommand{\hint}{V_{\eta,R}}
\newcommand{\one}{\mathds{1}}
\newcommand{\mubf}{\boldsymbol{\mu}}
\numberwithin{equation}{section}
\title{Higher Dimensional Coulomb Gases and Renormalized Energy Functionals}
\author{N. Rougerie\footnote{Universit\'e Grenoble 1 \& CNRS, LPMMC, UMR 5493, B.P. 166, 38042 Grenoble, France}  \  and S. Serfaty\footnote{UPMC Univ. Paris 6,  UMR 7598 Laboratoire Jacques-Louis Lions,  Paris, F-75005 France \newline \& Courant Institute, New York University, 251 Mercer st, NY NY 10012, USA.} 
}
\date{July 9, 2013}
\begin{document}

\maketitle

\begin{abstract}
We consider a classical system of $n$ charged particles in an external confining potential, in any dimension $d\ge 2$. The particles interact via  pairwise repulsive Coulomb forces and the coupling parameter is of order $n^{-1}$ (mean-field scaling). By a suitable splitting of the Hamiltonian, we extract the next to leading order term  in the ground state energy, beyond the mean-field limit. We show that this next order term, which characterizes the fluctuations of the system, is governed by a new ``renormalized  energy" functional providing a way to compute the total Coulomb energy of a jellium (i.e. an infinite set of point charges screened by a uniform neutralizing background), in any dimension. The renormalization that cuts out the infinite part of the  energy is achieved by  smearing out the point charges at a small scale, as in Onsager's lemma.  We obtain consequences for the statistical mechanics of the Coulomb gas: next to leading order asymptotic expansion of the free energy or partition function, 
characterizations of the Gibbs measures, estimates on the local charge fluctuations and factorization estimates for reduced densities. This extends results of Sandier and Serfaty to dimension higher than two by an alternative approach. 
\end{abstract}

\tableofcontents

\section{Introduction}

We study the equilibrium properties of a classical Coulomb gas or ``one-component plasma": a system of $n$ classical charged particles living in the full space of dimension $d\ge 2$, interacting via Coulomb forces and confined by an external electrostatic potential $V$. We will be interested in the mean-field regime where the number $n$ of particles is large and the pair-interaction strength (coupling parameter) scales as the inverse of $n$. We study the ground states of the system as well its statistical mechanics when temperature is added. Denoting $x_1,\ldots,x_n$ the positions of the particles, the total energy at rest of such a system is given by the Hamiltonian 
\begin{equation}\label{wn}
\w(x_1, \dots, x_n)=   \sum_{i\neq j} \g (x_i-x_j)  +n \sum_{i=1}^n V(x_i)
\end{equation}
where 
\begin{equation}\label{eq:coul ker}
\begin{cases}
\displaystyle \g(x)=\frac{1}{|x|^{d-2}}& \text{if}\  d \ge 3\\
\displaystyle \g(x)= -   \log |x|& \text{if } \ d=2
\end{cases}
\end{equation} 
is  a multiple of the Coulomb potential in dimensions $d\ge 2$, i.e. we have
\begin{equation}
-\Delta \g= c_d \delta_0  \end{equation}
with
\begin{equation}\label{defc}
c_2 = 2\pi, \qquad c_d = (d-2)|\mathbb{S}^{d-1}| \ \text{when} \ d\ge 3
\end{equation}
and $\delta_0$ is the Dirac mass at the origin. The one-body potential $V:\mr^d \to \mr$ is a continuous function, growing at infinity (\emph{confining} potential). More precisely, we assume 
\begin{equation}\label{eq:trap pot}
\begin{cases}
\displaystyle \lim_{|x|\to \infty} V(x) = +\infty  & \text{ if }  d \ge 3\\
\displaystyle \lim_{|x|\to \infty} \left(\frac{V(x)}{2} - \log |x|\right) = +\infty  & \text{ if }  d=2.
\end{cases}
\end{equation} 
Note the factor $n$ in front of the one-body term (second term) in \eqref{wn} that puts us in a mean-field scaling where the one-body energy and the two-body energy (first term) are of the same order of magnitude. This choice is equivalent to demanding that the pair-interaction strength be of order $n^{-1}$. One can always reduce to this situation in the particular case where the trapping potential $V$ has some homogeneity, which is particularly important in applications. 
We will not treat at all the case of one-dimensional Coulomb gases (where the interaction kernel $\g$ is $|x|$), since this case has been shown to be essentially completely solvable a long time ago \cite{am,len1,len2,BL,Kun}.

Classical Coulomb systems are    fundamental  systems of statistical mechanics, since they can be seen as a toy model for matter, containing the truly long-range nature of electrostatic interactions. Studies in this direction include \cite{sm,LO,jlm,PS}, see \cite{ln} for a review. Another motivation is that, as was pointed out by Wigner \cite{wigner} and exploited by Dyson \cite{dyson}, two-dimensional Coulomb systems are directly related to Gaussian random matrices, more precisely the Ginibre ensemble, and such random matrix models have also received much attention for their own sake. A similar connection exists between ``log-gases" in dimension 1 and the GUE and GOE ensembles of random matrices, as well as more indirectly to orthogonal polynomial ensembles. For more details on these aspects we refer to \cite{For}, and for an introduction to the random matrix aspect to the texts \cite{agz,mehta,deift}. A recent trend in random matrix theory is the study of universality with respect to the entries' 
statistics, i.e. the fact that results for Gaussian entries carry over to the general case, see e.g. \cite{TV,ESY}.

We are interested in equilibrium properties of the system in the regime $n\to \infty$, that is on the large particle number asymptotics of the ground state and the Gibbs state at given temperature. In the former case we consider configurations $(x_1,\ldots,x_n)$ that minimize the total energy \eqref{wn}. We will denote 
\begin{equation}\label{eq:gse}
E_n := \min_{\R ^{dn}} \w 
\end{equation}
the ground state energy. It is well-known (we give references below) that to leading order 
\begin{equation}\label{eq:gse first order}
E_n = n ^2 \En[\mu_0] (1+o(1)) 
\end{equation}
in the limit $n\to \infty$ where 
\begin{equation}\label{eq:def MF ener}
\En[\mu] = \iint_{\mr^d\times \mr^d} \g(x-y) \, d\mu(x)\, d\mu(y)+ \int_{\mr^d}V(x)\, d\mu(x)
\end{equation}
is the mean-field energy functional defined for Radon measures $\mu$, and $\mu_0$ (the equilibrium measure) is the minimizer of $\En$ amongst probability measures on $\R ^d$. In this paper we quantify precisely the validity of \eqref{eq:gse first order} and characterize the next to leading order correction. We also study the consequences of these asymptotics on minimizing and thermal configurations. By the latter we mean the Gibbs state at inverse temperature $\beta$, i.e. the probability law
\begin{equation}\label{eq:defi Gibbs state}
\Q (x_1,\ldots,x_n) =\frac{1}{\Z} e^{-\frac{\beta}{2} \w (x_1,\ldots,x_n)} \, dx_1 \ldots dx_n
\end{equation}
where $\Z$ is a normalization constant, and we are again interested in obtaining next order expansions of the partition function $\Z$ as well as consequences for the distributions of the points according to the temperature. This program has been carried out in \cite{ss2d} in dimension $d=2$ and here we extend it  to arbitrary higher dimension -- in particular the more physical case $d=3$, and  provide at the same time a simpler  approach  to recover (most of) the results of dimension $2$. In \cite{ss2d} it was shown that the next order corrections are related to a ``renormalized energy" denoted $W$ -- so named in reference to the procedure used in its definition and related functionals used in Ginzburg-Landau theory \cite{BBH,SSbook}, but the derivation of this object was restricted to dimension 2 due to an obstruction  that we still do not know how to overcome (more precisely the derivation relies on a ``ball construction method" which crucially uses the conformal invariance of the Coulomb kernel in two 
dimensions). Here we again connect the problem to a slightly different ``renormalized energy," this time denoted  $\W$ (it is the same in good cases, but different in general) and the approach to its definition and derivation are not at all restricted by the dimension. They rely on smearing out point charges and Onsager's lemma \cite{Ons}, a celebrated tool that has been much used in the proof of the stability of matter (see \cite{LO} and \cite[Chapter 6]{LieSei}).

We choose to use an electrostatic/statistical mechanics vocabulary that is more fit to general dimensions but the reader should keep in mind the various applications of the Coulomb gas, especially in two dimensions: Fekete points in polynomial interpolation \cite{ST}, Gaussian random matrices which correspond to $\beta=2$ \cite{For}, vortex systems in classical and quantum fluids \cite{CLMP,gl13,CY,CRY}, fractional quantum Hall physics \cite{Gir,RSY1,RSY2} ...

\medskip

The easiest way to think of the limit \eqref{eq:gse first order} is as a continuum limit: the one-body potential $nV$ confines the large number $n$ of particles in a bounded region of space, so that the mean distance between points goes to zero and the empirical measure 
\begin{equation}\label{eq:defi empir}
\mu_n := \frac{1}{n} \sum_{i=1} ^n \delta_{x_i}
\end{equation}
of a minimizing configuration converges to a density $\ro$.
The functional \eqref{eq:def MF ener} is nothing but the continuum energy corresponding to \eqref{wn}: the first term is the classical Coulomb interaction energy of the charge distribution $\mu$ and the second the potential energy in the potential $V$. If the interaction potential $\g$ was regular at the origin we could write 
\begin{align*}
\w (x_1,\ldots,x_n) &= n ^2 \left( \int_{\mr^d}V(x)\, d\mu_n(x) + \iint_{\mr^d\times \mr^d} \g(x-y) \, d\mu_n(x)\, d\mu_n(y)\right) - n \g(0) \\
&= n ^2 \En[\mu_n] \left( 1+ O(n ^{-1})\right)
\end{align*}
and \eqref{eq:gse first order} would  easily follow from a simple compactness argument. In the case where $\g$ has a singularity at the origin, a regularization procedure is needed but this mean-field limit result still holds true, meaning that  for minimizers of $\w$, the empirical measure $\mu_n$ converges to $\mu_0$, the minimizer of \eqref{eq:def MF ener}.  This is standard and can be found in  a variety of sources: e.g. \cite[Chap. 1]{ST} for the Coulomb kernel in dimension $2$, \cite{CGZ} for a more general setting including possibly non-coulombian kernels, or \cite{ln} for  a simple general treatment. This  leading order result is often complemented by  a much stronger  large deviations principle in the case with temperature: a large deviations principle with speed $n^2$ and good rate function $\beta \En$ holds;  see \cite{hiaipetz,BZ,hardy} for the two-dimensional Coulomb case (with $\beta=2$), which can be adapted to any finite temperature and any dimension \cite{CGZ,ln}. This is also of interest in 
the more elaborate settings of complex manifolds, cf. e.g. \cite{berman,bbn} and references therein.

\medskip

Another way to think of the mean-field limit, less immediate in the present context but more suited for generalizations in statistical and quantum mechanics, is as follows. In reality, particles are indistinguishable, and the configuration of the system should thus be described by a probability measure $\mubf(\xbf) = \mubf (x_1,\ldots,x_n)$, which is  symmetric under particle exchange:
\begin{equation}\label{eq:symmetry}
\mubf(x_1,\ldots,x_n) = \mubf (x_{\sigma(1)},\ldots,x_{\sigma(n)}) \mbox{ for any permutation } \sigma.  
\end{equation}
An optimal (i.e. most likely) configuration $\mubf_n$ is found by minimizing the $n$-body energy functional
\begin{equation}\label{eq:ener N}
\In [\mubf] := \int_{\R ^{dn}}  H_n (\xbf) \mubf(d\xbf)  
\end{equation}
amongst symmetric probability measures $\mubf \in \P_s (\R ^{dn})$ (probability measures on $\R ^{dn}$ satisfying \eqref{eq:symmetry}). It is immediate to see that $\mubf_n$ must be a convex superposition of measures of the form $\delta_{(x_1,\ldots,x_n)}$ with $(x_1,\ldots,x_n)$ minimizing $\w$ (in other words it has to be a symmetrization of some $\delta_{\xbf}$ for a minimizing configuration $\xbf$). The infimum of the functional \eqref{eq:ener N} of course coincides with 
\begin{equation*}
\inf_{\mubf \in \P_s (\R ^{dn})} \int_{\R ^{dn}} H_n (\xbf) \mubf(d\xbf)  = E_n 
\end{equation*}
and a way to understand the asymptotic formula \eqref{eq:gse first order} is to think of the minimizing $\mubf_n$ as being almost factorized 
\begin{equation}\label{eq:intro factor}
\mubf_n (x_1,\ldots,x_n) \approx \rho ^{\otimes n} (x_1,\ldots,x_n) = \prod_{j=1} ^n \rho (x_i)
\end{equation}
with a regular probability measure $\rho \in \P (\R ^d)$. Plugging this ansatz into \eqref{eq:ener N} we indeed obtain 
\[
\In [\rho ^{\otimes n}] =  n ^2 \En[\rho] \left( 1+ O(n ^{-1})\right)
\]
and the optimal choice is $\rho=\mu_0$. The mean-field limit can thus  also be understood as one where correlations amongst the particles of the system vanish in the limit $n\to \infty$, which is the meaning of the factorized ansatz. 

In this paper we shall pursue both the ``uncorellated limit'' and the ``continuum limit" points of view beyond leading order considerations. That is, we shall quantify to which precision (and in which sense) the empirical measure \eqref{eq:empirical measure} of a minimizing configuration can be approximated by $\mu_0$ and the $n$-body ground state factorizes in the form \eqref{eq:intro factor} with $\rho = \mu_0$.  Previous results related to the ``uncorellated limit'' point of view may be found in \cite{CLMP,Kie1,Kie2,KS,MS}.
Another way of viewing this is that we are looking at characterizing the ``fluctuations" of the distribution of points  around its limit measure, in other words the behavior of $n( \mu_n - \mu_0)$ where $\mu_n$ is the empirical measure \eqref{eq:defi empir}. In the  probability literature, such questions are now understood in dimension $2$  for the particular determinantal case $\beta=2$ \cite{ahm,ahm2}, and in dimension $1$ with the logarithmic interaction \cite{vv,BEY1,BEY2}. Our results are less precise (we do not exhibit exact local statistics of spacings), but they are valid for any $\beta$, any $V$, and any  dimension $d\ge 2$ (we could also treat the log gas in dimension $1$, borrowing ideas from  \cite{ss1d} to complete those we use here). We are not aware of any previous results giving any information on such fluctuations for ground states or thermal states in dimension $\ge 3$. 

\medskip

Most of our results follow from almost exact splitting formulae for the Hamiltonian \eqref{wn}  that reveal the corrections beyond leading order in \eqref{eq:gse first order}, in the spirit of \cite{ss2d}. Let us first explain  what physics governs these corrections. As already mentioned, points minimizing $\w$ tend to be densely packed in a bounded region of space (the support of $\mu_0$, that we shall denote $\E$) in the limit $n\to \infty$. Their distribution (i.e. the empirical measure) has to follow $\mu_0$ on the macroscopic scale but this requirement still leaves a lot of freedom on the configuration at the {\it microscopic scale}, that is on lengthscales  of order $n ^{-1/d}$ (the mean inter-particle distance). A natural  idea is thus to blow-up at scale $n ^{-1/d}$ in order to consider configurations where points are typically separated by distances of order unity, and investigate which microscopic configuration is favored. On such length scales, the equilibrium measure $\mu_0$ varies slowly so the 
points will want to follow a constant density given locally by the value of $\mu_0$. Since the problem is electrostatic in nature it is intuitive that the correct way to measure the distance between the configuration of points and the local value of the equilibrium measure should use the Coulomb energy. This leads to the idea that the local energy around a blow-up origin should be the electrostatic energy of what is often called a {\it  jellium} in physics: an infinite collection of interacting particles in a constant neutralizing background of opposite charge, a model originally introduced in \cite{wigner1}. At this microscopic  scale the pair-interactions will no longer be of mean-field type, their strength will be of order $1$.  The splitting formula will allow to separate exactly the Coulomb energy of this jellium as the next to leading order term, except that what will come out is more precisely some average of all the  energies of the jellium configurations obtained after blow-up around all possible 
origins.

\medskip

Of course it is a  delicate matter to define the energy of the general infinite jellium in a mathematically rigorous way: one has to take into account the pair-interaction energy of infinitely many charges, without assuming any local charge neutrality, and the overall energy may be finite only via screening effects between the charges and the neutralizing background that are difficult to quantify. This has been done  for the first time in 2D in \cite{gl13,ss2d}, the energy functional for the jellium being  the renormalized energy $W$ alluded to above.  As already mentioned, one of the main contributions of the present work is to present an alternate definition $\W$ that generalizes better to higher dimensions. The precise definition will be given later, but we can already state our asymptotic formula for the ground state energy (minimum of $\w$), where $\alpha_d$ denotes the minimum of $\W$ for a jellium of density $1$ in dimension $d$:
\begin{equation}\label{eq:gse second order}
\boxed{E_n = 
\begin{cases} 
 \displaystyle      n ^2 \En [\mu_0] + \frac{n ^{2-2/d}}{c_d}\alpha_d \int \mu_0 ^{2-2/d} (x)  dx + o \left(n ^{2-2/d}\right) \mbox{ if } d\geq 3\\ 
 \displaystyle      n ^2 \En [\mu_0] - \frac{n}{2}\log n + n\left(\frac{\alpha_2}{2\pi} -\hal \int \mu_0(x) \log \mu_0(x) \,dx \right) + o\left( n\right)
       \mbox{ if } d= 2.
      \end{cases}
}
\end{equation}
This formula encodes the double scale nature of the charge distribution: the first term is the familiar mean-field energy and is due to the points following the macroscopic distribution  $\mu_0$ -- we assume that the probability $\mu_0$ has a density that we also denote $\mu_0$ by abuse of notation. The next order correction {which happens to lie at the order $n^{2-2/d}$} governs the configurations at the microscopic scale {and the crystallization, by selecting configurations which }  minimize $\W$ (on average, with respect to the blow-up centers). 

The factors involving $\mu_0$ in the correction come from scaling, and  from the fact that the points locally see a neutralizing background whose charge density is given by the value of $\mu_0$. The scaling properties of the renormalized energy imply that the minimal energy of a jellium  with neutralizing density $\mu_0 (x)$  is $\alpha_d \mu_0 ^{2-2/d} (x)$ (respectively $\mu_0(x)\alpha_2 - \mu_0 (x) \log \mu_0(x)$ when $d=2$). Integrating this energy density on the support of $\mu_0$ leads to the formula.
 
The interpretation of the correction is thus that around (almost) any point $x$ in the support of $\mu_0$ there are approximately $\mu_0(x)$ points per unit volume, distributed so as to minimize a jellium energy with background density $\mu_0(x)$. Due to the properties of the jellium, this implies that, up to a $\mu_0(x)$-dependent rescaling, the local distribution of particles are the same around any blow-up origin $x$ in the support of $\mu_0$. This can be interpreted as a result of universality with respect to the potential $V$ in \eqref{wn}, in connection with recent works on the 1D log gas \cite{BEY1,BEY2,ss1d}.  

We remark that even the weaker result that the correction in \eqref{eq:gse second order} is exactly of order $n^{2-2/d}$ (with an extra term of order $n\log n$ in 2D due to the scaling properties of the log) did not seem to have been previously noticed  (except in \cite{ss2d} where the formula \eqref{eq:gse second order} is derived in the case $d=2$). 

\medskip

The next natural question is of course that of the nature of the minimizers of the renormalized energy $\W$. It is widely believed  that the minimizing configuration (at temperature~$0$) consists of points distributed on a regular lattice  (Wigner crystal). A proof of this is out of our present reach: crystallization problems have up to now been solved only for specific short range interaction potentials (see \cite{The,BPT,HR,Sut,Rad} and references therein) that do not cover Coulomb forces, or 1D systems \cite{BL,Kun,ss1d}. In \cite{gl13} it was shown however that in dimension $2$, \emph{if} the minimizer is a lattice, then it has to be the triangular one, called  the {\it Abrikosov lattice} in the context of superconductivity. In dimension $3$ and higher, the question is wide open, not only to prove that minimizers are  crystalline, but even to identify the optimal lattice configurations. The FCC (face centered cubic) lattice, and maybe also the BCC (body centered cubic) lattice, seem like natural 
candidates, and the optimisation of $\W$ is related to the computation of what physicists and chemists call their Madelung constants.

This  question is in fact of number-theoretic nature: in \cite{ss2d} it is shown that  the question in dimension $2$  reduces to minimizing the Epstein Zeta function $\sum_{p\in \Lambda} |p|^{-s}$ with $s>2$ among lattices $\Lambda$, a question which was in turn already solved in dimension $2$ in the 60's (see \cite{cassels,rankin,ennola,ennola2,diananda} and also \cite{montgomery} and references therein), but the same question is open in dimension $d\ge 3$ -- except for $d=8$ and $d=24$ -- and it is only conjectured that the FCC is a local minimizer, see \cite{sarns} and references therein. The connection with the minimization of $\W$ among lattices and that of the Epstein Zeta functions is not even rigorously clear in dimension $d\ge 3$. For more details, 
we refer to Section \ref{sec3} where we examine and compute  $\W$ in the class  of periodic configurations and discuss this question.
\medskip

Modulo the conjecture that the minimizer of the renormalized energy is a perfect lattice, \eqref{eq:gse second order} (and the consequences for minimizers that we state below) can be interpreted as a crystallization result for the Coulomb gas at zero temperature.

Concerning the distribution of charges at positive temperature, one should expect a transition from a crystal at low temperature to a liquid at large enough temperature. While we do not have any conclusive proof of this fact, our results on the Gibbs state \eqref{eq:defi Gibbs state} strongly suggest that the transition should happen in the regime $\beta \propto n ^{2/d-1}$. In particular we prove that (see Theorem \ref{thm:partition} below):
\begin{enumerate}
\item If $\beta \gg n ^{2/d-1}$ (low temperature regime) in the limit $n\to \infty$, then the free-energy (linked to the partition function $\Z$) is to leading order given by the mean-field energy, with the correction expressed in terms of the renormalized energy as in \eqref{eq:gse second order}. In other words, the free energy and the ground state energy agree up to corrections of smaller order than the contribution of the jellium energy. We interpret this as a first indication of crystallization in this regime. Note that for the 1D log gas in the corresponding regime, a full proof of crystallization has been provided in \cite{ss1d}. 
\item If $\beta \ll n ^{2/d-1}$ (high temperature regime) in the limit $n\to \infty$, the next-to leading order correction to the free energy is no longer given by the renormalized energy, but rather by an entropy term. We take this as a weak indication that the Gibbs state is no longer crystalline, but a more detailed analysis would be required.\end{enumerate}

Note the dependence on $d$ of the critical order of magnitude: it is of order $1$ only in 2D. Interestingly, in the main applications where the Gibbs measure of the 2D Coulomb gas arises (Gaussian random matrices and quantum Hall phases), the inverse temperature $\beta$ is a number independent of $n$, i.e. one is exactly at the transition regime.

\medskip

In the next section we proceed to state our results rigorously. Apart from what has already been mentioned above they include:
\begin{itemize}
\item Results on the ground state configurations that reveal the two-scale structure of the minimizers hinted at by \eqref{eq:gse second order}, see Theorem \ref{th1}.
\item A large deviations - type result in the low temperature regime (Theorem \ref{th4}), that shows that in the limit $\beta \gg n ^{2/d-1}$ the Gibbs measure charges only configurations whose renormalized energy converges to the minimum. When $\beta \propto n^{2/d-1} $ only configurations with a certain upper  bound on $\W$ are likely.
\item Estimates on the charge fluctuations in the Gibbs measure, in the low temperature regime again (Theorem \ref{thm:charge fluctu}). These are derived by exploiting coercivity properties of the renormalized energy.
\item Precise factorization estimates for the ground state and Gibbs measure, giving a rigorous meaning to \eqref{eq:intro factor}, that we obtain as corollaries of our estimates on charge fluctuations, see Corollary \ref{thm:marginals}.
\end{itemize}

These results, let us note, should be seen as coming in two groups. The first one reduces the crystallisation question for the trapped Coulomb gas to the same question for the jellium, in a suitable weak sense. Roughly speaking, we show that configurations for the trapped Coulomb gas look almost everywhere locally crystalline at small enough temperature, in an average sense that we make precise in Sections \ref{sec:results 0 T} and \ref{sec:results finite T}. Our second set of results aims at quantifying the deviations from mean-field theory in a stronger sense than that we use in Sections \ref{sec:results 0 T} and \ref{sec:results finite T}. Here our results (stated in Section \ref{sec:results deviations}), are not optimal, for example we can only show that particles are localized within a precision $O(n^{-1/d+2})$, whereas crystallization happens at scale $n ^{-1/d}$. The estimates we obtain are however, as far as we know, the best of their kind in $d\geq 3$. Concerning this second group 
of results we finally mention that, although we focus on the low temperature regime, our methods (or close variants) also yield explicit estimates on the Gibbs measure for any temperature. 
They can be used if the need for robust and explicit information on the behavior of the Gibbs measure arises, as e.g. in \cite{RSY1,RSY2} where we studied quantum Hall states with related techniques.

\medskip

{\bf Acknowledgments:} S.S. was supported by a EURYI award and would like to thank the hospitality of the Forschungsinstitut f\"ur Mathematik at the ETH Z\"urich, where part of this work was completed. N.R. thanks Gian Michele Graf and Martin Fraas for their hospitality  at the Institute for Theoretical Physics of ETH Z\"urich, and acknowledges the financial support of the ANR (project Mathostaq ANR-13-JS01-0005-01).

\section{Statement of main results}

We let the mean-field energy functional be as in \eqref{eq:def MF ener}. The minimization of $\I$ among $\mathcal{P}(\mr^d)$, the space of  probability measures on $\mr^d$,  is a standard problem in potential theory (see \cite{frostman} or \cite{ST} for $d=2$). The uniqueness of a minimizer, called the {\it equilibrium measure}  is obvious by strict convexity of $\I$ and its existence can be proven using the continuity of $V$ and the assumption \eqref{eq:trap pot} -- note that it only depends on the data of $V$ and the dimension.
 We  recall it is denoted by $\mu_0$, assumed to have a density also denoted $\mu_0(x) $ by abuse of notation,  and its support is denoted by $\Sigma$ {and usually called the {\it droplet}}.
  We will need the following additional assumptions on $\mu_0$:
\begin{align}\label{ass1}
\p \Sigma \ &\text{is} \ C^1\\
\label{ass2} \mu_0 \ \text{is } C^1 \  \text{on its support $\E$ and } \ &0<\underline{m} \le \mu_0(x) \le \overline{m}<\infty \quad \text{there}.
\end{align}
It is easy to check that these are satisfied for example if $V$ is quadratic, in which case $\mu_0$ is the characteristic function of a ball. More generally, if $V\in C^2$ then $\mu_0(x)=\Delta V(x)$ on its support. The regularity assumption is purely technical: essentially it makes the construction needed for the upper bound easier. The upper and lower bounds to $\mu_0$ on its support ensure that the ``jellium energy with background density $\mu_0 (x)$'', which will give the local energy fluctuation around $x\in \Sigma$, makes sense . 

{We note that $\mu_0$ is also related to the solution of an {\it obstacle problem} (see the beginning of Section \ref{sec4}), and if $V\in C^2$ and $\Delta V>0$ then  the droplet $\E$ coincides with the {\it coincidence set} of the obstacle problem, for which  a regularity theory is known \cite{caff}.}

\subsection{The renormalized energy}

This section is devoted to the precise definition of the renormalized energy. It is 
defined via the electric field $\j$ generated by the full charge system: a  (typically infinite) distribution of point charges in space in a constant neutralizing background.
Note first that the classical Coulomb interaction of two charge distributions (bounded Radon measures) $f$ and $g$,
\begin{equation}\label{eq:def Coul ener}
D(f,g):=\iint_{\mr^d \times \mr^d} \g(x-y)\, df(x) \, dg(y)
\end{equation}
is linked to the (electrostatic) potentials $h_f =  \g * f$, $h_g =  \g * g $ that they generate via the formula\footnote{Here we assume that all quantities are well-defined.}
\begin{equation}\label{eq:ener field}
D(f,g) = \int_{\R ^d} f h_g  = \int_{\R ^d} g h_f =\frac{1}{c_d} \int_{\R ^d} \nabla h_f \cdot \nabla h_g
\end{equation}
where we used the fact that by definition of $\g$, 
$$ -\Delta h_f = c_d f,\quad -\Delta h_g = c_d g.$$
The electric field generated by the distribution $f$ is given by $\nab h_f$, and its square norm thus  gives a constant times  the electrostatic energy density of the charge distribution $f$: 
$$D(f,f) = \frac{1}{c_d}\int_{\R ^d} |\nabla h_f| ^2.$$

The electric field generated by a jellium  is of the form described in the following definition.

\begin{defi}[\textbf{Admissible electric fields}]\label{def:adm field}\mbox{}\\ 
Let $m >0$. Let $\j$ be a vector field in $\mr^d$. We say that $\j$ belongs to the class $\bam$ if $\j = \nab h$ with
\begin{equation}\label{curlj}
-\Delta h = c_d \Big(\sum_{p\in \Lambda}N_p \delta_p - m \Big) \quad \text{in} \ \mr^d
\end{equation}
for some discrete set $\Lambda \subset \mr^d$, and $N_p$ integers in $\mn^*$.
\end{defi}

Note that in \cite{gl13} a different convention was used, the electric field being rotated by $\pi/2$ at each point to represent a superconducting current. This made the analogy with 2D Ginzburg-Landau theory more transparent but does not generalize easily to higher dimensions. In the present definition $h$ corresponds to the electrostatic potential  generated by the jellium and $\j$ to its electric field, while the constant $m$ represents the mean number of particles per unit volume, or the density of the neutralizing background. An important difficulty is that the electrostatic energy $D(\delta_p,\delta_p)$ of a point charge density $\delta_p$, where $\delta_p$ denotes the Dirac mass at $p$, is infinite, or in other words, that  the electric field generated by point charges fails to be in $L^2_{loc}$. This is where the need for a ``renormalization" of this infinite contribution comes from.

To remedy this, we replace point charges by smeared-out charges, as in Onsager's lemma: We pick some arbitrary fixed {\it radial} nonnegative function $\ro$, supported in $B(0,1)$ and with integral $1$. For any point $p$ and $\eta>0$ we introduce   the smeared charge 
\begin{equation}\label{eq:smeared charge}
\delta_p^{(\eta)}= \frac{1}{\eta^d}\ro \left(\frac{x}{\eta}\right) *  \delta_p.
\end{equation}
Even though the value of $\W$ will depend in general on the precise choice of $\ro$,  
the results in the paper will not depend on it, implying  in particular that the value of the minimum of $\W$ will not either (thus we do not try to optimize over the possible choices). A simple example is to take 
$ \ro = \frac{1}{|B(0,1)|} \indic_{B(0,1)},$
in which case 
$$\delta_p^{(\eta)} = \frac{1}{|B(0,\eta)|} \indic_{B(0,\eta)}.$$
We also define
\begin{equation}\label{kapd}
\kappa_d:= c_d D(\delta_0^{(1)}, \delta_0^{(1)}) \quad \text{for} \  d\ge 3,  \quad \kappa_2=c_2, \qquad \gamma_2=  c_2 D(\delta_0^{(1)}, \delta_0^{(1)})  \ \text{for} \ d=2.
\end{equation} 
The numbers $\kappa_d$, $\gamma_2$,  depend only on the choice of the function $\ro$ and on the dimension. This nonsymmetric  definition is due to the fact that the  logarithm behaves differently from power functions under rescaling, and is  made to ease the formulas below.

Newton's theorem  \cite[Theorem 9.7]{LiLo} asserts that the Coulomb potentials generated by the smeared charge $\delta_p^{(\eta)}$ and the point charge $\delta_p$ coincide outside of $B(p,\eta)$. A consequence of this is that there exists a radial function $f_\eta$ solution to
\begin{equation}\label{eqf0}
\begin{cases}
 -  \Delta f_\eta= c_d\left( \delta_0^{(\eta)} - \delta_0 \right) \quad \text{in} \ \mr^d
\\   
 f_\eta\equiv 0 \quad  \text{in} \ \mr^d \backslash B(0,\eta).
\end{cases}
\end{equation}
and it is easy to define the field $\j_\eta$ generated by a jellium with smeared charges starting from the field of the jellium with (singular) point charges, using $f_\eta$:

\begin{defi}[\textbf{Smeared electric fields}]\label{def:smear field}\mbox{}\\
For any vector field $\j=\nab h$ satisfying
\begin{equation}\label{dij}
-\div \j= c_d\Big(\sum_{p \in \Lambda}N_p \delta_p -m\Big)
\end{equation}
in some subset $U$ of $\mr^d$, with $\Lambda \subset U$ a discrete set of points, we let
$$\j_\eta := \nab h + \sum_{p\in \Lambda} N_p \nab f_\eta(x-p) \qquad h_\eta= h + \sum_{p \in \Lambda} N_p f_\eta(x-p).$$
We have
\begin{equation}
\label{delp}
-\div  \j_\eta  = - \Delta h_\eta =  c_d\Big(\sum_{p \in \Lambda}N_p \delta_p^{(\eta)} -m\Big)\end{equation}
and denoting by $\Phi_\eta$ the map $\j \mapsto \j_\eta$,  we note that $\Phi_\eta$ realizes a bijection from the set of vector fields satisfying \eqref{dij} and which are gradients,  to those satisfying \eqref{delp} and which are also gradients.
\end{defi}

Note that the above definition  in principle  depends implicitly on the set $U$, whose choice will be clear from the context in the sequel (most of the time we will take $U= \R ^d$).

For any fixed $\eta>0$ one may then define the electrostatic energy per unit volume of the infinite jelium with smeared charges as
\begin{equation}\label{eq:Weta pre}
\limsup_{R\to \infty} \dashint_{K_R}  |\j_\eta|^2 := \limsup_{R\to \infty} |K_R| ^{-1} \int_{K_R}  |\j_\eta|^2 
\end{equation}
where $\j_\eta$ is as in the above definition and $K_R$ denotes the cube $[-R,R]^d$. Note that the quantity $\dashint_{K_R}  |\j_\eta|^2$ may  not have a limit (this does occur for somewhat pathological configurations). This motivates the use of the $\limsup$ instead, which gives the strongest possible control on the configuration (see e.g. Lemma~\ref{lembornnu} below).   In addition, for the configurations we obtain as asymptotic limits, the ergodic theorem which we use to exhibit these quantities will always ensure the existence of a true limit.

This energy is now well-defined for $\eta>0$ and blows up as $\eta \to 0$, since it includes the self-energy of each smeared charge in the collection, absent in the original energy  (i.e. in the Hamiltonian \eqref{wn}). One should thus \emph{renormalize} \eqref{eq:Weta pre} by removing the self-energy of each smeared charge before taking the limit $\eta \to 0$. We will see that  the leading order  energy of a smeared charge is $\kappa_d \g(\eta)$, and this is the quantity that should be removed for each point. But  in order for the charges to efficiently screen the neutralizing background,  configurations will need to have  the same charge density as the neutralizing background (i.e. $m$ points per unit volume). We will prove in Lemma \ref{lembornnu} that this holds. We are then led to the definition  
\begin{defi}[\textbf{The renormalized energy}]\label{def:renorm ener}\mbox{}\\
For any $\j \in  \bam$, we define
\begin{equation}\label{We}
\W_\eta(\j) = \limsup_{R\to \infty} \dashint_{K_R}  |\j_\eta|^2 - m(\kappa_d  \g(\eta)+\gamma_2 \indic_{d=2})
\end{equation}
and the renormalized energy is given by \footnote{As in \cite{gl13} we could define the renormalized energy with averages on more general nondegenerate (Vitali) shapes, such as balls, etc, and then prove that the minimum of $\W$ does not depend on the shapes.}
\begin{align}\label{weta}
\W(\j) = \liminf_{\eta\to 0}\W_\eta(\j) \nonumber =\liminf_{\eta\to 0}\left(\limsup_{R\to \infty} \dashint_{K_R}   |\j_\eta|^2 - m(\kappa_d \g(\eta)+\gamma_2 \indic_{d=2})\right).
\end{align} 
\end{defi}
It is easy to see that if $\j \in \bam$, then $\j' := m ^{1/d - 1} \j (m ^{-1/d} .)$ belongs to $\bai$ and 
\begin{equation}\label{eq:scale renorm}
\begin{cases}
\W_\eta (\j) = m ^{2-2/d} \mathcal{W}_{\eta m^{1/d}}  (\j ') & \  \W (\j) = m ^{2-2/d} \W (\j ')  \: \mbox{ if } d\geq 3 \\
\W_\eta (\j) = m \left( \mathcal{W}_{\eta m^{1/d}}  (\j ')-\frac{\kappa_2}{2} \log m \right)  & \ \W  (\j) = m \left(\W (\j ')-\frac{\kappa_2 }{2}\log m \right)  \: \mbox{ if } d=2,
\end{cases}
\end{equation}thus the same scaling formulae hold for $\inf_{\bam} \W$.
 One may thus reduce to the study of $\W(\j)$ on $\bai$, for which we have the following result:

\begin{theo}[\textbf{Minimization of the renormalized energy}]\label{thm:renorm ener}\mbox{}\\
The infimum 
\begin{equation}\label{eq:min ren ener}
\alpha_d:= \inf_{\j \in \bai} \W (\j) 
\end{equation}
is achieved and is finite. Moreover, there exists a sequence $(\j_n)_{n\in \mathbb{N}}$ of periodic vector fields (with diverging period in the limit $n\to \infty$) in $\bai$ such that 
\begin{equation}\label{eq:min period}
\W (\j_n) \to \alpha_d \mbox{ as } n \to \infty. 
\end{equation}
\end{theo}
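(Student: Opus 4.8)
The plan is to establish the theorem in three stages: (i) prove a lower bound showing $\alpha_d > -\infty$; (ii) exhibit a minimizing sequence with quantitative control, then pass to the limit to produce an actual minimizer; (iii) show that periodic configurations come arbitrarily close to $\alpha_d$, by averaging/periodizing a near-optimal configuration. Throughout we work in the class $\bai$ (density $m=1$); the scaling relations \eqref{eq:scale renorm} then transfer everything to $\bam$.

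\emph{Lower bound and finiteness.} First I would show $\W$ is bounded below on $\bai$. The key point is a \emph{lower bound at fixed $\eta$}: for any $\j = \nab h \in \bai$ and any smearing scale $\eta \in (0, 1/4)$, one splits each point charge using $f_\eta$ and estimates $\dashint_{K_R} |\j_\eta|^2$ from below. Expanding $|\j_\eta|^2 = |\nab h|^2 + 2 \nab h \cdot \sum_p \nab f_\eta(\cdot - p) + |\sum_p \nab f_\eta(\cdot-p)|^2$ and using that the $f_\eta(\cdot - p)$ have disjoint supports $B(p,\eta)$ for distinct points (after possibly merging coincident points, which only helps), the cross term and the last term are controlled by $N_p^2 \kappa_d \g(\eta)$ per point, while the number of points in $K_R$ is $\sim |K_R|$ by charge neutrality (this is essentially Lemma~\ref{lembornnu}, which I may assume). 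The term $\dashint_{K_R}|\nab h|^2$ is nonnegative; the subtraction of $\kappa_d \g(\eta)$ per unit volume exactly cancels the leading self-energy, leaving a quantity bounded below uniformly in $\eta$ and in $\j$ (the subleading correction is $O(\eta)$, or $O(\eta|\log\eta|)$ in $d=2$). Hence $\alpha_d \ge C > -\infty$. Finiteness from above ($\alpha_d < +\infty$) follows by testing on a single explicit periodic configuration — e.g. the field generated by $\Z^d$ against unit background — for which $\dashint_{K_R}|\j_\eta|^2$ converges and $\W$ is finite by a direct computation.

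\emph{Existence of a minimizer.} Take a sequence $\j_k \in \bai$ with $\W(\j_k) \to \alpha_d$. The uniform lower bound at fixed $\eta$, together with $\W(\j_k)$ bounded, gives a uniform bound on $\limsup_R \dashint_{K_R} |(\j_k)_\eta|^2$ for each fixed small $\eta$; this yields local $L^2$ bounds on $(\j_k)_\eta$ and hence, after a diagonal extraction over a sequence $\eta \to 0$ and over exhausting cubes, a vector field $\j_\infty$ that is the weak-$L^2_{loc}$ limit and lies in $\bai$ (the point-charge structure with integer multiplicities is preserved because the distributional Laplacian $-\div \j_k = c_d(\sum N_p \delta_p - 1)$ passes to the limit as a sum of atoms with integer weights — the discreteness of $\Lambda$ survives because two points cannot collide without the energy blowing up, by the lower-bound estimate). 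Lower semicontinuity of $\j \mapsto \limsup_R \dashint_{K_R}|\j_\eta|^2$ under weak-$L^2_{loc}$ convergence (Fatou plus weak lower semicontinuity of the $L^2$ norm on each cube, followed by the $\limsup$) gives $\W_\eta(\j_\infty) \le \liminf_k \W_\eta(\j_k)$; taking $\liminf_{\eta\to 0}$ and using $\W(\j_k) \to \alpha_d$ yields $\W(\j_\infty) \le \alpha_d$, so the infimum is attained. (A subtlety I would flag: the $\limsup$ in $R$ interacts awkwardly with the weak limit, so I would instead pass to the limit in the quantity $\int_{K_R}|\j_\eta|^2$ for fixed $R$ first, divide by $|K_R|$, then take $R \to \infty$ at the end — this is the standard device in \cite{gl13,ss2d} and avoids the issue.)

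\emph{Periodic near-minimizers.} Finally, to get \eqref{eq:min period}, start from a configuration $\j$ with $\W(\j) \le \alpha_d + \epsilon$, or even from the minimizer $\j_\infty$. For large $N$, restrict attention to the cube $K_N$, on which $\dashint_{K_N}|\j_\eta|^2$ is within $\epsilon$ of the renormalized energy density for suitable $\eta$. I would then \emph{screen and periodize}: modify the configuration in a thin boundary layer of $K_N$ so that the net charge in $K_N$ is exactly zero and the field becomes compatible with $K_N$-periodic boundary conditions (this uses the screening constructions of the paper — adding or removing $O(N^{d-1})$ points near $\partial K_N$ and correcting the field, at a cost $o(N^d)$ in energy, hence negligible per unit volume). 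Tiling $\mr^d$ by translates of the screened cell produces a periodic $\j_N \in \bai$ with period $2N \to \infty$ and $\W(\j_N) \le \alpha_d + \epsilon + o_N(1)$. Combined with $\W(\j_N) \ge \alpha_d$, a diagonal choice of $\epsilon \to 0$, $N \to \infty$ gives the desired sequence.

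\emph{Main obstacle.} The delicate part is the screening/periodization step: one must perturb the configuration near $\partial K_N$ to enforce exact neutrality and periodicity while controlling the energy cost, and also while keeping the perturbed charges discrete with integer multiplicities and not destroying the near-optimality. This is precisely where the dimension-independent Onsager-lemma approach of the paper is needed, and where one must be careful that the boundary correction is genuinely of lower order $o(N^d)$. The compactness step is comparatively routine once the fixed-$\eta$ lower bound is in hand, but the interplay of the two limits $R \to \infty$ and (weak) $k \to \infty$, and the $\liminf_{\eta \to 0}$ on top, requires the careful ordering of limits indicated above.
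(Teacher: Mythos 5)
Your outline has the right three-part skeleton (lower bound, existence, periodization), and the periodization idea is indeed close to the paper's Proposition~\ref{rem5.9}; but the core step --- the uniform-in-$\eta$ lower bound on $\W_\eta$ --- has a genuine gap, and the existence step inherits it.

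For the lower bound you expand $|\j_\eta|^2 = |\nabla h|^2 + 2\nabla h \cdot \sum_p\nabla f_\eta(\cdot-p) + |\sum_p\nabla f_\eta(\cdot-p)|^2$ and then claim ``the $f_\eta(\cdot-p)$ have disjoint supports $B(p,\eta)$ for distinct points.'' That is false for a general $\j\in\bai$: the defining class allows points to cluster at arbitrarily small (even $\ll\eta$) mutual distances without being coincident, and then the cross terms $\nabla f_\eta(\cdot-p)\cdot\nabla f_\eta(\cdot-p')$ do not separate, the cancellation against $|\nabla h|^2$ near each point is lost, and the per-point self-energy can no longer be isolated. Nor can you sidestep this by writing $\dashint_{K_R}|\nabla h|^2 \ge 0$: that quantity is $+\infty$. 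The paper's whole point is that for configurations with nearby points $\W_\eta$ is not controlled term by term; one must first prove that \emph{near-minimizers} can be assumed to have well-separated points. This is the content of Proposition~\ref{pro:sep points} (separation of points), proved by a variational argument of Lieb using the Green--Dirichlet function $G_R$ of the cube: one moves a too-close point to a definite distance and shows, via Lemma~\ref{lem:Green} and the comparison potential $V_\eta$, that the energy strictly decreases. This step, together with the screening construction of Proposition~\ref{proscreen}, is what reduces the lower bound to the well-separated case where Lemma~\ref{lemequiv} applies. None of this appears in your proposal, and without it the stated expansion simply does not close.

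The existence step also has an unresolved gap that you flag but do not fix. After a weak-$L^2_{\mathrm{loc}}$ extraction you want $\limsup_R\dashint_{K_R}|(\j_\infty)_\eta|^2 \le \alpha_d + \kappa_d\g(\eta)$, but weak lower semicontinuity only gives $\dashint_{K_R}|(\j_\infty)_\eta|^2 \le \liminf_k\dashint_{K_R}|(\j_k)_\eta|^2$ for each fixed $R$; the right-hand side can be much larger than $\limsup_R\dashint_{K_R}|(\j_k)_\eta|^2 = \W_\eta(\j_k)+\kappa_d\g(\eta)$ (energy can concentrate near the origin), so swapping the order of $k\to\infty$ and $R\to\infty$ is not permissible. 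Your proposed fix (``take $\int_{K_R}|\j_\eta|^2$, divide, then $R\to\infty$'') does not escape this. The paper circumvents the problem precisely by not taking a single weak limit: Appendix~\ref{sec:appendix} uses the ergodic/multiparameter framework of \cite{gl13}[Theorem~3] (averaging over translations $\theta_\lambda$) to produce a probability measure $P$ on fields with $\inf_{\bai}\W_\eta \ge \int\W_\eta\,dP$, and then Fatou in $\eta$; the existence of a minimizer is a by-product of $P$ being supported on minimizers, not of a pointwise weak limit of the sequence. You should rework the compactness step along those lines (or at least average the blow-up center as in~\eqref{eq:two scale}) rather than relying on a direct diagonal extraction.
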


We should stress at this point that the definition of the jellium (renormalized) energy we use is essential for our approach to the study of equilibrium states of \eqref{wn}. In particular it is crucial that we are allowed to define the energy of an \emph{infinite} system via a local density (square norm of the electric field). For a different definition of the jellium energy, the existence of the thermodynamic limit was previously proved in \cite{LN}, using ideas from \cite{LiLe1,LiLe}. This approach does not transpose easily in our context however, and our proof that $\alpha_d$ is finite follows a different route, using in particular an unpublished result of Lieb \cite{Lie}. The existence of a minimizer is in fact a consequence of our main results below (see the beginning of Section \ref{sec7}). A direct proof is  also provided in Appendix \ref{sec:appendix} for convenience of the reader.

In \cite{gl13,ss2d} a different but related strategy  was used for the definition of $W$ (in dimension~2). Instead of smearing charges out, the ``renormalization" was implemented by cutting-off the electric field $\j$ in a ball of radius $\eta$ around each charge, as in \cite{BBH}.
This leads to the following definition, which we may present in arbitrary dimension (the normalizing of constants has been slightly modified in order to  better fit with the current setting):

\begin{defi}[\textbf{Alternative definition \cite{gl13,ss2d}}]\label{def:renorm SS}\mbox{}\\
We let $\admissible $ denote the subclass of $\bam$ for which all the points are simple (i.e. $N_p=1$ for all $p \in \Lambda$.) For each $\j \in \admissible$, we define 
\begin{equation}\label{WWold}W(\j)= \limsup_{R\to \infty} \frac{1}{|K_R|} W(\j, \chi_{K_R})\end{equation}
where, for any $\j$  satisfying a relation of the type \eqref{dij}, and any nonnegative continuous function $\chi$, we denote
\begin{equation}\label{Wold}
W(\j, \chi)= \lim_{\eta\to 0} \int_{\mr^d \backslash \cup_{p\in \Lambda } B(p,\eta)} \chi |\j|^2 -  c_d \g(\eta) \sum_{p \in \Lambda} \chi(p),
\end{equation}and $\{\chi_{K_R}\}_{R>0}$ denotes a family of cutoff functions, equal to $1$ in $K\widetilde{\W}_{R-1}$, vanishing outside $K_R$ and of universally bounded gradient.
\end{defi}

In addition to the way the renormalization is performed, between the two definitions the order of the limits $\eta \to 0$ and $R \to \infty$ is reversed. It is important to notice that for a given discrete configuration, the minimal distance between points is bounded below on each compact set, hence the balls $B(p,\eta)$ in $K_R$ appearing in \eqref{Wold} are disjoint as soon as $\eta $ is small enough, for each fixed $R$. On the contrary, the smeared out charges $\delta_p^{(\eta)}$ in the definition of $\j_\eta$ may overlap. In fact, we can prove (see Section \ref{sec3}) at least in dimension 2, that   if the distances between the points is bounded below by some uniform constant (not depending on $R$) --- we will call such points ``well separated", in particular they must all be simple ---, then the order of the limits $\eta\to 0$ and $R\to \infty$ can be reversed and $W $ and $\W$ coincide (in addition the value of $\W$ then does not depend on the choice of $\ro$ by Newton's theorem).  An easy example is 
the case of a configuration of points which is periodic. But if the points are not well separated, then in general $W$ and $\W$ do not coincide (with typically $\W\le W$). An easy counter example of this is the case of a configuration of well-separated points except  one multiple point. Then computing the limit $\eta \to 0$ in $W$ immediately yields $+ \infty$. On the contrary, the   effect of the multiple point gets completely dissolved when taking first the limit $R \to \infty$ in the definition of $\W$. At least an immediate consequence of \eqref{eq:gse second order} (or Theorem \ref{th1} below), by comparison with the result of \cite{ss2d}, is that  we know that  $\min_{\bai} \W=\min_{\mathcal A_1} W$ in dimension 2.

One of the advantages of $W$ is that it is more precise: in 2D it can be derived as the  complete $\Gamma$-limit of $\w$ at next order, while $\W$ is not (it is too ``low"), see below. It also seems more amenable to the possibility of charges of opposite signs (for example it is derived in \cite{gl13} as the limit of the vortex interactions in the Ginzburg-Landau energy where vortices can a priori be positive or negative). The main advantage of $\W$ is that it is much easier to bound it from below: $\W_\eta$ is bounded below from its very definition, while proving that $W$ is turned out to be much more delicate. In \cite{gl13} it was proven that even though the energy density that defines $W(\j, \chi)$ is unbounded below, it can be shown to be very close to one that is. This was a crucial point in the proofs, since most lower bound techniques (typically Fatou's lemma) require energy densities that are bounded below.  The proof in \cite{gl13} relied on the sophisticated techniques of ``ball construction 
methods", which originated with Jerrard \cite{jerrard} and Sandier \cite{sandier} in Ginzburg-Landau theory (see \cite[Chapter 4]{SSbook} for a presentation) and which are very much two-dimensional, since they exploit the conformal invariance of the Laplacian in dimension~$2$. It is not clear at all how to find a replacement for this ``ball construction method" in dimension $d \ge 3$, and thus not clear how to prove that the local energy density associated to $W$ is bounded below then (although we do not pursue this, it should however be possible to show that $W$ itself is bounded below with the same ideas we use here). Instead the approach via $\W$ works by avoiding this issue and replacing it with the use of Onsager's lemma, making it technically much simpler, at the price of a different definition and a less precise energy. However we saw that $W$ and $\W$ have same minima (at least in dimension $2$) so that we essentially reduce to the same limit minimization problems.

\subsection{Main results: ground state}\label{sec:results 0 T}

Our results on the ground state put on a rigorous ground the informal interpretation of the two-scale structure of minimizing configurations we have been alluding to in the introduction. To describe the behavior of minimizers at the microscopic scale we follow the same approach as in \cite{ss2d} and perform a blow-up: For a given $(x_1, \dots , x_n)$, we let $x_i'= n^{1/d} x_i$ and
\begin{equation}
\label{hpn}
h_n'(x')=  w \ast \left(\sum_{i=1}^n \delta_{x_i'} - \mu_0(n^{-1/d}x')\right),
\end{equation}
Note here that the associated electric field $\nab h_n'$ is in $\Lp$ if and only if $p<\frac{d}{d-1}$, in view of the singularity in $\nab w$  around each point. One of the delicate parts of the analysis (and of the statement of the results) is to give a precise averaged formulation, with respect to all possible blow-up centers in $\E$, and in this way to give a rigorous meaning to the  vague sentence ``around almost any point
$x$ in the support of $\mu_0$ there are approximately $\mu_0(x)$ points per unit volume, minimizing $\W$  with background density $\mu_0(x)$''. 

Our formulation of the result uses the following notion, as in  \cite{gl13,ss2d}, which allows to embed $(\mr^d)^n$ into the set of probabilities on $X=\E\times\Lp$, for some $1<p<\frac{d}{d-1}$.  For any $n$ and $\xbf =(x_1,\dots, x_n)\in(\mr^d)^n$ we let $i_n(\xbf) = P_{\nu_n},$
where $\nu_n=\sum_{i=1}^n \delta_{x_i}$ and $P_{\nu_n}$ is the push-forward of the normalized Lebesgue measure on $\E$ by
$$x\mapsto \left(x, \nab h_n'(n^{1/d}x + \cdot)\right).$$ 
Explicitly:
\begin{equation}\label{eq:Pnun}
i_n (\xbf) = P_{\nu_n} = \dashint_{\E} \delta_{(x,\nabla h'_n (n ^{1/d} x + \cdot) )} dx.
\end{equation}
This way  $i_n(\xbf)$ is an element of $\P(X)$, the set of probability measures on $X = \E\times\Lp$ (couples of blown-up centers, blown-up electric fields) that measures  the probability of having a given blown-up electric field around a given blow-up point in $\E$. As suggested by the above discussion, the natural object we should look at is really $i_n (\xbf)$ { and its limits up to extraction,~$P$}. 

Due to the fact that the renormalized jellium functional describing the small-scale physics is invariant under translations of the electric field, and by definition of $i_n$, we should of course expect the objects we have just introduced to have a certain translation invariance, formalized as follows:

\begin{defi}[\textbf{$T_{\lambda(x)}$-invariance}]\label{invariance}\mbox{}\\
We say a probability measure $P$ on $X$ is $T_{\lambda(x)}$-invariant if $P$ is invariant  by $(x,\j)\mapsto\left(x,\j(\lambda(x)+\cdot)\right)$, for any $\lambda(x)$ of class $C^1$  from $\E$ to $\mr^d$.
\end{defi}
Note that $T_{\lambda(x)}$-invariant implies translation-invariant (simply take $\lambda \equiv 1$). 

\begin{defi}[\textbf{Admissible configurations}]
We say that $P \in \mathcal{P}(X)$ is admissible if its first marginal is the normalized Lebesgue 
measure on $\E$, if it holds for $P$-a.e. $(x,\j)$ that $\j \in \overline{\mathcal{A}}_{\mu_0(x)}$, and if $P $ is $T_{\lambda(x)}$-invariant.
\end{defi}

Assumption~\ref{ass2} ensures here that $\mu_0 (x)>0$  for any $x\in\Sigma$, so that the class $\overline{\mathcal{A}}_{\mu_0(x)}$ makes sense.

Our main result  on the next order behavior  of minimizers of $\w$ is  that 
$$\min_{\xbf} n^{2/d-2} \left(\w (\xbf)- n^2 \I[\mu_0] +\Big(\frac{ n}{2} \log n\Big)\indic_{d=2}\right) {\rightarrow} \min_{P \ \text{admissible}} \widetilde{\W} (P)$$
 where  we define
\begin{equation}\label{wtilde}
\widetilde{\W}(P):= \frac{|\E|}{c_d} \int_{X} \W(\j) \, dP(x,\j),
\end{equation} 
if $P$ is admissible, and $+\infty$ otherwise.
The function $\widetilde{\W}$ is precisely what we meant by ``the average with respect to blow-up centers" of $\W$.

We note that, in view of the scaling relation \eqref{eq:scale renorm}, we may guess that 
\begin{equation}\label{eq:gamma d}
\min_{P \ \text{admissible}} \widetilde{\W}=
\xi_d:=  \begin{cases}\displaystyle
             \frac{1}{c_d} \min_{\bai}  \W  \int_{\R   ^d} \mu_0 ^{2-2/d} \quad \mbox{if } d\geq 3\\
             \displaystyle \frac{1}{2\pi} \min_{\bai}  \W - \hal \int_{\R   ^2} \mu_0 \log \mu_0  \quad \mbox{if } d=2.
            \end{cases}
\end{equation} 
Note that the minima on the right-hand side exist thanks to Theorem~\ref{thm:renorm ener}. Also, the left-hand side is clearly larger than the right-hand side in view of the definition of $\widetilde{\W}$, that  of admissible, and \eqref{eq:scale renorm}. That there is actually equality is a consequence of the following theorem
on the behavior of ground state configurations (i.e. minimizers of \eqref{wn}) :

\begin{theo}[\textbf{Microscopic behavior of ground state configurations}]\label{th1}\mbox{}\\
Let $(x_1, \dots, x_n)\in (\mr^d)^n$ minimize $\w$. Let $h_n' $ be associated via \eqref{hpn} and  $P_{\nu_n}= i_n(x_1, \dots, x_n) $ be defined as in \eqref{eq:Pnun}.  

Up to extraction of a subsequence we have $P_{\nu_n} \wto P$ in the sense of probability measures on $X$, {where $P$ is admissible and, if
 $d\ge 3$,
\begin{equation}\label{eq:intro 222}
\lim_{n\to \infty} n^{2/d-2} \left(\w(x_1, \dots, x_n) -n^2\I[\mu_0] \right)  =   \widetilde{\W}(P) = \xi_d=  \displaystyle\frac{1}{c_d} \min_{\bai}  \W \int    \mu_0^{2-2/d}(x) \, dx,
\end{equation}
respectively for $d=2$,
\begin{multline}\label{eq:intro 223}
\lim_{n\to \infty} n^{-1}\left(\w(x_1, \dots, x_n) -n^2\I[\mu_0]+ \frac{n}{2}\log n \right)\\ = \widetilde{\W}(P) =\xi_2= \frac{1}{c_2}\min_{\bai}  \W -\frac{1}{2} \int \mu_0(x) \log \mu_0(x) \,dx,
\end{multline}}
  $P$ is a minimizer of $\widetilde{\W}$ and $\j$ minimizes $\W$ over $\overline{\mathcal{A}}_{\mu_0(x)}$  for $P$-a.e. $(x,\j)$.
\end{theo}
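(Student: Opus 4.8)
The strategy is the standard lower bound / upper bound scheme of $\Gamma$-convergence type, adapted to the blown-up objects $i_n(\xbf)$. Throughout I fix $1<p<\frac{d}{d-1}$ so that $\nabla h_n'\in\Lp$, and I work on the space $X=\E\times\Lp$ with $\mathcal{P}(X)$ equipped with weak convergence of probability measures. The core of the argument is a \emph{splitting formula} for the Hamiltonian: one writes $\mu_0=\frac1{c_d}(-\Delta h_{\mu_0})$, introduces the potential $h_n$ generated by $\sum_i\delta_{x_i}-n\mu_0$, and after smearing the point charges at scale $\eta$ obtains an \emph{exact} identity of the form
\begin{equation*}
\w(\xbf)=n^2\I[\mu_0]-\Big(\tfrac{n}{2}\log n\Big)\indic_{d=2}+\frac{1}{c_d}\int_{\R^d}|\nabla h_{n,\eta}'|^2\,(\text{in blown-up variables})-(\text{self-energy terms})+(\text{error}),
\end{equation*}
where the main new term is, after rescaling by $n^{1/d}$, precisely an integral of $|\j_\eta|^2$ against the blow-up measure, i.e. something that converges to $\widetilde{\W}(P)$. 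This is the analogue of the splitting in \cite{ss2d}; the new input needed here is that it works with Onsager-type smearing rather than the 2D ball construction, so the self-energy of each smeared charge contributes exactly $\kappa_d w(\eta)+\gamma_2\indic_{d=2}$ per point, which is what is subtracted in Definition~\ref{def:renorm ener}.

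\textbf{Compactness and the lower bound.} First I would show that along a minimizing sequence, $n^{2/d-2}(\w(\xbf_n)-n^2\I[\mu_0]+(\tfrac n2\log n)\indic_{d=2})$ is bounded above (using an explicit trial configuration built from a near-optimal periodic $\j_n$ from Theorem~\ref{thm:renorm ener}, placed according to $\mu_0$ at the microscopic scale — this is also the upper bound half of the theorem). Combined with the splitting formula this yields a uniform bound on $\dashint_{K_R}|\j_{n,\eta}'|^2$ on average over blow-up centers, hence tightness of $P_{\nu_n}=i_n(\xbf_n)$ in $\mathcal{P}(X)$; extract a subsequence $P_{\nu_n}\wto P$. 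One checks $P$ is admissible: the first marginal is automatically normalized Lebesgue on $\E$ by construction of $i_n$; the $T_{\lambda(x)}$-invariance is inherited in the limit because translating the blow-up center by $n^{-1/d}\lambda(x)$ is an asymptotically negligible perturbation of $i_n$ (this is where the averaging over centers is crucial, exactly as in \cite{ss2d}); and that $\j\in\overline{\mathcal A}_{\mu_0(x)}$ for $P$-a.e.\ $(x,\j)$ follows because locally $\mu_0(n^{-1/d}x')\to\mu_0(x)$ by continuity (Assumption~\ref{ass2}) and the charges remain unit point masses in the limit. The lower bound
\[
\liminf_n\ n^{2/d-2}\big(\w(\xbf_n)-n^2\I[\mu_0]+(\tfrac n2\log n)\indic_{d=2}\big)\ \ge\ \widetilde{\W}(P)
\]
comes from the splitting formula together with lower semicontinuity: $\j\mapsto\liminf_{\eta\to0}\limsup_R\dashint_{K_R}|\j_\eta|^2$ is weakly l.s.c.\ on $\Lp$ because $\W_\eta$ is bounded below (this is exactly the technical advantage of $\W$ over $W$ stressed in the text), so one can pass to the liminf in $n$ first, then use Fatou in the blow-up variable, then send $\eta\to0$; a diagonal argument over $\eta\to0$ handles the double limit.

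\textbf{Conclusion and the harmonizing of the two bounds.} Matching the upper bound (trial configuration giving $\le\xi_d$) with the lower bound ($\ge\widetilde{\W}(P)\ge\xi_d$, the last inequality by definition of $\widetilde{\W}$, admissibility, and the scaling relation \eqref{eq:scale renorm}) forces all inequalities to be equalities: the limit equals $\xi_d$, $P$ minimizes $\widetilde{\W}$ among admissible measures, and since $\widetilde{\W}(P)=\frac{|\E|}{c_d}\int_X\W(\j)\,dP$ with the pointwise bound $\W(\j)\ge\min_{\overline{\mathcal A}_{\mu_0(x)}}\W=\mu_0(x)^{2-2/d}\alpha_d$ (resp.\ the $d=2$ version), equality of the integrals forces $\j$ to minimize $\W$ over $\overline{\mathcal A}_{\mu_0(x)}$ for $P$-a.e.\ $(x,\j)$. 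The identification $\min_{P\text{ adm}}\widetilde{\W}=\xi_d$ follows from the same pointwise scaling argument integrated over $\E$. I expect the main obstacle to be the lower bound step — specifically, controlling the error terms in the splitting formula uniformly as both $n\to\infty$ and $\eta\to0$, and justifying the interchange of the $\eta\to0$ limit with the weak limit in $n$ and with the spatial averaging; this is where one must be careful that the smeared charges may overlap (unlike the excised balls in the definition of $W$), so the per-point self-energy subtraction is only asymptotically exact and the resulting error must be shown to be $o(n^{2-2/d})$ uniformly. The $d=2$ case additionally requires tracking the $-\frac n2\log n$ term and the $\gamma_2$ constant through the scaling, but this is bookkeeping rather than a genuine difficulty.
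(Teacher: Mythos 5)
Your overall strategy coincides with the paper's: split the Hamiltonian via Onsager's lemma with smeared charges, pass to the microscopic scale by averaging over blow-up centers (the ergodic framework of Varadhan, as in \cite{ss2d}), get the lower bound by weak lower semicontinuity and Fatou, and match with an explicit upper-bound construction built from a screened near-optimal periodic configuration.

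However there is a genuine gap in the step you yourself flag as the main obstacle. You justify the $\eta\to0$ passage in the lower bound by Fatou, citing "$\W_\eta$ is bounded below (this is exactly the technical advantage of $\W$ over $W$ stressed in the text)." But the bound that follows "from its very definition" is only for \emph{fixed} $\eta$: since $\limsup_R\dashint_{K_R}|\j_\eta|^2\ge 0$, one gets $\W_\eta(\j)\ge -m\,(\kappa_d\g(\eta)+\gamma_2\indic_{d=2})$, and this trivial lower bound diverges to $-\infty$ as $\eta\to0$. Fatou's lemma for the limit $\eta\to0$ requires a lower bound on $\W_\eta$ that is \emph{uniform in $\eta$}; this is precisely Proposition~\ref{Wbb} in the paper, and it is not automatic. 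Its proof is a substantial portion of the argument (all of Section~\ref{sec:univ low bound}) and needs two nontrivial ingredients: (i) a separation-of-points result showing that near-minimizers of the local energy $\EetR$ have all points at distances bounded below by a constant depending only on the dimension, which adapts an unpublished argument of Lieb and uses Green-function comparisons for the hyperrectangle; and (ii) the screening construction (Proposition~\ref{proscreen}), which truncates and corrects a configuration near the boundary of a large box so that it has zero normal flux and exactly the right number of points, at negligible energy cost. Only after both steps can one invoke Lemma~\ref{lemequiv} to obtain $\W_\eta\ge -C$ with $C$ independent of $\eta$. Your proposal elides this entirely, so the lower-bound half of the proof does not close as written. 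Note that the same screening machinery is also what makes your upper-bound construction legitimate (gluing rescaled copies across the tiling requires the zero-flux property), and it underlies the periodic-minimizing-sequence statement of Theorem~\ref{thm:renorm ener} that you invoke, so it cannot be circumvented on either side.

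Two smaller imprecisions, not fatal: the Onsager splitting with smeared charges is an inequality, not an "exact identity" — equality holds only when the smeared balls are disjoint (Lemma~\ref{lemsplit}); the exact splitting, Lemma~\ref{split1}, produces the $W$-type (cut-out) energy rather than the smeared $\W_\eta$. And your compactness/convergence discussion should explicitly pass through the ergodic theorem of \cite[Theorem 7]{ss2d} (applied for fixed $\eta$ to $P_{n,\eta}$, then showing the limit $P$ is $\eta$-independent via $\Phi_\eta^{-1}$), not merely "tightness from uniform bounds."
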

Note that the configuration of points itself depends on $n$, i.e. we have slightly abused notation by writing $(x_1, \dots, x_n)$ instead of $(x_1^n, \dots, x_n^n)$.

The statement above implies \eqref{eq:gse second order} and  is a  rigorous formulation of our vague sentence in the introduction about the double scale nature of the charge distribution. Note  that since $P$ is always admissible, the result   includes the fact that the blown-up field around $x$ is in the class $\overline{\mathcal{A}}_{\mu_0(x)}$, i.e. has a local density $\mu_0(x)$.

 If the crystallization conjecture is correct, and if the only minimizers of $\W$ are periodic configurations, this means that after blow-up around (almost) any point in $\E$, one should see a crystalline configuration of points, the one minimizing $\W$, packed at the scale corresponding to $\mu_0(x)$.   Note that in \cite{rns} a stronger result is proved in dimension $2$: exploiting completely the minimality  of the configuration, it is shown that this holds after blow up around {\it any} point in $\E$  (not too close to $\p \E$ however), and that the renormalized energy density as well as the number of points are equidistributed (modulo the varying density $\mu_0(x)$). The same results are likely to be proveable in dimension $d\ge 3$ combining the present approach and the method of \cite{rns}.

The proof of Theorem \ref{th1} relies on the following steps:
\begin{itemize}
\item First we show the result as a more general lower bound, that is, we show that  for an arbitrary configuration $(x_1, \dots, x_n)$, it holds that {
$$ \liminf_{n\to \infty} n^{2/d-2} \left(\w(x_1, \dots, x_n) -n^2\I[\mu_0] + \Big(\frac{n}{2}\log n\Big)\indic_{d=2}\right) \ge   \widetilde{\W}(P) - o(\eta)$$}
with $P = \lim_{n\to \infty} P_{\nu_n}$. This relies on the splitting formula alluded to before, based on Onsager's lemma, and the general  ergodic framework introduced in \cite{gl13, ss2d} and suggested by Varadhan, which allows to bound first from below by $\W_\eta$ instead of $\W$, for fixed $\eta$. 
\item  Taking the $\liminf_{\eta \to 0}$ requires showing that $\W_\eta$ is bounded from below by a constant independent of $\eta$ (and then using Fatou's lemma).
This is accomplished in a crucial step, which consists in proving that minimizers of $\W_\eta$ have points that are well-separated: theirdistances are bounded below by a constant depending only on $\|\mu_0\|_{L^\infty}$. This relies on an unpublished result of Lieb \cite{Lie}, which can be found in dimension $d=2$ in \cite[Theorem 4]{rns} and which we readapt here to our setting.
This allows to  complete the proof of the lower bound.
\item We prove  the corresponding upper  bound inequality only at the level of the minimal energy.  The reason is that the proof relies on an explicit construction, based on the result of the previous step: we take a minimizer of $\W$ and we ``screen" it, as done in \cite{gl13,ss2d,ss1d}. This means we truncate it over some large box and periodize it, in order to be able to copy-paste it after proper rescaling in order to create an optimal configuration of points. This screening uses crucially  the preliminary result that the  points are well separated. 
At this point, we see that arbitrary configurations cannot  be screened:  for example if a configuration has a multiple point, then truncating and periodizing it would immediately result in an infinite $\W$, and obviously to an infinite $\w$!
We note also that if the upper bound inequality was holding for any general admissible $P$,  this would mean that {$\widetilde{ \W}$} is the complete $\Gamma$-limit (at next order) of $\w$. But in dimension $2$ at least, it has been  proven in \cite{ss2d} that the corresponding $\Gamma$-limit is  {$\widetilde{ \W}$}. By uniqueness of the $\Gamma$-limit, this would mean that $\int W\, dP= \int \W \, dP$ for all admissible probabilities $P$. But we know that  (at least in dimension 2)  $W $ and $\W$ are not equal, so there remains to know whether their average over admissible probabilities can coincide.
 
 \item Combining the lower bound and the matching upper bound immediately yields the result for minimizers.
\end{itemize}

\subsection{Main results: finite temperature case}\label{sec:results finite T}

We now turn to the case of positive temperature and study the Gibbs measure \eqref{eq:defi Gibbs state}. We shall need the common assumption that there exists $\beta_1>0$ such that
\begin{equation}\label{integrabilite}
\begin{cases}
\int e^{-\beta_1 V(x)/2} \, dx <\infty  & \mbox{ when } d\geq 3\\
\int e^{-\beta_1( \frac{V(x)}{2}- \log |x|)} \, dx <\infty & \mbox{ when } d=2. 
\end{cases}
\end{equation}
Note that this is only a slight strengthening of the assumption \eqref{eq:trap pot}.

By definition, the Gibbs measure minimizes the $n$-body free energy functional
\begin{equation}\label{eq:free ener N}
\Fnbeta [\mubf] := \int_{\R ^{dn}} \mubf(\xbf) H_n (\xbf) d\xbf + \frac{2}{\beta} \int_{\R ^{dn}} \mubf(\xbf) \log (\mubf (\xbf)) d\xbf
\end{equation}
over probability measures $\mubf \in \P (\R ^{dn})$. It is indeed an easy and fairly standard calculation to show that the infimum is attained\footnote{Remark that the symmetry constraint \eqref{eq:symmetry} is automatically satisfied even if do not impose it in the minimization.} at $\Q$ (cf. \eqref{eq:defi Gibbs state}) and that we have 
\begin{equation}\label{eq:free ener N min}
\Fnbetae := \inf _{\mubf \in \P (\R ^{dn})} \Fnbeta [\mubf]  =  \Fnbeta [\Q] = - \frac{2}{\beta} \log \Z.
\end{equation}
where $\Z$ is the partition function normalizing $\Q$.

\medskip

As announced in the introduction, we will mainly focus on the low temperature regime $\beta \gtrsim n ^{2/d-1}$ where we can get more precise results. We expect that crystallization occurs when $\beta \gg n ^{2/d-1}$ and that $\beta\propto n ^{2/d-1}$ should correspond to the liquid-crystal transition regime. Since they are the main basis of this conjecture, let us first state precisely our estimates on the $n$-body free energy defined above. In view of \eqref{eq:free ener N min}, estimating $\log \Z$ and $\Fnbetae$ are equivalent and we shall work with the later. We need to introduce the mean field free energy functional 
\begin{equation}\label{eq:MF free ener func}
\F [\mu] =  \En [\mu] + \frac{2}{n\beta} \int \mu \log \mu
\end{equation}
with minimizer, among probabilities,  $\mu_{\beta}$. It  naturally arises when taking $\mu ^{\otimes n}$ as a trial state in \eqref{eq:free ener N}. Note that for $\beta n \gg 1$, this functional is just a perturbation of \eqref{eq:def MF ener} and $\mubet$ agrees with $\mu_0$ to leading order (results of this kind are presented in \cite[Section 3]{RSY2}). For convenience we will restrict the high temperature regime in Theorem \ref{thm:partition} to $\beta \geq C n ^{-1}$ in order that the last term (entropy) in the mean-field free energy functional stays bounded above by a constant \footnote{The opposite case is in fact somewhat easier, see \cite[Section 3]{RSY2}.}. The results below do not depend on $C$. 

\begin{theo}[\textbf{Free energy/partition function estimates}]\label{thm:partition}\mbox{}\\
The following estimates hold.
\begin{enumerate}
\item (Low temperature regime).  {Let $\bar{\beta}:=\limsup_{n\to+\infty} \beta n^{1-2/d}$, and assume $0<\bar{\beta}\le + \infty$. There exists $C_{\bar{\beta}}>0$ depending only on $V$ and $d$, with $\lim_{\beta \to \infty} C_{\bar{\beta}}=0$ and  $C_{\infty}=0$  such that  
\begin{equation}\label{eq:partition low T}
\limsup_{n\to \infty}  n^{2/d-2} \left| \Fnbetae - n ^2 \En [\mu_0]  - n ^{2-2/d} \xi_d    \right| \leq    C_{\bar{\beta}},
\end{equation} } 
respectively if $d=2$ and $\beta \geq c (\log n) ^{-1} $ for $n$ large enough with $c>0$,  
\begin{equation}\label{eq:partition low T 2d}
\limsup_{n\to \infty} n^{-1} \left| \Fnbetae - n ^2 \En [\mu_0] + \frac{n}{2}\log n - n\xi_2  \right| \leq  C_{\bar{\beta}} ,
\end{equation} where $\xi_d$ is as in \eqref{eq:gamma d}.
\item (High temperature regime). If $d\geq 3$ and $ c_1 n ^{-1} \leq \beta \leq c_2 n ^{2/d - 1}$ for some $c_1,c_2>0$,  we have for $n$ large enough,
\begin{equation}\label{eq:partition high T}
\left| \Fnbetae - n ^2 \En [\mubet]  - \frac{2 n}{\beta} \int_{\R ^d} \mubet \log \mubet \right| \leq C n ^{2-2/d},
\end{equation} 
respectively if $d=2$ and $c_1 n ^{-1} \leq \beta \leq c_2 (\log n) ^{-1}$ for some $c_1, c_2>0$
\begin{equation}\label{eq:partition high T 2d}
\left| \Fnbetae - n ^2 \En [\mubet] - \frac{2 n}{\beta} \int_{\R ^d} \mubet \log \mubet \right| \leq C n \log n,
\end{equation} 
where $C$ depends only on $V$ and $d$.
\end{enumerate}
\end{theo}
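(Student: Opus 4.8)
The starting point is the identity $\Fnbetae = -\tfrac{2}{\beta}\log \Z$, so that the two estimates amount to upper and lower bounds on $\log\Z$, with $\Z=\int_{\R^{dn}}e^{-\frac\beta2 H_n(\xbf)}\,d\xbf$. I will use throughout the splitting formula for $H_n$ established in the proof of Theorem~\ref{th1} via Onsager's lemma: for every configuration $\xbf$,
\[
H_n(\xbf)=n^2\En[\mu_0]-\Big(\tfrac n2\log n\Big)\indic_{d=2}+n\sum_{i=1}^n\zeta(x_i)+\big(\text{renormalized electric energy of }\xbf\big),
\]
where $\zeta\ge 0$ is the effective confining potential (vanishing on $\E$, growing like $\tfrac12 V$ at infinity), and the renormalized electric energy is bounded below by $-C n^{2-2/d}$ pointwise (from the lower bound on $\W_\eta$, itself a consequence of the well-separation of minimizers, which rests on Lieb's unpublished estimate). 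I also use the value $E_n=n^2\En[\mu_0]-(\tfrac n2\log n)\indic_{d=2}+n^{2-2/d}\xi_d+o(n^{2-2/d})$ provided by Theorem~\ref{th1}, with $\xi_d$ as in \eqref{eq:gamma d}.

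\emph{Lower bound on $\Fnbetae$ (upper bound on $\Z$).} One first discards the far field: using the confining term $n\sum\zeta(x_i)$ together with the integrability assumption \eqref{integrabilite} (available since $\beta n\to\infty$ in all regimes considered), one bounds $\int_{\R^{dn}}e^{-\frac\beta2(H_n-E_n)}\le e^{C\beta n^{2-2/d}}\big(\int_{\R^d}e^{-\frac{\beta n}{2}\zeta}\big)^n\le e^{C\beta n^{2-2/d}+Cn}$. Hence $\Z\le e^{-\frac\beta2 E_n+C\beta n^{2-2/d}+Cn}$ and $\Fnbetae\ge E_n-Cn^{2-2/d}-\tfrac{2C}{\beta}n$, which already gives the low-temperature estimate with an error $O(n^{2-2/d})$. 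To obtain the sharp constant $C_{\bar\beta}$ (in particular $C_\infty=0$) one replaces the crude lower bound on the renormalized energy by its sharp uniform-in-$\xbf$ version, i.e. one re-runs the ergodic/$\W_\eta$ lower bound argument of Theorem~\ref{th1} on a minimizing sequence of probability measures for $\Fnbeta$, carrying along the entropy term, which is of lower order once $\beta\gtrsim n^{2/d-1}$; this yields $\Fnbetae\ge n^2\En[\mu_0]-(\tfrac n2\log n)\indic_{d=2}+n^{2-2/d}\xi_d-C_{\bar\beta}n^{2-2/d}$. In the high-temperature regime the same scheme, now keeping the renormalized energy only to precision $-Cn^{2-2/d}$ and identifying the remaining terms as a lower bound for the mean-field free energy $n^2\F[\mubet]$ (via subadditivity of entropy, as in standard perturbative analysis of the mean-field functional), produces $\Fnbetae\ge n^2\En[\mubet]+\tfrac{2n}{\beta}\int\mubet\log\mubet-Cn^{2-2/d}$, respectively $-Cn\log n$ in dimension $2$.

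\emph{Upper bound on $\Fnbetae$ (lower bound on $\Z$) by trial states.} In the high-temperature regime I use $\mubf=\mubet^{\otimes n}$; a direct computation gives $\Fnbeta[\mubet^{\otimes n}]=n^2\F[\mubet]-nD(\mubet,\mubet)=n^2\En[\mubet]+\tfrac{2n}{\beta}\int\mubet\log\mubet-nD(\mubet,\mubet)$, and since $D(\mubet,\mubet)=O(1)$ (by the regularity of $\mubet$), the error $nD(\mubet,\mubet)=O(n)$ is $\le Cn^{2-2/d}$ for $d\ge3$, resp. $\le Cn\log n$ for $d=2$. In the low-temperature regime one must capture the $n^{2-2/d}\xi_d$ correction, hence use a near-crystalline trial state; the difficulty is that localizing $n$ points at the microscopic scale $n^{-1/d}$ naively costs an entropy $\sim n\log n$, too large when $\beta\propto n^{2/d-1}$. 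I resolve this as follows: starting from a near-minimizer $\xbf^0=(x_1^0,\dots,x_n^0)$ of $H_n$ produced by the screening construction of Theorem~\ref{th1}, whose points are well separated, $|x_i^0-x_j^0|\ge r_0 n^{-1/d}$, I fix $\delta$ equal to a small multiple of $n^{-1/d}$ (say $\delta=\tfrac{r_0}{4}n^{-1/d}$) and take $\mubf$ to be the \emph{symmetrization over the $n!$ permutations} of $\bigotimes_{i=1}^n \tfrac1{|B(x_i^0,\delta)|}\one_{B(x_i^0,\delta)}$. Because the balls $B(x_i^0,\delta)$ are pairwise disjoint, $\mubf$ is the uniform probability measure on $A=\bigsqcup_\sigma\prod_i B(x^0_{\sigma(i)},\delta)$, so $\mathrm{Ent}(\mubf)=-\log|A|=-\log n!-n\log|B(0,\delta)|=O(n)$: the $-\log n!$ from the symmetrization exactly cancels the $+n\log n$ coming from the balls of volume $\sim n^{-1}$. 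Since $H_n$ is permutation-symmetric, $\int H_n\,d\mubf$ equals its value against the un-symmetrized product; by Newton's theorem (the Coulomb kernel is harmonic off the origin) and the mean value property, $\int\g(x_i-x_j)\,d\mubf=\g(x_i^0-x_j^0)$ exactly for each pair (as $2\delta<|x_i^0-x_j^0|$), while the one-body term changes by only $n\sum_i\big(\dashint_{B(x_i^0,\delta)}V-V(x_i^0)\big)=O(n^2\delta^2)=O(n^{2-2/d})$ by regularity of $V$. Therefore
\[
\Fnbeta[\mubf]=H_n(\xbf^0)+O(n^{2-2/d})+\tfrac{2}{\beta}\,O(n)=n^2\En[\mu_0]-\Big(\tfrac n2\log n\Big)\indic_{d=2}+n^{2-2/d}\xi_d+C_{\bar\beta}\,n^{2-2/d},
\]
where $\tfrac2\beta O(n)=O(n^{2-2/d}/(\beta n^{1-2/d}))$ and the $O(n^{2-2/d})$ from the one-body term is made small by choosing the multiple in $\delta$ small; hence $C_{\bar\beta}\to0$ as $\bar\beta\to\infty$ and $C_\infty=0$. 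Combining with the lower bound proves the theorem.

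\emph{Main obstacle.} The delicate step is the low-temperature upper bound. One needs a trial measure that is at the same time (i) energetically near-optimal, which forces a crystalline microscopic structure and thus requires the screening construction from the proof of Theorem~\ref{th1}, crucially relying on the well-separation of minimizers of $\W_\eta$, and (ii) of entropy only $O(n)$ rather than $O(n\log n)$; the symmetrization device above is precisely what reconciles the two and must be dovetailed with the screening so that the per-cell entropy bookkeeping remains linear. A secondary technical point is establishing the sharp \emph{uniform} lower bound on the renormalized electric energy needed to reach $C_\infty=0$ in the lower bound on $\Fnbetae$, along with the far-field control based on \eqref{integrabilite}.
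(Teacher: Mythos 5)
Your overall architecture matches the paper's: a pointwise Onsager/splitting lower bound on $H_n$ integrated against the Gibbs weight for the low-temperature lower bound on $\Fnbetae$, a near-minimizer perturbed in small disjoint balls with the entropy made $O(n)$ by symmetrization for the upper bound, and the product $\mubet^{\otimes n}$ trial state at high temperature. Your low-temperature upper bound is a clean variant of the paper's Proposition~\ref{promajo}: invoking the mean-value property of the Coulomb kernel twice to see that the averaged two-body energy over the disjoint balls equals $\sum_{i\neq j}\g(x_i^0-x_j^0)$ exactly, rather than relying on the paper's continuity-of-the-energy argument, is an elegant observation; both versions rest on the same screening construction from Theorem~\ref{th1} and on the identical entropy bookkeeping $-\log n!-n\log|B(0,\delta)|=O(n)$, so this is a cosmetic rather than structural difference.

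The genuine gap is in the high-temperature lower bound. You assert that after using the $\mu_0$-centered splitting to precision $-Cn^{2-2/d}$ one can ``identify the remaining terms as a lower bound for the mean-field free energy $n^2\F[\mubet]$ via subadditivity of entropy.'' This does not work. The $\mu_0$-centered splitting produces $n^2\En[\mu_0]+2n\int\sum_i\zeta(x_i)\,d\mubf$, and since $\mu_0$ minimizes $\En$ one has $\En[\mu_0]\le\En[\mubet]$, so the comparison with $n^2\F[\mubet]$ goes the wrong way at leading order; subadditivity of entropy (which in this paper is used only to convert entropy controls into total-variation estimates on marginals, cf.\ Remark~\ref{rem:TV estimates}) at best recovers $o(n^2)$ information and cannot supply an $O(n^{2-2/d})$ remainder. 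The missing step is a $\mubet$-centered splitting (the paper's Lemma~\ref{lemsplit high T}): apply Onsager's lemma with $\mu=n\mubet$ instead of $n\mu_0$, drop the nonnegative quadratic term, and substitute the variational equation $2h_{\mubet}+V+\tfrac{2}{n\beta}\log\mubet=\F[\mubet]+D(\mubet,\mubet)$ from Lemma~\ref{lem:MF positive T}. This yields the pointwise bound $H_n(\xbf)\ge n^2\F[\mubet]-\tfrac{2}{\beta}\sum_i\log\mubet(x_i)-\big(\tfrac n2\log n\big)\indic_{d=2}-Cn^{2-2/d}$, and the $\log\mubet(x_i)$ terms then combine with the entropy of $\mubf$ to form the nonnegative relative entropy $\int\mubf\log\big(\mubf/\mubet^{\otimes n}\big)$, which is what closes the lower bound.
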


The leading order contribution to \eqref{eq:partition low T}-\eqref{eq:partition high T 2d} has been recently derived, along with the corresponding large deviation principle,  in \cite{CGZ} under the assumption $\beta \gg n ^{-1} \log n $ in our units. In this regime one may use either $\En[\mubet]$ or $\En[\mu_0]$ for leading order considerations, but when $\beta \propto n ^{-1}$ it is necessary to use the former.

In the regime $\beta \gg n ^{2/d - 1}$ \eqref{eq:partition low T} says that the free energy agrees with the ground state energy (compare with \eqref{eq:gse second order}) up to a negligible remainder, and this is the only case where we prove the existence of a  thermodynamic limit. We conjecture that this corresponds to a transition to a crystalline state. On the other hand, \eqref{eq:partition high T} shows that for $\beta \ll n ^{2/d - 1}$ the effect of the  entropy at the macroscopic scale prevails over the jellium energy at the microscopic scale. This alone is no indication that there is no crystallization in this regime: One may perfectly well imagine that a term similar to the subleading term in \eqref{eq:gse second order} appears at a further level of approximation (one should at least replace $\mu_0$ by $\mu_{\beta}$ but that would encode no significantly different physics at the microscopic scale). We 
believe that this is not the case and that the entropy should also appear at the microscopic scale, but this goes beyond what we are presently able to prove. Note that when $d\ge 3$, the fixed  $\beta $ regime is deep in the low-temperature regime, and so we do not expect any particular transition to happen at this order of inverse temperature.

The case $d=2$, in which we re-obtain the result of \cite{ss2d} is a little bit more subtle due to the particular nature of the Coulomb kernel, which is at the origin of the $n\log n$ term in the expansion. As in higher dimensions, comparing \eqref{eq:gse second order} and \eqref{eq:partition low T 2d} we see that free energy and ground state energy agree to subleading order when $\beta \gg 1$, which we conjecture to be the crystallization regime. The estimate \eqref{eq:partition high T 2d} shows that the entropy is the subleading contribution only when $\beta \ll (\log n) ^{-1}$ however. It is not clear from Theorem \ref{thm:partition} what exactly happens when $(\log n) ^{-1} \lesssim \beta \lesssim 1$. We expect  entropy terms at both macroscopic and microscopic levels to enter. For the regime $\beta \propto 1$ (the most studied regime, e.g. \cite{BZ}), one can see the further discussion in \cite{ss2d} and Theorem \ref{th4} below.

\medskip 

Our next result exposes the consequences of \eqref{eq:partition low T}-\eqref{eq:partition low T 2d} for the Gibbs measure itself. Roughly speaking, we prove that it charges only configurations whose renormalized energy  {$\widetilde{\W}$} is below a certain threshold, in the sense that other configurations have exponentially small probability in the limit $n\to \infty$. This is a large deviations type upper bound at speed $n^{2-2/d}$, but a complete large deviations principle is missing.
The threshold of renormalized energy vanishes in the limit $\beta \gg n ^{2/d-1}$, showing that the Gibbs measure charges only configurations that minimize the renormalized energy at the microscopic scale, which is a proof of crystallization, modulo the question of proving that minimizers of $\W$ are really crystalline. 
Our precise statement is {again} complicated by the double scale nature of the problem and the fact that the renormalized energy takes electric fields rather than charge configurations as argument. Its phrasing uses the same framework as Theorem \ref{th1} {in terms of the limiting probability measures $P$, which should now be seen as random, due to the temperature. In fact we also prove the existence of a limiting ``electric field process", as $n\to \infty$, i.e. a limiting probability on the $P$'s, which is the limit of the push-forwards of $\Q$ by $i_n$.}

  We will consider the limit of the probability that the system is in a state $(x_1,\ldots,x_n)\in A_n$ for a given sequence of sets $A_n \subset (\R ^{d})^n$ in configuration space. Associated to this sequence we introduce
\begin{equation}\label{ainf} 
A_\infty = \left\{ P \in \P(X) : \exists \xbf_n \in A_n, \, P_{\nu_n} \wto P \  \text{up to a subsequence}\right\}
\end{equation}
where $P_{\nu_n}$ is as in \eqref{eq:Pnun} with $\xbf=\xbf_n$ and the convergence is weakly as measures.  $A_\infty$ can be thought of as the limsup in the sense of sets of the sequence of sets $i_n(A_n)$.

\begin{theo}[\textbf{Microscopic behavior of thermal states in the low temperature regime}]\label{th4}
For any $n>0$ let $A_n\subset (\mr^d)^n$ and $A_\infty$ be as above.
{Let $\bar{\beta}>0 $ be as in Theorem~\ref{thm:partition}, $\xi_d$ as in~\eqref{eq:gamma d}. There exists $C_{\bar{\beta}}$ depending only on $V$ and $d$ such that $C_{\bar{\beta}}=0$ for $\bar{\beta} = \infty$ and 
\begin{equation}
\label{ldr}
\limsup_{n\to \infty} \frac{ \log \Q(A_n)}{n ^{2-2/d}} \le -\frac{\beta}{2} \left(\inf_{P\in A_\infty}\widetilde{\W}(P)  - \xi_d  - C_{\bar{\beta}}  \right).
\end{equation}}
Moreover, for fixed $\beta>0$,  letting $\widetilde{ \Q}$ denote the push-forward of $\Q$ by $i_n$ (defined in \eqref{eq:Pnun}), $\{\widetilde{\Q}\}_{n}$ is tight and converges as $n\to \infty$, up to a subsequence, to a probability measure on $\mathcal P(X)$ which is concentrated on admissible probabilities satisfying  $\widetilde{\W}(P) \le \xi_d + C_{\bar{\beta}}$. 
\end{theo}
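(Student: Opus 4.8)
The plan is to prove the large deviations upper bound \eqref{ldr} first, and then deduce the statement on $\widetilde{\Q}$ by soft arguments. Write $m_n := n^2\En[\mu_0] - (\tfrac n2\log n)\,\indic_{d=2}$. Starting from $\Q(A_n)=\Z^{-1}\int_{A_n}e^{-\frac{\beta}{2}\w}$ and $\log\Z=-\tfrac{\beta}{2}\Fnbetae$,
\begin{equation*}
\log\Q(A_n) \;=\; \tfrac{\beta}{2}\Fnbetae \;+\; \log\int_{A_n}e^{-\frac{\beta}{2}\w(\xbf)}\,d\xbf .
\end{equation*}
The first term I bound above using the easy (trial-state) half of Theorem~\ref{thm:partition}: $\Fnbetae \le m_n + n^{2-2/d}(\xi_d + C_{\bar{\beta}} + o(1))$. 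For the second term the main input is the lower bound on $\w$ established en route to Theorem~\ref{th1}: if $P_{\nu_n}\wto P$ in $\P(X)$ then $\liminf_n n^{2/d-2}(\w(\xbf_n)-m_n)\ge\widetilde{\W}(P)$, together with the uniform floor $n^{2/d-2}(\w-m_n)\ge -C$ coming from the $\eta$-uniform lower bound on $\W_\eta$ (the ``crucial step'' resting on Lieb's separation estimate \cite{Lie}).

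Fix $\epsilon>0$ and set $L:=\inf_{P\in A_\infty}\widetilde{\W}(P)\in(-\infty,+\infty]$. A compactness argument gives: for every $M<\infty$ there is $n_0$ so that for $n\ge n_0$ every $\xbf\in A_n$ satisfies $n^{2/d-2}(\w(\xbf)-m_n)\ge\min(L-\epsilon,M)$ — otherwise one extracts $\xbf_n\in A_n$ violating this; the bound $<M$ makes the blown-up fields precompact in $\Lp$, so by the framework of \cite{gl13,ss2d} one has $P_{\nu_n}\wto P\in A_\infty$ along a subsequence, whence $\liminf n^{2/d-2}(\w(\xbf_n)-m_n)\ge\widetilde{\W}(P)\ge L$, a contradiction. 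Now split $A_n=(A_n\cap G_n)\sqcup(A_n\setminus G_n)$ with $G_n:=\{n^{2/d-2}(\w-m_n)\le M\}$. On $A_n\cap G_n$ the floor gives $\int e^{-\frac{\beta}{2}\w}\le |G_n|\,e^{-\frac{\beta}{2}(m_n+n^{2-2/d}\min(L-\epsilon,M))}$, and $|G_n|\le e^{n\psi(M)}$ by an elementary volume estimate (when $d\ge3$, $\w\ge n\sum V(x_i)$ forces $\sum V(x_i)\lesssim n$, and a Chernoff bound using \eqref{integrabilite} gives the claim; the $d=2$ case is handled as in \cite{ss2d}, using that the $-\tfrac n2\log n$ in $m_n$ reflects microscopic separation). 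On $A_n\setminus G_n$ one sums over the shells $\{k<n^{2/d-2}(\w-m_n)\le k+1\}$, $k\ge M$, bounding each by $e^{n\psi(k+1)}e^{-\frac{\beta}{2}(m_n+kn^{2-2/d})}$; for $M$ large (depending only on $V,d,\bar{\beta}$) this is dominated by $k=M$ and is negligible. Combining, dividing by $n^{2-2/d}$ and taking $\limsup$:
\begin{equation*}
\limsup_{n\to\infty}\frac{\log\Q(A_n)}{n^{2-2/d}} \;\le\; \frac{\beta}{2}\bigl(\xi_d+C_{\bar{\beta}}-\min(L-\epsilon,M)\bigr)+\psi(M)\limsup_{n\to\infty}n^{2/d-1}.
\end{equation*}
For $d\ge3$ the last term vanishes; letting $M\to\infty$ then $\epsilon\to0$ yields \eqref{ldr}. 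For $d=2$ one has $n^{2-2/d}=n$, so the $n$-scale remainders (the $\psi(M)n$ term, the entropy corrections, the $\tfrac n2\log n$ bookkeeping) live at the target scale, and using $\beta\ge c(\log n)^{-1}$ and $\bar{\beta}>0$ one checks that they merge into an enlarged $C_{\bar{\beta}}$ still depending only on $V,d$ and vanishing for $\bar{\beta}=\infty$.

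For the statement on $\widetilde{\Q}$ ($\beta$ fixed): combining the free-energy upper bound with the shell estimate above gives, for every $M$ large, $\Q\bigl(n^{2/d-2}(\w-m_n)>M\bigr)\le e^{-c(M)n^{2-2/d}}$ with $c(M)n^{2-2/d}\to\infty$; via the same compactness mechanism the family $\{i_n(\xbf):n^{2/d-2}(\w(\xbf)-m_n)\le M\}$ is precompact in $\P(X)$, so $\{\widetilde{\Q}\}_n$ is tight on $\P(\P(X))$ and converges along a subsequence to some $\widetilde{\Q}_\infty$. That $\widetilde{\Q}_\infty$-almost every $P$ is admissible follows because every weak limit of $i_n(\xbf_n)$ is admissible — the first marginal is the normalized Lebesgue measure on $\E$ automatically, and $T_{\lambda(x)}$-invariance and $\j\in\overline{\mathcal{A}}_{\mu_0(x)}$ come from the blow-up framework of \cite{gl13,ss2d} — while $\widetilde{\Q}_\infty$ is carried by such limits. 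Finally, the same free-energy/shell estimate gives $\Q\bigl(n^{2/d-2}(\w-m_n)>\xi_d+C_{\bar{\beta}}+\delta\bigr)\to0$ for each $\delta>0$; hence $\widetilde{\Q}_\infty$ is carried by accumulation points $P$ of sequences $i_n(\xbf_n)$ with $n^{2/d-2}(\w(\xbf_n)-m_n)\le\xi_d+C_{\bar{\beta}}+\delta$, and the $\Gamma$-$\liminf$ of Theorem~\ref{th1} forces $\widetilde{\W}(P)\le\xi_d+C_{\bar{\beta}}+\delta$ for all such $P$; letting $\delta\downarrow0$ gives concentration on $\{\widetilde{\W}\le\xi_d+C_{\bar{\beta}}\}$ (for $d\ge3$ and $\beta$ fixed, $\bar{\beta}=\infty$ and $C_{\bar{\beta}}=0$, so $\widetilde{\Q}_\infty$ is concentrated on minimizers of $\widetilde{\W}$).

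\emph{Main obstacle.} Granting from Theorem~\ref{th1} the $\Gamma$-$\liminf$ together with the $\eta$-uniform control of $\W_\eta$, the crux of the present argument is to run the above \emph{without} loss beyond the prescribed $C_{\bar{\beta}}$: the volume/entropy factors must be tracked sharply. This is harmless for $d\ge3$, where they are $e^{o(n^{2-2/d})}$, but genuinely delicate for $d=2$, where they sit at the very scale $n=n^{2-2/d}$ one is trying to resolve and must be absorbed into $C_{\bar{\beta}}$ — which is exactly what the hypothesis $\beta\ge c(\log n)^{-1}$ is for. The second non-formal ingredient is the verification, supplied by the ergodic/blow-up framework of \cite{gl13,ss2d}, that all weak limits of the $i_n$'s are admissible, in particular $T_{\lambda(x)}$-invariant.
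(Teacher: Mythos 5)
The blind proposal reaches the correct statement, but takes a genuinely different and more laborious route than the paper for the crucial step of controlling $\int_{A_n}e^{-\frac{\beta}{2}\w}$. You drop the positive $2n\sum_i\zeta(x_i)$ term from the splitting lower bound and then try to recover the lost volume control by hand: a floor-plus-shell decomposition of $A_n$ according to the value of $n^{2/d-2}(\w-m_n)$, together with a Chernoff/volume estimate $|G_n|\le e^{n\psi(M)}$ coming from $\w\ge n\sum_iV(x_i)$ and \eqref{integrabilite}. The paper instead \emph{keeps} the $\zeta$-term (this is exactly Remark \ref{remzeta}): since the splitting gives, pointwise, $\w\ge m_n + 2n\sum_i\zeta(x_i)+n^{2-2/d}(\cdot)$, the factor $\prod_ie^{-\beta n\zeta(x_i)}$ falls out of the integrand and, by \eqref{eq:integr zeta}, $\int\prod_ie^{-\beta n\zeta(x_i)}\to C^n$ directly controls the whole integral over $A_n$ — no shell decomposition, no separate volume estimate, no split between ``good'' and ``bad'' energy ranges. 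This is both shorter and dimension-uniform.

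The difference matters concretely in $d=2$. Your volume bound rests on $\w\ge n\sum_iV(x_i)$, which is false there because $\g=-\log$ is not sign-definite; you acknowledge this by deferring to ``\cite{ss2d}, using that the $-\tfrac n2\log n$ reflects microscopic separation,'' but as written the estimate $|G_n|\le e^{n\psi(M)}$ is not established in $d=2$, and this is precisely the scale you claim is ``genuinely delicate.'' The paper's $\zeta$-route sidesteps this entirely: the bound $\int e^{-n\beta\zeta}\to|\{\zeta=0\}|$ holds uniformly in $d$ and plugs in directly. Your shell argument does work cleanly for $d\ge 3$ (where $n\psi(k)=O(kn^{1-2/d})+O(n)$ is beaten by $\frac{\beta}{2}kn^{2-2/d}$), but it is solving a problem the paper never creates. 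The compactness-by-contradiction step you spell out (``for $n\ge n_0$ every $\xbf\in A_n$ satisfies $n^{2/d-2}(\w-m_n)\ge\min(L-\ep,M)$'') is implicit in the paper's terser phrasing and is a reasonable way to make the passage from Proposition \ref{promino} to a uniform-over-$A_n$ lower bound explicit. The tightness/concentration part of your argument matches the paper's in spirit, except that the paper phrases the sublevel sets $A_{n,M}$ in terms of the smeared electric-field energy $\int|\nab\het|^2$ rather than $\w$, which dovetails more directly with the compactness input (Lemma \ref{lemprel}); your phrasing requires an extra invocation of the splitting to pass from the $\w$-bound to the field-energy bound.

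In short: correct overall structure (free-energy upper bound, pointwise lower bound on $\w$, compactness-by-contradiction, then tightness and concentration), but you reinvent a harder control of the $\xbf$-integral than necessary, and the $d=2$ volume estimate is left as a gap that the paper's $\zeta$-trick eliminates.
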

Note that the error term {$C_{\bar{ \beta}}$} becomes negligible when $\beta \gg n ^{2/d-1}$. This means as announced that the Gibbs measure concentrates on minimizers of  {$\widetilde{\W}$} in that regime. When $\beta =\bar{\beta} n^{2/d-1}$, we have instead a threshold phenomenon: the Gibbs measure concentrates on configurations whose {$\widetilde{ \W} $} is below the minimum plus {$C_{\bar{\beta}}$}.
 For the 2D case, a partial converse to \eqref{ldr} is proved in \cite{ss2d}, establishing a kind of large deviation principle at speed $n$ in the regime $\beta \gg 1$, but with $W$ instead of $\W$.  A full LDP should involve an additional entropy term and  is still missing. 
 
 \subsection{Estimates on deviations from mean-field theory}\label{sec:results deviations}
  
Our methods also yield some quantitative estimates on the fluctuations from the equilibrium measure. We again focus on the low temperature regime for concreteness but if need arises our methods could work just as well in the opposite regime, with less hope of optimality however. The following estimates are insufficient for a crystallisation result, as noted before, but they set bounds on the possible deviations from mean-field theory in a stronger sense than what has been considered so far, namely by estimating charge deviations and  correlation functions.

We will estimate the following quantity, defined for any configuration $(x_1,\ldots,x_n)\in\R^{dn}$ 
\begin{equation}\label{eq:intro charge fluctu}
D(x,R):=  \nu_n (B(x,R)) - n \int_{B(x,R)} \mu_0, 
\end{equation}
i.e. the deviation of the charge contained in a ball $B(x,R)$ with respect to the prediction drawn from the equilibrium measure. Here 
\begin{equation}\label{eq:empirical measure}
\nu_n  = \sum_{i=1} ^n  \delta_{x_i}.
\end{equation}
Note that since $\mu_0$ is a fixed bounded function, the second term in \eqref{eq:intro charge fluctu} is of order $n R^d$, and typically the difference of the two terms is much smaller than this, since the distribution of points at least approximately follows the equilibrium measure.

For pedagogical purposes we only state two typical results:
\begin{enumerate}
\item the probability of a large deviation of the number of points in a ball $B(x,R)$ (i.e. a deviation of order $n R ^d$) is exponentially small as soon as $R\gtrsim n ^{-1/(d+2)}$.
\item the probability of a charge deviation of order $n ^{1-1/d}$ in a macroscopic ball $B(x,R)$ with $R = O(1)$ is exponentially small.
\end{enumerate}
We deduce these results from sharper but more complicated estimates similar to \cite[Eq. (1.47) and (1.49)]{ss2d}, see Section \ref{sec:fluctu charge}.  Actually these contain a whole continuum of results of the type above: the larger the scale at which one considers the deviations, the smaller they are allowed to be. If need be, any scale $R$ satisfying $n ^{-1/(d+2)} \lesssim R \leq C$ can be considered, and deviations on this scale are exponentially small, with a rate depending on the scale. We believe that the two results mentioned above are a sufficient illustration of our methods and we proceed to state them rigorously, along with an estimate on the discrepancy between $\nu_n$ and the equilibrium measure $n\mu_0$ in weak Sobolev norms. Below $W^{-1, q}(\Omega)$ denotes the dual of the Sobolev space $W^{1,q'}_0(\Omega)$ with $1/q+1/q'=1$, in particular $W^{-1,1}$ is the dual of Lipschitz functions.

\begin{theo}[\textbf{Charge fluctuations}]\label{thm:charge fluctu}\mbox{}\\
Assume there is a constant $C>0$ such that $\beta \geq C n ^{2/d - 1}$. Then the following holds for any $x\in \R ^d$.
\begin{enumerate}
\item (Large fluctuations on microscopic scales). Let $R_n$ be a sequence of radii satisfying $R_n \geq C_R \: n ^{-\frac{1}{d+2}}$ for some  constant $C_R$. Then, for any $\lambda>0$ we have, for $n$ large enough,
\begin{equation}\label{eq:fluctu charge micro}
\Q \left( |D(x,R_n)| \geq \lambda n R_n ^{d}\right) \leq C e ^{-C \beta n ^{2-2/d}(C_R\lambda^2 - C)},
\end{equation}
for some $C$ depending only on dimension.
\item (Small fluctuations at the macroscopic scale). Let $R>0$ be a fixed radius. There is a constant  $C$ depending only on dimension such that for any $\lambda>0$, for $n$ large enough, 
\begin{equation}\label{eq:fluctu charge macro}
\Q \left( |D(x,R)| \geq  \lambda n^{1-1/d} \right) \leq C e ^{-C \beta n ^{2-2/d}( \min(\lambda^2  R^{2-d}, \lambda^4 R^{2-2d})   -C) }. 
\end{equation}
\item (Control in weak Sobolev norms). Let $R>0$ be some fixed radius. Let $1 \leq q < \frac{d}{d-1}$ and define
\begin{align*}
t_{q,d} &= 2 - \frac{1}{d} - \frac{1}{q}<1\\
\tilde{t}_{q,d} &= 3   -\frac{1}{d} - \frac{2}{dq}  >0.
\end{align*}
There is a constant $C_R>0$ such that the following holds for $n$ large enough, and any $\lambda$ large enough, 
\begin{equation}\label{eq:fluctu field}
\Q \left( \left\Vert \nu_n - n\mu_0 \right\Vert_{W^{-1,q} (B_R)} \geq \lambda n ^{t_{q,d}}\right) \leq C e ^{- \beta C_R \lambda^2 n ^{ \tilde{t}_{q,d}}}.
\end{equation}
\end{enumerate}
\end{theo}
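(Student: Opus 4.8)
\textbf{Proof strategy for Theorem \ref{thm:charge fluctu}.}

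The plan is to reduce each of the three statements to a single deterministic estimate: a lower bound on the Hamiltonian $\w$ in terms of the charge discrepancy $D(x,R)$, and then to feed this into the free energy upper bound \eqref{eq:partition low T} via a standard exponential Chebyshev (``Gibbs concentration'') argument. First I would establish the deterministic inequality. Using the splitting formula based on Onsager's lemma (the one underlying Theorem \ref{th1}), one writes $\w(x_1,\dots,x_n) = n^2\En[\mu_0] - \tfrac{n}{2}\log n\,\indic_{d=2} + n^{2-2/d}\,\widetilde{\W}(P_{\nu_n}) + (\text{controlled errors})$, where $\widetilde{\W}\geq \xi_d$ by the lower bound part of Theorem \ref{th1}. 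The key point is that $\widetilde{\W}(P_{\nu_n})$, being essentially an average of $\W_\eta$-type quantities, controls $\int |\j_\eta|^2$ on boxes, hence by Cauchy-Schwarz controls the charge imbalance: on a ball $B(x,R)$, a discrepancy $D(x,R)$ forces $\int_{B(x,R)} |\nabla h_n'|^2 \gtrsim c_d^{-1} D(x,R)^2 R^{-(d-2)}$ for $R$ macroscopic (respectively, after rescaling, $|D(x,R)|^2 R^{2-d}$ and the quartic term $|D(x,R)|^4 R^{2-2d}$ appear on microscopic scales, exactly as in \cite[Sec.~5]{ss2d}), because the potential generated by the excess charge must be created somewhere and its Dirichlet energy is bounded below by the capacity-type quantity. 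This is where the precise exponents $t_{q,d}$, $\tilde t_{q,d}$ and the $\min(\lambda^2 R^{2-d},\lambda^4 R^{2-2d})$ structure come from, and I would import the corresponding computations from \cite{ss2d} essentially verbatim, since they are purely electrostatic and dimension-agnostic.

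Next I would run the probabilistic step. Writing $\Q(A_n) = \Z^{-1}\int_{A_n} e^{-\frac{\beta}{2}\w}\,d\xbf$ and bounding $\Z$ from below by \eqref{eq:partition low T}--\eqref{eq:partition low T 2d} (equivalently $\log\Z \geq -\frac{\beta}{2}(n^2\En[\mu_0] + n^{2-2/d}\xi_d) - C_{\bar\beta}\beta n^{2-2/d}/2$ in the regime $\beta\geq Cn^{2/d-1}$, with the $d=2$ modification), one gets
\begin{equation*}
\Q(A_n) \leq C\exp\left(-\frac{\beta}{2}\Big(\inf_{\xbf\in A_n}\w(\xbf) - n^2\En[\mu_0] + \tfrac{n}{2}\log n\,\indic_{d=2} - n^{2-2/d}\xi_d\Big) + C\beta n^{2-2/d}\right).
\end{equation*}
Taking $A_n = \{|D(x,R_n)|\geq \lambda n R_n^d\}$ (or the macroscopic analogue, or the Sobolev-norm event) and inserting the deterministic lower bound on $\w$ on this set — which produces an extra $+c\,\beta\,n^{2-2/d}\lambda^2$ (resp.\ the quartic/min expression, resp.\ $\beta C_R\lambda^2 n^{\tilde t_{q,d}}$) beyond $\xi_d$ — immediately yields the stated exponential bounds, the extra $\beta n^{2-2/d}$ error being absorbed into the $-C$ inside the exponent. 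The Sobolev estimate \eqref{eq:fluctu field} follows the same way once one notes that $\|\nu_n-n\mu_0\|_{W^{-1,q}(B_R)}$ is, up to the blow-up rescaling $x'=n^{1/d}x$, controlled by $\|\nabla h_n'\|_{L^q}$ on the blown-up ball, which in turn is dominated by $\|\nabla h_n'\|_{L^2}^{?}$ times a volume factor via Hölder — the exponents $t_{q,d}$ and $\tilde t_{q,d}$ being precisely what the rescaling bookkeeping gives.

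The main obstacle, and the step requiring genuine care rather than transcription, is the deterministic lower bound relating $\int|\j_\eta|^2$ over a box to the charge discrepancy while simultaneously keeping all $\eta$-dependent and boundary error terms under control uniformly in $n$. Concretely, the splitting formula is not exact — it carries $o(n^{2-2/d})$ remainders coming from the regularization at scale $\eta$, from the replacement of $\mu_0(n^{-1/d}x')$ by its local constant value, and from cutting off at the boundary of $\E$ — and one must verify these remainders do not swamp the $\lambda^2$ (or $\lambda^4$) gain, especially on the microscopic scale $R_n\sim n^{-1/(d+2)}$ where the gain $\beta n^{2-2/d}\lambda^2$ is only marginally larger than the error $\beta n^{2-2/d}$ (this is exactly why the threshold exponent $-1/(d+2)$ appears: it is the scale at which $nR^d$, the size of a ``large'' deviation, times the electrostatic cost $R^{2-d}$ starts to dominate). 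I would handle this by choosing $\eta = \eta(n)\to 0$ slowly, using the well-separation of points from the second bullet in the proof sketch of Theorem \ref{th1} to control overlaps, and tracking the $R$-dependence of every error term explicitly, much as in \cite[Sec.~5]{ss2d} but paying attention to the $d\geq 3$ scaling $n^{2-2/d}$ in place of $n$.
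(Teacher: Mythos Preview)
Your overall strategy is correct and matches the paper: combine the Onsager-based splitting lower bound with a deterministic electrostatic estimate relating $\int|\nabla h'_{n,\eta}|^2$ to the discrepancy $D(x',R)$, then insert into the partition-function upper bound via Chebyshev. The probabilistic step and the scaling bookkeeping are exactly as you describe.

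However, the ``main obstacle'' you identify in your last paragraph is largely a phantom, and worrying about it leads you toward an unnecessarily complicated argument. The splitting lower bound (Lemma~\ref{lemsplit}, equation~\eqref{pred2}) is an \emph{inequality} valid for \emph{every} configuration, not just well-separated ones --- this is precisely the virtue of Onsager's lemma over the exact splitting of Lemma~\ref{split1}. So there is no need to take $\eta=\eta(n)\to 0$, and no need to invoke well-separation of points (which is only available for minimizers and would not hold for generic configurations charged by the Gibbs measure). One simply fixes any $0<\eta<1$ once and for all; the resulting error $C\eta^2 n^{2-2/d}$ is absorbed into the $-C$ inside the exponent. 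Your proposed route through well-separation would not work, since there is no reason a typical thermal configuration has points separated at scale independent of $n$.

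Relatedly, you propose to import the deterministic discrepancy estimate from \cite{ss2d} ``verbatim''. The paper instead proves the much simpler Lemma~\ref{lem:fluctu charge} (and Lemma~\ref{lem:fluctu field} for Item~3) directly from the smeared potential $\het$: integrate $-\Delta\het$ on an annulus, use positivity of the smeared charges and Cauchy--Schwarz on spheres. This avoids the ball-construction/displaced-density machinery of \cite{ss2d,SSmass} and the Lorentz-space estimates of \cite{SeTi}, and is in fact one of the selling points of the smearing approach. The resulting bound is $\int|\nabla\het|^2 \gtrsim D^2 R^{2-d}\min(1,D/R^d)$, which gives the cubic (not quartic, as you wrote) alternative when $D\ll R^d$; this is what produces the $\min$ structure in Item~2 after rescaling.
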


As regards \eqref{eq:fluctu field}, note that only estimates in $W^{-1,q}$ norm with $q<\frac{d}{d-1}$ make sense, since a Dirac mass is in $W^{-1,q}$  if and only if $q<\frac{d}{d-1}$  (hence the same for $\nu_n$). In view of the values of the parameters $t_{q,d}$ and $\tilde{t}_{q,d}$ defined above, our results give meaningful estimates for any such norm: a large deviation in a ball of fixed radius would be a deviation of order $n$. Since $t_{q,d}<1$, equation \eqref{eq:fluctu field} above implies that such deviations are exponentially unlikely in $W^{-1,q}$ for any $q<\frac{d}{d-1}$, in particular in $W^{-1,1}$ for any space dimension.

The proof of Item 3 is based on a control in $W^{-1,2}$ of $\tilde{\nu}_n - \mu_0$, where $\tilde{\nu}_n$ is the regularization of $\nu_n$ with charges smeared-out on a scale $n ^{-1/d}$. This is another instance where the method of smearing out charges makes the proof siginificantly easier than in \cite{ss2d}, replacing the use of a ``displaced" energy density \cite{SSmass} and Lorentz space estimates \cite{SeTi}. 

\medskip

Finally we state some consequences for the marginals (reduced densities) of the probability \eqref{eq:defi Gibbs state}. Let us denote 
\begin{equation}\label{eq:marginal Gibbs}
\Qk (x_1,\ldots, x_k) = \int_{ \xbf' \in \R ^{d(n-k)}}  \Q (x_1,\ldots, x_k,\xbf') \, d\xbf'.
\end{equation}
Remark that since $\Q$ is symmetric w.r.t. exchange of variables, it does not matter over which $n-k$ particles we integrate to define $\Qk$ (particles are indistinguishable). The value $\Qk (x_1,\ldots, x_k)$ is interpreted as the probability density for having one particle at $x_1$, one particle at $x_2$, $\ldots$, and one particle at $x_k$. 

\begin{coro}[\textbf{Marginals of the Gibbs measure in the low temperature regime}]\label{thm:marginals}\mbox{}\\
Let $R>0$ be some fixed radius, $1\leq q<\frac{d}{d-1}$. Under the same assumptions as Theorem \ref{thm:charge fluctu}, there exists a constant $C>0$ depending only on the  dimension such that the following holds:  
\begin{itemize}
\item (Estimate on the one-particle reduced density).
\begin{equation}\label{eq:result marginal 1}
\left\Vert \Qone - \mu_0 \right\Vert_{W^{-1,q} (B_R)} \leq C n^{1-1/d-1/q}=o_n(1). 
\end{equation}
\item (Estimate on $k$-particle reduced densities). Let $k\geq 2$ and $\varphi : \R ^{dk} \mapsto \R$ be a smooth function with compact support, symmetric w.r.t. particle exchange. Then we have
\begin{multline}\label{eq:result marginal k}
\left| \int_{\R ^{dk}} \left( \Qk - \mu_0 ^{\otimes k}\right) \varphi \right| \leq C \left( k n ^{1-1/d-1/q} + k ^2 n ^{-1} \right) 
\\ \sup_{x_1 \in \R ^d} \ldots \sup_{x_{k-1} \in \R ^d }   \left\Vert \nabla \varphi (x_1,\ldots,x_{k-1}, \:. \: ) \right\Vert_{L ^p (\R ^d)},
\end{multline}
where $1/p = 1 - 1/q$.
\end{itemize}
\end{coro}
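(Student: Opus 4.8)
The plan is to derive Corollary~\ref{thm:marginals} from the charge-fluctuation estimate \eqref{eq:fluctu field} of Theorem~\ref{thm:charge fluctu} by an averaging argument. The starting observation is that both quantities to be estimated are, up to normalization, expectations under the Gibbs measure $\Q$ of functionals of the empirical measure $\nu_n$. For the one-particle density, $\int_{\R^d} (\Qone - \mu_0)\psi = \frac1n \mathbb{E}_{\Q}\left[ \int (\nu_n - n\mu_0)\psi \right]$ for any test function $\psi$, so that by duality $\|\Qone - \mu_0\|_{W^{-1,q}(B_R)} \le \frac1n \mathbb{E}_{\Q}\left[ \|\nu_n - n\mu_0\|_{W^{-1,q}(B_R)} \right]$.

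First I would turn the tail bound \eqref{eq:fluctu field} into a bound on the expectation $\mathbb{E}_{\Q}\left[ \|\nu_n - n\mu_0\|_{W^{-1,q}(B_R)} \right]$. Writing $X_n := \|\nu_n - n\mu_0\|_{W^{-1,q}(B_R)}$, equation \eqref{eq:fluctu field} says $\Q(X_n \ge \lambda n^{t_{q,d}}) \le C e^{-\beta C_R \lambda^2 n^{\tilde t_{q,d}}}$ for $\lambda$ large enough. One must also note the deterministic a priori bound $X_n \le C n$ (since $\nu_n$ and $n\mu_0$ are both nonnegative measures of mass $n$ on a region containing $B_R$, and a Dirac mass has bounded $W^{-1,q}$ norm for $q<\frac{d}{d-1}$), so the tail above $\lambda \sim n^{1-t_{q,d}}$ is vacuous and contributes nothing. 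Integrating the tail, $\mathbb{E}_{\Q}[X_n] \le \lambda_0 n^{t_{q,d}} + \int_{\lambda_0}^{\infty} n^{t_{q,d}} \Q(X_n \ge \lambda n^{t_{q,d}}) \, d\lambda \le C n^{t_{q,d}}$, where the integral is $O(n^{t_{q,d}})$ because $\beta n^{\tilde t_{q,d}} \to \infty$ under the standing hypothesis $\beta \ge C n^{2/d-1}$ (one checks $\tilde t_{q,d} + (2/d - 1) > 0$, i.e. $\tilde t_{q,d} > 1 - 2/d$, which holds for $q<\frac{d}{d-1}$). Dividing by $n$ gives $\|\Qone - \mu_0\|_{W^{-1,q}(B_R)} \le C n^{t_{q,d}-1} = C n^{1-1/d-1/q}$, which is \eqref{eq:result marginal 1}; it is $o_n(1)$ since $t_{q,d}<1$.

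Next, for the $k$-particle estimate \eqref{eq:result marginal k} I would use a telescoping/conditioning argument comparing $\Qk$ to $\mu_0^{\otimes k}$ one variable at a time, in the spirit of the Dobrushin–Diaconis–Freedman-type bounds or the arguments of \cite{ss2d,RSY2}. Write
\[
\Qk - \mu_0^{\otimes k} = \sum_{j=1}^{k} \mu_0^{\otimes (j-1)} \otimes \left( \Q^{(k-j+1)}(\,\cdot\, | \text{earlier variables}) - \mu_0 \otimes \Q^{(k-j)}(\cdots) \right),
\]
so that testing against $\varphi$ and integrating the earlier variables reduces each term to an expression controlled by (i) the one-particle deviation already estimated, producing the $k\, n^{1-1/d-1/q}$ contribution after pairing the $\nabla\varphi$ in the free slot with the $W^{-1,q}$ bound and taking sup over the frozen variables, and (ii) the intrinsic lack of exchangeability/independence of $\Q$, i.e. the difference between $\Q^{(2)}$ and $\Qone \otimes \Qone$, which is of size $O(n^{-1})$ per pair and hence $O(k^2 n^{-1})$ after summing — this is the classical mean-field two-point decorrelation estimate obtainable from the subadditivity of $\Fnbeta$ or directly from \eqref{eq:partition low T}. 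Summing the $k$ terms and using Hölder with $1/p = 1-1/q$ to pass from $W^{-1,q}$ pairing to the $L^p$ norm of $\nabla\varphi$ yields \eqref{eq:result marginal k}.

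The main obstacle I anticipate is item (ii): extracting the $O(k^2 n^{-1})$ decorrelation term cleanly. The naive route — bounding $\|\Q^{(2)} - \Qone^{\otimes 2}\|$ — is not obviously available in a strong enough norm, so one likely has to argue variationally: compare $\Fnbeta[\Q]$ with $\Fnbeta$ evaluated at the product $\Qone^{\otimes n}$ (or a hybrid state), use the already-established free-energy expansion \eqref{eq:partition low T} to see that the difference is $O(n^{2-2/d})$, and then convert this energy gap into a bound on relative entropy of marginals via subadditivity of entropy and the Csiszár–Kullback–Pinsker inequality, which is exactly where the $n^{-1}$ per particle pair and the smoothness ($\nabla\varphi$) of the test function enter. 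Making the constants uniform in $k$ and tracking that the test function only needs $\nabla\varphi \in L^p$ rather than something stronger is the delicate bookkeeping; the rest is routine given Theorem~\ref{thm:charge fluctu}.
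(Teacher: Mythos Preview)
Your argument for the one-particle estimate is correct and essentially matches the paper: both reduce to bounding $\mathbb{E}_{\Q}[\|\nu_n-n\mu_0\|_{W^{-1,q}(B_R)}]$ (equivalently $\mathbb{E}_{\Q}[\|\nabla h_n\|_{L^q}]$) via the tail bound \eqref{eq:fluctu field} and then divide by $n$.

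For the $k$-particle estimate, however, you misidentify the source of the $k^2 n^{-1}$ term and thereby create an obstacle that is not actually there. The paper does \emph{not} use any decorrelation estimate of the type $\|\Q^{(2)}-\Qone\otimes\Qone\|=O(n^{-1})$, and indeed the entropy/CKP route you sketch is precisely the one the paper flags (Remark~\ref{rem:TV estimates}) as only being useful in the \emph{high}-temperature regime, not here. Instead, the $k^2 n^{-1}$ arises from a purely combinatorial, deterministic step: by symmetry
\[
\int_{\R^{dk}}\Qk\,\varphi
=\mathbb{E}_{\Q}\Big[\tfrac{(n-k)!}{n!}\sum_{i_1\neq\cdots\neq i_k}\varphi(x_{i_1},\ldots,x_{i_k})\Big]
=\mathbb{E}_{\Q}\Big[\tfrac{1}{n^k}\sum_{i_1,\ldots,i_k}\varphi(x_{i_1},\ldots,x_{i_k})\Big]+O\Big(\tfrac{k^2}{n}\Big)\|\varphi\|_{L^\infty},
\]
the error coming from the difference between ordered distinct $k$-tuples and all $k$-tuples (a de Finetti--type counting). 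The full sum is exactly $\int\varphi\,d\nu_n^{\otimes k}$, so now one telescopes $\nu_n^{\otimes k}-(n\mu_0)^{\otimes k}$ algebraically (not $\Qk$ with conditioning) as
\[
\nu_n^{\otimes k}-(n\mu_0)^{\otimes k}
=\sum_{j=0}^{k-1}(n\mu_0)^{\otimes j}\otimes(\nu_n-n\mu_0)\otimes\nu_n^{\otimes(k-j-1)},
\]
and each of the $k$ terms is bounded by $n^{k-1}\|\nabla h_n\|_{L^q}\sup_{x_1,\ldots,x_{k-1}}\|\nabla\varphi(x_1,\ldots,x_{k-1},\cdot)\|_{L^p}$ via the same integration-by-parts/H\"older as in Item~1. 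Taking expectation and dividing by $n^k$ gives the $k\,n^{1-1/d-1/q}$ term; the $\|\varphi\|_{L^\infty}$ in the combinatorial error is controlled by the Sobolev embedding $W^{1,p}\hookrightarrow L^\infty$ (valid since $p>d$), which is why the final bound is stated in terms of $\|\nabla\varphi\|_{L^p}$. Your conditioning telescoping and the proposed variational/entropy extraction of decorrelation are unnecessary and, in this regime, would not obviously close.
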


As our other results, Corollary \ref{thm:marginals} concerns the low temperature regime. One may use the same technique to estimate the discrepancy between the $n$-body problem and mean-field theory in other regimes. When $\beta n $ becomes small however (large temperature), in particular when $\beta \sim n ^{-1}$ so that entropy and energy terms in \eqref{eq:MF free ener func} are of the same order of magnitude, a different method can give slightly better estimates. Since this regime has been considered for related models in several works \cite{MS,Kie1,Kie2,CLMP}, it is worth mentioning that quantitative estimates in the spirit of \eqref{eq:result marginal 1}-\eqref{eq:result marginal k} can be obtained in total variation norm. We refer to Remark \ref{rem:TV estimates} in Section \ref{sec:partition} for details on this approach, which is in a slightly different spirit from what we have presented so far.

\subsection{Organization of the paper}

In Section \ref{sec4}, we smear out the charges at scale $\eta$ for fixed $\eta$ and we use Onsager's lemma to obtain a sharp ``splitting formula" for the Hamiltonian, in which the leading order and next order contributions decouple. 
We also obtain a control  on the charge fluctuations  and on the electric field in terms of the next order term in the Hamiltonian.

In Section \ref{sec5} we start taking the limit $\eta \to 0$ in the estimates of the previous section,   and prove the lower bound on the energy via the ergodic  framework of Varadhan presented in  \cite{gl13}. This assumes the lower bound on $\W_\eta$.

In Section \ref{sec:univ low bound}, we use the result of \cite{Lie} to reduce to points that are  well-separated and deduce a lower bound on $\W_\eta$ independent on $\eta$. This requires  a ``screening result", which will also be used in Section  \ref{sec7}, where we prove an upper bound for the minimal energy by the construction of a precise test-configuration.
This finishes the proof of Theorem \ref{th1}.

In Section \ref{sec8}, we apply all the previous results to the case with temperature, and deduce Theorems \ref{thm:partition}, \ref{th4} and \ref{thm:charge fluctu}.

Appendices \ref{sec:appendix} and \ref{sec:appendix 2} respectively contain a direct proof of the existence of a minimizer for the renormalized energy functional and a discussion of the relation of the two versions of the renormalized energy functional.

\section{Splitting formulae and control on fluctuations}\label{sec4}
 
In this section, we start to exploit the idea of smearing out the charges and Onsager's lemma in a way similar to \cite{RSY2}. We also explore easy corollaries that can be obtained for fixed $\eta$. Since the smearing our procedure turns out to have no effect on the energy of configurations with well-separated points we start by discussing these particular configurations in the following subsection.

\subsection{Preliminaries and well-separated configuraions}\label{sec3}

We start with a lemma that shows that if $\j\in \bam$ and $\W(\j) <\infty$ then the density of points is indeed equal to that of the neutralizing background, i.e. $m$.
From now on, for any $\j \in \bam $ we denote by $\nu$ the corresponding measure of singular charges, i.e. $\sum_{p\in\Lambda} N_p \delta_p$.

\begin{lem}[\textbf{Density of points in finite-energy configurations}]\label{lembornnu}\mbox{}\\
Let $\j \in \bam$ be such that $\W_\eta(\j) <\infty$ for some $\eta\le 1$, and let $\nu = -\div \j+m.$
Then we have $\lim_{R\to \infty}\frac{\nu(K_R)}{|K_R|} =m$. \footnote{We could easily prove the same result with averages on balls or other reasonable shapes.}
\end{lem}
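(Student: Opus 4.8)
The plan is to work with the smeared field $\j_\eta$, whose distributional divergence is by \eqref{delp} the locally integrable object $-c_d\big(\sum_p N_p\delta_p^{(\eta)}-m\big)$ rather than a sum of Diracs, prove the density statement for the smeared charge, and transfer it back to $\nu$ only at the very end. The sole consequence of the hypothesis that I would use is that, since $\W_\eta(\j)=\limsup_{R\to\infty}\dashint_{K_R}|\j_\eta|^2-m(\kappa_d\g(\eta)+\gamma_2\indic_{d=2})$ is finite and $\j_\eta\in L^2_{loc}$, there is a constant $C$ (depending on $\j$ and on the fixed $\eta$) with $\int_{K_R}|\j_\eta|^2\le C(1+R^d)$ for every $R>0$.

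Set $\tilde\nu_\eta(K_t):=\sum_{p\in\Lambda}N_p\int_{K_t}\delta_p^{(\eta)}$, the smeared charge contained in the cube $K_t=[-t,t]^d$ (a finite sum, since $\Lambda$ is discrete). Integrating \eqref{delp} over $K_t$ and applying the divergence theorem gives, for a.e. $t>0$,
$$\tilde\nu_\eta(K_t)-m|K_t|=-\frac1{c_d}\int_{\partial K_t}\j_\eta\cdot\mathbf n\,d\mathcal H^{d-1},$$
whence $\big|\tilde\nu_\eta(K_t)-m|K_t|\big|\le c_d^{-1}|\partial K_t|^{1/2}g(t)^{1/2}$ with $g(t):=\int_{\partial K_t}|\j_\eta|^2$. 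By the coarea formula $\int_0^T g(t)\,dt=\int_{K_T}|\j_\eta|^2\le C(1+T^d)$, so by Markov's inequality I can choose, for each integer $k$, a radius $t_k\in[k,k+1]$ lying in the positive-measure set where both the displayed identity holds and $g(t_k)=O(k^d)$. Since $|\partial K_{t_k}|=2d(2t_k)^{d-1}=O(k^{d-1})$ and $|K_{t_k}|=(2t_k)^d\ge(2k)^d$, this yields
$$\left|\frac{\tilde\nu_\eta(K_{t_k})}{|K_{t_k}|}-m\right|\le\frac{|\partial K_{t_k}|^{1/2}\,g(t_k)^{1/2}}{c_d\,|K_{t_k}|}=O(k^{-1/2}),$$
which tends to $0$ as $k\to\infty$.

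To upgrade this to a genuine limit I would use monotonicity: $t\mapsto\tilde\nu_\eta(K_t)$ is nondecreasing and $t_k\in[k,k+1]$ forces $t_k/t_{k+1}\to1$, so for $t\in[t_k,t_{k+1}]$ one sandwiches $\tilde\nu_\eta(K_t)/|K_t|$ between $\tilde\nu_\eta(K_{t_k})/|K_{t_{k+1}}|$ and $\tilde\nu_\eta(K_{t_{k+1}})/|K_{t_k}|$, both of which converge to $m$; hence $\tilde\nu_\eta(K_t)/|K_t|\to m$ as $t\to\infty$. Finally, since $\delta_p^{(\eta)}$ has mass $1$ and support in $B(p,\eta)$, one has the elementary inclusions $\nu(K_{R-\eta})\le\tilde\nu_\eta(K_R)\le\nu(K_{R+\eta})$, hence $\tilde\nu_\eta(K_{R-\eta})\le\nu(K_R)\le\tilde\nu_\eta(K_{R+\eta})$; as $\eta$ is fixed, $|K_{R\pm\eta}|/|K_R|\to1$, and dividing by $|K_R|$ gives $\nu(K_R)/|K_R|\to m$, which is the assertion.

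The only genuine (and still minor) obstacle I foresee is this last monotonicity step: the control on the boundary energy $g(t)$ is merely an $L^1$-average estimate in $t$, so $\int_{\partial K_t}\j_\eta\cdot\mathbf n$ cannot be controlled for every single $t$, and one must instead extract a sequence of ``good'' radii $t_k$ that is nonetheless dense enough (namely $t_{k+1}/t_k\to1$) for the monotonicity squeeze to deliver the full limit. The flux identity, the Cauchy--Schwarz/Markov selection, and the smeared-versus-point sandwich are all routine.
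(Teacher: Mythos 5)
Your proof is correct and follows essentially the same path as the paper's: a growth bound $\int_{K_R}|\j_\eta|^2\le C_\eta R^d$ from $\W_\eta(\j)<\infty$, a mean-value/Markov selection of a good radius $t\in[R-1,R]$ where the boundary energy is controlled, the divergence theorem plus Cauchy--Schwarz to estimate the smeared charge in $K_t$, and the inclusion sandwich between $\nu$ on slightly smaller/larger cubes. The one point where the paper is a bit slicker: rather than first proving the limit for $\tilde\nu_\eta$ along a subsequence and then using a monotonicity squeeze, the paper exploits directly that the chosen $t\in[R-1,R]$ sits between $K_{R-2}$ and $K_{R+1}$, so the flux estimate immediately yields two-sided bounds $\nu(K_{R-2})\le m|K_R|+CR^{(d-1)/2}\|\j_\eta\|_{L^2(K_R)}$ and $\nu(K_{R+1})\ge m|K_{R-1}|-CR^{(d-1)/2}\|\j_\eta\|_{L^2(K_R)}$ valid for \emph{every} $R$, making your monotonicity-interpolation step (the ``only genuine obstacle'' you flagged) unnecessary.
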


\begin{proof}
First we show that 
\begin{equation}\label{contnu}
\nu(K_{R-2})\le  m |K_R|+ C R^{\frac{d-1}{2}} \|\j_\eta\|_{L^2(K_R)} \qquad 
\nu(K_{R+1}) \ge m |K_{R-1}| - C R^{\frac{d-1}{2}} \|\j_\eta\|_{L^2(K_R)}. \end{equation}
To prove this, first by a mean value argument, we   find 
$t \in [R-1,R]$ such that
\begin{equation}\label{bbord}
\int_{\p K_t} |\j_\eta|^2 \le  \int_{K_R} |\j_\eta|^2.\end{equation}
Let us next  integrate \eqref{delp} over $K_t$ and use Stokes's theorem to find
\begin{equation}
\int_{ K_t} \sum_{p \in \Lambda } N_p \delta_p^{(\eta)}  - m |K_t| =  - \int_{\p K_t} \j _\eta\cdot \vec{\nu},
\end{equation}where $\vec{\nu}$ denotes the outer unit normal.
Using the Cauchy-Schwarz inequality and \eqref{bbord}, we deduce that
\begin{equation}\label{feg}
\left|\int_{ K_t} \sum_{p \in \Lambda } N_p \delta_p^{(\eta)}  - m |K_t| \right|\le C R^{\frac{d-1}{2}} \| \j_\eta\|_{L^2(K_R)} .\end{equation}
Since $\eta\le 1$, by definition of $\nu$ and since the $\delta_p^{(\eta)}$ are supported in $B(p,\eta)$, we have $\nu(K_{R-2})\le \int_{ K_t} \sum_{p \in \Lambda } N_p \delta_p^{(\eta)} \le \nu(K_{R+1}) $ in view of the definition of $\nu= \sum N_p \delta_p$.  The claim \eqref{contnu} follows. 
Since $\W_\eta(\j) <+\infty$ we have $\int_{K_R} |\j_\eta|^2 \le C_\eta R^d$ for any $R>1$.  Inserting this into \eqref{contnu}, dividing  by $|K_R|$ and letting $R \to \infty$, we easily get the result.\end{proof}

We next turn to configurations with well-separated points. 
We will need the following scaling relation, which can be obtained from \eqref{eq:def Coul ener} and \eqref{kapd} by a change of variables:
\begin{equation}\label{chvard}
\begin{cases}
& D(\delta_0^{(\ell)}, \delta^{(\ell)}_0) =\frac{\kappa_d}{c_d} \g(\eta) \quad \text{if } d\ge 3\\
& D(\delta_0^{(\ell)}, \delta^{(\ell)}_0)=   \g(\eta)+\frac{\gamma_2}{c_2} = \frac{\kappa_2}{c_2} \g(\eta)+\frac{ \gamma_2 }{c_2}\quad \text{if } d=2.\end{cases}\end{equation}

\begin{lem}[\textbf{The energy of well-separated configurations}]\label{lemequiv}\mbox{}\\
Assume  that $\j=\nab h$ satisfies
 \begin{equation}\label{eqce}- \Delta h = c_d \Big( \sum_{p \in \Lambda } \delta_p - a(x)\Big)\end{equation}
 in some subset $U\subset \mr^d$, for some $a(x) \in L^\infty(U)$, and $\Lambda $ a discrete subset of $U$, and
\begin{equation}\label{wellsep}\min\left(\min_{p\neq p'\in \Lambda} |p-p'|, \min_{p\in \Lambda} \dist (p, \p U)\right)\ge  \eta_0>0.\end{equation}
Then, we have
\begin{equation}\label{lequi}
 \int_{U} |\j_\eta|^2 -  \# ( \Lambda\cap U)(  \kappa_d \g(\eta)+  \gamma_2 \indic_{d=2})= W(\j, \indic_U) + \# (\Lambda \cap U) o_\eta(1)\|a\|_{L^\infty(U)}, \end{equation}
where $o_\eta(1) \to 0$ as $\eta \to 0$ is a function that depends only on the dimension. Moreover, \begin{equation}\label{eq:lowbound sep}
W(\j, \indic_U) \ge - C \#(\Lambda \cap U), 
\end{equation}
where $C>0$ depends only on the  dimension, $\gamma_2$ (hence the choice of smearing function $\ro$), $\|a\|_{L^\infty}$  and $\eta_0$.
\end{lem}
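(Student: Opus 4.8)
I would localize the computation around each point of $\Lambda$. Since $\eta\le\eta_0$, the separation hypothesis \eqref{wellsep} guarantees that the balls $B(p,\eta)$, $p\in\Lambda\cap U$, are pairwise disjoint and compactly contained in $U$, and that on each fixed ball $B(p,\eta_0/2)$ the ``nonsingular part'' $g_p:=h-\g(\cdot-p)$ solves $-\Delta g_p=-c_d a$, hence lies in $W^{2,q}_{\mathrm{loc}}$ there. Because $f_\eta(\cdot-p)$ is supported in $B(p,\eta)$ we have $\j_\eta=\j$ outside $\bigcup_p B(p,\eta)$, so
\[\int_U|\j_\eta|^2=\int_{U\setminus\bigcup_p B(p,\eta)}|\j|^2+\sum_{p\in\Lambda\cap U}\int_{B(p,\eta)}|\j_\eta|^2 .\]
On $B(p,\eta)$ one has $h_\eta=\psi_p+g_p$ with $\psi_p:=\g(\cdot-p)+f_\eta(\cdot-p)=\g*\delta_p^{(\eta)}$, the potential of the smeared charge; by Newton's theorem $\psi_p$ equals the constant $\g(\eta)$ on $\partial B(p,\eta)$. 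I would then split $\int_{B(p,\eta)}|\nabla h_\eta|^2=\int_{B(p,\eta)}\nabla h_\eta\cdot\nabla\psi_p+\int_{B(p,\eta)}\nabla h_\eta\cdot\nabla g_p$ and integrate by parts in each term, always differentiating the factor $\psi_p$ (constant on the sphere) and using $-\Delta h_\eta=c_d(\delta_p^{(\eta)}-a)$, $-\Delta g_p=-c_d a$, $\nabla h_\eta=\nabla\psi_p+\nabla g_p$. The bulk term $c_d\int\psi_p\delta_p^{(\eta)}=c_d D(\delta_0^{(\eta)},\delta_0^{(\eta)})$ is exactly $\kappa_d\g(\eta)+\gamma_2\indic_{d=2}$ by \eqref{chvard}, the boundary term of $\partial_\nu h_\eta$ against the constant $\psi_p=\g(\eta)$ yields $-c_d\g(\eta)$, and every leftover integral has the form $\int_{B(p,\eta)}\psi_p a$ or $\g(\eta)\int_{B(p,\eta)}a$, hence is $O(\|a\|_{L^\infty}\eta^2)$ — up to a factor $|\log\eta|$ when $d=2$ — since $\int_{B(0,\eta)}|\g*\delta_0^{(\eta)}|\lesssim\eta^2$ and $\g(\eta)\eta^d\lesssim\eta^2$. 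Crucially this route never involves the (uncontrolled) value of $g_p$ near $p$, and it gives, uniformly in $p$,
\[\int_{B(p,\eta)}|\j_\eta|^2=\kappa_d\g(\eta)+\gamma_2\indic_{d=2}-c_d\g(\eta)+\int_{B(p,\eta)}|\nabla g_p|^2+\|a\|_{L^\infty}o_\eta(1),\]
with $o_\eta(1)$ depending only on $d$ and $\ro$.

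Next I would identify $W(\j,\indic_U)$. On $B(p,\eta_0/2)$ write $\j=\nabla\g(\cdot-p)+\nabla g_p$; using the elementary identity $\int_{B(p,r)\setminus B(p,\eta)}|\nabla\g(\cdot-p)|^2=c_d(\g(\eta)-\g(r))$ for $\eta<r\le\eta_0/2$, the quantity $\int_{U\setminus\bigcup_p B(p,\eta)}|\j|^2-c_d\g(\eta)\#(\Lambda\cap U)$ is seen to be Cauchy as $\eta\to0$ (its variation between scales $\eta$ and $r$ is $\sum_p(2\int_{B(p,r)\setminus B(p,\eta)}\nabla\g(\cdot-p)\cdot\nabla g_p+\int_{B(p,r)\setminus B(p,\eta)}|\nabla g_p|^2)$, which $\to0$ because $\nabla\g(\cdot-p)\in L^1$ and $\nabla g_p\in L^\infty$ near $p$). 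So the limit in \eqref{Wold} exists and is finite, and moreover
\[\int_{U\setminus\bigcup_p B(p,\eta)}|\j|^2-c_d\g(\eta)\#(\Lambda\cap U)=W(\j,\indic_U)-\sum_p\Big(2\!\!\int_{B(p,\eta)}\!\!\nabla\g(\cdot-p)\cdot\nabla g_p+\!\!\int_{B(p,\eta)}\!\!|\nabla g_p|^2\Big).\]
Adding this to the sum over $p$ of the ball identity above, the $\sum_p\int_{B(p,\eta)}|\nabla g_p|^2$ terms cancel exactly, leaving $W(\j,\indic_U)+\sum_p(\|a\|_{L^\infty}o_\eta(1)-2\int_{B(p,\eta)}\nabla\g(\cdot-p)\cdot\nabla g_p)$; one more integration by parts (again $\g(\cdot-p)$ is constant on $\partial B(p,\eta)$ and $\Delta g_p=c_d a$) shows $\int_{B(p,\eta)}\nabla\g(\cdot-p)\cdot\nabla g_p=O(\|a\|_{L^\infty}\eta^2)$, up to $|\log\eta|$ when $d=2$. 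This yields \eqref{lequi}.

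For \eqref{eq:lowbound sep} I would let $\eta\to0$ in the last display to obtain
\[W(\j,\indic_U)=\int_{U\setminus\bigcup_p B(p,\eta_0/2)}|\j|^2-c_d\g(\eta_0/2)\#(\Lambda\cap U)+\sum_p\Big(2\!\!\int_{B(p,\eta_0/2)}\!\!\nabla\g(\cdot-p)\cdot\nabla g_p+\!\!\int_{B(p,\eta_0/2)}\!\!|\nabla g_p|^2\Big),\]
where the first integral and each $\int_{B(p,\eta_0/2)}|\nabla g_p|^2$ are $\ge0$, while integration by parts on the \emph{fixed} sphere $\partial B(p,\eta_0/2)$ gives $2\int_{B(p,\eta_0/2)}\nabla\g(\cdot-p)\cdot\nabla g_p=2c_d\int_{B(p,\eta_0/2)}\g(\cdot-p)a+2\g(\eta_0/2)c_d\int_{B(p,\eta_0/2)}a\ge-C(d,\eta_0,\|a\|_{L^\infty})$ (finite since $\g(\cdot-p)\in L^1(B(p,\eta_0/2))$ and $\eta_0$ is fixed). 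Hence $W(\j,\indic_U)\ge-\big(|c_d\g(\eta_0/2)|+C(d,\eta_0,\|a\|_{L^\infty})\big)\,\#(\Lambda\cap U)$, which is the claim (the dependence on $\gamma_2$ and $\ro$ enters only through the $d=2$ bookkeeping).

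The heart of the matter, and the step needing care, is the near-point computation of the first paragraph: one must arrange the integrations by parts so that no boundary term involving the a priori uncontrolled behaviour of $g_p$ near $p$ survives, which is exactly what makes the remainder genuinely $o_\eta(1)\|a\|_{L^\infty}$ \emph{uniformly in} $p$ rather than merely finite for each fixed configuration. The mechanism is always to differentiate the factor that is constant on the relevant sphere ($\psi_p$, or $\g(\cdot-p)$), combined with the exact cancellation of the $\int|\nabla g_p|^2$ contributions between the interior balls and the tail of the exterior integral; the only remaining bookkeeping nuisance is then the extra constant $\gamma_2$ and the stray $|\log\eta|$ factors in dimension $2$.
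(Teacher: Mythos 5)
Your argument is correct, and it takes a genuinely different route from the paper's. The paper works directly with the difference $f_\eta=\g*\delta_0^{(\eta)}-\g$, writes $\j_\eta=\j+\sum_p\nabla f_\eta(\cdot-p)$, and computes the three resulting pieces on punctured annuli $B(p,\eta)\setminus B(p,\alpha)$ (letting an auxiliary scale $\alpha\to0$ at the end), using Green's formula on $f_\eta$ alone and never invoking the regularity of $h$ near $p$. You instead split $h_\eta=\psi_p+g_p$ around each point, where $\psi_p=\g*\delta_p^{(\eta)}$ and $g_p$ is the regular remainder solving $\Delta g_p=c_d a$ near $p$; the Newton's-theorem fact that $\psi_p$ is constant on $\partial B(p,\eta)$ lets you arrange every integration by parts so that no uncontrolled boundary value of $g_p$ survives, and the term $\sum_p\int_{B(p,\eta)}|\nabla g_p|^2$ cancels exactly between the smeared-ball energy and the $W$-identity. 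Your route to \eqref{eq:lowbound sep} is also different: where the paper uses a ball-construction/co-area lower bound for $\int_{B(p,\eta_0/2)\setminus B(p,\eta)}|\j|^2$ via Cauchy--Schwarz on spheres (cf.\ \eqref{minfacile}), you instead evaluate your identity at the fixed scale $\eta_0/2$ and exploit the nonnegativity of $\int_{U\setminus\cup B(p,\eta_0/2)}|\j|^2$ and $\int_{B(p,\eta_0/2)}|\nabla g_p|^2$ together with one integration by parts on the cross term. What the paper's approach buys is self-containedness (it never needs interior elliptic regularity of the potential, which matters if one wants to weaken assumptions on $a$), and its lower-bound technique is the same ball-construction estimate reused elsewhere; what yours buys is a transparent structural reason for the lower bound --- $W$ decomposes as nonnegative terms plus cross terms controlled solely by $\|a\|_{L^\infty}$, $d$ and $\eta_0$ --- and a cleaner bookkeeping of the $o_\eta(1)$ remainder since all error terms are of the explicit form $\g(\eta)\int_{B_\eta}a$ or $\int_{B_\eta}\g\,a$. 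Two small typographical notes: the sign in your final integration by parts should be $2\int_{B(p,\eta_0/2)}\nabla\g\cdot\nabla g_p=-2c_d\int_{B(p,\eta_0/2)}\g(\cdot-p)\,a+2c_d\g(\eta_0/2)\int_{B(p,\eta_0/2)}a$ (you wrote $+$ on the first term), which does not affect the conclusion; and you should note explicitly, as you implicitly do, that the $\epsilon$-boundary term $\int_{\partial B(p,\epsilon)}\g\,\partial_\nu g_p$ vanishes as $\epsilon\to0$ when justifying the integration by parts across the singularity of $\g$.
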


\begin{proof}
We recall that by definition of $\j_\eta$ (cf. Definition \ref{def:smear field}) we have 
$$\j_\eta= \j + \sum_{p\in \Lambda} \nab f_{\eta}(x-p).$$
Since the $B(p, \eta_0) $ are disjoint and included in $U$, and $f_{\eta}$ is identically $0$  outside of $B(0, \eta)$ we may write for any $\eta<\eta_0$, and any $0<\alpha<\eta$,
\begin{multline}\label{pu12}
\int_{U\backslash \cup_{p\in \Lambda} B(p, \alpha)}  |\j_\eta|^2 = \int_{U
 \backslash \cup_{p\in \Lambda} B(p, \alpha)}
 |\j|^2 + \# (\Lambda \cap U) \int_{B(0,\eta)\backslash B(0,\alpha)}  |\nab f_{\eta} |^2 \\ + 2 \sum_{p \in \Lambda} \int_{B(p, \eta)\backslash B(p,\alpha)}  \nab f_\eta(x-p) \cdot \j.
 \end{multline}
First we note that, using Green's formula,  and $\vec{\nu}$ denoting the outwards pointing unit normal to $\p B(0, \alpha)$ we have
\begin{equation*}
\label{}
\int_{B(0,\eta)\backslash B(0,\alpha)} |\nab f_{\eta}|^2 = -\int_{\p B(0, \alpha)} f_\eta \nab f_\eta \cdot \vec{\nu} + c_d \int_{B(0,\eta)\backslash B(0,\alpha)}f_\eta \delta_0^{(\eta)}.\end{equation*}
By Green's formula again and the definition of $f_\eta$ we have  
$$\int_{\p B(0, \alpha)}
\nab f_\eta \cdot \vec{\nu}= - c_d \int_{B(0, \alpha)} \delta_0^{(\eta)} + c_d=  c_d + o_\alpha(1)$$
as $\alpha \to 0$,  and combining with the fact
that $f_\eta= \g*\delta_0^{(\eta)} - \g$  (see its definition \eqref{eqf0}) we find
\begin{equation}\label{partief0}
\int_{B(0,\eta)\backslash B(0,\alpha)} |\nab f_{\eta}|^2 = -
 c_d f_\eta (\alpha)  + c_d \int_{\mr^d} \left( \g*\delta_0^{(\eta)}\right) \delta_0^{(\eta)} - c_d \int_{\mr^d} \g \delta_0^{(\eta)} +o_\alpha(1) .
\end{equation}
We next observe that $c_d \int_{\mr^d} \left( \g*\delta_0^{(\eta)}\right) \delta_0^{(\eta)}=c_d D(\delta_0^{(\eta)}, \delta_0^{(\eta)})$
and
 $\int_{\mr^d} \g \delta_0^{(\eta)}=\g* \delta_0^{(\eta)}(0),$ thus, inserting into \eqref{partief0}, we find
\begin{multline}\label{partief}
\int_{B(0,\eta)\backslash B(0,\alpha)} |\nab f_{\eta}|^2 =
- c_d \g* \delta_0^{(\eta)}  (\alpha) + c_d \g(\alpha) + \kappa_d \g(\eta) - c_d\g* \delta_0^{(\eta)}(0) +o_\alpha(1)\\=  -2c_d \g* \delta_0^{(\eta)} (0)+     c_d \g(\alpha )+ c_d D(\delta_0^{(\eta)}, \delta_0^{(\eta)})   + o_\alpha(1),  \end{multline}
in view of the fact that for fixed $\eta$,  $\g* \delta_0^{(\eta)}$ is continuous at $0$.
On the other hand,  using Green's theorem and \eqref{eqce} we have
\begin{multline*}\int_{B(p, \eta)\backslash B(p,\alpha)}  \nab f_\eta(x-p) \cdot \j= - c_d \int_{B(p,\eta)\backslash B(p,\alpha) }  f_\eta(x-p) a(x)\, dx
- f_\eta(\alpha) \int_{\p B(p, \alpha)} \j \cdot \vec{\nu}.\end{multline*}
First we note that
\begin{equation*}
\left|\int_{B(p,\eta)\backslash B(p,\alpha) }  f_\eta(x-p) a(x)\, dx
\right|
 \le
c_d \|a\|_{L^\infty}\int_{B(0, \eta)} |f_\eta|(x)
dx\le  \|a\|_{L^\infty}o_\eta(1),\end{equation*}
where $o_\eta(1)$ depends only on $\ro $ and $d$. To see this, just notice that  we have  $|f_\eta|\le |\g|$ and the Coulomb kernel $\g$ is integrable near the origin.
Secondly, by Green's theorem again  we have
$$- \int_{\p B(p, \alpha)} \j \cdot \vec{\nu}=  c_d + O(\|a\|_{L^\infty} \alpha^d)  .$$
Inserting these two facts we deduce
\begin{equation*}\int_{B(p, \eta)\backslash B(p,\alpha)}  \nab f_\eta(x-p) \cdot \j= c_d f_\eta(\alpha) +  O(\|a\|_{L^\infty} \alpha^d \g(\alpha) ) +  \|a\|_{L^\infty}o_\eta(1).\end{equation*}
Combining this and \eqref{partief}, \eqref{pu12},  \eqref{chvard}, and again $f_\eta(\alpha)=
 \g* \delta_0^{(\eta)} (0) - \g(\alpha) +o_\alpha(1)$,  we find
\begin{multline*}
\int_{U\backslash \cup_{p\in \Lambda} B(p, \alpha)}  |\j_\eta|^2 = \int_{U
 \backslash \cup_{p\in \Lambda} B(p, \alpha)}
 |\j|^2 + \# ( \Lambda\cap U) (  \kappa_d \g (\eta) +  \gamma_2 \indic_{d=2} -c_d \g (\alpha) +o_\alpha(1)) \\ +\|a\|_{L^\infty} o_\eta(1)\# (\Lambda\cap U)  +   O(\|a\|_{L^\infty} \alpha^d f_\eta(\alpha) ) \# (\Lambda\cap U) .
\end{multline*}
Letting $\alpha \to 0$, \eqref{lequi} follows by the  definition \eqref{Wold}.

The proof of \eqref{eq:lowbound sep} is a toy version of ``ball construction" lower bounds in Ginzburg-Landau theory, made much simpler by the separation of the points. From \eqref{eqce} and the Cauchy-Schwarz inequality, we have,  for any $p\in \Lambda$,
\begin{multline}\label{minfacile}\int_{B(p,\hal \eta_0 ) \backslash B(p, \eta)} |\j|^2 \ge \int_{\eta}^{\eta_0/2} \frac{1}{|\mathbb{S}_{d-1}|   t^{d-1} } \left(\int_{\p B_t }  \j  \cdot \vec{\nu}\right)^2\, dt  \\
\ge  c_d^2 \int_{\eta}^{\eta_0/2} \frac{1}{|\mathbb{S}^{d-1}|  t^{d-1} }  (1- \|a\|_{L^\infty} |\mathbb{B}^d  | t^d)^2 \, dt \ge c_d( \g(\eta)- \g(\eta_0/2))  - C
\end{multline}  where $|\mathbb{B}^d|$ is the volume of the unit ball in dimension $d$, and we have used   the definition of $c_d$, and $C$ depends only on $\|a\|_{L^\infty}$ and $d$. We may then absorb $c_d \g(\eta_0/2)$ into a  constant $C>0$ depending only on $\|a\|_{L^\infty}$, $\eta_0$ and $d$.
Since the $B(p, \hal\eta_0)$ are disjoint and included in $U$,
we may add these lower bounds, and obtain the result.
\end{proof}

\subsection{Splitting lower bounds}

We start by discussing the problem of minimization of $\En$ defined in \eqref{eq:def MF ener}. Direct variations  of the form $(1-t) \mu_0+ t \nu$ for $\nu\in \mathcal P(\mr^d)$ and $t\in [0,1]$  yield that the unique minimizer of $\En$ (note that $\En$ is strictly convex), denoted $\mu_0$, solves
\begin{equation}\label{variaineq}
\begin{cases}
h_{\mu_0} + \frac{V}{2} = c:= \hal (\I[\mu_0] + D(\mu_0,\mu_0) ) &  q.e. \ \text{in} \ \supp\,  \mu_0\\
h_{\mu_0}+ \frac{V}{2}\ge c:= \hal ( \I[\mu_0] + D(\mu_0,\mu_0))  &  q.e.
\end{cases}
\end{equation}
where $q.e. $ means ``outside of a set of capacity $0$", and  $h_{\mu_0} = \g* \mu_0$ is the potential generated by $\mu_0$. More precisely, the variations first yield that \eqref{variaineq} holds for some constant $c$ on the right-hand side of both relations. Then, integrating the first relation against $\mu_0$ yields that 
$$c= \int_{\mr^d} (h_{\mu_0} + \frac{V}{2})\, d\mu_0= D(\mu_0, \mu_0) + \int \frac{V}{2}\, d\mu_0$$
which  identifies $c$ as the right-hand side in \eqref{variaineq}.
It turns out that these relations can also be shown to characterize uniquely $\mu_0$ (by convexity). For more details, as well as a proof of the existence of $\mu_0$ if the assumptions \eqref{eq:trap pot} are verified, one can see \cite{frostman},  \cite[Chap. 1]{ST} in dimension $2$, or \cite{ln}.
It turns out, although this is rarely emphasized in the literature, that the solution $\mu_0$ is also related to an obstacle problem, in the following sense: the potential $h_{\mu_0}$ generated by $\mu_0$ can be shown (for a proof, one can refer to \cite{asz}, the proof is presented in dimension $2$  and for $V$ quadratic but carries over to dimension $d\ge 3$ and general $V$  with no change)  to solve the following variational inequality:
\begin{equation}\label{varineqobs}
\forall u\in \mathcal K, \quad  \int_{\mr^d} \nab h_{\mu_0} \cdot \nab (u - h_{\mu_0}) \ge 0 
\end{equation}
where 
$$\mathcal K= \{ u \in H^1_{loc} (\mr^d), u  - h_{\mu_0}\ \text{ has bounded support and}\ u \ge - \frac{V}{2} +c, \ \mbox{q.e.}\}.$$
This happens to characterize a classical obstacle problem with obstacle $\vp:=- \frac{V}{2}+c$ (for general background on the obstacle problem, one can see \cite{ks}).
We then denote
\begin{equation}\label{eq:zeta}
\z= h_{\mu_0}+ \frac{V}{2}-\hal \left( \I[\mu_0] +  D(\mu_0,\mu_0)\right),
\end{equation} and note that in view of \eqref{variaineq}, $\zeta\ge 0$ and $\zeta=0 $ in $\E$.
Because $h_{\mu_0}$ is a solution to the obstacle problem with obstacle $\vp$, the support of $\mu_0$, that we denote $\E$, is {contained in} the so-called ``coincidence set" where $h_{\mu_0}= \vp$ i.e. the set $\{\zeta=0\}$. {Equality between these sets happens for example  $V\in C^2$ and  $\Delta V$ does not vanish on the coincidence set.
We observe that in all cases the set $\{\zeta=0\}$ is bounded since $h_{\mu_0}$ behaves like $\g$ at infinity and \eqref{eq:trap pot} holds.}

\medskip

We next turn to recalling our blow-up procedure: $x'=n^{1/d} x$, $x_i'= n^{1/d} x_i$.
For a configuration of points $(x_1, \dots , x_n)$ we  let here and in the sequel, as in \eqref{hpn},
\begin{equation}\label{hh}
h_n= w* \Big (\sum_{i=1}^n \delta_{x_i} -n \mu_0\Big) \qquad h_n'(x')= n^{2/d-1} h_n(x)= w*  \Big (\sum_{i=1}^n \delta_{x_i'} - \mu_0(n^{-1/d} x') \Big),
\end{equation}
and for  $\ell$ and $\eta$ related by $\ell= n^{-1/d} \eta$, we let
\begin{equation}\label{defh}
\hel:= \g * \Big(\sum_{i=1}^n \delta^{(\ell)}_{x_i} -n\mu_0\Big) \qquad  \het (x') = n^{2/d-1} \hel (x)= \g *   \Big (\sum_{i=1}^n \delta_{x_i'}^{(\eta)} - \mu_0(n^{-1/d} x') \Big).
\end{equation}
We also denote \begin{equation}\nu_n'= \sum_{i=1}^n \delta_{x_i'}.\end{equation}

Our first splitting result is the equivalent of \cite[Lemma 2.1]{ss2d}.

\begin{lem}[\textbf{Exact splitting}]\mbox{}\\
\label{split1}
For any $n\ge 1$ and any $(x_1,\dots, x_n) \in (\mr^d)^n$, letting $h_n'$ be as in \eqref{hh},  we have
\begin{align}
\w(x_1,\dots, x_n) &= n^2 \I[\mu_0]+ 2n \sum_{i=1}^n \zeta (x_i) - \hal n \log n + \frac{1}{c_d} W(\nab h_n', \indic_{\mr^2}) \  &\text{if} \ d=2 \nonumber\\
  &= n^2 \I[\mu_0]+ 2n \sum_{i=1}^n \zeta (x_i)+ \frac{n^{1-2/d}}{c_d} W(\nab h_n', \indic_{\mr^d})  \  &\text{if} \ d\ge 3.
  \end{align}
\end{lem}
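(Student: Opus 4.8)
The plan is to mimic the proof of \cite[Lemma 2.1]{ss2d}: expand $\w$ around the reference charge distribution $n\mu_0$, absorb the two leading contributions by means of the Euler--Lagrange relation for the equilibrium measure, and identify what remains with the renormalized energy $W$ of the full field $h_n$, which is then transported to the blown-up field $h_n'$ by a scaling computation. I would first record the one fact about $\mu_0$ that is needed: by the variational (in)equality \eqref{variaineq}, with $c=\hal(\I[\mu_0]+D(\mu_0,\mu_0))$ and $\zeta$ as in \eqref{eq:zeta}, one has $h_{\mu_0}+\tfrac V2=c+\zeta$ on all of $\mr^d$, where $\zeta\ge 0$ and $\zeta\equiv 0$ on $\E=\supp\mu_0$; here $h_{\mu_0}=\g*\mu_0$.

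Next comes the purely algebraic step. Writing $\nu_n=\sum_i\delta_{x_i}$, one has $\sum_{i\ne j}\g(x_i-x_j)=\iint_{x\ne y}\g(x-y)\,d\nu_n(x)\,d\nu_n(y)$. I would substitute $\nu_n=(\nu_n-n\mu_0)+n\mu_0$ into this double integral; since $\mu_0$ has no atoms, every contribution of the diagonal involving a $\mu_0$-factor vanishes, so the cross terms give $-2n\sum_i h_{\mu_0}(x_i)$ and the $\mu_0\otimes\mu_0$ term gives $n^2 D(\mu_0,\mu_0)$. Combining with $n\sum_iV(x_i)=2n\sum_i\tfrac{V(x_i)}2$ and using $h_{\mu_0}(x_i)+\tfrac{V(x_i)}2=c+\zeta(x_i)$ together with $2n^2c=n^2\I[\mu_0]+n^2D(\mu_0,\mu_0)$, the two occurrences of $n^2D(\mu_0,\mu_0)$ cancel and one arrives at
\[
\w(x_1,\dots,x_n)=n^2\I[\mu_0]+2n\sum_{i=1}^n\zeta(x_i)+\iint_{x\ne y}\g(x-y)\,d(\nu_n-n\mu_0)(x)\,d(\nu_n-n\mu_0)(y).
\]

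It then remains to recognize the last integral. I would show, by Green's formula on $\mr^d\setminus\bigcup_iB(x_i,\eta)$ for $\eta<\tfrac12\min_{i\ne j}|x_i-x_j|$ (or on $K_R\setminus\bigcup_iB(x_i,\eta)$ with $R\to\infty$, which is harmless since $\nu_n-n\mu_0$ has zero total mass and $\mu_0$ is compactly supported, so $h_n$ and $\nab h_n$ decay like $|x|^{1-d}$ and $|x|^{-d}$), that $\int_{\mr^d\setminus\bigcup_iB(x_i,\eta)}|\nab h_n|^2$ equals $c_d$ times that integral, plus $c_d\,n\,\g(\eta)$ from the singular part of $h_n$ near each point, plus $o_\eta(1)$ from the regular part — this is exactly Lemma~\ref{lemequiv} with $U=\mr^d$, $a=n\mu_0$, and can in any case be obtained by the same annulus computation as in its proof. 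Passing to the limit $\eta\to0$ in Definition~\ref{def:renorm SS} yields $\iint_{x\ne y}\g\,d(\nu_n-n\mu_0)^{\otimes2}=\tfrac1{c_d}W(\nab h_n,\indic_{\mr^d})$. Finally I would pass from $h_n$ to the blow-up $h_n'(x')=n^{2/d-1}h_n(n^{-1/d}x')$: then $\nab h_n'(x')=n^{1/d-1}(\nab h_n)(n^{-1/d}x')$ and the excised balls $B(x_i',\eta)$ pull back to $B(x_i,n^{-1/d}\eta)$; for $d\ge3$ the $(2-d)$-homogeneity of $\g$ makes the renormalization constant scale consistently, giving $W(\nab h_n',\indic_{\mr^d})=n^{2/d-1}W(\nab h_n,\indic_{\mr^d})$, while for $d=2$ the identity $-\log(n^{-1/2}\eta)=-\log\eta+\hal\log n$ produces an extra term, $W(\nab h_n',\indic_{\mr^2})=W(\nab h_n,\indic_{\mr^2})+\pi n\log n$. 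Dividing by $c_d$ and substituting into the display above gives the two stated formulas, the $-\hal n\log n$ in $d=2$ being $-\tfrac1{c_2}\,\pi n\log n$.

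The routine core is the algebra; the one step that needs care is the Green's-formula identification of the remainder with $W(\nab h_n,\cdot)$, where one must simultaneously handle the power/logarithmic singularity of $h_n$ at each $x_i$ — which is precisely what generates the counterterm $-c_d\g(\eta)\sum_p\chi(p)$ in Definition~\ref{def:renorm SS} — and the decay at infinity that makes the (renormalized) integral $\int|\nab h_n|^2$ meaningful. Since this is the content of Lemma~\ref{lemequiv}, once that lemma is in hand nothing substantial is left beyond bookkeeping and scaling.
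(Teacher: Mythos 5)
Your argument is correct and is essentially the route the paper takes: the paper's proof of Lemma~\ref{split1} simply cites \cite[Lemma 2.1]{ss2d} for the presplit identity $\w(x_1,\dots,x_n)=n^2\I[\mu_0]+2n\sum_i\zeta(x_i)+\frac1{c_d}W(\nab h_n,\indic_{\mr^d})$ and then carries out exactly the rescaling computation you describe (change of variables in the excised Dirichlet integral and use of $\g(\eta n^{1/d})=n^{2/d-1}\g(\eta)$ for $d\ge3$, respectively $\g(\eta n^{1/2})=\g(\eta)-\tfrac12\log n$ for $d=2$). You have merely filled in, via the Euler--Lagrange cancellation and the Green's-formula identification of the remainder with $W(\nab h_n,\indic_{\mr^d})$, the derivation of the presplit formula that the paper delegates to the citation.
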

\begin{proof} Exactly as in  
 \cite[Lemma 2.1]{ss2d}, we can show that  
 \begin{equation}\label{presplit}\w(x_1, \dots, x_n)= n^2 \I[\mu_0]+ 2n \sum_{i=1}^n \zeta (x_i) +\frac{1}{c_d} W(\nab h_n, \indic_{\mr^d}) .\end{equation}
 A change of variables yields that 
 $$\int_{\mr^d\backslash \cup_{i=1}^n B(x_i, \eta)} |\nab h_n|^2 
 = n^{1-2/d}\int_{\mr^d\backslash \cup_{i=1}^n B(x_i', \eta n^{1/d} )}|\nab h_n'|^2$$
 and subtracting off $n c_d \g(\eta)$ from both sides, writing $\g(\eta)= \g(\eta n^{1/2}) + \frac{1}{2}\log n $ in dimension $2$ and $\g(\eta)= \g(\eta n^{1/2}) n^{d/2-1}$ in dimension $d\ge 3$, 
  and letting $\eta \to 0$,  we are led to 
 \begin{equation}
 \begin{cases}
 & W(\nab h_n, \indic_{\mr^d}) = W(\nab h_n', \indic_{\mr^d}) - \frac{c_d n}{2} \log n \quad \text{if }  d=2\\
  & W(\nab h_n, \indic_{\mr^d}) = n^{1-2/d} W(\nab h_n', \indic_{\mr^d}) \quad   \text{if } \ d=3 .\end{cases}\end{equation} Inserting this into \eqref{presplit} yields the result.
\end{proof}

In \cite{ss2d} $W$ was then bounded below locally. As already mentioned, we do not know how to do this in dimension $d\ge 3$ (due to the lack of an efficient ``ball construction" method) so we resort to
a second type of splitting, using the smearing out of charges.


A crucial ingredient in our approach is that the electrostatic energy of a configuration of positive smeared  charges is always a lower bound for the energy of the corresponding configuration of point charges, with equality if the smeared charges do not overlap. This is the so-called Onsager lemma, which one can find for example in \cite[Chapter 6]{LieSei}. We reproduce it here for sake of completeness.

\begin{lem}[\textbf{Onsager's lemma}]\label{lem:Onsager}\mbox{}\\
For any nonnegative  distribution $\mu$  such that $\int_{\mr^d}  \mu= n$, any $n$, any  $x_1,\ldots,x_n\in \mr^{d}$, and any $\ell>0$, we have 
\begin{equation}\label{eq:Onsager}
\sum_{i\neq j} \g(x_i-x_j) \geq D\left( \mu - \sum_i \delta_{x_i}^{(\ell)},\mu - \sum_{i=1}^n \delta_{x_i}^{(\ell)} \right) - D(\mu,\mu) + 2 \sum_{i=1} ^n D(\mu,\delta_{x_i}^{(\ell)})- n  D(\delta_0^{(\ell)}, \delta_0^{(\ell)}) 
\end{equation}
with equality if $B(x_i,\ell) \cap B(x_j, \ell) = \varnothing$, for all $i\neq j$.
\end{lem}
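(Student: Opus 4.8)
The plan is to reduce \eqref{eq:Onsager} to an elementary termwise comparison between the Coulomb interaction of two point charges and that of the corresponding smeared charges. First I would expand the quadratic term $D\big(\mu-\sum_i\delta_{x_i}^{(\ell)},\mu-\sum_i\delta_{x_i}^{(\ell)}\big)$ by bilinearity of $D$ (all the relevant pairings being finite: $D(\mu,\delta_{x_i}^{(\ell)})$ and $D(\delta_{x_i}^{(\ell)},\delta_{x_j}^{(\ell)})$ since the smeared charges are bounded with compact support and $\g$ is locally integrable, and $D(\mu,\mu)$ by the implicit finiteness assumption). This turns the right-hand side of \eqref{eq:Onsager} into $\sum_{i,j}D(\delta_{x_i}^{(\ell)},\delta_{x_j}^{(\ell)})-n\,D(\delta_0^{(\ell)},\delta_0^{(\ell)})$, and since $D(\delta_{x_i}^{(\ell)},\delta_{x_i}^{(\ell)})=D(\delta_0^{(\ell)},\delta_0^{(\ell)})$ by translation invariance, the diagonal $i=j$ terms cancel the last term exactly. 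Hence \eqref{eq:Onsager} is equivalent to
\[
\sum_{i\neq j}\g(x_i-x_j)\ \ge\ \sum_{i\neq j}D\big(\delta_{x_i}^{(\ell)},\delta_{x_j}^{(\ell)}\big),
\]
with equality under the disjointness hypothesis $B(x_i,\ell)\cap B(x_j,\ell)=\varnothing$.

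Second, I would establish the stronger pointwise bound $D(\delta_x^{(\ell)},\delta_y^{(\ell)})\le\g(x-y)$ for all $x\neq y$. The key input is that $\g$ is superharmonic on $\mr^d$, since $-\Delta\g=c_d\delta_0\ge 0$; consequently its convolution against the radial probability density $\ell^{-d}\ro(\cdot/\ell)$ is pointwise $\le\g$, i.e. $\g*\delta_0^{(\ell)}\le\g$ on $\mr^d$. Combining this with $D(\delta_x^{(\ell)},\delta_y^{(\ell)})=\int(\g*\delta_x^{(\ell)})(v)\,d\delta_y^{(\ell)}(v)$, the identity $\g*\delta_x^{(\ell)}(v)=(\g*\delta_0^{(\ell)})(v-x)$, and the radial symmetry of $\g*\delta_0^{(\ell)}$, one obtains $D(\delta_x^{(\ell)},\delta_y^{(\ell)})\le\int\g(v-x)\,d\delta_y^{(\ell)}(v)=(\g*\delta_0^{(\ell)})(x-y)\le\g(x-y)$. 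Summing over $i\neq j$ yields the inequality.

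Third, for the equality case I would invoke Newton's theorem, already recalled in the excerpt (it underlies \eqref{eqf0}): the potential $\g*\delta_{x_i}^{(\ell)}$ coincides with $\g(\cdot-x_i)$ outside $B(x_i,\ell)$. When $B(x_i,\ell)\cap B(x_j,\ell)=\varnothing$, the support $B(x_j,\ell)$ of $\delta_{x_j}^{(\ell)}$ lies outside $B(x_i,\ell)$, so $D(\delta_{x_i}^{(\ell)},\delta_{x_j}^{(\ell)})=\int\g(\cdot-x_i)\,d\delta_{x_j}^{(\ell)}=(\g*\delta_{x_j}^{(\ell)})(x_i)=\g(x_i-x_j)$, using Newton's theorem once more since $x_i\notin B(x_j,\ell)$. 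Thus each term of the sum above is an equality when the balls are disjoint, hence so is the inequality.

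I do not anticipate a genuine obstacle here: the whole argument is bookkeeping plus one classical fact (the sub-mean-value property of superharmonic functions averaged against a radial probability weight) and Newton's theorem. The only step deserving a line of care is the bilinear expansion when $\mu$ is singular; if one wants to avoid the blanket finiteness hypothesis one can either replace $\mu$ by a mollification, prove the bound, and pass to the limit, or simply note that the inequality is vacuous when $D(\mu,\mu)=+\infty$. For the record, $D(\delta_0^{(\ell)},\delta_0^{(\ell)})$ is exactly the constant evaluated in \eqref{chvard}, so the last term of \eqref{eq:Onsager} is precisely the sum of the smeared self-energies being subtracted.
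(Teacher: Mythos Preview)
Your proof is correct and follows essentially the same route as the paper's: reduce the inequality to $\sum_{i\neq j}\g(x_i-x_j)\ge\sum_{i\neq j}D(\delta_{x_i}^{(\ell)},\delta_{x_j}^{(\ell)})$ via the bilinear expansion, then prove the termwise bound using $\g*\delta_0^{(\ell)}\le\g$, with Newton's theorem giving the equality case. The only cosmetic differences are that the paper does the expansion after the termwise inequality rather than before, and it phrases the key pointwise bound $\g*\delta_0^{(\ell)}\le\g$ directly as a consequence of Newton's theorem rather than of the sub-mean-value property of superharmonic functions (which is of course the same thing here).
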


\begin{proof}
It is based on Newton's theorem, see \cite[Theorem 9.7]{LiLo}, which can easily be adapted to any dimension, and asserts  in particular that  for any $\ell$, $\g* \delta_0^{(\ell)} \le \g= \g * \delta_0$ pointwise.
It follows from this that
\begin{equation}\label{minonewton}
 \sum_{i\neq j} \g(x_i-x_j) \geq \sum_{i\neq j} D(\delta_{x_i}^{(\ell)},\delta_{x_j}^{(\ell)})
\end{equation} with equality if $\min_{i\neq j} |x_i-x_j|\ge 2 \ell$.
Indeed, by Newton's theorem we have
$$\int \left(\g * \delta_{x_i}^{(\ell)}\right) \delta_{x_j}^{(\ell)} \le \int \left(\g * \delta_{x_i}\right) \delta_{x_j}^{(\ell)} = \int \left(\g * \delta_{x_j}^{(\ell)}\right) \delta_{x_i} \le \int \left(\g * \delta_{x_j}\right) \delta_{x_i}.$$
We then write
$$D\left(\mu - \sum_{i=1}^n \delta_{x_i}^{(\ell)}, \mu - \sum_{i=1}^n \delta_{x_j}^{(\ell)}\right)= D(\mu,\mu) - 2\sum_{i=1}^n D(\mu, \delta_{x_i}^{(\ell)}) + \sum_{i\neq j} D(\delta_{x_i}^{(\ell)}, \delta_{x_j}^{(\ell)}) + \sum_{i=1}^n D(\delta_{x_i}^{(\ell)}, \delta_{x_i}^{(\ell)}) $$
and from this relation and \eqref{minonewton},   the lemma easily follows.
 \end{proof}

Smearing-out charges comes with a cost, that we quantify in the following lemma.

\begin{lem}[\textbf{The cost of smearing charges out}]\mbox{}\label{lem:Lieb}\\
For any $\mu \in L ^{\infty}(\R ^d)$ and any point $x$, we have
\begin{equation}\label{eq:Lieb}
\left|D\left( \mu, \delta_{x} - \delell_{x} \right) \right| \leq C \ell ^2 \left\Vert \mu \right\Vert_{L ^{\infty}},
\end{equation}where $C$ depends only on the choice of $\ro $ in \eqref{eq:smeared charge}  and the dimension.
\end{lem}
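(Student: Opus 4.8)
The plan is to turn the quantity $D(\mu,\delta_x-\delell_x)$ into an integral of $\mu$ against the potential difference of the point charge and the smeared charge, localize it, and then extract the factor $\ell^2$ by scaling. Concretely, using \eqref{eq:ener field} with $h_x:=\g\ast\delta_x=\g(\cdot-x)$ and $h_x^{(\ell)}:=\g\ast\delell_x$, I would write
\[
D\left(\mu,\delta_x-\delell_x\right)=\int_{\mr^d}\mu\,\left(h_x-h_x^{(\ell)}\right).
\]
By Newton's theorem \cite[Theorem 9.7]{LiLo}, $h_x$ and $h_x^{(\ell)}$ coincide outside $B(x,\ell)$; since in addition $-\Delta(h_x^{(\ell)}-h_x)=c_d(\delell_x-\delta_x)$, uniqueness for the Dirichlet problem on $B(x,\ell)$ identifies $h_x^{(\ell)}-h_x$ with $f_\ell(\cdot-x)$, where $f_\ell$ is the radial function of \eqref{eqf0}, supported in $B(0,\ell)$. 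This already gives
\[
\left|D\left(\mu,\delta_x-\delell_x\right)\right|=\left|\int_{B(x,\ell)}\mu(y)\,f_\ell(y-x)\,dy\right|\le\|\mu\|_{L^\infty}\int_{B(0,\ell)}|f_\ell|,
\]
and reduces the lemma to the estimate $\int_{B(0,\ell)}|f_\ell|\le C\ell^2$ with $C=C(\ro,d)$.

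\textbf{Key steps.} Next I would compute the scaling of $f_\ell$ from its defining system \eqref{eqf0}: checking the PDE and the support condition, one gets $f_\ell(z)=\ell^{2-d}f_1(z/\ell)$ when $d\ge 3$ and $f_\ell(z)=f_1(z/\ell)$ when $d=2$, so that in all cases the change of variables $z=\ell w$ yields
\[
\int_{B(0,\ell)}|f_\ell|=\ell^2\int_{B(0,1)}|f_1|.
\]
It then remains to observe that $\int_{B(0,1)}|f_1|<\infty$: indeed $f_1=\g\ast\delta_0^{(1)}-\g$, where $\g\ast\delta_0^{(1)}$ is bounded on $B(0,1)$ (convolution of the locally integrable kernel $\g$ with a bounded compactly supported function) while $\g$ is integrable near the origin in every dimension $d\ge2$; alternatively one may simply invoke the bound $|f_\eta|\le|\g|$ already used in the proof of Lemma~\ref{lemequiv}. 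Combining the three displays gives \eqref{eq:Lieb} with $C=\int_{B(0,1)}|f_1|$, depending only on $\ro$ and $d$.

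\textbf{Main obstacle.} There is no real difficulty; the only point deserving care is the dimension-uniformity of the scaling of $f_\ell$ — in particular verifying that in $d=2$ the logarithmic kernel does not produce a spurious $|\log\ell|$ factor. It does not, precisely because in that dimension the rescaling $f_\ell=f_1(\cdot/\ell)$ carries no prefactor, the full $\ell^2$ arising from the Jacobian alone, which is what makes the bound clean in the form stated.
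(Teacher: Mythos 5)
Your proof is correct and follows essentially the same route as the paper: both reduce $D(\mu,\delta_x-\delell_x)$ to an integral of $\mu$ against $\g-\g*\delell_0$, observe via Newton's theorem that this difference is supported in $B(x,\ell)$, and then extract the factor $\ell^2$. The paper treats the two cases separately (a pointwise domination $|\g-\g*\delell_0|\le \g\indic_{B(0,\ell)}$ followed by direct integration when $d\ge3$, and the explicit change of variables when $d=2$), whereas you give a single scaling argument based on the identity $f_\ell(z)=\ell^{2-d}f_1(z/\ell)$ for $d\ge3$ and $f_\ell(z)=f_1(z/\ell)$ for $d=2$; this is a slightly cleaner, unified treatment producing the same bound.
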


\begin{proof}
 Without loss of generality, we may assume $x=0$, and write
$$D(\mu, \delta_{0}- \delta_{0}^{(\ell)} )= \int (\g - \g* \delta_0^{(\ell)} ) (x) \mu (x) \,dx.$$
By Newton's theorem, the function $\g- \g* \delta_0^{(\ell)}$ is nonnegative and  supported in $B(0,\ell)$. In dimension $3$, we may just write that  in particular it is smaller than $\g \indic_{B(0, \ell)}$, and so 
 we may write
$$\left| D(\mu, \delta_{0}- \delta_{0}^{(\ell)} )\right| \le \|\mu\|_{L^\infty} \int_{B(0,\ell)} \frac{dx}{|x|^{d-2}} \le C \ell^2 \|\mu\|_{L^\infty} .$$
In dimension 2, we write 
\begin{multline*}
\int_{\mr^2} |\g - \g* \delta_0^{(\ell)} | = \int_{B(0,\ell)}\left|-\log |x| + \frac{1}{\ell^2} \int \log |x-y|\ro(\frac{y}{\ell}) \, dy \right| dx \\=\ell^2  \int_{B(0,1)} \left|- \log |x'|+ \int \log |x'-y'| \ro(y') \, dy'\right|\, dx= C \ell^2\end{multline*}
where we have used the changes of variables $x=\ell x'$, $y=\ell y'$, and we conclude in the same way.
\end{proof}
From these two lemmas we deduce
\begin{lem}[\textbf{Splitting lower bound}]\label{lemsplit}\mbox{}\\
For  any $n$, any $x_1,\ldots,x_n \in \mr^d $, letting $\hel$ be as in \eqref{defh}, we have
\begin{equation}\label{eq:split Onsager}
\w(x_1,\dots,x_n) \ge n^2 \I[\mu_0] + \frac{1}{c_d}\left(\int_{\mr^d} |\nab \hel|^2 -n (\kappa_d\g(\ell) + \gamma_2 \indic_{d=2})\right) +2n\sum_{i=1}^n \z(x_i)-Cn^{2-2/d}\eta^2,
\end{equation}
where  $C$ depends  only on  the dimension. Moreover, equality holds if $\min_{i\neq j} |x_i-x_j| \geq  2\ell$.
\end{lem}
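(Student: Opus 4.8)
The plan is to apply Onsager's lemma (Lemma~\ref{lem:Onsager}) with the neutralizing background $\mu = n\mu_0$ — which is nonnegative and has total mass $n$ — and smearing scale $\ell = n^{-1/d}\eta$, then to identify the four terms appearing on the right-hand side of \eqref{eq:Onsager}, to use the Euler--Lagrange characterization of $\mu_0$ to bring out the $\z(x_i)$'s, and finally to add back the confinement energy $n\sum_i V(x_i)$. Note that all quantities are finite here because the charges have been smeared (no $\g$-singularity) and $n\mu_0-\sum_i\delta_{x_i}^{(\ell)}$ has zero total charge and essentially compact support.

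Concretely, with this choice \eqref{eq:Onsager} reads
\begin{multline*}
\sum_{i\neq j}\g(x_i-x_j)\ \ge\ D\Big(n\mu_0-\sum_i\delta_{x_i}^{(\ell)},\ n\mu_0-\sum_i\delta_{x_i}^{(\ell)}\Big)\\
-D(n\mu_0,n\mu_0)+2\sum_{i=1}^{n}D(n\mu_0,\delta_{x_i}^{(\ell)})-n\,D(\delta_0^{(\ell)},\delta_0^{(\ell)}),
\end{multline*}
with equality as soon as $\min_{i\neq j}|x_i-x_j|\ge 2\ell$. I would then identify: by \eqref{eq:ener field} together with the definition \eqref{defh} of $\hel$, the first term equals $\frac{1}{c_d}\int_{\mr^d}|\nab\hel|^2$; by the scaling relation \eqref{chvard}, $n\,D(\delta_0^{(\ell)},\delta_0^{(\ell)})=\frac{n}{c_d}\big(\kappa_d\g(\ell)+\gamma_2\indic_{d=2}\big)$; and $D(n\mu_0,n\mu_0)=n^2 D(\mu_0,\mu_0)$. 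In the cross term I would replace each smeared charge $\delta_{x_i}^{(\ell)}$ by the point charge $\delta_{x_i}$: by Lemma~\ref{lem:Lieb} applied to $n\mu_0$ this costs at most $C\,n\ell^2\|\mu_0\|_{L^\infty}$ per particle, hence at most $C\,n^2\ell^2\|\mu_0\|_{L^\infty}=C\,n^{2-2/d}\eta^2\|\mu_0\|_{L^\infty}$ in total, and after the replacement the cross term becomes $2n\sum_i h_{\mu_0}(x_i)$ with $h_{\mu_0}=\g*\mu_0$.

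It then remains to add $n\sum_i V(x_i)$ and to invoke \eqref{variaineq}--\eqref{eq:zeta}, i.e. $h_{\mu_0}+\frac{V}{2}=\z+\hal\big(\I[\mu_0]+D(\mu_0,\mu_0)\big)$, which converts $2n\sum_i\big(h_{\mu_0}(x_i)+\hal V(x_i)\big)$ into $2n\sum_i\z(x_i)+n^2\big(\I[\mu_0]+D(\mu_0,\mu_0)\big)$; the two contributions $\pm n^2 D(\mu_0,\mu_0)$ cancel and what is left is exactly \eqref{eq:split Onsager}. For the equality assertion: when $\min_{i\neq j}|x_i-x_j|\ge 2\ell$ the balls $B(x_i,\ell)$ are pairwise disjoint, so Onsager's inequality is saturated, and the only remaining discrepancy is the two-sided correction from Lemma~\ref{lem:Lieb}, of size $O(n^{2-2/d}\eta^2)$; alternatively one may combine the exact splitting formula of Lemma~\ref{split1} with Lemma~\ref{lemequiv}, which yields the same expression with a two-sided $o_\eta(1)$-type remainder.

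The argument is essentially bookkeeping: the conceptual inputs — Onsager's lemma, the smearing-cost estimate, and the variational equation for $\mu_0$ — are all already in hand, so I do not expect a genuine obstacle. The points that require care are to propagate the relation $\ell=n^{-1/d}\eta$ through every term so that the error lands precisely at order $n^{2-2/d}\eta^2$ and no larger; to keep the $d=2$ normalization separate from the homogeneous $d\ge3$ case (the constant $\gamma_2$, and the fact that $\kappa_2\g(\ell)$ secretly contains a $\hal\log n$ once $\ell=n^{-1/2}\eta$, which will later match the $-\hal n\log n$ of Lemma~\ref{split1}); and to read ``equality holds'' as the statement that the lower bound \eqref{eq:split Onsager} is sharp up to the displayed correction term, equivalently that the reverse inequality with $+C\,n^{2-2/d}\eta^2$ also holds.
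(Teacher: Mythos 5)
Your argument matches the paper's proof step for step: apply Onsager's lemma with background $n\mu_0$ and smearing scale $\ell=n^{-1/d}\eta$, use Lemma~\ref{lem:Lieb} to trade $\delta_{x_i}^{(\ell)}$ for $\delta_{x_i}$ in the cross term at cost $O(n^2\ell^2)=O(n^{2-2/d}\eta^2)$, invoke \eqref{eq:zeta} to produce the $\z(x_i)$'s, and identify the quadratic term as $\frac{1}{c_d}\int|\nabla\hel|^2$ via \eqref{eq:ener field}. Your reading of the ``equality'' clause — Onsager's inequality is saturated when the $B(x_i,\ell)$ are disjoint, leaving only the two-sided Lemma~\ref{lem:Lieb} error of size $O(n^{2-2/d}\eta^2)$ — is exactly how the paper uses it in Section~\ref{sec7}, so the proposal is correct and follows the same route.
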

Note that compared to Lemma \ref{split1} it is an inequality, and not an equality, and it has an error term, however it achieves the same goal. Indeed, as we will see, points in minimizing configurations are well-separated so that there is equality in \eqref{eq:split Onsager} and no information is lost in the end.

\begin{proof}
We proceed as in \cite[Proof of Thm 3.2]{RSY2}. First, applying Onsager's lemma above with $\ell$  and $\mu= n\mu_0$,  and using \eqref{chvard}, we find
\begin{multline}\label{eq:appl Onsager}
\sum_{i\neq j} \g( x_i-x_j)   \geq D\left( n\mu_0  - \sum_{i=1}^n\delta_{x_i}^{(\ell)} ,n\mu_0  -   \sum_{i=1}^n\delta_{x_i}^{(\ell)}   \right)\\
 - n^2 D(\mu_0 ,\mu_0) + 2 n \sum_{i=1}^n D(\mu_0, \delta_{x_i}^{(\ell)}    )- n 
 D( \delta_{0}^{(\ell)} ,\delta_{0}^{(\ell)} ),
\end{multline}
with equality if the $B(x_i, \ell)$ are disjoint.
We then use Lemma \ref{lem:Lieb}, the fact that $\mu_0$ is a fixed $L ^{\infty}$ function and
\eqref{eq:zeta} to obtain
\begin{align}\label{eq:split pot terms}
2 n \sum_{i=1}^n D(\mu_0, \delta_{x_i}^{(\ell)}) &= 2 n \sum_{i=1}^n D(\mu_0, \delta_{x_i}) + O\left(n ^2 \ell ^2 \right) =  2 n \sum_{i=1}^n h_{\mu_{0}} (x_i) + O\left(n ^2 \ell ^2 \right)\nonumber \\
&= n ^2 \left( \I[\mu_0] + D(\mu_0,\mu_0) \right) + 2 n \sum_{i=1}^n \z (x_i) - n \sum_{i=1}^n V (x_i) + O\left(n ^2 \ell ^2 \right).
\end{align}
Combining \eqref{eq:appl Onsager} and \eqref{eq:split pot terms}, and observing that  in view of \eqref{eq:ener field} and \eqref{defh}, we have  
$$D \left( n\mu_0  - \sum_{i=1}^n\delta_{x_i}^{(\ell)} ,n\mu_0  -   \sum_{i=1}^n\delta_{x_i}^{(\ell)}   \right)= \frac{1}{c_d}  \int_{\mr^d} |\nabla \hel|^2$$
we thus obtain
\begin{multline*}
H_n (x_1,\ldots,x_n) = \sum_{i\neq j} \g( x_i-x_j)+ n \sum_{i=1} ^n V (x_i)  \geq \\
 n ^2 \I[\mu_0] + \frac{1}{c_d}\int_{\mr^d} |\nab \hel|^2 + 2n\sum_{i=1}^n \z(x_i)- n  D( \delta_{0}^{(\ell)} ,\delta_{0}^{(\ell)} ) + O\left(n ^2 \ell ^2 \right)
.\end{multline*}
Since $\ell= n^{-1/d} \eta$ and \eqref{chvard} holds,  we obtain the result.

 \end{proof}

Combining this lower bound with our blow-up and noting that 
by a change of variables we  have
\begin{equation}\label{chvarh}
\int_{\mr^d} |\nab \hel|^2 =n^{1-2/d} \int_{\mr^d} |\nab \het|^2.\end{equation}
we obtain the following rephrasing:
\begin{equation}\label{pred2}
\w(x_1,\dots,x_n)-n^2 \I[\mu_0]  +\D \left( \frac{n}{2}\log n\right)\indic_{d=2}  \ge n^{2-2/d}\left(J_n(x_1, \dots, x_n) - C\eta^2\right)  + 2n \sum_{i=1}^n \zeta(x_i) 
\end{equation}
where we have written
\begin{equation}\label{F}
J_n (x_1, \dots, x_n) :=  \frac{1}{c_d}\left(\frac{1}{n}\int_{\mr^d} |\nab \het|^2 - (\kappa_d \g(\eta)+ \gamma_2 \indic_{d=2})\right).\nonumber
\end{equation}

\begin{remark} 
Taking for example $\eta=1$, and using that $\zeta\ge 0$, it immediately follows from the above that 
\begin{equation}\label{pred0}
 \w(x_1,\dots,x_n)-n^2 \I[\mu_0]  +\D \left(\frac{n}{2}\log n\right) \indic_{d=2} \ge - C n^{2-2/d} 
\end{equation}
where the constant depends only on the dimension. This provides a very simple proof of this fact.
\end{remark}

\subsection{Controlling fluctuations }

The key ingredient in the proof of Theorems \ref{thm:charge fluctu} is the fact that the energy density (square of the local $L^2$ norm) of the electric field generated by the smeared charges provides a control of the deviations we are interested in. We start by formalizing this idea in two lemmas which, thanks to the smearing out method, provide significantly simpler alternatives to the estimates of \cite{ss2d,SeTi}. Note that the following considerations do not require that $\mu_0$ be the equilibrium measure.

We start with a control on the fluctuations of the number of points (i.e. total charge) in a given ball:
\begin{equation}\label{eq:defi fluctu charge}
D(x',R) = \nu'_n (B(x',R)) - \int_{B(x',R)} \mu_0'.  
\end{equation}
where 
$$ \mu_0' = \mu_0 (n ^{- 1/d} \cdot).$$
Note that the quantity \eqref{eq:intro charge fluctu} used in Theorem \ref{thm:charge fluctu} is 
\begin{equation}\label{eq:fluctu charge micro macro}
D(x,R) = D(x',R n ^{1/d}) 
\end{equation}
where $x'$ is the blown-up of $x$. For this reason we do not feel the  need for a new notation and we will use the above for fluctuations either at the microscopic or macroscopic scale.  

\begin{lem}[\textbf{Controlling charge fluctuations}]\label{lem:fluctu charge}\mbox{}\\
For any $x_1, \dots, x_n$ and $\het$ given by \eqref{hh}, 
for any $0 < \eta < 1$, $R>2$ and $x' \in \R ^d$, we have
\begin{equation}\label{eq:control fluctu charge}
\int_{\R ^d} |\nabla \het | ^2 \geq C \frac{D(x',R) ^2 }{R ^{d-2}} \min\left( 1, \frac{D(x',R)}{R ^d} \right),
\end{equation} 
where $C$ is a constant depending only on $d$.
\end{lem}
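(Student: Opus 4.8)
The statement is a capacitary lower bound: the net charge $D(x',R)$ sitting inside $B(x',R)$ must be screened by the field $\nabla\het$ in the exterior region, and the right-hand side measures the $L^2$-cost of that screening. The plan is the standard Gauss-law / Cauchy--Schwarz argument, together with an adaptively chosen annulus of integration and a reduction from the smeared charges $\delta_{x_i'}^{(\eta)}$ back to the point charges $\delta_{x_i'}$.

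First, since $-\Delta\het=c_d\bigl(\sum_{i=1}^n\delta_{x_i'}^{(\eta)}-\mu_0'\bigr)$ with $\mu_0'=\mu_0(n^{-1/d}\,\cdot\,)$, integrating over $B(x',r)$ and using Stokes's theorem gives, for a.e.\ $r>0$,
\[
\int_{\partial B(x',r)}\nabla\het\cdot\vec\nu=-c_d\,\widetilde D(x',r),\qquad \widetilde D(x',r):=\sum_{i=1}^n\delta_{x_i'}^{(\eta)}\bigl(B(x',r)\bigr)-\int_{B(x',r)}\mu_0'.
\]
Cauchy--Schwarz on $\partial B(x',r)$ followed by the coarea formula then yields, for any $R\le\alpha<\beta$,
\[
\int_{\mr^d}|\nabla\het|^2\ \ge\ \frac{c_d^2}{|\mathbb S^{d-1}|}\int_\alpha^\beta\frac{\widetilde D(x',r)^2}{r^{d-1}}\,dr,
\]
so it suffices to exhibit an annulus $[\alpha,\beta]$ on which $|\widetilde D(x',r)|$ is comparable to $|D(x',R)|$ and to optimize its width.

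Assume $D:=D(x',R)>0$ ($D=0$ is trivial, $D<0$ is symmetric via an interior annulus). By Newton's theorem $\sum_i\delta_{x_i'}^{(\eta)}\bigl(B(x',r)\bigr)\ge\nu_n'(B(x',r-\eta))$, and $\nu_n'$ is nondecreasing, so for $r\ge R+\eta$, with $\psi(r):=\int_{B(x',r)}\mu_0'$,
\[
\widetilde D(x',r)\ \ge\ \nu_n'(B(x',R))-\psi(r)\ =\ D-\bigl(\psi(r)-\psi(R)\bigr),
\]
and $\psi(r)-\psi(R)\le\|\mu_0\|_{L^\infty}|\mathbb B^d|(r^d-R^d)$ since $\mu_0\in L^\infty$ (as in Lemma~\ref{lem:Lieb}). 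Choosing $c_0:=(4|\mathbb B^d|\,\|\mu_0\|_{L^\infty})^{-1}$ and $\rho_\ast:=\min\{2R,(R^d+c_0D)^{1/d}\}$, we get $\psi(\rho_\ast)-\psi(R)\le D/4$, hence $\widetilde D(x',r)\ge D/2$ for all $r\in[R+\eta,\rho_\ast]$. Now split into two regimes. If $\rho_\ast=2R$ (which forces $D\gtrsim R^d$), integrate over $[R+1,2R]$: since $R>2$ this has length $\ge R-1\ge1$, it is contained in $[R+\eta,\rho_\ast]$ because $\eta<1$, and $\int_{R+1}^{2R}r^{1-d}\,dr\ge c(d)\,R^{2-d}$ for all $d\ge2$; this gives $\int_{\mr^d}|\nabla\het|^2\ge c(d)\,D^2R^{2-d}$, which dominates $\tfrac{D^2}{R^{d-2}}\min(1,D/R^d)$. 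If instead $\rho_\ast=(R^d+c_0D)^{1/d}<2R$ and moreover $D$ exceeds a suitable multiple of $R^{d-1}$, then $\rho_\ast-R\ge\tfrac{c_0D}{d(2R)^{d-1}}\ge2\eta$, so $[R+\eta,\rho_\ast]$ still has length $\gtrsim D/R^{d-1}$ (here $\eta<1$ is used), and integrating $r^{1-d}\widetilde D(x',r)^2\ge\tfrac14D^2(2R)^{1-d}$ over it yields $\int_{\mr^d}|\nabla\het|^2\gtrsim D^3R^{2-2d}=\tfrac{D^2}{R^{d-2}}\cdot\tfrac{D}{R^d}$, which is the claim since here $D\lesssim R^d$.

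The main obstacle is the leftover regime of small fluctuations, $D(x',R)\lesssim R^{d-1}$ (precisely the one relevant to \eqref{eq:fluctu charge macro}): there, the reduction from the smeared to the point charge costs an $\eta$-fattening of the radius, and $\psi$ can increase by as much as $\sim\|\mu_0\|_{L^\infty}R^{d-1}$ over the single shell $B(x',R+\eta)\setminus B(x',R)$, so the interval $[R+\eta,\rho_\ast]$ may be empty and the argument above breaks. I would handle this by a mean-value argument: among radii $\widetilde R$ in an $O(1)$-neighbourhood of $R$, choose one whose $\eta$-shell carries comparatively few of the points $x_i'$, so that $\widetilde D(x',\widetilde R)$ stays comparable to $D(x',R)$ without the costly fattening; conversely, if many of the $x_i'$ do lie in the $\eta$-shell around $\partial B(x',R)$, then $\widetilde D$ is itself of order $D$ (or larger) at a radius just past $R$, and one proceeds directly. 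Carrying out this bookkeeping — keeping the smearing scale $\eta<1$, the radius $R>2$, and the adaptive annulus width mutually consistent — is where the real work lies; the hypotheses $\eta<1$ and $R>2$ are there precisely to make all these length-scale shifts harmless, while everything else reduces to the capacitary estimate above.
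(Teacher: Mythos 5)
Your proof follows exactly the same route as the paper's: Gauss's law on spheres $\partial B(x',t)$, Cauchy--Schwarz, and an adaptively chosen annulus of width $\sim\min\{R,\,D/R^{d-1}\}$ starting at $R+\eta$ so that the smeared charges of points inside $B(x',R)$ are fully captured. Your analysis of the two large-$D$ regimes ($D\gtrsim R^d$ and $R^{d-1}\lesssim D\lesssim R^d$) is correct and matches what the paper does. So the approach is not different — the value of your write-up is rather that you explicitly flag a step which the paper's own proof also does not justify.

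Concretely: the paper's claim is that for all $t\in[R+\eta,T]$ with $T=\bigl((R+\eta)^d+D/(2C)\bigr)^{1/d}$ one has $-\int_{\partial B(x',t)}\nabla \het\cdot\vec\nu\geq \tfrac{c_d}{2}D$, and this is supported by the chain $c_d D - C(t^d-R^d)\geq\tfrac{c_d}{2}D$. But at $t=T$ one has $t^d-R^d=(R+\eta)^d-R^d+D/(2C)$, and the piece $(R+\eta)^d-R^d\sim d\eta R^{d-1}$ is not controlled by $D$: the required inequality forces $D\gtrsim\eta R^{d-1}$. No choice of the ``well-chosen constant $C$'' removes this, because it multiplies both the helpful $D/(2C)$ and the unhelpful $(R+\eta)^d-R^d$. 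You have correctly put your finger on exactly this obstruction. What you call ``small fluctuations, $D\lesssim R^{d-1}$'' is the regime where the argument, as written in the paper as well, is not complete — and it is not a vacuous regime, since the right-hand side $D^3/R^{2d-2}$ is still a nontrivial lower bound there, and the application to Theorem~\ref{thm:charge fluctu}, Item~2, invokes the lemma for all $\lambda>0$.

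Where I disagree is with your proposed repair. You suggest a mean-value argument choosing $\widetilde R$ near $R$ so that ``the $\eta$-shell carries comparatively few of the points $x_i'$.'' But the obstruction is not the discrete points straddling $\partial B(x',\widetilde R)$: since the smeared charges $\delta^{(\eta)}_{x_i'}$ are nonnegative, points in the shell can only \emph{increase} $\widetilde D(x',t)$ for $t\ge R+\eta$ and so help the estimate, not hurt it. The obstruction is the $\mu_0'$-mass $\int_{B(x',R+\eta)\setminus B(x',R)}\mu_0'\sim\|\mu_0\|_\infty\,\eta R^{d-1}$, which is insensitive to shifting $R$ by $O(1)$. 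So the mean-value selection does not buy you what you need. As submitted, the proposal therefore has a genuine unresolved gap in the regime $D\lesssim\eta R^{d-1}$; since the paper's own proof glosses over the same step, this is a useful observation, but it does not constitute a complete proof of the lemma.
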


\begin{proof} 
In the proof, $C$ will denote a constant depending only on $d$ that may change from line to line.
We distinguish two cases, according to whether $D:=D(x',R)>0$ or $D \leq 0$, and start with the former. We first claim that for all $t$ such that 
\begin{equation}\label{eq:defi T}
R+\eta \leq t\leq T := \left( (R+\eta) ^d + \frac{D}{2C} \right) ^{1/d}
\end{equation}
for some well-chosen constant $C$,  it holds that 
\begin{equation}\label{eq:up bound D}
-\int_{\partial B(x',t)} \nabla \het \cdot \vec{\nu} \geq \frac{c_d}{2} D 
\end{equation}
where $\vec{\nu}$ is the outwards pointing normal to $\partial B(x',t)$. Indeed, by Green's  formula 
\begin{align*}
-\int_{\partial B(x',t)} \nab\het\cdot \vec{\nu}  &= \int_{B(x',t)} - \Delta \het = c_d \int_{B(x',t)} \left( \sum_{i=1}^n \deleta_{x'_i}- \mu_0' \right) \\
&\geq c_d D(x',R) - c_d\int_{B(x',t)\setminus B(x',R)} \mu_0 ' \geq c_d D(x',R) - C\left( t ^d - R ^d\right) \geq \frac{c_d}{2} D
\end{align*}
if $t$ satisfies \eqref{eq:defi T}. We have used the positivity of the smeared charges and the fact that $\mu_0'$ is bounded in $L ^{\infty}$ \eqref{ass2}. Then, integrating in spherical coordinates and using the Cauchy-Schwarz inequality  as in \eqref{minfacile}, we find
\begin{align*}
\int_{\R ^d} |\nabla \het | ^2 &\geq \int_{B(x',T) \setminus B(x',R+ \eta)} |\nabla \het | ^2 
 \geq \int_{t= R+\eta} ^T \frac{1}{t ^{d-1}\left|\mathbb{S}^{d-1}\right|} \left(\int_{\partial B(x',t)} \nab \het \cdot \vec{\nu}\right) ^2 dt \\
\\ &\geq C  D^2 \int_{t= R+\eta} ^T \frac{1}{t ^{d-1}\left|\mathbb{S}^{d-1}\right|} = C D ^2 
\begin{cases}
\log\frac{T}{R+\eta} \quad \mbox{ if } d=2 \\
\frac{1}{(R+\eta)^{d-2}} - \frac{1}{T ^{d-2}}\quad  \mbox{ if } d \geq 3.
\end{cases}
\end{align*}
There only remains to note that in view of \eqref{eq:defi T} and the fact that $\eta<R/2$, we have
\begin{align*}
\log  \frac{T}{R+\eta} &= \log \(1+\frac{D}{2C(R +\eta) ^d}\) ^{1/d} \geq C' \min(1,\frac{D}{R^d})\\
\frac{1}{(R+\eta)^{d-2}} - \frac{1}{T ^{d-2}} &= \frac{1}{(R+\eta)^{d-2}} \left( 1 - 
\left( 1+  \frac{D}{2C(R+\eta)^d}\right)^{\frac{d}{d-2}} \right)
  \geq \frac{C'}{R ^{d-2}} \min(1,\frac{D}{R^d})
\end{align*}
to conclude the proof in the case $D>0$.

If $D\leq 0$ the computation is the same, except that we set 
\[
T = \left((R-\eta) ^d - \frac{D}{2C}\right) ^{1/d}, 
\]
and use that for any $ T \leq t\leq R-\eta$ 
\begin{align*}
-\int_{\partial B(x',t)} \nab \het \cdot \vec{\nu} &= \int_{B(x',t)} - \Delta \het  \leq c_d D -C(R ^d - t ^d) \leq \frac{c_d}{2} D 
\end{align*}
so that 
$$ \left( \int_{\partial B(x',t)} \nab \het\cdot \vec{\nu}\right) ^2 \geq C D ^2$$
again. We then integrate the energy density on $B(x',R-\eta)\setminus B(x',T)$ and argue as before. 
\end{proof}

The method of smearing out charges also provides a very convenient way to control the  $L^q$ norms of the electric field.

\begin{lem}[\textbf{Controlling electric field fluctuations}]\label{lem:fluctu field}\mbox{}\\
For any $x_1, \dots, x_n$ and $\het$, $h_n'$ given by \eqref{hh}, 
for any $1\le q <\frac{d}{d-1} $, any $1 > \eta > 0$ and any $R>0$, we have
\begin{align}\label{eq:control Lq field}
 \left\Vert \nabla h'_n \right\Vert_{L ^q (K_R)} &\leq |K_R | ^{1/q - 1/2} \left\Vert \nabla \het \right\Vert_{L ^2 (K_R)} + C_{q,\eta} \nu_n ' (K_{R+\eta})\\
\label{controlLq2} &\leq  C (R^{\frac{d}{q} - \frac{d}{2}}+ R^{\frac{d-1}{2}})  \left\Vert \nabla \het \right\Vert_{L ^2 (K_{R+2})} + C_{q,\eta} \|\mu_0\|_{L^\infty} R^d ,
\end{align}
where $C_{q,\eta}$ depends only on $q$ and $\eta$ and satisfies $C_{q,\eta}\to 0$ when $\eta \to 0$ at fixed $q$, and $C$ depends only on $d$. \end{lem}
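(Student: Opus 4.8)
The plan is to compare $\nab h_n'$ with $\nab \het$ through the point-charge / smeared-charge decomposition. By \eqref{defh}, $\het - h_n' = \g \ast \big(\sum_i (\delta_{x_i'}^{(\eta)} - \delta_{x_i'})\big) = \sum_{i=1}^n f_\eta(\cdot - x_i')$, since $f_\eta = \g\ast\delta_0^{(\eta)} - \g$ by \eqref{eqf0}. Hence
\[
\nab h_n' = \nab\het - \sum_{i=1}^n \nab f_\eta(\cdot - x_i') \qquad \text{in } \mr^d,
\]
and note in passing that $\nab h_n' \in L^q_{loc}$ for $q<\frac{d}{d-1}$, being a finite sum of translates of $\nab\g$ (whose singularity $|x|^{-(d-1)}$ is $L^q$-integrable exactly in this range) plus the bounded field $\nab\g\ast\mu_0'$. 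Since $f_\eta$ vanishes outside $B(0,\eta)$, on $K_R$ only the (at most $\nu_n'(K_{R+\eta})$) terms with $x_i'\in K_{R+\eta}$ survive. Taking the $L^q(K_R)$ norm, applying the triangle inequality, then H\"older's inequality to the $\nab\het$ term (valid because $q<\frac{d}{d-1}\le 2$) and the bound $\|\nab f_\eta(\cdot - x_i')\|_{L^q(K_R)}\le\|\nab f_\eta\|_{L^q(\mr^d)}=:C_{q,\eta}$ to each remaining term, gives \eqref{eq:control Lq field} at once.

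Next I would establish the two claimed properties of $C_{q,\eta}=\|\nab f_\eta\|_{L^q(\mr^d)}$. A change of variables $x\mapsto x/\eta$ in \eqref{eqf0} shows $f_\eta(x)=\eta^{2-d}f_1(x/\eta)$ (with the convention $\eta^0=1$ when $d=2$), whence $C_{q,\eta}=\eta^{\,1-d+d/q}\,C_{q,1}$. The exponent $1-d+d/q$ is strictly positive if and only if $q<\frac{d}{d-1}$, which yields $C_{q,\eta}\to 0$ as $\eta\to 0$ for fixed $q$. Finiteness of $C_{q,1}$ follows from Newton's theorem: $f_1=\g\ast\delta_0^{(1)}-\g$ is compactly supported with gradient $\nab\g\ast\delta_0^{(1)}-\nab\g$, which lies in $L^q$ precisely because $\nab\g$ is locally $L^q$-integrable in this range (convolution with the unit-mass $\delta_0^{(1)}$ only improves integrability). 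Thus the single condition $q<\frac{d}{d-1}$ is exactly what makes both statements work.

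Finally, \eqref{controlLq2} is obtained from \eqref{eq:control Lq field} by bounding its right-hand side: $|K_R|^{1/q-1/2}=(2R)^{d/q-d/2}\le CR^{d/q-d/2}$, trivially $\|\nab\het\|_{L^2(K_R)}\le\|\nab\het\|_{L^2(K_{R+2})}$, and for $\nu_n'(K_{R+\eta})$ one repeats the charge-balance estimate of Lemma~\ref{lembornnu}: by a mean-value argument choose $t\in[R+1,R+2]$ with $\int_{\p K_t}|\nab\het|^2\le\int_{K_{R+2}}|\nab\het|^2$, integrate $-\Delta\het=c_d(\sum_i\delta_{x_i'}^{(\eta)}-\mu_0')$ over $K_t$, apply Stokes' theorem and Cauchy-Schwarz on $\p K_t$, and use $0\le\mu_0'\le\|\mu_0\|_{L^\infty}$ together with $\eta<1$ (so that $B(x_i',\eta)\subset K_t$ whenever $x_i'\in K_{R+\eta}$) to conclude
\[
\nu_n'(K_{R+\eta}) \le \int_{K_t}\sum_i\delta_{x_i'}^{(\eta)} \le C\|\mu_0\|_{L^\infty}R^d + C R^{\frac{d-1}{2}}\|\nab\het\|_{L^2(K_{R+2})}.
\]
Substituting these bounds into \eqref{eq:control Lq field} and regrouping yields \eqref{controlLq2}.

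I do not expect a genuine obstacle: the heart of the proof is the one-line field decomposition followed by H\"older's inequality. The only points requiring a little care are making the threshold $q<\frac{d}{d-1}$ do double duty (local $L^q$-integrability of $\nab\g$ and positivity of the scaling exponent $1-d+d/q$), and choosing the enlarged cubes with enough slack ($K_{R+2}$ rather than $K_R$) in the last step to absorb simultaneously the $\eta$-smearing and the mean-value choice of the slice $K_t$ --- exactly the bookkeeping already carried out in Lemma~\ref{lembornnu}.
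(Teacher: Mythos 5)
Your argument is correct and is essentially the paper's own proof: the same decomposition $\nabla h'_n = \nabla h'_{n,\eta} - \sum_i \nabla f_\eta(\cdot-x'_i)$, H\"older's inequality on the smoothed field, $C_{q,\eta}:=\|\nabla f_\eta\|_{L^q(\R^d)}$, and then the mean-value/Stokes charge-count bound recycled from Lemma~\ref{lembornnu}. The one thing you add beyond the paper's text is the explicit scaling identity $C_{q,\eta}=\eta^{1-d+d/q}C_{q,1}$, which verifies $C_{q,\eta}\to0$ as $\eta\to0$ (for $q<\frac{d}{d-1}$) --- a fact the lemma states but the paper does not bother to prove.
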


\begin{proof}
Recalling the definition \eqref{eqf0} we have 
\[
\nabla h'_n = \nabla \het - \sum_{i=1} ^n \nabla f_{\eta} (x-x_i) 
\]
and thus 
\[
\left\Vert \nabla h'_n \right\Vert_{L ^q (K_R)} \leq \left\Vert \nabla \het \right\Vert_{L ^q (K_R)} + \nu_n' (K_{R+\eta}) \left\Vert \nabla f_\eta\right\Vert_{L ^q (\R ^d)}
\]
where we used that if $x\in K_R$ and $\eta<1$, then $f_{\eta} (x-x_i) = 0$ if $x_i \in (K_{R+\eta})^c$. A simple application of H\"older's inequality then yields 
\[
\left\Vert \nabla \het \right\Vert_{L ^q (K_R)} \leq |K_R| ^{1/q-1/2} \left\Vert \nabla \het \right\Vert_{L ^2 (K_R)} 
\]
and concludes the proof of the first inequality, with 
$C_{q,\eta} :=  \left\Vert \nabla f_\eta\right\Vert_{L ^q (\R ^d)}.$
Bounding then  $\nu_n'(K_{R+\eta})$ exactly as we did for \eqref{contnu}, we deduce 
$$ \left\Vert \nabla h'_n \right\Vert_{L ^q (K_R)} \leq |K_R | ^{1/q - 1/2} \left\Vert \nabla \het \right\Vert_{L ^2 (K_R)} + C_{q,\eta} \|\mu_0\|_{L^\infty} |K_R| + C   R^{\frac{d-1}{2}}\|\nab \het\|_{L^2(K_{R+2})},$$
and \eqref{controlLq2} follows.
\end{proof}

\section{Lower bound to the ground state energy}\label{sec5}

In this section, we will start to take the limits $n\to \infty$ and  $\eta \to 0$, and we provide the lower bound part of Theorem \ref{th1} by proving the following:

\begin{pro}[\textbf{Lower bound to the ground state energy}]\mbox{} \label{promino}\\
Let $(x_1,\ldots, x_n) \in (\mr^d)^n$, let $h_n'$ be associated via \eqref{hh}, and let $P_{\nu_n}=i_n (x_1, \ldots, x_n) $ be defined as in \eqref{eq:Pnun}. Then $\{P_{\nu_n}\}_n$ is compact for the weak topology of  probability measures on $X$ and any $P$ which is the limit of  an extracted subsequence of $P_{\nu_n}$ is admissible and satisfies 
\begin{equation}\label{222}
\liminf_{n\to \infty} n^{2/d-2} \left(\w(x_1, \dots, x_n) -n^2\I[\mu_0]+ \left(\frac{n}{2} \log n \right)\indic_{d=2} \right) \ge \widetilde{ \W}(P).
\end{equation}
\end{pro}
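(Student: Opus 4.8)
The plan is to adapt the splitting-plus-ergodic-averaging strategy of \cite{gl13,ss2d}: at a fixed smearing scale $\eta$ the splitting lower bound of Lemma~\ref{lemsplit} reduces the rescaled energy to a spatial average of $|\nab\het|^2$, which a Fubini/ergodic argument bounds below by $\int_X\W_\eta(\j)\,dP$; one then lets $\eta\to0$ using the $\eta$-uniform lower bound on $\W_\eta$ proved in Section~\ref{sec:univ low bound}. We may assume the $\liminf$ in \eqref{222} is finite (else there is nothing to prove) and work along a subsequence realizing it.

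\smallskip
\noindent\textbf{Step 1 (a priori bounds and compactness).} Putting $\eta=1$ in \eqref{pred2} and discarding $2n\sum_i\z(x_i)\ge0$ gives $J_n\le n^{2/d-2}\big(\w-n^2\I[\mu_0]+(\tfrac n2\log n)\indic_{d=2}\big)+C$; since the right-hand side converges along our subsequence, $J_n$, hence by \eqref{F} also $\tfrac1n\int_{\mr^d}|\nab\het|^2$ (with $\eta=1$), is bounded uniformly in $n$. Estimating the number of points in cubes via \eqref{contnu} and then the local $L^q$ norms of $\nab h_n'$ via Lemma~\ref{lem:fluctu field}, one obtains for every $R$ and $1\le q<\tfrac d{d-1}$ that
\[
\int_X\|\j\|_{L^q(K_R)}^q\,dP_{\nu_n}(x,\j)=\dashint_{\E}\|\nab h_n'\|_{L^q(K_R+n^{1/d}x)}^q\,dx\le C_{R,q}
\]
uniformly in $n$. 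Since by \eqref{eq:Pnun} the first marginal of $P_{\nu_n}$ is always the normalized Lebesgue measure on $\E$, a Markov inequality and a diagonal extraction over $R\in\mn$ yield tightness, hence by Prokhorov precompactness, of $\{P_{\nu_n}\}_n$ on $X$; let $P_{\nu_n}\wto P$ along a further subsequence.

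\smallskip
\noindent\textbf{Step 2 (admissibility of the limit).} The first marginal of $P$ is the normalized Lebesgue measure on $\E$. Passing to the limit in $-\div\big(\nab h_n'(n^{1/d}x+\cdot)\big)=c_d\big(\nu_n'(\cdot+n^{1/d}x)-\mu_0(x+n^{-1/d}\cdot)\big)$ and using the continuity of $\mu_0$ and the local bound on point numbers, one finds as in \cite[Sec.~4]{ss2d} that for $P$-a.e.\ $(x,\j)$ the field $\j$ is a gradient solving an equation of the form \eqref{curlj} with background density $\mu_0(x)$; since Step~3 below gives $\W_\eta(\j)<\infty$ for $P$-a.e.\ $\j$, Lemma~\ref{lembornnu} forces the point density to equal $\mu_0(x)$, i.e.\ $\j\in\overline{\mathcal{A}}_{\mu_0(x)}$, which makes sense by \eqref{ass2}. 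Finally the $T_{\lambda(x)}$-invariance (Definition~\ref{invariance}) of $P$ comes from the averaged nature of $i_n$: shifting the blow-up centre from $x$ to $x+n^{-1/d}\lambda(x)$ changes $i_n(\xbf)$ by a quantity vanishing as $n\to\infty$, exactly as in \cite{gl13,ss2d}. Thus $P$ is admissible.

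\smallskip
\noindent\textbf{Step 3 (lower bound at fixed $\eta$).} Fix $\eta>0$. As $\Phi_\eta$ commutes with translations, $\big(\nab h_n'(n^{1/d}x+\cdot)\big)_\eta=\nab\het(n^{1/d}x+\cdot)$, so by Fubini and the change of variables $y=n^{1/d}x+z$, for every $R>0$,
\[
\int_X\dashint_{K_R}|\j_\eta|^2\,dP_{\nu_n}(x,\j)=\frac1{|K_R|}\int_{K_R}\Big(\dashint_{\E}|\nab\het(n^{1/d}x+z)|^2\,dx\Big)\,dz\le\frac1{|\E|\,n}\int_{\mr^d}|\nab\het|^2 .
\]
By the lower semicontinuity of $\j\mapsto\dashint_{K_R}|\j_\eta|^2$ along $\Lp$-convergence and its nonnegativity (the Varadhan-type framework of \cite{gl13,ss2d}), letting $n\to\infty$ gives $\int_X\dashint_{K_R}|\j_\eta|^2\,dP\le\frac1{|\E|}\liminf_n\frac1n\int_{\mr^d}|\nab\het|^2$ for every $R$. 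Since $P$ is translation-invariant, $z\mapsto\int_X|\j_\eta(z)|^2\,dP$ is constant, so $\int_X\dashint_{K_R}|\j_\eta|^2\,dP=\int_X|\j_\eta(0)|^2\,dP$ for all $R$; and the multiparameter ergodic theorem applied to $P$ shows $\dashint_{K_R}|\j_\eta|^2$ converges as $R\to\infty$ for $P$-a.e.\ $\j$, so its $\limsup_R$ is a genuine limit with $P$-integral $\int_X|\j_\eta(0)|^2\,dP$. Using $\j\in\overline{\mathcal{A}}_{\mu_0(x)}$, the definition \eqref{We} and $\int_X\mu_0(x)\,dP=\dashint_{\E}\mu_0=|\E|^{-1}$, this yields
\[
\liminf_{n\to\infty}\frac1n\int_{\mr^d}|\nab\het|^2\ \ge\ |\E|\int_X\W_\eta(\j)\,dP(x,\j)+\big(\kappa_d\g(\eta)+\gamma_2\indic_{d=2}\big).
\]
Inserting this into \eqref{pred2} (again discarding $2n\sum_i\z(x_i)\ge0$) and using \eqref{F}, the constant $\kappa_d\g(\eta)+\gamma_2\indic_{d=2}$ cancels and we obtain, for every fixed $\eta>0$,
\[
\liminf_{n\to\infty}n^{2/d-2}\Big(\w(x_1,\dots,x_n)-n^2\I[\mu_0]+\big(\tfrac n2\log n\big)\indic_{d=2}\Big)\ \ge\ \frac{|\E|}{c_d}\int_X\W_\eta(\j)\,dP(x,\j)-C\eta^2 .
\]

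\smallskip
\noindent\textbf{Step 4 (letting $\eta\to0$).} The left-hand side above does not depend on $\eta$. Using the $\eta$-uniform lower bound $\W_\eta(\j)\ge-C$ valid for $\j\in\overline{\mathcal{A}}_m$, $m\le\overline{m}$ (Section~\ref{sec:univ low bound}, relying on \cite{Lie}), Fatou's lemma gives $\liminf_{\eta\to0}\int_X\W_\eta\,dP\ge\int_X\liminf_{\eta\to0}\W_\eta\,dP=\int_X\W(\j)\,dP$ by \eqref{weta}. Taking $\liminf_{\eta\to0}$ in Step~3 therefore gives \eqref{222}, with right-hand side $\frac{|\E|}{c_d}\int_X\W(\j)\,dP(x,\j)=\widetilde\W(P)$ as in \eqref{wtilde}.

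\smallskip
\noindent\textbf{Expected main difficulty.} The crux is the Varadhan-type machinery in Step~3: proving the lower semicontinuity of $\j\mapsto\dashint_{K_R}|\j_\eta|^2$ under $\Lp$-convergence (delicate since the smearing map $\Phi_\eta$ depends on the point positions, which may merge in the limit), and exploiting translation invariance and the ergodic theorem to replace the $\limsup_R$ in the definition of $\W_\eta$ by an honest limit. Step~4 moreover hinges on the $\eta$-uniform lower bound for $\W_\eta$, itself a nontrivial input established separately in Section~\ref{sec:univ low bound} via the unpublished result of \cite{Lie}.
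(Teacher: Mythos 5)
Your proposal is correct and follows the same route as the paper: split via Lemma~\ref{lemsplit}, pass to the $n\to\infty$ limit at fixed $\eta$ by an averaging/ergodic argument to obtain~\eqref{pre1}, invoke Proposition~\ref{Wbb} to get an $\eta$-uniform lower bound, and conclude by Fatou. The difference is one of packaging: the paper applies the abstract two-scale/ergodic theorem \cite[Theorem~7]{ss2d} as a black box (after verifying its coercivity and $\Gamma$-$\liminf$ hypotheses via Lemma~\ref{lemprel}), whereas you unpack that machinery by hand into a Fubini estimate, a Portmanteau-type lower semicontinuity step, translation invariance of $P$, and the multiparameter ergodic theorem. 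Both yield identical conclusions; the black-box route has the advantage of cleanly absorbing the technicalities you correctly flag: (i) the pointwise quantity $|\j_\eta(0)|^2$ is not defined for an $L^2_{\mathrm{loc}}$ field, and the paper's use of a smooth cut-off $\chi$ (so that $\mathbf f(x,\mathcal Y)=\int\chi|\mathcal Y|^2$) is precisely the fix one needs before applying Fubini, translation invariance, and the ergodic theorem — you should replace $|\j_\eta(0)|^2$ by $\int\chi|\j_\eta|^2$ throughout Step~3; (ii) the lower semicontinuity of $(x,\j)\mapsto\dashint_{K_R}|\Phi_\eta(\j)|^2$ requires both weak sequential closedness of the class of admissible blown-up fields and the commutation of $\Phi_\eta$ with the relevant weak limits, which is the content of Lemma~\ref{lemprel}; (iii) the $T_{\lambda(x)}$-invariance of $P$ that you use for the ergodic step is itself an output of the abstract theorem, so you are implicitly relying on it even while re-deriving its energy inequality directly. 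None of these affects the overall correctness; they only mean that what looks like an elementary re-derivation still leans on the same structural lemmas.
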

Lower bounds corresponding to \eqref{eq:intro 222} and \eqref{eq:intro 223} follow from the above {and \eqref{eq:gamma d}}.

The proof of Proposition \ref{promino} will occupy all this section and a large part of the next. More precisely, this section is devoted to the connection between the original problem and the renormalized energy at fixed $\eta$, using the general strategy for providing lower bounds for  two-scale energies developed in \cite{gl13,ss2d}. For a lower bound, it is then sufficient to use a lower semi-continuity argment (Fatou's lemma) to pass to the  limit $\eta \to 0$. In this last step we will need the fact that  $\W_\eta$ is bounded below  independently of $\eta$. The proof of this  is postponed to Section \ref{sec:univ low bound}.

\subsection{The  local energy}

The next step is to study the $\liminf$ when $n\to 0$ of $\frac{1}{n}\int_{\mr^d} |\nab \het|^2$ and make the two scale structure apparent. This is a general fact, simple application of Fubini's theorem. Let us pick a smooth cut-off function $\chi$ with integral $1$, support in $B(0,1)$ and which equals $1$ in $B(0,1/2)$. We use it like a smooth partition of unity, writing
\begin{align}\label{eq:two scale}
\int_{\mr^d} |\nab \het|^2 &= \int_{\R  ^d} \left( \int_{\R  ^d} \chi(x-y) dx\right) \left|\nab \het\right|^2 dy\nonumber \\
&= \int_{\R  ^d} \int_{\R  ^d} \chi(x) \left|\nab \het (x+y)\right|^2 dy dx \nonumber \\
&\geq \int_{n ^{1/d} \E} \int_{\R  ^d} \chi(x) \left|\nab \het (x+y)\right|^2 dy dx \nonumber \\
&= n |\E| \dashint_{\E} \int_{\R ^d} \chi(x) \left|\nab \het \left(  y n ^{1/d}+x\right)\right|^2 dx dy
\end{align}
where we simply dropped a part of the integral that we guess will be irrelevant and changed variables. The last line should be interpreted as the average over blow-up centers $y$ of the local electrostatic energy around the center. Indeed, because of the cut-off $\chi$ the integral in $x$ is limited to a bounded region, which means we are integrating the square of the original (non blown up) electric field on a region of size $n ^{-1/d}$ around $y$. The last line is in a form that we can treat using the abstract framework of \cite[Theorem 7]{ss2d}. We will interpret the integral in $x$ as a local energy functional, and deduce from a $\Gamma$-convergence and compactness result (with limit depending on $y$) on this functional a lower bound to \eqref{eq:two scale}. Note that this is also reminiscent of ideas of Graf-Schenker \cite{gs} when they average over simplices.

The following result shows the coercivity of the ``local" energy $\int \chi \left|\nab \het \left(  y n ^{1/d}+\cdot \right)\right|^2$, which is needed to apply the framework of \cite{gl13,ss2d}. In particular we need to prove compactness of the electric fields and that their limits are in the admissible classes of Definition~\ref{def:adm field}.

\begin{lem}[\textbf{Weak compactness of local electric fields}]\label{lemprel}\mbox{}\\
We pick a sequence of blow-up centers $y_n \to y \in \mr^d$. Assume that for every $R>1$ and for some  $\eta\in(0,1)$, we have
\begin{equation}\label{eq:asum h}
\sup_n  \int_{B_R} |\nab \het(n^{1/d}y_n + \cdot) |^2 \le C_{\eta, R}.
\end{equation}
Then $\{\nu'_n(n^{1/d}y_n+\cdot)\}_{n}$ is locally bounded and up to extraction converges weakly as $n\to \infty$, in the sense of measures, to
$$\nu = \sum_{p\in\Lambda}N_p \delta_p$$
where $\Lambda $ is a discrete set and $N_p\in \mathbb{N}^*$.

In addition, there exists $\j\in \Lp$, $p<\frac{d}{d-1}$, $\j_\eta\in \Ld$, with $\j_\eta=\Phi_\eta(\j)$,  such that, up to extraction of a subsequence,
\begin{equation}\label{cvh2}
\nab h_n'(n^{1/d} y_n + \cdot)  \rightharpoonup   \j \ \text{weakly in } \ \Lp \ \text{for}\ p<\frac{d}{d-1}, \ \text{as } \ n\to \infty ,
\end{equation}
and
\begin{equation}\label{cvh}
\nab \het (  n^{1/d} y_n + \cdot)   \rightharpoonup  \j_\eta \ \text{ weakly in } \ \Ld \ \text{as } \  n \to \infty.
\end{equation}
Moreover $\j = \nabla h$, and if $y \notin \p\E$, we have 
\begin{equation}\label{eqhl}
- \Delta h (x)= c_d(\nu(x) -\mu_0(y))
\end{equation} 
hence $\j \in \overline{\ba}_{\mu_0 (y)}$.
\end{lem}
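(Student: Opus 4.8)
The plan is to establish the three convergences in turn, then identify the limiting equation. First I would observe that the energy bound \eqref{eq:asum h} controls the number of points: applying the argument behind \eqref{contnu}--\eqref{feg} from Lemma \ref{lembornnu} locally (integrating \eqref{delp} over a ball $B(x',t)$ of radius $t\in[R,R+1]$ chosen by a mean-value argument so that $\int_{\p B}|\nab\het(n^{1/d}y_n+\cdot)|^2\le \int_{B_{R+1}}|\nab\het(n^{1/d}y_n+\cdot)|^2$, using Cauchy--Schwarz on the boundary term and $\|\mu_0\|_{L^\infty}<\infty$ for the background) gives a uniform bound on $\nu_n'(n^{1/d}y_n+B_R)$ for every fixed $R$, hence local boundedness of the measures $\nu_n'(n^{1/d}y_n+\cdot)$. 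Up to extraction they then converge weakly-$*$ to some nonnegative measure $\nu$. To see that $\nu=\sum_{p\in\Lambda}N_p\delta_p$ with $\Lambda$ discrete and $N_p\in\mathbb N^*$, I would use the separation of the original points implied by the energy bound: if two of the $x_i'$ were at distance $\le 2\eta$ say, or if many clustered near a point, the lower bound \eqref{minfacile} (the toy ball-construction from Lemma \ref{lemequiv}) applied to $\nab h_n'$ around each point would force $\int_{B_R}|\nab h_n'|^2$, and hence via the relation between $\het$ and $h_n'$ the smeared energy, to blow up as $n\to\infty$, contradicting \eqref{eq:asum h}. Thus any weak limit of the point measures is a locally finite sum of integer Dirac masses at a uniformly discrete set.

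Next I would get the field convergences. Since $-\Delta \het(n^{1/d}y_n+\cdot) = c_d(\sum_i \delta^{(\eta)}_{x_i'}(n^{1/d}y_n+\cdot) - \mu_0(n^{-1/d}(n^{1/d}y_n+\cdot)))$ and the right-hand side is bounded in the space of measures on each ball while $\nab\het(n^{1/d}y_n+\cdot)$ is bounded in $L^2_{loc}$ by hypothesis \eqref{eq:asum h}, a diagonal extraction gives weak $L^2_{loc}$ convergence $\nab\het(n^{1/d}y_n+\cdot)\wto \j_\eta$, and passing to the limit in the equation yields $-\div\j_\eta = c_d(\sum_{p}N_p\delta^{(\eta)}_p - \mu_0(y))$ (using that $\mu_0(n^{-1/d}(n^{1/d}y_n+x))\to \mu_0(y)$ locally uniformly by continuity of $\mu_0$ and $y_n\to y$, assuming $y\notin\p\E$ so the evaluation makes sense — off $\E$ one argues similarly that the background limit is $\mu_0(y)=0$ or handled by the droplet structure). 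For $\nab h_n'$, the decomposition $\nab h_n' = \nab\het - \sum_i \nab f_\eta(\cdot - x_i')$ from \eqref{eqf0} together with the local bound on $\nu_n'$ and $\|\nab f_\eta\|_{L^q(\R^d)}<\infty$ for $q<\frac{d}{d-1}$ (the only integrability $\nab f_\eta$, which behaves like $\nab\g$ near $0$, enjoys) shows $\nab h_n'(n^{1/d}y_n+\cdot)$ is bounded in $L^q_{loc}$ for such $q$; extract a weak limit $\j$. Passing to the limit in $-\div(\nab h_n'(n^{1/d}y_n+\cdot)) = c_d(\nu_n'(n^{1/d}y_n+\cdot) - \mu_0(n^{-1/d}\cdot))$ gives $-\div\j = c_d(\nu - \mu_0(y))$, i.e. $\j=\nab h$ with $-\Delta h = c_d(\nu - \mu_0(y))$, which is exactly \eqref{eqhl}. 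Finally, since $\nab f_\eta(\cdot - x_i')\to \nab f_\eta(\cdot - p)$ appropriately and $\Phi_\eta$ is the (continuous, explicit) map adding $\sum N_p\nab f_\eta(\cdot-p)$, one checks $\j_\eta = \Phi_\eta(\j)$ by passing to the limit in $\nab\het = \nab h_n' + \sum_i \nab f_\eta(\cdot - x_i')$; together with $\nu=\sum N_p\delta_p$ discrete this gives $\j\in\overline{\mathcal A}_{\mu_0(y)}$ by Definition \ref{def:adm field}.

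The main obstacle I expect is the discreteness and integer-multiplicity of the limiting point measure $\nu$ — i.e. ruling out that points collapse or that mass accumulates continuously — since this is precisely where one must convert the $L^2$-energy bound on the \emph{smeared} field $\het$ (whose smeared charges may overlap) into a lower bound on the \emph{unsmeared} field $h_n'$ near clusters of points, and then invoke the ball-construction-type estimate \eqref{minfacile}. One has to be a little careful: the bound \eqref{eq:asum h} is on $\het$, not $h_n'$, so the argument should be run at the level of $h_n'$ by noting that the removed self-energy contributions are quantitatively controlled (the $f_\eta$ corrections are bounded in $L^q_{loc}$), so that local blow-up of $\int|\nab h_n'|^2$ near a multiple point would survive. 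The remaining steps (mean-value selection of good radii, Cauchy--Schwarz on spheres, continuity of $\mu_0$ to identify the background limit, and continuity of $\Phi_\eta$) are routine.
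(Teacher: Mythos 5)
Your overall plan for the local boundedness (mean-value argument on a good sphere, Green's formula, Cauchy--Schwarz) and for the field convergences (weak $L^2_{loc}$ compactness for $\nabla\het$, the decomposition $\nabla h_n' = \nabla\het - \sum_i\nabla f_\eta(\cdot - x_i')$ giving $L^q_{loc}$ compactness for $q<\frac{d}{d-1}$, passing to the limit in the PDE, and commuting $\Phi_\eta$ with weak convergence) matches the paper's proof almost step for step; the paper also invokes Lemma \ref{lem:fluctu field} for \eqref{cvh2}, which encapsulates exactly your decomposition.

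However, the step you identify as ``the main obstacle'' --- ruling out collapsing points via the ball-construction lower bound \eqref{minfacile} --- is based on a misreading of what the lemma claims, and the workaround you sketch would not close. The class $\overline{\mathcal A}_m$ of Definition \ref{def:adm field} \emph{allows} arbitrary integer multiplicities $N_p\in\mathbb N^*$, and the lemma only asserts that the weak limit $\nu$ is a locally finite sum of integer-weighted Diracs; it does \emph{not} assert that the $N_p$ are all $1$ or that the atoms are uniformly separated. There is therefore nothing to rule out: once $\nu_n'(n^{1/d}y_n+\cdot)(B_R)$ is uniformly bounded by $M_R$ for each $R$, each $\nu_n'(n^{1/d}y_n+\cdot)\big|_{B_R}$ is a sum of at most $M_R$ unit Dirac masses, and up to extraction these point locations converge (possibly with coincidences), so the weak limit on $B_R$ is again $\sum_p N_p\delta_p$ with integers $N_p\ge1$ and finitely many atoms; a diagonal argument in $R$ gives the discrete set $\Lambda$. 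This is the paper's one-line observation (``In view of the form of $\underline{\nu_n'}$, its limit can only be of the form $\nu=\sum N_p\delta_p$...'').

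Beyond being unnecessary, the detour you propose would also fail on its own terms. The hypothesis \eqref{eq:asum h} controls $\int|\nabla\het|^2$, not $\int|\nabla h_n'|^2$, and $\int_{B_R}|\nabla h_n'|^2$ is \emph{always} infinite because each point charge contributes an unbounded self-energy, so ``showing $\int|\nabla h_n'|^2$ blows up'' gives no contradiction. And even the smeared energy $\int|\nabla\het|^2$ does \emph{not} blow up when points coalesce at fixed $\eta$: two charges at distance $<2\eta$ simply produce two overlapping smooth bumps with finite energy. The $\eta\to0$ limit is where multiple points become costly, but this lemma works at fixed $\eta$, which is precisely why the paper's definition of $\overline{\mathcal A}_m$ accommodates multiplicities at this stage.

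Two minor points: in identifying the limiting equation you should note explicitly that the weak convergences of both $\nu_n'(n^{1/d}y_n+\cdot)$ and $\nabla h_n'(n^{1/d}y_n+\cdot)$ (not just $\nabla\het$) are used to pass to the limit in $-\div\,\nabla h_n' = c_d(\nu_n'-\mu_0')$, and that $\j$ is a gradient because a weak limit of gradients is a gradient (curl-free in the distributional sense). The continuity of $\mu_0$ away from $\p\E$, which you invoke, is assumption \eqref{ass2}; the case $y\in\p\E$ is excluded by hypothesis, so no further argument about the droplet boundary is needed.
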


\begin{proof}
Given $R>1$, by a mean-value argument, there exists $t\in (R-1,R)$ such that for all $n$,
$$\int_{\p B_t} |\nab \het(n^{1/d} y_n+\cdot) |^2 \le C_{\eta, R},$$
from which it follows that (with a different constant)
$$\left| \int_{\p B_t}  \nab \het(n^{1/d} y_n + \cdot) \cdot \vec{ \nu}\right|\le  C_{\eta, R}.$$
But, by Green's formula and \eqref{hh}  the left-hand side is equal to 
$$\left| \int_{ B_t}    \sum\delta_{x_i'}^{(\eta)} (n^{1/d} y_n + x')    - \int_{B_t} \mu_0( y_n  +  n^{-1/d}  x')\, dx' \right|.$$
By \eqref{ass2} it follows that, letting $\underline{\nu_n'}:=\nu_n'(n^{1/d} y_n + \cdot) $, we have
$$\underline{\nu_n'}(B_{R-1}) \le C_d \|\mu_0\|_{L^\infty}  R^d  +C_{\eta,R}.$$
This establishes that $\{\underline{\nu_n'}\}$ is locally bounded independently of $n$. In view of the form of $\underline{\nu_n'}$, its limit can only be of the form $\nu= \sum_{p\in \Lambda} N_p \delta_p$, where $N_p$ are positive  integers, and $\Lambda $ is a discrete set.

Up to a further extraction we also have \eqref{cvh} by \eqref{eq:asum h} and weak compactness in $L ^2_{loc}$. The compactness and  convergence \eqref{cvh2}  follows from Lemma \ref{lem:fluctu field}. The weak local  convergences of both $\underline{\nu_n'}$ and $\nabla h_n ' (n^{1/d} y_n + \cdot)  $ together with the continuity of $\mu_0$ away from $\p \E$ (cf. \eqref{ass2}),   imply after passing to the limit  in \eqref{hh} that $\j$ must be a gradient and that \eqref{eqhl} holds. Finally $\j_\eta=\Phi_\eta(\j)$ because $\Phi_\eta$ commutes with the weak convergence in $\Lp$ for the $\nab h_n'  (n^{1/d} y_n + \cdot)$ described above. Indeed by definition
\[
\Phi_\eta (\nab h_n'  (n^{1/d} y_n  + \cdot)) =  \nab h_{n,\eta}'  (n^{1/d} y_n + \cdot)  = \nab h_n'  (n^{1/d} y_n + \cdot) + \sum_{p \in \Lambda_n} \nabla f_\eta (. - p)
\]
where $\Lambda_n$ is the set of points associated with $\nu_n'  (n^{1/d} y_n + \cdot)$. Since by assumption all these points have limits, one may check that the sum in the right-hand side converges to 
$\sum_{p \in \Lambda} \nabla f_\eta (. - p) 
$, at least weakly in $L ^p _{loc}$. Using in addition the convergences \eqref{cvh} and \eqref{cvh2} we deduce
\[
 \j_\eta = \j + \sum_{p \in \Lambda} \nabla f_\eta (. - p), 
\]
i.e. $\j_\eta = \Phi_\eta (\j)$ as desired.
\end{proof}

\subsection{Large $n$ limit: proof of Proposition \ref{promino}}\label{sec:promino}

We now have the tools to pass to the double-scale limit at fixed $\eta$. The subsequent limit $\eta \to 0$ will use the following result, whose proof is postponed to Section \ref{sec:univ low bound}.

\begin{pro}\label{Wbb}
$\W_\eta$ is bounded below on $\bai$ by a constant depending only on the dimension.
\end{pro}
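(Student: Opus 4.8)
Since we work on $\bai$ the neutralising density equals $1$, so by the scaling relation~\eqref{eq:scale renorm} it is enough to produce a lower bound $\W_\eta(\j)\ge -C$ for all $\j\in\bai$, with $C=C(d)$ independent of $\eta\in(0,1)$. I would first dispose of the trivial case $\W_\eta(\j)=+\infty$, and henceforth assume $\W_\eta(\j)<+\infty$, so that Lemma~\ref{lembornnu} gives $\#(\Lambda\cap K_R)=|K_R|\,(1+o(1))$ as $R\to\infty$. The key remark is that the desired bound is \emph{already available} for \emph{well-separated} configurations through Lemma~\ref{lemequiv}: feeding $U=K_R$, $a\equiv 1$ and a separation scale $\eta_0=r_0$ into~\eqref{lequi}--\eqref{eq:lowbound sep} (and absorbing the $O(R^{d-1})$ charges lying within $\eta_0$ of $\partial K_R$ into a lower-order correction) one gets
\[
\int_{K_R}|\j_\eta|^2 \ \ge\ \#(\Lambda\cap K_R)\bigl(\kappa_d\,\g(\eta)+\gamma_2\indic_{d=2}-C(d,r_0)\bigr),
\]
and dividing by $|K_R|$, letting $R\to\infty$ and using that the density is $1$ yields $\W_\eta(\j)\ge -C(d,r_0)$, uniformly in $\eta$ — here the $\limsup$ in the definition of $\W_\eta$ is harmless, since a lower bound along a sequence $R_k\to\infty$ suffices. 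So the whole job is to reduce an arbitrary admissible $\j$ to a well-separated one without essentially increasing the energy, and to check that $r_0$ can be taken to depend only on $d$.

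This reduction is carried out by combining the unpublished result of Lieb~\cite{Lie} (its two-dimensional form being \cite[Theorem~4]{rns}) with a \emph{screening} construction, which I would organise as follows. Fix $R$ large, chosen by a mean-value argument so that the energy on a thin boundary shell $K_R\setminus K_{R-1}$ — and in particular the flux $\int_{\partial K_R}\j_\eta\cdot\vec\nu$ — is no larger than its average, hence $o(R^d)$ at fixed $\eta$. Apply the screening result to replace $\j$ inside that shell by a field that is still of the admissible jellium form~\eqref{curlj} in $K_R$, is neutral there, and has vanishing normal flux on $\partial K_R$, at the price of an energy increase $o(R^d)$ (the cost of screening being controlled by the shell energy plus the shell volume). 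We are now looking at a \emph{finite} neutral jellium of $N=|K_R|(1+o(1))$ charges in the box $K_R$; to this configuration Lieb's rearrangement applies and produces, \emph{without increasing} the smeared electrostatic energy, a configuration $\hj$ whose points are all simple and pairwise separated by at least $r_0=r_0(d)>0$, a threshold depending only on the density (here $1$) and on $d$. Combining the two steps gives $\int_{K_R}|\hj_\eta|^2\le \int_{K_R}|\j_\eta|^2+o(R^d)$, while the well-separated bound of Lemma~\ref{lemequiv} applied to $\hj$ gives $\int_{K_R}|\hj_\eta|^2\ge N\bigl(\kappa_d\g(\eta)+\gamma_2\indic_{d=2}-C(d)\bigr)$; chaining these, dividing by $|K_R|$ and sending $R\to\infty$ delivers $\W_\eta(\j)\ge -C(d)$, uniformly in $\eta$, which is the assertion.

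The hard part is the middle step, namely the two black boxes: the screening construction — which must yield a genuinely admissible field inside the box, matching the jellium structure with vanishing boundary flux at $o(R^d)$ energy cost, and which is precisely the tool re-used in Section~\ref{sec7} for the matching upper bound — together with the adaptation of Lieb's argument to the present smeared setting, showing that a finite neutral jellium can always be brought, at no energy cost, to one with a dimension-dependent minimal inter-particle distance. Everything else is routine bookkeeping: the mean-value choice of $R$, the treatment of the boundary layer of points in Lemma~\ref{lemequiv}, and the passage from fixed $R$ to the $\limsup_{R\to\infty}$ defining $\W_\eta$, the point being that every error term is of size $O(R^{d-1}\g(\eta))$ or $o(R^d)$ and therefore disappears after dividing by $|K_R|$.
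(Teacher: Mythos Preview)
Your proposal identifies the correct ingredients --- Lieb's separation result, the screening construction, and Lemma~\ref{lemequiv} --- and the overall architecture matches the paper's. However, the order in which you invoke the two black boxes is reversed, and this matters.

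The screening construction (Proposition~\ref{proscreen}) \emph{requires} as input a configuration already satisfying the well-separation conditions~\eqref{cond1}--\eqref{cond2}: the proof isolates points near the good boundary $\partial K_t$ in small disjoint cubes, which is only possible if those points are at distance $\ge r_0$ from one another. So you cannot screen an arbitrary $\j\in\bai$ first and separate afterwards. Conversely, Lieb's rearrangement (Proposition~\ref{pro:sep points}) only delivers separation for points in $\Lambda\cap K_{R-1}$, i.e.\ away from the boundary --- it says nothing about points within distance~$1$ of $\partial K_R$. Hence after Lieb alone you still cannot apply Lemma~\ref{lemequiv} on all of $K_R$. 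The paper resolves this by running the two steps in the opposite order: first take a near-minimiser of $F_{\eta,R}$ and apply Proposition~\ref{pro:sep points} to obtain interior separation; then feed this into Proposition~\ref{proscreen}, which both needs interior separation and \emph{produces} full separation~\eqref{eq:mindistr plus} together with zero boundary flux. The resulting $\bar{\j}$ is the one to which Lemma~\ref{lemequiv} is applied. The paper also routes the argument through the variational quantity $F_{\eta,R}$ and Proposition~\ref{rem5.9} (the inequality~\eqref{eqminw1}) rather than fixing an arbitrary $\j$, which streamlines the passage to $\inf_{\bai}\W_\eta$; your direct approach would also work once the order is corrected, but requires tracking that Lieb's iteration terminates (it does, since each step decreases a nonnegative energy by a fixed $C>0$).
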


\medskip

\noindent\emph{Proof of Proposition \ref{promino}}. As announced, we first fix $\eta >0$ and let $n\to +\infty$. We start from \eqref{eq:two scale} and apply the framework of Theorem 7 in \cite{ss2d}. In the notation of that paper we let $G=\E$ and $X=\E\times \Ld$, and take $\ep = n^{-1/d}$.
For $\lambda\in \mr^d$ we let $\theta_\lambda$ denote both the translation $ x \mapsto x+ \lambda $ and the action
$$\theta_\lambda \j= \j\circ \theta_\lambda.$$
Accordingly the action $T^n_\lambda$ is defined for $\lambda\in \mr^d$ by
$$T_\lambda^n(x, \j)= \left(x+ \lambda n^{-1/d}, \j \circ \theta_\lambda\right).$$

We then define a functional over $X$ starting from the local electrostatic energy (the $y$-integral in \eqref{eq:two scale})
\begin{equation}\label{fn}
\mathbf{f}_n(x, \mathcal{Y}):=
\begin{cases} \D\int_{\mr^d}\chi(y)|\mathcal{Y}|^2(y)\, dy & \text{if $\mathcal{Y} =\nab \het(n^{1/d} x+\cdot) $,}\\ +\infty & \text{otherwise.}\end{cases}
\end{equation}
Then \eqref{eq:two scale} is naturally associated with an average of $\mathbf{f}_n$ over blow-up centers in $\E$:
\begin{equation}\label{eq:Fn}
\mathbf{F}_n(\mathcal{Y}):=\dashint_{\E} \mathbf{f}_n\left(x, \theta_{xn^{1/d}} (\mathcal{Y})\right) \, dx. 
\end{equation}
Indeed if $\mathbf{F}_n(\mathcal{Y}) \neq +\infty$,  we have
\begin{equation}\label{Ff}
\mathbf{F}_n(\mathcal{Y})\le    \frac{1}{|\E| n} \int_{\mr^d} |\nab \het|^2.  
\end{equation}

Theorem 7 in \cite{ss2d} was precisely designed to deduce results at the macroscopic  scale (on $\mathbf{F}_n$) from input at the lower scale (on $\mathbf{f}_n$). We now check that its assumptions are satisfied, i.e. that coercivity and $\Gamma$-liminf properties follow from
\begin{equation}\label{coercivite}
\forall R, \quad \limsup_{n\to \infty} \int_{B_R} \mathbf{f}_n (T_\lambda^n (\underline{x_n}, \mathcal{Y}_n)  )\, d\lambda<+\infty, \quad \underline{x_n} \in \E. \end{equation}
But using the definitions above this condition is equivalent to
\begin{equation}\label{coercivite2}
\forall R \mbox{ and } \forall n \geq n_0, \, \mathcal{Y}_n=\nab \het(n^{1/d}\underline{x_n}+ \cdot)\, \text{ and } \limsup_{n\to \infty} \int \chi * \indic_{B_R}|\mathcal{Y}_n|^2 <+\infty
\end{equation}
where $n_0$ is a large enough number. This implies that the assumption \eqref{eq:asum h} of Lemma \ref{lemprel} is satisfied. Up to extraction of a subsequence we may also assume, since $\underline{x_n} \in \E$ which is compact,  that $\underline{x_n} \to x_*$, and we may apply Lemma \ref{lemprel}. Thus we have $\mathcal{Y}_n \wto \mathcal{Y}_*$ weakly in $\Ld$, with all the other results of that lemma, which proves that the compactness assumption of \cite[Theorem 7]{ss2d} is satisfied. Moreover, this weak convergence implies that
\begin{equation}\label{eq:Gamma lim inf}
\liminf_{n\to \infty} \mathbf{f}_n (\underline{x_n},\mathcal{Y}_n)\ge \int_{\mr^d}\chi(y) |\mathcal{Y}_*|^2=\mathbf{f} (x_*, \mathcal{Y}_*) 
\end{equation}
by lower semi-continuity, where
\begin{equation}\label{eq:Gamma lim inf 2}
\mathbf{f} (x_*, \mathcal{Y}_*) := \begin{cases}
\D\int_{\mr^d}\chi(y)|\mathcal{Y}_*|^2(y)\, dy & \text{if } x_* \in \E\backslash \p \E \ \text{ and }  \mathcal{Y}_* = \Phi_\eta (\j) \ \text{for some } \j \in \overline{\ba}_{\mu_0 (x_*)},
\\ 0 & \text{if } \ x_*\in \p \E\\
+\infty & \text{otherwise.}
\end{cases}
\end{equation}

This is the $\Gamma$-$\liminf$ assumption of \cite[Theorem 7]{ss2d} and we may thus apply this theorem to pass to the limit $n\to \infty$. Let 
\[  
P_{n,\eta}:= \dashint_{\E} \delta_{(x,\nabla \het (n ^{1/d} x + \cdot) )} dx
\]
or in other words the push-forward of the normalized Lebesgue measure on $\E$ by 
$$x\mapsto \left(x, \nab \het( n^{1/d} x + \cdot)\right).$$
Theorem 7 of \cite{ss2d} yields that 
\begin{itemize}
 \item $P_{n,\eta}$ converges to a Borel probability measure $P_\eta$ on $X$
 \item $P_\eta$ is $T_{\lambda(x)}$-invariant and  its  marginal with respect to $x$ is $\frac{1}{|\E|}dx_{|\E}$
 \item $P_\eta-$a.e. $(x, \j)$ is of the form $\lim_{n\to \infty} (x_n,   \theta_{n^{1/d }x_n }   \mathcal{Y}_n)$,
\end{itemize}
and moreover
\begin{equation}\label{bi1}
\liminf_{n\to \infty} \mathbf{F}_n(\mathcal{Y}_n) \ge \int \mathbf{f}(x,\mathcal{Y}) dP_\eta (x,\mathcal{Y}) = \int \left(\lim_{R\to \infty} \dashint_{K_R}\mathbf{f}(x, \theta_\lambda \mathcal{Y})\, d\lambda \,\right)  dP_\eta (x,\mathcal{Y})
\end{equation} 
where $\mathbf{f}$ is defined in \eqref{eq:Gamma lim inf 2}. It is also a part of the result (consequence of the ergodic theorem) that the limit $R\to \infty$ in the above exists. From \eqref{eq:Gamma lim inf 2}, and since $\p \E$ is of Lebesgue measure $0$ (by assumption \eqref{ass1})  we also see that for $P_{\eta}$-a.e. $(x, \mathcal{Y})$ it must be that $\mathcal{Y}= \Phi_\eta (\j), \j \in \overline{\ba}_{\mu_0(x)}$, and $\mathbf{f}(x, \mathcal Y) =\int_{\mr^d}\chi(y)|\mathcal{Y}|^2(y)\, dy $.
Combining \eqref{Ff} with \eqref{bi1} we get
\begin{equation}\label{pmino}
\liminf_{n\to \infty} \frac{1}{n}\int_{\mr^d} |\nab \het|^2 \ge |\E| \int \left(\lim_{R\to \infty} \dashint_{K_R} \chi* \indic_{K_R} |\mathcal Y|^2\right) \, dP_\eta(x,\mathcal Y).
\end{equation}
Since $\chi * \indic_{K_R} \ge \indic_{K_{R-1}}$ we can replace this by
\begin{equation}\label{pmino2}
\liminf_{n\to \infty} \frac{1}{n}\int_{\mr^d} |\nab \het|^2 \ge |\E| \int \left(\lim_{R\to \infty} \dashint_{K_R} |\mathcal Y |^2\right) \, dP_\eta(x,\mathcal Y).
\end{equation}
Now the above is true for all $\eta>0$. From the convergence $P_{n,\eta}\to P_\eta$,  and since  the bijection $\Phi_\eta$ commutes with the convergence in \eqref{cvh2} (see the proof of Lemma \ref{lemprel}), we may check  that $P_{\nu_n}$ (as defined in \eqref{eq:Pnun}) converges weakly to $P$ (independent of $\eta$), the push-forward of $P_\eta$ by $\Phi_\eta^{-1}$, for any $\eta$. By the above results on $P_\eta$, and Lemma \ref{lemprel}, we deduce that $P$ {is admissible}. 

Moreover, by definition of the push-forward, \eqref{pmino2} means that
\begin{multline*}\liminf_{n\to \infty} \frac{1}{n}\int_{\mr^d} |\nab \het|^2 - (\kappa_d \g(\eta)+ \gamma_2 \indic_{d=2}) \\ \ge
|\E| \int \left(\lim_{R\to \infty}  \dashint_{K_R} |\Phi_\eta(\j)|^2 -( \kappa_d w(\eta)+ \gamma_2 \indic_{d=2}) \mu_0(x) \right)    \, dP(x,\j),\end{multline*}
where we have used  the fact that $\int \mu_0=1$ and the first marginal of $P $ is the normalized Lebesgue measure on $\E$.
Combining with   \eqref{pred2} and the definition of $\W_\eta$, we obtain 
\begin{equation}\label{pre1}
\liminf_{n\to \infty}  n^{2/d-2} \left(\w(x_1, \dots, x_n) -n^2\I[\mu_0] + \left(\frac{n}{2}\log n \right) \indic_{d=2}\right) \ge \frac{|\E|}{c_d} \int \W_\eta(\j) \, dP(x, \j)-C\eta^2.
\end{equation}
To conclude, there remains to let $\eta\to 0$ in the above. Indeed, for $P$-a.e. $(x,\j)$, $\j \in \overline{\ba}_{\mu_0(x)}$. Since $\mu_0$ is bounded and independent of $\eta$, the scaling property \eqref{eq:scale renorm}, \eqref{ass2} and the uniform lower bound on $\bai$ of Proposition \ref{Wbb} thus imply that 
$ \W_\eta(\j)$ is bounded below independently of $\eta$  for $P-\mathrm{a.e.} \, (x,\j). $
We can then simply use Fatou's lemma, to deduce that
\[
\liminf_{n\to \infty} n^{2/d-2}\left(\w(x_1, \dots, x_n) -n^2\I[\mu_0]+\left(\frac{n}{2}\log n\right)\indic_{d=2} \right) \ge\frac{|\E|}{c_d}\int  \liminf_{\eta \to 0} \W_\eta(\j)\, dP(x,\j), 
\]
which is \eqref{222}, by definition  of $\W$ {and $\widetilde{\W}$}. \hfill \qed

\begin{remark} \label{remzeta} Note that the positive term $2n\sum_{i=1}^n \zeta(x_i)$ has been discarded in \eqref{pred2} and never used in the lower bound, so it can be reintroduced as an additive term in the right-hand side  of \eqref{222}, which will be useful when studying the Gibbs measure in  Section \ref{sec8}. \end{remark}

\section{Screening and lower bound for the smeared jellium energy}\label{sec:univ low bound}

In this section we prove Proposition \ref{Wbb}, whose statement we recall: 
\begin{pro}[\textbf{Lower bound on the smeared jellium functional}]\label{pro:Wbb}\mbox{}\\
$\W_\eta$ is bounded below on $\bai$ by a constant depending only on the dimension.
\end{pro}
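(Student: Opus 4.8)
We may assume $\W_\eta(\j)<+\infty$, the bound being trivial otherwise, and it suffices to treat $\eta$ small: once the constant $r_0=r_0(d)\in(0,1]$ produced below is fixed, the range $\eta\in[r_0,1]$ is already covered by the elementary estimate $\W_\eta(\j)\ge-(\kappa_d\g(\eta)+\gamma_2\indic_{d=2})$, immediate from $|\j_\eta|^2\ge0$ in Definition~\ref{def:renorm ener}, which is dimensional since $\g(\eta)$ is bounded on $[r_0,1]$. So fix $0<\eta<r_0$. From $\W_\eta(\j)<+\infty$ we get $\int_{K_R}|\j_\eta|^2\le C_\eta R^d$ for $R$ large, and then Lemma~\ref{lembornnu} and \eqref{feg} give, for the associated point configuration $\Lambda$, $\#(\Lambda\cap K_R)=|K_R|+O\bigl(C_\eta R^{d-1/2}\bigr)$.

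The difficulty is that the naive estimate degenerates as $\eta\to0$, whereas we need a bound \emph{uniform} in $\eta$. The plan is to reduce to configurations with well-separated points, because for such configurations Lemma~\ref{lemequiv} identifies $\int_{K_R}|\j_\eta|^2$ with the $\eta$-free renormalized energy $W(\j,\indic_{K_R})$ up to an error $\#(\Lambda\cap K_R)\,o_\eta(1)$, and $W(\j,\indic_{K_R})\ge-C\,\#(\Lambda\cap K_R)$ by \eqref{eq:lowbound sep}. To perform this reduction (a screening step), fix $R$ large and, by a mean-value argument as in \eqref{bbord}, pick $R'\in[R-1,R]$ with $\int_{\p K_{R'}}|\j_\eta|^2\le\int_{K_R}|\j_\eta|^2\le C_\eta R^d$, so the flux of $\j_\eta$ across $\p K_{R'}$ is $O\bigl(C_\eta R^{d-1/2}\bigr)$. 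We then replace $\j$ inside $K_{R'}$ by a minimizer $\tilde{\j}$ (with smeared field $\tilde{\j}_\eta=\Phi_\eta(\tilde{\j})$) of the smeared Dirichlet energy $\int_{K_{R'}}|\tilde{\j}_\eta|^2$ over admissible fields (Definitions~\ref{def:adm field}--\ref{def:smear field}) with background density $1$ and prescribed normal trace of $\tilde{\j}_\eta$ on $\p K_{R'}$ equal to that of $\j_\eta$. Such a minimizer exists by the direct method --- the points stay in $K_{R'}$ and coalescences ($N_p>1$) are allowed --- and, since $\j_\eta$ is itself a competitor, $\int_{K_{R'}}|\tilde{\j}_\eta|^2\le\int_{K_{R'}}|\j_\eta|^2\le\int_{K_R}|\j_\eta|^2$.

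The crucial point, which I expect to be the main obstacle, is that this minimizer has \emph{well-separated} points: there is $r_0=r_0(d)>0$, governed only by the $L^\infty$ bound of the background (here $1$), such that the points of the configuration $\Lat$ of $\tilde{\j}$ are at mutual distance $\ge r_0$, and hence all simple. This is the extension to dimension $d\ge3$ of the unpublished argument of Lieb \cite{Lie} --- available for $d=2$ in \cite[Theorem~4]{rns} ---: a minimizer cannot carry two charges within $r_0$ of each other, since slightly pulling one of them away would strictly decrease the energy. Carrying this out in general dimension, together with the bookkeeping for the smeared charges of $\tilde{\j}$ that would overhang $\p K_{R'}$ (dealt with by only relocating points in a slightly smaller box and treating the $O(R^{d-1})$ points of the thin boundary shell separately), is the technical heart of the argument.

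Granting the separation, we apply Lemma~\ref{lemequiv} with $U=K_{R'}$ (up to the boundary-shell correction just mentioned), $a\equiv1$, $\eta_0=r_0$, and $\eta<r_0$: from \eqref{lequi}, \eqref{eq:lowbound sep} and the boundedness of $o_\eta(1)$ on $(0,1]$,
\begin{equation*}
\int_{K_{R'}}|\tilde{\j}_\eta|^2\ \ge\ \#(\Lat\cap K_{R'})\bigl(\kappa_d\g(\eta)+\gamma_2\indic_{d=2}-C_d\bigr),
\end{equation*}
where $C_d$ depends only on $d$ (and the fixed choice of $\ro$). Since $\tilde{\j}_\eta$ and $\j_\eta$ have the same flux through $\p K_{R'}$, the divergence theorem gives $\#(\Lat\cap K_{R'})=|K_{R'}|+O\bigl(C_\eta R^{d-1/2}\bigr)$, whence, using $\int_{K_R}|\j_\eta|^2\ge\int_{K_{R'}}|\tilde{\j}_\eta|^2$ and dividing by $|K_R|$,
\begin{equation*}
\dashint_{K_R}|\j_\eta|^2\ \ge\ \Bigl(\frac{|K_{R'}|}{|K_R|}+O\bigl(C_\eta R^{-1/2}\bigr)\Bigr)\bigl(\kappa_d\g(\eta)+\gamma_2\indic_{d=2}-C_d\bigr).
\end{equation*}
Letting $R\to\infty$, so $|K_{R'}|/|K_R|\to1$ and the error term vanishes, and recalling Definition~\ref{def:renorm ener}, we obtain $\W_\eta(\j)\ge-C_d$ for all $0<\eta<r_0$. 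Together with the elementary bound for $\eta\in[r_0,1]$, this shows $\W_\eta\ge-C_d$ on $\bai$ for every $\eta\in(0,1]$, as claimed.
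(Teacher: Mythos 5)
Your approach is a genuine variant of the paper's, built on the same two pillars: reduce to well-separated points via a Lieb-type argument, then invoke the per-configuration lower bound of Lemma~\ref{lemequiv}. The paper's route (Section~\ref{sec:univ low bound}) passes through the unconstrained infimum $\EetR$ over $K_R$, establishes separation for near-minimizers (Proposition~\ref{pro:sep points}), and then runs a substantial \emph{screening} construction (Proposition~\ref{proscreen}) to modify the configuration near $\partial K_R$ so that the normal flux vanishes --- forcing $\#\Lambda = |K_R|$ exactly, so that Lemma~\ref{lemequiv} gives a clean lower bound. You try to bypass screening: by selecting a good slice $K_{R'}$ with flux $O(C_\eta^{1/2}R^{d-1/2})$ and minimizing subject to a \emph{prescribed Neumann trace}, the point count is controlled for free, and the error in the number of points disappears after dividing by $|K_R|$ and sending $R\to\infty$. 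That is an appealing simplification, and the skeleton of your argument is sound.

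The gap is in the separation step, and it is not merely the bookkeeping you flag. The paper's Proposition~\ref{pro:sep points} is proved by moving a charge from $x$ to $y$ via a correction $\hb$ solving a \emph{Dirichlet} problem, $\hb = 0$ on $\partial K_R$ (Lemma~\ref{lem:vari cube}, built on the Dirichlet Green function $G_R$ of \eqref{eq:def Green Dir} and the estimates of Lemma~\ref{lem:Green}). Such a variation is an admissible competitor for the unconstrained infimum $\EetR$, but it generically changes $\partial_\nu h$ on $\partial K_R$ and is therefore \emph{not} admissible for your Neumann-constrained minimization. You consequently cannot conclude that your minimizer $\tilde\j$ has well-separated points by citing (or adapting to $d\ge3$) that argument as written. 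Making your route rigorous would require reproving the separation lemma with Neumann-preserving variations: replace $G_R$ by the Neumann Green function of $K_{R'}$ (which carries an extra $-|K_{R'}|^{-1}$ in its defining equation) and rework Lemmas~\ref{lem:vari cube} and \ref{lem:Green} for it --- plausible, but a genuinely different lemma, not an import. The other natural fix, switching to a Dirichlet constraint $\tilde h_\eta = h_\eta$ on $\partial K_{R'}$ so that the paper's separation lemma applies verbatim, sacrifices the direct flux control you used to estimate $\#\Lat\cap K_{R'}$ --- and recovering it is exactly what the screening step (Proposition~\ref{proscreen}) is for. Note also that Proposition~\ref{pro:sep points} only separates points of $\Lambda\cap K_{R-1}$, so the boundary shell needs the separate treatment you allude to in either case.
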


In view of the definition \eqref{We}, the method will consist in bounding from below the minimal energy over  a sequence of  cubes of size $R\to \infty$. For that we prove 
\begin{itemize}
\item that by minimality we can reduce to configurations with simple and well-separated points (at least away from the boundary of the cube), i.e. at distances bounded below {\it independently of } $\eta$ and $R$. 
This is done in Section \ref{sec:sep points} via an argument adapted from a Theorem of E. Lieb \cite{Lie} for the case $\eta = 0$ (whose statement and proof can be found in \cite{rns}). 
\item that  we can ``screen" efficiently such a configuration, i.e. modify it  close to the boundary of the cubes to make  the normal component of the electric field vanish on the boundary of the cube and make the points well-separated all the way to the boundary of the cube, at a negligible energy cost. (The vanishing normal component will in particular impose the total number of points in the cube).
This is carried out in Section \ref{sec:screening}.
\item Once this is done, we will be able to immediately bound from below the minimal energy on these large cubes via Lemma \ref{lemequiv}.\end{itemize}
\medskip

In all this section $K_R$ is a hyperrectangle whose side-lengths are in $[2R,3R]$. For the actual proof of Proposition \ref{Wbb} we need consider only hypercubes but we will need more general shapes later in the paper.

\subsection{Separation of points}\label{sec:sep points}

Let us consider the following variational problem:
\begin{equation}\label{eq:ener cube R}
\EetR := \inf \Big\{ \int_{K_R} |\nabla h | ^2 ,\: -\Delta h =  c_d \Big( \sum_{p\in \La} N_p \delta_p ^{(\eta)} -1 \Big) \mbox{ in } K_R,\  \text{for some discrete set} \ \Lambda \subset K_R \Big\}.
\end{equation}
Obviously $\EetR \geq 0$ and the infimum exists. We do not address the question of whether there exists a minimizer. Note that the points $p$ may in principle depend on $\eta$.

\begin{pro}[\textbf{Points ``minimizing'' $\EetR$ are well-separated}]\label{pro:sep points}\mbox{}\\
Let $\La$ be a discrete subset of $K_R$ and $h$ satisfy
\begin{equation}\label{eq:Laplace h}
-\Delta h =  c_d \Big( \sum_{p\in \La} N_p \delta_p ^{(\eta)} -1\Big)  := c_d (  \mu_h -1) \quad \text{in} \ K_R.
\end{equation}
Denote $\La_R = \La \cap K_{R-1}$. There exists three positive constants $\eta_0,r_0,C$ such that if $\eta < \eta_0$, $R$ is large enough and one of the following conditions does not hold:
\begin{eqnarray}\label{cond1}& \forall p  \in \La_R, \quad  N_{p} = 1\\
\label{cond2} & \forall p \in \La_R, \ \dist(p,\La_R \setminus \{ p \})\ge r_0,
\end{eqnarray}
then there exists $\Lat$ a discrete subset of $K_R$ and an associated potential $\tilde{h}$ satisfying
\begin{equation}\label{dhtilde}
-\Delta \hti =  c_d \Big( \sum_{p\in \Lat} N_p \delta_p ^{(\eta)} -1\Big)\quad \text{in}\  K_R
\end{equation}
such that
\[
\int_{K_R} |\nabla \hti| ^2 \leq  \int_{K_R} |\nabla h| ^2 - C.
\]
\end{pro}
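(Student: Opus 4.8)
The plan is to adapt Lieb's argument for the unsmeared jellium (\cite{Lie}; written out in dimension $2$ in \cite[Theorem 4]{rns}) so that every constant is uniform in $\eta$ and $R$. The mechanism is a \emph{relocation of charge}. Assuming \eqref{cond1} or \eqref{cond2} fails, fix $p^*\in\La_R$ that is either multiple ($N_{p^*}\ge 2$) or has a neighbour $q\in\La_R$ with $|p^*-q|<r_0$, and move one unit of charge from $p^*$ to a well-chosen ``free'' location $y_0\in K_{R-1}$ with $\dist(y_0,\La)\ge r_0$: replace $\mu_h=\sum_pN_p\delta_p^{(\eta)}$ by $\tilde\mu:=\mu_h-\delta_{p^*}^{(\eta)}+\delta_{y_0}^{(\eta)}$ (lowering the multiplicity at $p^*$ by one, or deleting $p^*$, and creating a simple point at $y_0\notin\La$, so all multiplicities stay in $\mn^*$), set $\Lat:=\supp\tilde\mu$ and let $\hti$ be the minimal-energy solution of \eqref{dhtilde}. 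One then has to show $\int_{K_R}|\nab\hti|^2\le\int_{K_R}|\nab h|^2-C$.

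\emph{The energy balance.} Write $\sigma:=\delta_{y_0}^{(\eta)}-\delta_{p^*}^{(\eta)}$ and $\bar h:=h*\delta_0^{(\eta)}$ for the smeared potential. Since $\sigma$ has total charge zero, an integration by parts on $K_R$ (where the uniform background and, because $p^*,y_0\in K_{R-1}$ lie at distance $\ge1$ from $\p K_R$, the boundary terms cancel or stay bounded) gives, up to an $o_\eta(1)$ smearing error handled as in Lemma~\ref{lem:Lieb},
\[ \int_{K_R}|\nab\hti|^2-\int_{K_R}|\nab h|^2\;\le\;-2c_d\bigl(\bar h(p^*)-\bar h(y_0)\bigr)+2\bigl(\kappa_d\g(\eta)+\gamma_2\indic_{d=2}\bigr)+C, \]
using Newton's theorem \cite[Theorem 9.7]{LiLo} and \eqref{kapd}--\eqref{chvard} and the fact that $|y_0-p^*|\ge r_0\ge 2\eta$ once $\eta<\eta_0$ (the constant $C$ depending only on the bounded distance $|y_0-p^*|$ and on the dimension). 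The divergent self-energy thus enters with the same sign as what we can \emph{gain}: decomposing $\bar h(p^*)$ into the singular contributions of the charges sitting at or next to $p^*$ --- at least $N_{p^*}\bigl(\kappa_d\g(\eta)+\gamma_2\indic_{d=2}\bigr)$ from the $N_{p^*}$ copies at $p^*$, plus at least $c_d\g(r_0)$ from the close neighbour when \eqref{cond2} fails at $p^*$ --- plus the potential of the remaining, neutral-type part (whose pair interactions are nonnegative), and doing the same at $y_0$ (where $\dist(y_0,\La)\ge r_0$ kills any such singular part), one gets $\bar h(p^*)-\bar h(y_0)\ge N_{p^*}\bigl(\kappa_d\g(\eta)+\gamma_2\indic_{d=2}\bigr)/c_d+\g(r_0)-(\text{remainder})$ when \eqref{cond2} fails at $p^*$, and the same bound without the $\g(r_0)$ term otherwise. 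Plugging this in, the balance is $\le-2(N_{p^*}-1)\bigl(\kappa_d\g(\eta)+\gamma_2\indic_{d=2}\bigr)-2c_d\g(r_0)+2(\text{remainder})+C$ (dropping the $\g(r_0)$ term if it is not present). If the remainder is negligible against $\kappa_d\g(\eta)+\gamma_2\indic_{d=2}$ and against $\g(r_0)$, then choosing $r_0$ small so that $c_d\g(r_0)$ beats the remainder plus $C$, and then $\eta_0$ small so that $\kappa_d\g(\eta)+\gamma_2\indic_{d=2}\ge c_d\g(r_0)$ for $\eta<\eta_0$, makes this $\le-C$ whether \eqref{cond1} or \eqref{cond2} is violated.

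\emph{Producing $y_0$; the density dichotomy.} It remains to exhibit such a $y_0$, at bounded distance from $p^*$, and to control the remainder and the error term, uniformly in $\eta$ and $R$. I would argue by dichotomy on the size of the energy (this is why the statement needs no near-minimality hypothesis). If $\int_{K_R}|\nab h|^2\ge 3|K_R|\bigl(\kappa_d\g(\eta)+\gamma_2\indic_{d=2}\bigr)$ then one is done by a wide margin: comparing with the configuration built from a unit-density lattice in $K_R$ --- whose minimal energy is at most $2|K_R|\bigl(\kappa_d\g(\eta)+\gamma_2\indic_{d=2}\bigr)$ for $\eta<\eta_0$, each cell contributing $\kappa_d\g(\eta)+\gamma_2\indic_{d=2}+O(1)$ and the $O(R^{d-1})$ boundary cells being negligible --- the gain exceeds $|K_R|\bigl(\kappa_d\g(\eta)+\gamma_2\indic_{d=2}\bigr)\ge C$ for $R$ large, and we take $\Lat$ to be that lattice. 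Otherwise the energy budget together with the charge-fluctuation estimate of Lemma~\ref{lem:fluctu charge} and the lower bound \eqref{eq:lowbound sep} of Lemma~\ref{lemequiv} forces the configuration to be \emph{controlled} --- only $O(1)$ points of $\La$ in any ball of fixed radius --- so a ball $B(y_0,r_0)\subset B(p^*,L)$ free of $\La$ exists for a dimensional $L$ (otherwise this ball alone would carry more than the budget), and one may even choose $y_0$ to be a near-minimizer of $\bar h$ over the ``free set'' $\{z\in K_{R-1}:\dist(z,\La)\ge r_0\}$, so that $\bar h(y_0)$ is at most the average of $\bar h$ over that set, which by the normalization of $h$ and charge neutrality is bounded uniformly in $R$ by a constant depending only on $r_0$ and the dimension.

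\emph{The main obstacle.} The genuinely delicate point --- the reason this is a nontrivial ``readaptation'' of Lieb's theorem and not a mere quotation --- is the uniformity in the two limits $R\to\infty$ and $\eta\to 0$ at once. The $\eta$-uniformity is what powers the argument: the self-energy $\kappa_d\g(\eta)+\gamma_2\indic_{d=2}\to\infty$ is the entire source of the fixed gain, and Newton's theorem must be used carefully to isolate it. The $R$-uniformity is the real difficulty: the field $\nab h$ has \emph{extensive} energy $\|\nab h\|_{L^2(K_R)}^2\sim R^d$, so naïve Cauchy--Schwarz bounds on the remainder and on the error grow like a power of $R$. One must instead exploit that the relevant potential is that of the \emph{neutral} measure $\mu_h-1$ (not of $\mu_h$), whose long-range part is governed by a localized quantity, together with the a priori fluctuation control of Lemma~\ref{lem:fluctu charge}, interior elliptic estimates in a fixed neighbourhood of $y_0$ where $h$ is smooth, and the averaged choice of $y_0$ above, to see that the remainder is $o\bigl(\kappa_d\g(\eta)+\gamma_2\indic_{d=2}\bigr)$. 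The two-dimensional case, where these long-range cancellations are subtlest because of the logarithm, is exactly the one carried out in \cite{rns}; the smearing introduced here does not affect those estimates.
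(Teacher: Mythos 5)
Your high-level strategy --- relocate a misplaced unit of charge and harvest the divergent self-energy $\kappa_d\g(\eta)+\gamma_2\indic_{d=2}$ --- is the right one and matches the paper. The energy-balance formula you derive is also essentially correct. But there is a genuine gap in how you handle what you call the ``remainder,'' i.e.\ the contribution to $\bar h(p^*)-\bar h(y_0)$ from all the charges other than the few singular ones near $p^*$, and the mechanism you propose does not close. You offer two devices, a density dichotomy and an averaged choice of $y_0$ over the whole free set $\{z\in K_{R-1}:\dist(z,\La)\ge r_0\}$, but these are in tension: the energy balance you wrote is local (the constant $C$ explicitly depends on the bounded distance $|y_0-p^*|$), whereas a near-minimizer of $\bar h$ over the entire free set may sit anywhere in $K_{R-1}$. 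Moreover, bounding $\bar h(y_0)$ by the average of $\bar h$ over the free set appeals to a ``normalization of $h$'' that the hypotheses do not give you --- $h$ is defined by \eqref{eq:Laplace h} alone, with no boundary condition or gauge fixed, and the average of $\bar h$ over a macroscopic portion of $K_R$ is simply not controlled a priori.

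The paper's proof avoids estimating the remainder at all, which is the key insight you are missing. The variation $\bar h = \hti - h$ is imposed to satisfy a \emph{Dirichlet} boundary condition on $\p K_R$ (Lemma~\ref{lem:vari cube}), which makes the whole computation express itself through the Green--Dirichlet function $G_R$ of the hyperrectangle; Lemma~\ref{lem:Green} then shows $G_R\approx w$ up to an $O(1)$ error in the interior, and that is what delivers $R$-uniformity with no fluctuation estimate and no dichotomy. The remainder is disposed of by Lieb's maximum-principle device: after reducing to the case $N_p\equiv 1$ and two points $0,x$ with $|x|<r_1$, one splits the potential $h_x$ of all charges except $x$ into a near piece $\hint$ (the $G_R$-potential of $\delta_0^{(\eta)}-\1_{B(0,r_3)}$) and the rest $\hext$, notes that $-\Delta \hext\ge 0$ in $B(0,r_3)$ (all remaining point masses are nonnegative, and the background there has been added back), hence the $\eta$-average $\hext_a=\hext*\delta_0^{(\eta)}$ is superharmonic in $B(0,r_2)$, and finally \emph{chooses} $y\in\p B(0,r_2)$ as the minimum point of $\hext_a$. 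Then $\hext_a(y)-\hext_a(x)\le 0$ for free; only the explicit, radial, $R$-independent model potential $V_\eta$ of Lemma~\ref{lem:model pot} remains, and its drop $V_\eta(r_2-\eta)-V_\eta(r+\eta)\le -M$ for $r<r_1$ gives the fixed energy gain. Until you incorporate a local maximum-principle (or equivalent) argument to kill the remainder, your proof does not go through; the density dichotomy and the global averaging are detours that neither close the argument nor reduce to the paper's.
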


What this proposition says is that if two points in the configuration are too close to one another (independently of $R$ and $\eta$), one can always find another configuration with strictly less energy. In other words, in a minimizing sequence for $\EetR$ one can always assume that \eqref{cond1}--\eqref{cond2} hold for some $r_0>0$ (depending only on $d$).
Note that the failure of \eqref{cond1} may be seen as an extreme case of the failure of \eqref{cond2}. Without loss of generality we may  thus prove the result when \eqref{cond2} fails and  $N_p = 1$ for all $p$.

\medskip 

We start with a lemma that gives explicitly the variation of $\int_{K_R} |\nabla h|^2$ when one point in the configuration is moved and the variation of  the potential is chosen to satisfy Dirichlet boundary conditions. We shall denote $G_R$ the constant  $c_d $ times the Green-Dirichlet function of the hyperrectangle $K_R$:  
\begin{equation}\label{eq:def Green Dir}
\begin{cases}
-\Delta_x G_R (x,y) = c_d \delta_y (x) \mbox{ if } x \in K_R  \mbox{ and } y\in K_R\\
G_R (x,y) = 0 \mbox{ if } x\in \partial K_R \mbox{ or } y\in \partial K_R. 
\end{cases}
\end{equation}
Note that $G_R$ is a symmetric function of $x$ and $y$: $G_R(x,y) = G_R(y,x)$. Associated to the Green-Dirichlet function is the quadratic form 
\begin{equation}\label{eq:forme Dirich}
D_R (\mu,\nu) = \iint_{K_R \times K_R} \mu (dx) G_R (x,y) \mu(dy)
\end{equation}
defined for measures $\mu, \nu$, that plays a role analogue to \eqref{eq:def Coul ener}.

We will consider specific variations of $\EetR$ that can be expressed in terms of $G_R$:

\begin{lem}[\textbf{Variations of $\EetR$}]\label{lem:vari cube}\mbox{}\\
Let $h$, $\Lambda$  and $\mu_h$ be as in Proposition \ref{pro:sep points}. Let $x\in\La$ and $y\in K_R$ and define 
$$\Lat = (\La \setminus \{ x\} ) \cup \{ y\}$$
with an associated $\hti = h + \hb$ where the variation $\hb$ is defined as the unique solution to
\begin{equation}\label{eq:defi vari}
\begin{cases} -\Delta \hb =  c_d( \deleta_y - \deleta_x)   \mbox{ in } K_R\\
 \hb = 0 \mbox{ on } \partial K_R.
\end{cases}
\end{equation}
We have
\begin{multline}\label{eq:vari cube}
\int_{K_R} |\nabla \hti| ^2 =  \int_{K_R} |\nabla h| ^2  + c_d D_R \left( \deleta _y, \deleta _y \right) - c_d  D_R \left( \deleta _x, \deleta _x \right) \\ + 2  c_d D_R \left( \mu_h - \deleta_x -1, \deleta _y - \deleta_x \right).
\end{multline}
\end{lem}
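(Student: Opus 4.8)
The statement is essentially an algebraic identity expressing $\int_{K_R}|\nabla\tilde h|^2$ in terms of $\int_{K_R}|\nabla h|^2$ plus correction terms coming from the variation $\bar h$, using the Dirichlet boundary condition to kill boundary integrals. The plan is to expand the square and integrate by parts, systematically using the Dirichlet boundary condition $\bar h=0$ on $\partial K_R$ so that all boundary terms vanish, and then reinterpret each resulting bulk integral in terms of the Green-Dirichlet quadratic form $D_R$ defined in \eqref{eq:forme Dirich}.

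First I would write $\tilde h = h+\bar h$, so that
\[
\int_{K_R}|\nabla\tilde h|^2 = \int_{K_R}|\nabla h|^2 + 2\int_{K_R}\nabla h\cdot\nabla\bar h + \int_{K_R}|\nabla\bar h|^2.
\]
For the last term, integrate by parts: since $\bar h=0$ on $\partial K_R$ and $-\Delta\bar h = c_d(\delta_y^{(\eta)}-\delta_x^{(\eta)})$, we get $\int_{K_R}|\nabla\bar h|^2 = c_d\int_{K_R}\bar h(\delta_y^{(\eta)}-\delta_x^{(\eta)})$. Now $\bar h$ itself is, by definition of the Green-Dirichlet function, given by $\bar h(z)=\int_{K_R}G_R(z,w)(\delta_y^{(\eta)}-\delta_x^{(\eta)})(dw)$, so this term equals $c_d D_R(\delta_y^{(\eta)}-\delta_x^{(\eta)},\delta_y^{(\eta)}-\delta_x^{(\eta)})$. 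Expanding the quadratic form and using its symmetry, this is $c_d D_R(\delta_y^{(\eta)},\delta_y^{(\eta)}) + c_d D_R(\delta_x^{(\eta)},\delta_x^{(\eta)}) - 2c_d D_R(\delta_x^{(\eta)},\delta_y^{(\eta)})$. For the cross term, integrate by parts again using $\bar h=0$ on the boundary to move the derivative onto $h$: $\int_{K_R}\nabla h\cdot\nabla\bar h = -\int_{K_R}\bar h\,\Delta h = c_d\int_{K_R}\bar h(\mu_h-1) = c_d D_R(\mu_h-1,\delta_y^{(\eta)}-\delta_x^{(\eta)})$, where in the last step I again substitute the Green representation of $\bar h$. (One should note here that $h$ need not vanish on $\partial K_R$, which is precisely why the integration by parts is done by putting the derivative on $\bar h$ — whose boundary values are known — rather than on $h$; the only boundary term that appears, $\int_{\partial K_R}\bar h\,\partial_\nu h$, vanishes because $\bar h=0$ there.)

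It then remains to recombine these pieces into the claimed form \eqref{eq:vari cube}. Collecting, $\int_{K_R}|\nabla\tilde h|^2 - \int_{K_R}|\nabla h|^2$ equals $2c_d D_R(\mu_h-1,\delta_y^{(\eta)}-\delta_x^{(\eta)}) + c_d D_R(\delta_y^{(\eta)},\delta_y^{(\eta)}) + c_d D_R(\delta_x^{(\eta)},\delta_x^{(\eta)}) - 2c_d D_R(\delta_x^{(\eta)},\delta_y^{(\eta)})$. The last three terms can be rewritten, again using symmetry and bilinearity of $D_R$, as $c_d D_R(\delta_y^{(\eta)},\delta_y^{(\eta)}) - c_d D_R(\delta_x^{(\eta)},\delta_x^{(\eta)}) + 2c_d D_R(-\delta_x^{(\eta)},\delta_y^{(\eta)}-\delta_x^{(\eta)})$ (checking: $2c_d D_R(-\delta_x^{(\eta)},\delta_y^{(\eta)}-\delta_x^{(\eta)}) = -2c_d D_R(\delta_x^{(\eta)},\delta_y^{(\eta)}) + 2c_d D_R(\delta_x^{(\eta)},\delta_x^{(\eta)})$, and adding the first two terms gives exactly the desired combination). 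Folding the $-\delta_x^{(\eta)}$ piece into the cross term with $\mu_h-1$ yields $2c_d D_R(\mu_h-\delta_x^{(\eta)}-1,\delta_y^{(\eta)}-\delta_x^{(\eta)})$, producing precisely \eqref{eq:vari cube}.

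There is no real obstacle here — the only points requiring a little care are justifying the integrations by parts (the smeared charges $\delta_x^{(\eta)},\delta_y^{(\eta)}$ are bounded functions, $\bar h\in H^1_0(K_R)$, and $h\in H^1_{loc}$, so all terms are well-defined and the boundary term involving $\bar h$ genuinely vanishes) and the purely mechanical bookkeeping of the bilinear form identities. The substitution of the Green representation $\bar h(z)=\int_{K_R}G_R(z,w)(\delta_y^{(\eta)}-\delta_x^{(\eta)})(dw)$ and the symmetry $G_R(x,y)=G_R(y,x)$ are what convert everything into the $D_R$ notation; after that the identity is a one-line rearrangement.
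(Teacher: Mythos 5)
Your proof is correct and follows essentially the same route as the paper: expand the square, integrate by parts the cross-term and the $|\nabla\bar h|^2$ term using the Dirichlet boundary condition for $\bar h$, convert to the $D_R$ quadratic form via the Green representation, and rearrange bilinearly. The only cosmetic difference is that the paper writes $\bar h=\bar h_y-\bar h_x$ before integrating by parts whereas you treat $\bar h$ as a single object; the resulting bookkeeping is the same.
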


\begin{proof}
Let us first note that $\hti$ is an admissible trial state for $\EetR$ since we immediately check that it satisfies \eqref{dhtilde}.
Expanding the square, integrating by parts the cross-term and using the Dirichlet boundary condition for $\hb$ in addition to Equation \eqref{eq:Laplace h} we obtain
\begin{align*}
\int_{K_R} |\nabla \hti| ^2 &= \int_{K_R} |\nabla h| ^2 + \int_{K_R} |\nabla \hb| ^2 + 2 \int_{K_R} \nabla \hb \cdot \nabla h \\
&= \int_{K_R} |\nabla h| ^2 + \int_{K_R} |\nabla \hb| ^2 - 2 \int_{K_R} \hb \Delta h  \\
&= \int_{K_R} |\nabla h| ^2 + \int_{K_R} |\nabla \hb| ^2 + 2 c_d \int_{K_R} \hb (\mu_h - 1).
\end{align*}
We now use the Green representation of $\hb$ to write
\[
\hb (z)= \int_{K_R} G_R (z,z') \deleta_y (z')dz' - \int_{K_R} G_R (z,z') \deleta_x (z')dz':= \hb_y (z) - \hb_x (z).
\]
We then have (using the Dirichlet boundary condition again to integrate the cross term by parts)
\begin{align*}
\int_{K_R} |\nabla \hb| ^2 &= \int_{K_R} |\nabla \hb_x| ^2 + \int_{K_R} |\nabla \hb_y| ^2 - 2 \int_{K_R} \nabla \hb_x \cdot \nabla \hb_y \\
&=  c_d D_R (\deleta_x,\deleta_x) + c_d  D_R (\deleta_y,\deleta_y) -2 c_d  D_R(\deleta_x,\deleta_y)
\end{align*}
and
\begin{equation*}
\int_{K_R} \hb (\mu_h - 1) = D_R (\deleta_y - \deleta_x, \mu_h - 1) = D_R ( \deleta_y - \deleta_x, \mu_h - \deleta_x - 1) + D_R ( \deleta_y - \deleta_x, \deleta_x).
\end{equation*}
Putting everything together yields the desired result.
\end{proof}

The next step is to remark that for large $R$ we should have $G_R (x,y)\simeq w(x-y)$. It is then natural to expect that the difference between the self-interactions 
\[
D_R \left( \deleta _y, \deleta _y \right) - D_R \left( \deleta _x, \deleta _x \right)
\]
should be small, at least if the points $x$ and $y$ do not approach the boundary of the domain and thus do not see the Dirichlet boundary condition. On the other hand we can rewrite
\begin{equation}\label{eq:vari intui}
D_R (\deleta_y - \deleta_x, \mu_h - \deleta_x - 1) = \int \delta_y^{(\eta)} h_x - \int \delta_x^{(\eta)} h_x \simeq h_x(y)-h_x(x)
\end{equation}
where
\begin{equation}\label{eq:vari intui 2}
h_x := \int_{K_R} G_R(.,y) \left(\mu_h-\deleta_x - 1\right)(y)dy.
\end{equation}
 Were we dealing with the case of point charges $\eta = 0$, the   variation would thus reduce to $h_x(y)-h_x(x)$.
Lemma \ref{lem:vari cube} is then a rephrasing in our context where charges are smeared out of the well-known  fact that the condition for optimality is that the charge at $x$ lies at the minimum of the potential $h_x$ generated by all the other charges and the neutralizing background.

We thus carry on with estimates of the Green-Dirichlet function that will allow us to make this intuition rigorous in the case of smeared out charges. We will use the well-known fact that $G_R (x,y)$ may be written as the sum of the Green function of the whole space $w(x-y)$ and a regular part. Important for us is the fact that the regular part is uniformly bounded on $K_R$ with a bound independent of $R$, such that we will be able to focus on the singular part $w(x-y)$ when estimating the variation in the right-hand side of \eqref{eq:vari cube}.

Actually, when $d=2$, things are a little bit more subtle due to the fact that the Green function does not decay at infinity. Our estimate \eqref{eq:estim Green 2D} is thus different in this case, but still sufficient for our proof.

\begin{lem}[\textbf{Estimates on the Green function}]\label{lem:Green}\mbox{}\\
$\bullet$ If $d\geq 3$, there exists a constant $C_G$, independent of $R$, such that
\begin{equation}\label{eq:estim Green}
\left| G_R (x,y) - w(x-y)\right| \leq C_G
\end{equation}
for all $x,y\in K_R$ with $\min(\dist(x,\partial K_R),\dist(y,\partial K_R)) \geq 1$.

\noindent $\bullet$ If $d=2$, there exists a constant $C_G$, independent of $R$, such that
\begin{equation}\label{eq:estim Green 2D}
\left| G_R (x,y) - w(x-y) + w(x-y') \right| \leq C_G
\end{equation}
where $y'$ is the reflection of $y$ with respect to $\partial K_R$. In particular, if $\dist(y,\partial K_R) \geq 1$ and $x,x' \in B(y,c)$ for some constant $c$, then 
\begin{equation}\label{eq:estim Green 2D bis}
\left| \left(G_R (x,y) - G_R (x',y)\right) - \left(w(x-y) - w(x'-y)\right) \right| \leq C_c 
\end{equation}
for some $C_c$ depending only on $c$.
\end{lem}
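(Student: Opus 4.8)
The plan is to split the Green--Dirichlet function into its singular part $w(x-y)$ and a harmonic remainder, and to control the remainder by the maximum principle. Write $G_R(x,y)=w(x-y)-H_R(x,y)$. Since $y$ is in the interior of $K_R$, the map $x\mapsto w(x-y)$ is continuous on $\partial K_R$, and $H_R(\cdot,y)$ is the solution of the Dirichlet problem in $K_R$ with this boundary data: indeed $-\Delta_x H_R=-\Delta_x w(x-y)+\Delta_x G_R(x,y)=c_d\delta_y-c_d\delta_y=0$ in $K_R$, so (the singularities at $y$ cancelling) $H_R(\cdot,y)$ is harmonic in $K_R$, continuous up to $\partial K_R$ with value $w(\cdot-y)$ there, and $|G_R(x,y)-w(x-y)|=|H_R(x,y)|$. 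For $d\ge 3$ one has $w>0$, and $\dist(y,\partial K_R)\ge 1$ forces $|x-y|\ge 1$ for all $x\in\partial K_R$, hence $0<H_R(x,y)=|x-y|^{2-d}\le 1$ on $\partial K_R$; the maximum principle then gives $0<H_R\le 1$ throughout $K_R$, which is \eqref{eq:estim Green} with $C_G=1$. (No restriction on $x$ is actually needed, so the stated hypothesis is more than enough.)

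For $d=2$ the boundary data $-\log|\cdot-y|$ is of size $\log R$, so a bounded remainder cannot be expected; instead I would subtract a single image charge. Let $F_0$ be the face of $\partial K_R$ nearest to $y$, let $P_0$ be the hyperplane containing $F_0$, and let $y'$ be the reflection of $y$ across $P_0$; set $\tilde G(x,y):=G_R(x,y)-w(x-y)+w(x-y')$. Since $y'$ lies strictly on the far side of $P_0$ from $K_R$, we have $y'\notin\overline{K_R}$, so $w(\cdot-y')$ is harmonic in a neighbourhood of $\overline{K_R}$ and $\tilde G(\cdot,y)$ is harmonic in $K_R$ (the singularities at $y$ again cancel), with boundary values $\log\bigl(|x-y|/|x-y'|\bigr)$ on $\partial K_R$. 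The crux is the elementary geometric fact that, writing $d_0:=\dist(y,\partial K_R)$, one has $|x-y'|\ge d_0$ for every $x\in\overline{K_R}$ (the segment $[x,y']$ must cross $P_0$, which sits at distance $d_0$ from $y'$), while $|x-y|\ge d_0$ for $x\in\partial K_R$ and $|y-y'|=2d_0$; feeding these into the triangle inequality both ways gives $\tfrac13\le |x-y|/|x-y'|\le 3$ on $\partial K_R$. Hence $|\tilde G|\le\log 3$ on $\partial K_R$, and the maximum principle yields \eqref{eq:estim Green 2D} with $C_G=\log 3$.

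Finally, \eqref{eq:estim Green 2D bis} follows by differencing \eqref{eq:estim Green 2D} at $x$ and at $x'$: the left-hand side equals $\bigl(\tilde G(x,y)-\tilde G(x',y)\bigr)+\log\bigl(|x-y'|/|x'-y'|\bigr)$, whose first term is at most $2\log 3$ in absolute value; and when $\dist(y,\partial K_R)\ge 1$ and $x,x'\in B(y,c)\cap K_R$, the bounds above give $1\le d_0\le |x-y'|,\,|x'-y'|\le c+2d_0$, so $\bigl|\log(|x-y'|/|x'-y'|)\bigr|\le\log\bigl((c+2d_0)/d_0\bigr)\le\log(c+2)$; this gives \eqref{eq:estim Green 2D bis} with $C_c=2\log 3+\log(c+2)$.

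The only point requiring a little care is the two-sided geometric bound on $|x-y|/|x-y'|$ in dimension $2$, and even that reduces to the single observation that $y'$ lies at distance $d_0$ on the far side of the face of $K_R$ closest to $y$; everything else is the classical maximum principle. I would also record two conventions used implicitly: ``reflection with respect to $\partial K_R$'' is read as reflection across the nearest face, and since only distances to that one facet-hyperplane enter, the argument is insensitive to whether $K_R$ is a cube or a general hyperrectangle and to its position.
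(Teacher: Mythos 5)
Your proof is correct and follows essentially the same approach as the paper: peel off the singularity of $G_R$ (plus an image charge across the nearest face in 2D), observe the remainder is harmonic, and invoke the maximum principle together with bounds on the boundary data, where $G_R$ vanishes. The small differences are cosmetic: for $d\geq 3$ you avoid the symmetry $G_R(x,y)=G_R(y,x)$ by noting only $\dist(y,\partial K_R)\geq 1$ is needed; for $d=2$ your clean two-sided triangle-inequality bound $1/3\leq|x-y|/|x-y'|\leq 3$ replaces the paper's somewhat sketchier argument, and for \eqref{eq:estim Green 2D bis} you bound $\log(|x-y'|/|x'-y'|)$ directly rather than via a Lipschitz bound on $w(\cdot-y')$.
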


\begin{proof} 
The proof of \eqref{eq:estim Green} is a well-known argument: Since $G_R(x,y) = G_R(y,x)$ we may restrict to the case where $\dist(y,\partial K_R)\geq 1$ and consider $G_R(x,y)$ and $w(x-y)$ as functions of $x$ only. Now, by definition $G_R(.,y)-w(.-y)$ is harmonic in $K_R$ and hence (by the maximum principle) reaches its maximum and minimum on the boundary of $K_R$. The Green-Dirichlet function is zero there and in view of the assumption that $\dist(y,\partial K_R)\geq 1$ we have
$$\min_{x\in \partial K_R} (- w(x-y)) \geq - C$$
independently of $R$ and
$$\max_{x\in \partial K_R} (- w(x-y))\leq - w(C R)\to 0$$
when $R\to \infty$. We used the asumption $d\geq 3$ in the last equation.

To prove \eqref{eq:estim Green 2D}, we fix $y\in K_R$ and note that the function 
\[
f(x,y) :=  G_R (x,y) - w(x-y) + w(x-y') 
\]
is harmonic as a function of $x\in K_R$ since $\Delta_x G_R (x,y) = \Delta_x w(x-y)$, $\Delta_x w (x-y') = - c_d \delta_{y'} (x)$  and $y' \notin K_R$. We thus know that $f(x,y)$ reaches its maximum and minimum on the boundary $\partial K_R$: 
\begin{equation}\label{eq:bound Green 2d}
 \inf_{x\in \partial K_R} \log \frac{|x-y|}{|x-y'|} \leq f(x,y) \leq \sup_{x\in \partial K_R} \log \frac{|x-y|}{|x-y'|},
\end{equation}
where we used the Dirichlet boundary condition for $G_R(.,y)$ and the definition of $w$ when $d=2$. Next we observe that, $y'$ being the reflection of $y$ with respect to $\partial K_R$ it holds that 
\[
 |x-y|\leq |x-y'| 
\]
for any $x\in \partial K_R$ and thus 
$f(x,y) \leq 0$
for any $x,y\in K_R$. For a lower bound we note that 
$|x-y'| \leq |x-y| + |y-y'|  $
with equality when $x,y$ and $y'$ are aligned, in which case 
\[
\frac{|x-y|}{|x-y'|} = 1 - \frac{|y-y'|}{|x-y'|} = 1 - \frac{|y-y'|}{L + |y-y'| /2}   
\]
where $L$ is one of the side-lengths of the hyperrectangle $K_R$. By assumption $2R \leq L\leq 3 R $, which implies that $|y-y'|\leq 3R$ and we easily conclude that $\inf_{x\in \partial K_R} \log \frac{|x-y|}{|x-y'|}$ is bounded below independently of $R$. Plugging in \eqref{eq:bound Green 2d} concludes the proof of \eqref{eq:estim Green 2D}. The estimate \eqref{eq:estim Green 2D bis} follows immediately by noting that, if $\dist(y,\partial K_R) \geq 1$ then $| \nabla w(.-y') |$ is bounded above independently of $R$ in $K_R$.  
\end{proof}

We can now give the main idea of the proof of Proposition \ref{pro:sep points}, following \cite{Lie} for singular charges. Suppose there are two points, say $0$ and $x$ in $\La$, very close to one another\footnote{Translating the whole system we may always assume that $0\in \La$}. Then one can decrease the energy by sending $x$ to some point $y$ with $\dist(0,y) > \dist(0,x)$ and changing the potential as described in Lemma \ref{lem:vari cube}. As we will prove below, it is sufficient to consider only the contribution of the charge  at $0$ and of the background at some fixed distance thereof to bound from above the variation of the potential $h_x$ defined in \eqref{eq:vari intui}. Then, if the distance $\dist(0,x)$ is really small to begin with, Lemma \ref{lem:Green} tells us that we can approximate the variation \eqref{eq:vari intui} in the potential $h_x$ using the Coulomb potential $w(x-y)$. This observation leads us to the definition
\begin{equation}\label{eq:defi Veta}
V_{\eta} =  \int_{\R ^d} w (.-z) \left( \deleta_0 - \1_{B(0,r_3)} \right) (z)dz
\end{equation}
of the potential generated (via the Green function $w$ of the full space) by a unit charge at~$0$ smeared out at scale $\eta$ and a disc of  constant opposite density of charge, where we choose $r_3<1$ so that $|B(0,r_3)|<1$.

We now observe that $V_{\eta}$ is radial and decreasing. It is also obviously independent of $R$ and gets closer and closer to $V_0$, independent of $\eta$, when $\eta$ is small. Since $V_0(r)\to +\infty$ when $r\to 0$, one can make $V_{\eta}(y) - V_{\eta} (x)$ arbitrarily negative if $x\to 0$ while $y$ stays bounded at a fixed distance away from $0$. This is the content of the following

\begin{lem}[\textbf{The jellium potential close to a point charge}]\label{lem:model pot}\mbox{}\\
For any $M>0$ there exists $\min(\hal, r_3) > r_2 > r_1 > 0$ such that for any small enough $\eta$  and any $r<r_1$,  we have
\begin{equation}\label{eq:vari model pot}
V_{\eta}(r_2-\eta) - V_{\eta} (r+\eta) \leq -M.
\end{equation}
\end{lem}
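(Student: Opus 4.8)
The plan is to reduce \eqref{eq:vari model pot} to a statement about the single, $\eta$-independent potential $V_0 := w \ast (\delta_0 - \1_{B(0,r_3)}) = w - w \ast \1_{B(0,r_3)}$, using that $V_\eta$ differs from $V_0$ only inside $B(0,\eta)$. Indeed, with $f_\eta$ as in \eqref{eqf0} one has $w \ast \delta_0^{(\eta)} = w + f_\eta$ and $f_\eta \equiv 0$ outside $B(0,\eta)$ (Newton's theorem), so $V_\eta = V_0 + f_\eta$ and in particular $V_\eta(t) = V_0(t)$ for every $t \ge \eta$. Since $r + \eta \ge \eta$ for all $r \ge 0$, and $r_2 - \eta \ge \eta$ as soon as $\eta \le r_2 / 2$, the left-hand side of \eqref{eq:vari model pot} equals $V_0(r_2 - \eta) - V_0(r + \eta)$ once $\eta$ is small enough. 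It therefore suffices to control $V_0$, whose only singularity comes from the $w$ term at the origin; the remaining term $w \ast \1_{B(0,r_3)}$ is continuous, hence bounded on the closed ball $\overline{B}(0,r_3)$ by a constant $C_1 = C_1(r_3,d)$.

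I would then choose the radii in the order forced by the problem. First fix $r_2 := \tfrac{1}{2}\min(\tfrac{1}{2}, r_3)$, so that $0 < r_2 < \min(\tfrac{1}{2},r_3)$ and $2 r_2 \le r_3$. Using that $w$ is radial and decreasing together with $r_2/2 < r_2 - \eta < r_3$ for $\eta < r_2/2$, one gets the $\eta$- and $r$-independent upper bound
\[
V_\eta(r_2 - \eta) = V_0(r_2 - \eta) \le w(r_2/2) + C_1 =: C_2 .
\]
For the lower bound, pick $r_1 \in (0, r_2)$ (to be specified) and restrict to $r < r_1$ and $\eta < r_1$, so that $r + \eta < 2 r_1 < 2 r_2 \le r_3$; monotonicity of $w$ then gives
\[
V_\eta(r + \eta) = V_0(r + \eta) \ge w(2 r_1) - C_1 .
\]
Subtracting, for $\eta < \min(r_1, r_2/2)$ and $r < r_1$ we obtain $V_\eta(r_2 - \eta) - V_\eta(r + \eta) \le C_1 + C_2 - w(2 r_1)$. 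Since $w(t) \to +\infty$ as $t \to 0^+$ in every dimension $d \ge 2$, it now suffices to choose $r_1 \in (0, r_2)$ small enough that $w(2 r_1) \ge C_1 + C_2 + M$, which yields \eqref{eq:vari model pot}.

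I do not expect a real obstacle: the statement is soft and the argument above is essentially complete. The only point requiring care is that every constant stays independent of $\eta$, which is exactly what the identity $V_\eta = V_0$ on $\{ |x| \ge \eta \}$ (Newton's theorem) delivers; after that one only manipulates the fixed radial profile $V_0 \sim w$ near the origin. The other mild subtlety is bookkeeping: $r_3$ is given, $r_2$ must be frozen before the constants $C_1, C_2$ are defined, and only then is $r_1$ chosen as a function of $M$ (and of $r_3, d$), with the smallness threshold on $\eta$ taken to be $\min(r_1, r_2/2)$.
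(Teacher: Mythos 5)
Your proof is correct and follows essentially the same route as the paper's: Newton's theorem reduces $V_\eta$ to the $\eta$-independent profile $V_0$ on $\{|x|\ge\eta\}$, and the conclusion then comes from the blow-up of the Coulomb kernel $w$ at the origin. The only cosmetic difference is that the paper computes $V_0$ explicitly by radiality and invokes its monotonicity, whereas you bound the regular part $w\ast\1_{B(0,r_3)}$ by a constant on $\overline{B}(0,r_3)$ and use the monotonicity of $w$ alone — both are fine, and your version in fact sidesteps the unnecessary case $r<\eta$ by noting $r+\eta\ge\eta$ always.
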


\begin{proof}
By  Newton's theorem, we have  $V_{\eta} (r) = V_{0} (r)$ for $r \geq \eta$, thus since $V_\eta$ is decreasing,   we have  $V_\eta(r) \ge V_0(\eta)$ for $r<\eta$. Thus, since we will always choose $\eta < r_1 < r_2$, we can study only $V_{0}$, which is readily computed by taking advantage of the radiality of the charge distribution:
\begin{align*}
V_0(r) &= \left(1 - |B(0,r)|\right) w(r) + |\mathbb{S}^{d-1}| \int_{B(0,r_3)\setminus B(0,r)} w(t) t ^{d-1} dt \mbox{ if } 0 \leq r \leq r_3 \\
V_0(r) &= \left(1 - |B(0,r_3)|\right) w(r) \mbox{ if } r\geq r_3,
\end{align*} 
by Newton's theorem again. Simple considerations show that $V_0 (r) \to +\infty$ when $r\to 0$ and that $V_0$ is decreasing. The result follows.
\end{proof}

We can now proceed to the

\noindent \textit{Proof of Proposition \ref{pro:sep points}.}

As announced, we work assuming $N_p=1$ and \eqref{cond2} is violated.  
Without loss of generality (we may always translate $K_R$ without changing the energy) we assume that $0\in \La_R$ and that there exists some point $x\in \La_R$ close to $0$. We will prove that  if $\dist(0,x) < r_1$, where $r_1$ is as in Lemma \ref{lem:model pot} one can decrease the energy  by moving $x$ to a certain point $y\in \partial B(0,r_2)$.

Since we assume $\dist(0,\partial K_R) \geq 1$,  and since $r_1<\hal$, we have 
 $B(0,r_1) \subset K_{R-1/2}$.    We pick some $y$, to be chosen later, and  let $\Lat$ be the corresponding modified configuration, with an associated potential $\hti$  as in Lemma \ref{lem:vari cube}. The variation of the energy is given by  \eqref{eq:vari cube}, which we estimate   using \eqref{eq:estim Green} (respectively \eqref{eq:estim Green 2D bis} when $d=2$): for the first two terms we may write
\[
\left| D_R \left( \deleta _y, \deleta _y \right) - D_R \left( \deleta _x, \deleta _x \right)  \right| \leq C \left( \int \delta_x^{(\eta)} + \int \delta_y^{(\eta)} \right) \leq C
\] 
since the contribution of the singular part $w(x-y)$ to the Green function $G_R(x,y)$ yields the same contribution for both terms. We thus have
\begin{equation}\label{eq:estim vari 1}
\int_{K_R} |\nabla \hti| ^2 - \int_{K_R} |\nabla h | ^2 \leq 2 c_d D_R \left( \mu_h - \deleta_x -1, \deleta _y - \deleta_x \right) + C
\end{equation}
where $C$ is independent of $R$ and  $\eta$. We may then focus on proving that the first term of the right-hand side can be made arbitrarily negative. For that purpose,  we start from \eqref{eq:vari intui}, \eqref{eq:vari intui 2} and, following \cite{Lie}, we  decompose 
\begin{align}\label{eq:decomp h_x}
h_x &= \hint + \hext  \nonumber \\
\hint &= \int_{K_R} G_R(.,z) \left( \deleta_0 - \1_{B(0,r_3)} \right) (z) dz \nonumber \\
\hext &= \int_{K_R} G_R(.,z) \left( \sum_{p\in \La \setminus \{ x,0\} )} \deleta_p - 1 + \1_{B(0,r_3)} \right) (z) dz.
\end{align}
 We further define the $\eta$-averaged function $
\hext_a (z) :=   \hext * \delta_0^{(\eta)}$,  
and note that since $-\Delta \hext \geq 0$ in $B(0,r_3)$, and $r_2<r_3$, we have $-\Delta \hext_a \geq 0$ in $B(0,r_2)$, at least if $\eta$ is small enough. By the maximum principle $\hext_a$ reaches its minimum on $B(0,r_2)$ at the boundary and we now choose  $y\in \partial B(0,r_2)$ to be (one of) its minimum point(s). Then, in view of \eqref{eq:vari intui}, we have
\begin{align}\label{eq:estim vari 2}
D_R \left( \mu_h - \deleta_x -1, \deleta _y - \deleta_x \right) &=  \hext_a (y) - \hext_a (x) + \int\delta_y^{(\eta)} \hint - \int \delta_x^{(\eta)}  \hint \nonumber \\
& \leq \int \delta_y^{(\eta)} \hint - \int \delta_x^{(\eta)}  \hint \leq \int \delta_y^{(\eta)}  V_{\eta} - \int \delta_x^{(\eta)} V_{\eta} + C'
\end{align}
where on the second line we have used \eqref{eq:estim Green} (respectively \eqref{eq:estim Green 2D bis} when $d=2$)  by replacing $G_R(x,y)$ by $w(x-y)$ in the terms involving $\hint$. Assuming now that we have chosen $M= 2(C+c_d C') $ in     Lemma \ref{lem:model pot}, where $C$  is the constant in \eqref{eq:estim vari 1} and $C'$ that in \eqref{eq:estim vari 2}, we have 
\[
V_{\eta} (r_2-\eta) - V_{\eta} (r_1+\eta) \leq - 2 C -2 C'.\] 
Using that $V_\eta$ is radial and decreasing, we deduce that, since we assumed $|x|\le r_1$ and $|y|=r_2$, we have 
\[
\int \delta_y^{(\eta)} V_{\eta} - \int \delta_x^{(\eta)}  V_{\eta} \leq -2 C - 2 C'.
\]
Combining this with \eqref{eq:estim vari 2} and \eqref{eq:estim vari 1}, we have thus obtained  $\int_{K_R} |\nab \tilde h|^2- \int_{K_R} |\nab h|^2 \le - C$ for some $C>0$. This proves the proposition, with $r_0=r_1$.

\hfill \qed

\subsection{Screening}\label{sec:screening}

Starting with a configuration of points modified according to Proposition \ref{pro:sep points} we are able to modify it further and ``screen it" as announced. 
\begin{pro}[\textbf{Screening a configuration of points}]\label{proscreen} \mbox{}\\
There exists $\eta_0>0$ such that the following holds for all $\eta<\eta_0$. Let $K_R$ be a hyperrectangle such that $|K_R|$ is an integer and the sidelengths of $K_R$ are in $[2R,3R]$. Let  $\j_\eta=\nabla h_\eta$ be a gradient vector field with $h_\eta$ satisfying  \eqref{eq:Laplace h} with \eqref{cond1}--\eqref{cond2} satisfied and
\begin{equation}\label{eq:bound ener cube}
\int_{K_R} |\j_\eta| ^2 \leq C R ^d.
\end{equation}
Then there exists $\Lah$ a configuration of points and $\bar{\j}$ an associated gradient vector  field (both possibly also depending on $\eta$)  defined in $K_R $ and satisfying
\begin{align}
\label{eq:proscreen field}
-\div  \bar{\j} &= c_d\Big(  \sum_{p\in \Lah} \delta_p - 1\Big) \mbox{ in } K_R\\
\label{eq:proscreen flux}
\bar{\j} \cdot \vec{\nu} &=0 \quad \text{on} \ \p K_R \end{align}
such that for any $p\in \Lah$
\begin{equation}\label{eq:mindistr plus}
\min \left( \dist(p,\Lah\setminus \{p\}), \dist(p,\partial K_R) \right) \geq \frac{ r_0}{10}
\end{equation} with $r_0$ as in \eqref{cond2}, 
and
\begin{equation}\label{majow}
\int_{K_R} |\bar{\j}_\eta |^2 \le \int_{K_R} |\j_\eta| ^2 + o( R^d)
\end{equation}
as $R\to \infty$, where $\bar{\j}_\eta = \Phi_\eta \left( \bar{\j}\right)$ and the $o(R ^d)$ depends only on $\eta$ and the constants in \eqref{cond1}, \eqref{cond2}, \eqref{eq:bound ener cube}.
\end{pro}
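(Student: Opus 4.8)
The plan is to follow the screening construction of \cite{ss2d,ss1d} adapted to the smeared-charge setting, in three stages: partition the cube, identify a sub-collection of ``good'' cells on which the configuration is under control, and modify the configuration near $\partial K_R$ and in the bad cells so as to zero out the normal flux while keeping points well separated and the energy nearly unchanged. First I would cover $K_R$ by a grid of sub-hyperrectangles $\{K_i\}$ of sidelengths $\sim \rho$ for some intermediate scale $1 \ll \rho \ll R$ (chosen later, e.g. $\rho = \log R$ or a slowly growing function), arranged so that the union of the cells touching $\partial K_R$ forms a boundary layer of relative volume $o(1)$. On each interior cell $K_i$, the average energy $|K_i|^{-1}\int_{K_i}|\j_\eta|^2$ and the discrepancy $D_i := \nu(K_i) - |K_i|$ are controlled on average over $i$ by \eqref{eq:bound ener cube} and by Lemma \ref{lembornnu}/the flux computation in \eqref{contnu}; a Chebyshev-type argument selects a family of ``good'' cells carrying all but an $o(R^d)$ fraction of the energy, on which $\int_{K_i}|\j_\eta|^2 \le C\rho^d$ and $|D_i| \le \varepsilon \rho^d$ with $\varepsilon$ small.

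The core of the argument is the local modification on a single good cell $K_i$. Here I would follow the standard scheme: first add or remove $O(|D_i|)$ points inside $K_i$, placed on a sublattice far from the existing (well-separated, by \eqref{cond1}--\eqref{cond2}) points, so that the cell becomes exactly neutral, i.e.\ $\nu_i(K_i) = |K_i|$; the energetic cost of each inserted/removed point is $O(\kappa_d \g(\eta) + C)$ by Lemma \ref{lemequiv}, and since $|D_i|\le \varepsilon\rho^d$ the total cost summed over cells is $\le C\varepsilon R^d$, which will be absorbed into the $o(R^d)$ after optimizing $\varepsilon \to 0$ and $\rho\to\infty$. Then, solve in $K_i$ the Neumann problem $-\Delta \phi_i = 0$ with $\partial_\nu \phi_i = -\j_\eta\cdot\vec\nu$ on $\partial K_i$ (compatible because the cell is now neutral), or more precisely construct a correction field supported in a thin sub-layer near $\partial K_i$ that interpolates between $\j_\eta$ inside and a field with vanishing normal trace on $\partial K_i$; by a trace/extension estimate the $L^2$ cost of this correction is controlled by $\rho^{d-1}$ times the boundary flux, hence $o(\rho^d)$ per cell after averaging, giving $o(R^d)$ in total. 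Points landing in the thin sub-layer are pushed slightly inward to maintain separation $\ge r_0/10$; since the layer is thin relative to $\rho$ this costs negligible energy. On the bad cells and the boundary layer, which together have volume $o(R^d)$, I would simply discard the given configuration and paste in a fixed reference neutral well-separated configuration (e.g.\ a rescaled lattice) with energy density $O(\kappa_d\g(\eta))$, contributing $o(R^d)$.

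Finally, one reassembles: on each cell the corrected field has vanishing normal component on $\partial K_i$, so the fields match across cell interfaces (normal components all zero there) and glue into a global $\bar\j$ on $K_R$ with $\bar\j\cdot\vec\nu = 0$ on $\partial K_R$, satisfying \eqref{eq:proscreen field} with $\Lah$ the union of the (modified) local configurations, and \eqref{eq:mindistr plus} holds by construction. The energy bound \eqref{majow} follows by summing the local estimates: $\int_{K_R}|\bar\j_\eta|^2 \le \sum_i \int_{K_i}|\j_\eta|^2 + C\varepsilon R^d + o(R^d) \le \int_{K_R}|\j_\eta|^2 + o(R^d)$, where one uses that smearing does not increase the relevant quantities on well-separated configurations (Lemma \ref{lemequiv}) and that the error terms depend only on $\eta$ and the structural constants. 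The main obstacle I anticipate is the boundary-layer construction: making the normal-flux-killing correction with an $L^2$ cost genuinely $o(\rho^d)$ \emph{and} simultaneously keeping all points (old and newly inserted) separated by $\ge r_0/10$ all the way to $\partial K_i$ requires a careful quantitative choice of the layer thickness versus $\rho$, of the number and placement of inserted points, and of how far points are displaced — this is exactly the technically delicate ``screening'' step and is where the hypotheses \eqref{cond1}--\eqref{cond2} (guaranteed by Proposition \ref{pro:sep points}) are essential.
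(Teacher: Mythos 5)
Your proposal takes a genuinely different route from the paper's, and unfortunately it runs into a quantitative obstruction that does not occur in the paper's construction.

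The paper's proof never touches the bulk of $K_R$: it selects a single ``good'' boundary $\partial K_t$ at distance $\sim L$ from $\partial K_R$ (with $1 \ll l \ll L \ll R$) by a mean-value argument, keeps the original field inside, and only modifies the configuration in the thin boundary shell $K_R\setminus K_t$ of thickness $O(L)$. The flux-correction and point-insertion costs are then $O(R^{d-1}L)$ and $O(lR^d/L)$ respectively (via Lemmas \ref{lemcs1}--\ref{lemcs2}), both $o(R^d)$ once $l\ll L\ll R$ are chosen properly. Your scheme instead tiles all of $K_R$ by cells $K_i$ of size $\rho$ and proposes to kill the normal flux on \emph{every} cell boundary $\partial K_i$. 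This modification is far too invasive. By a mean-value argument starting from \eqref{eq:bound ener cube}, the best you can ensure is $\int_{\partial K_i}|\j_\eta|^2 \lesssim \rho^{-1}\int_{K_i}|\j_\eta|^2$, so for a good cell $\int_{\partial K_i}|\j_\eta|^2 \sim \rho^{d-1}$. The cost of the flux-killing correction on such a cell, in the spirit of Lemma \ref{lemcs2}, is $\sim \rho\int_{\partial K_i}|\j_\eta|^2 \sim \rho^d$; there is no mechanism that makes it $o(\rho^d)$, because the normal flux of $\j_\eta$ on a generic interior cell boundary is of order one per unit area and not small. Summing over the $\sim(R/\rho)^d$ cells yields a total error of order $R^d$, not $o(R^d)$, so \eqref{majow} is not obtained. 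Your assertion that the per-cell correction cost is ``$o(\rho^d)$ after averaging'' is the gap: it is unjustified and, on inspection, false.

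The other ingredients of your proposal (neutralizing each cell by adding $O(\rho^{d-1})$ points, controlling bad cells by a Chebyshev argument, discarding and replacing the configuration in a boundary layer, gluing via vanishing normal traces) are sound per se, and the neutralization step indeed costs only $O(C_\eta R^d/\rho) = o(R^d)$. But they do not fix the flux-correction estimate. The moral of the paper's construction is precisely that screening should be a boundary operation: one trades the global cell-by-cell flux constraint for a single well-chosen good boundary $\partial K_t$ near $\partial K_R$, isolates the smeared charges crossing it in small cubes to obtain a set $\Gamma$ whose boundary avoids all charges, and then builds the corrected field only in $K_R\setminus\Gamma$. Inside $\Gamma$ the energy is unchanged, and the correction near $\partial K_R$ stays $o(R^d)$ because the shell has volume $o(R^d)$ and the chosen $\partial K_t$ has small flux.
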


Property \eqref{eq:proscreen flux}  is crucial in order to be able to paste together configurations defined in separate hyperrectangles. It also implies (integrating 
\eqref{eq:proscreen field} over $K_R$ and using Green's formula) that  the number of points of the modified configuration is \emph{exactly} equal to the volume of the domain:
\begin{equation}\label{eq:num points pre}
\# \Lah  = |K_R|,
\end{equation}
which is also important for the  proof of Proposition \ref{Wbb}.

The proof proceeds by modifying the configuration of points in the vicinity of $\partial K_R$:
\begin{enumerate}
\item We first select by a mean-value argument  a ``good boundary'' $\p K_t$ on which we control the $L^2$ norm of $\j_\eta$ (i.e. the energy density), and  at a distance $1 \ll L\ll R$ from the boundary of the original hyperrectangle.
\item We will not move the points whose associated smeared charges intersect $\partial K_t$. Instead, we isolate them in small cubes and leave unchanged all the points lying in $\Gamma$= the union of $K_t$ with these small cubes. A new configuration of points is built in $K_R\backslash \Gamma$.  It is built by splitting this set into hyperrectangles and using on each some test vector-fields (obtained by  solving explicit elliptic PDEs) whose energies is evaluated by elliptic estimates similarly to \cite[Sec. 4.3]{gl13}. The test vector-fields are then pasted together. 
\item It is more convenient to first ``straighten'' the boundary of $\Gamma$. We pick some $\alpha$ such that $\Gamma \subset K_{t+\alpha}$ and  put points in $K_{t+\alpha}\setminus \Gamma$ so that neither the energy nor the flux at the boundary are increased too much. To obtain a control on those quantities, it is important that no smeared charge intersects the boundary of $\Gamma$ to begin with, whence the second step. 
\end{enumerate}

We start with the preliminary lemmas containing the elliptic estimates which provide the elementary bricks of our construction.

\begin{lem}[\textbf{Adding a point without flux creation}]\label{lemcs1}\mbox{}\\
Let $\mathcal{K}$ be a hyperrectangle of center $0$ and sidelengths in $[l_1, l_2]$, with $l_2 \ge l_1 >0$,  and let $m=1/|\mathcal{K}|$.
The mean zero solution to
\begin{equation*}
\left\{\begin{array}{ll}
-\Delta u = c_d ( \delta_0 -m )& \text{in} \ \mathcal{K}\\
\nab u \cdot \vec{\nu} =0 & \text{on} \ \p \mathcal{K}
\end{array}\right.
\end{equation*}
satisfies
\begin{equation*}
\lim_{\eta\to 0}\left| \int_{\mathcal{K} } |\nab u_\eta|^2 - \kappa_d \g(\eta) \right|\le C
\end{equation*}
where $C$ depends only on $d, l_1, l_2$ and $m$. \end{lem}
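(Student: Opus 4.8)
\emph{Proof proposal.} The plan is to separate the divergent self-energy of the single smeared charge at the centre from a smooth boundary-layer correction. First I would use Definition~\ref{def:smear field} (with $\Lambda=\{0\}$, $N_0=1$, $U=\mathcal{K}$) to write $u_\eta=u+f_\eta$, and then decompose $u=\g+g$ with $g:=u-\g$. The gain is that $g$ solves the Neumann problem $-\Delta g=-c_dm$ in $\mathcal{K}$, $\nabla g\cdot\vec{\nu}=-\nabla \g\cdot\vec{\nu}$ on $\partial\mathcal{K}$, whose interior source and boundary datum are \emph{bounded} (the latter because $0$ is the centre, hence $\mathrm{dist}(0,\partial\mathcal{K})\ge l_1/2$), with the compatibility condition satisfied since $\int_{\partial\mathcal{K}}(-\nabla \g\cdot\vec{\nu})=c_d=c_dm|\mathcal{K}|$. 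Normalising $g$ to have zero mean, a standard energy estimate (test against $g$, use the Poincaré and trace inequalities, which have constants controlled by $l_1,l_2$ on such rectangles) gives $\|\nabla g\|_{L^2(\mathcal{K})}\le C(d,l_1,l_2,m)$.

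Next, Newton's theorem gives $\g+f_\eta=\g\ast\delta_0^{(\eta)}$, which coincides with $\g$ outside $B(0,\eta)$, so $\nabla u_\eta=\nabla(\g\ast\delta_0^{(\eta)})+\nabla g$ and
\[
\int_{\mathcal{K}}|\nabla u_\eta|^2=\int_{\mathcal{K}}|\nabla(\g\ast\delta_0^{(\eta)})|^2+2\int_{\mathcal{K}}\nabla(\g\ast\delta_0^{(\eta)})\cdot\nabla g+\int_{\mathcal{K}}|\nabla g|^2.
\]
The last term is $\le C$. For the first, split $\mathcal{K}=B(0,\eta)\cup(\mathcal{K}\setminus B(0,\eta))$: on the annulus $\g\ast\delta_0^{(\eta)}=\g$ and an elementary polar-coordinates computation (with $\g(x)=|x|^{2-d}$, resp. $-\log|x|$) gives $\int_{\mathcal{K}\setminus B(0,\eta)}|\nabla \g|^2=c_d\g(\eta)+O(1)$, while the rescaling $x\mapsto\eta x$ together with \eqref{chvard} shows the ball part equals $(\kappa_d-c_d)\g(\eta)+O(1)$; summing, $\int_{\mathcal{K}}|\nabla(\g\ast\delta_0^{(\eta)})|^2=\kappa_d\g(\eta)+O(1)$, the $O(1)$'s depending only on $d,l_1,l_2$ and the fixed profile $\rho$. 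For the cross term I would integrate by parts, using $-\Delta g=-c_dm$, $\nabla g\cdot\vec{\nu}=-\nabla \g\cdot\vec{\nu}$ on $\partial\mathcal{K}$, and $\g\ast\delta_0^{(\eta)}=\g$ there, to get $2\int_{\mathcal{K}}\nabla(\g\ast\delta_0^{(\eta)})\cdot\nabla g=-2c_dm\int_{\mathcal{K}}\g\ast\delta_0^{(\eta)}-2\int_{\partial\mathcal{K}}\g\,\nabla \g\cdot\vec{\nu}$; both terms are $O(1)$ uniformly in small $\eta$, since $\g\in L^1_{\mathrm{loc}}$ and smearing does not blow up the $L^1$ norm, and since $\g$ and $\nabla \g$ are bounded on $\partial\mathcal{K}$. (Alternatively one bounds it by $\|\nabla g\|_{L^\infty}\int_{\mathcal{K}}|\nabla(\g\ast\delta_0^{(\eta)})|$, using $|\nabla(\g\ast\delta_0^{(\eta)})|\le C|x|^{1-d}\in L^1(\mathcal{K})$ and elliptic regularity for $g$.) Collecting the three pieces yields $\big|\int_{\mathcal{K}}|\nabla u_\eta|^2-\kappa_d\g(\eta)\big|\le C(d,l_1,l_2,m)$ for every $\eta$ small enough, and the statement follows on letting $\eta\to0$.

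There is no deep obstacle here: the lemma is essentially a bookkeeping argument once the decomposition $u_\eta=\g\ast\delta_0^{(\eta)}+g$ is in place. The two points needing care are (i) making sure every error constant depends only on the allowed data $d,l_1,l_2,m$ — in particular the elliptic and trace estimates for $g$ must be uniform over hyperrectangles with sidelengths in $[l_1,l_2]$, which is fine as these are convex with uniformly controlled geometric constants; and (ii) the case $d=2$, where the logarithm rescales additively rather than multiplicatively, so one cannot subtract $\int_{\R^2\setminus\mathcal{K}}|\nabla \g|^2$ (which diverges) and must instead compute $\int_{\mathcal{K}\setminus B(0,\eta)}|\nabla \g|^2$ directly over the bounded annulus; the constant $\gamma_2$ appearing in \eqref{chvard} is harmless as it is absorbed into the $O(1)$.
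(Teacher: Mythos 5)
Your proposal is correct, and it takes essentially the route the paper gestures at: the paper's proof of this lemma is the single line ``It is a very simple computation, similar for example to arguments employed in the proof of Lemma~\ref{lemequiv}'', i.e.\ expand the energy of $u_\eta$ using Newton's theorem and Green's formula as in Lemma~\ref{lemequiv}. Your decomposition $u_\eta=\g\ast\delta_0^{(\eta)}+g$ (with $g=u-\g$ the regular part) is just a cleaner single-point reorganization of that same computation; the identification of the ball contribution as $(\kappa_d-c_d)\g(\eta)$ via rescaling and \eqref{chvard}, the annulus contribution as $c_d\g(\eta)+O(1)$, the elliptic bound on $g$ with constants controlled by $l_1,l_2$, and the separate additive treatment of the logarithm in $d=2$ are all sound and account correctly for the asserted dependence of $C$.
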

\begin{proof} It is a very simple computation, similar for example to arguments employed in the proof of Lemma \ref{lemequiv}.
\end{proof}

\begin{lem}[\textbf{Correcting fluxes on hyperrectangles}]\label{lemcs2}\mbox{}\\
Let $\mathcal{K}$ be a hyperrectangle    with sidelengths in $[l/2,3l/2]$. Let $g\in L^2(\p \mathcal{K})$ be a function which is $0$ except on one face of $\mathcal{K}$.  Let $m$ be defined by
\begin{equation}\label{eq:flux m}
 m = 1 - c_d ^{-1} |\mathcal{K} | ^{-1} \int_{\p \mathcal{K}} g 
\end{equation}
and assume $|m-1|<1/2$. Then the mean zero  solution to
\begin{equation}\label{eqnu}
\left\{\begin{array}{ll}
-\Delta u  =c_d ( m-1)   & \text{in} \ \mathcal{K}\\
\nab u \cdot \vec{\nu} =g & \text{on} \ \p \mathcal{K}
\end{array}\right.
\end{equation}
satisfies
\begin{equation}\label{estlcs2}
\int_{\mathcal{K}} |\nab u|^2 \le  C l \int_{\p \mathcal{K}} g^2,
\end{equation}
where $C$  depends only on $d$.
\end{lem}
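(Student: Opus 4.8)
The plan is a routine elliptic estimate: rescale to a reference hyperrectangle, test the weak formulation against the solution itself, and close with the trace and Poincar\'e inequalities. The only genuine point of care is that the constant must not depend on $l$ nor on the precise aspect ratio of $\mathcal{K}$.

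First I would reduce to the case $l=1$ by scaling. Write $\mathcal{K} = l\widetilde{\mathcal{K}}$ with $\widetilde{\mathcal{K}}$ a hyperrectangle whose sidelengths lie in $[1/2,3/2]$, and set $v(\tilde x) := u(l\tilde x)$. Then $v$ is the mean-zero solution on $\widetilde{\mathcal{K}}$ of $-\Delta v = c_d l^2(m-1)$ with Neumann data $\tilde g(\tilde x) := l\,g(l\tilde x)$ on one face of $\partial\widetilde{\mathcal{K}}$, and a change of variables gives $\int_{\mathcal{K}}|\nab u|^2 = l^{d-2}\int_{\widetilde{\mathcal{K}}}|\nab v|^2$ and $\int_{\p\mathcal{K}}g^2 = l^{d-3}\int_{\p\widetilde{\mathcal{K}}}\tilde g^2$. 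Hence it suffices to prove $\int_{\widetilde{\mathcal{K}}}|\nab v|^2 \le \widetilde{C}\int_{\p\widetilde{\mathcal{K}}}\tilde g^2$ with $\widetilde{C}$ independent of the particular shape of $\widetilde{\mathcal{K}}$ within this family; multiplying back by $l^{d-2}$ and using $l^{d-2}\cdot l^{3-d}=l$ yields \eqref{estlcs2}.

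Next, on $\widetilde{\mathcal{K}}$ the relation \eqref{eq:flux m} is exactly the solvability condition $c_d l^2(m-1)|\widetilde{\mathcal{K}}| = -\int_{\p\widetilde{\mathcal{K}}}\tilde g$ for the Neumann problem, so by Lax--Milgram there is a unique $v\in H^1(\widetilde{\mathcal{K}})$ with $\int_{\widetilde{\mathcal{K}}}v=0$ satisfying, for all $\varphi\in H^1(\widetilde{\mathcal{K}})$,
\[
\int_{\widetilde{\mathcal{K}}}\nab v\cdot\nab\varphi = c_d l^2(m-1)\int_{\widetilde{\mathcal{K}}}\varphi + \int_{\p\widetilde{\mathcal{K}}}\tilde g\,\varphi.
\]
Taking $\varphi=v$ and using $\int_{\widetilde{\mathcal{K}}}v=0$ kills the volume term and leaves $\int_{\widetilde{\mathcal{K}}}|\nab v|^2 = \int_{\p\widetilde{\mathcal{K}}}\tilde g\,v$. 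By Cauchy--Schwarz on $\p\widetilde{\mathcal{K}}$, the trace inequality $\|v\|_{L^2(\p\widetilde{\mathcal{K}})}\le C\|v\|_{H^1(\widetilde{\mathcal{K}})}$, and the Poincar\'e--Wirtinger inequality $\|v\|_{L^2(\widetilde{\mathcal{K}})}\le C\|\nab v\|_{L^2(\widetilde{\mathcal{K}})}$ --- both with constants uniform over hyperrectangles with sidelengths in $[1/2,3/2]$, hence depending only on $d$ (indeed for rectangles both can be made explicit via the one-dimensional fundamental theorem of calculus) --- one gets $\int_{\widetilde{\mathcal{K}}}|\nab v|^2 \le C\,\|\tilde g\|_{L^2(\p\widetilde{\mathcal{K}})}\,\|\nab v\|_{L^2(\widetilde{\mathcal{K}})}$, and dividing through gives $\|\nab v\|_{L^2(\widetilde{\mathcal{K}})}\le C\|\tilde g\|_{L^2(\p\widetilde{\mathcal{K}})}$, i.e. the desired reference estimate.

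As for difficulty, there is essentially no obstacle: the hypothesis $|m-1|<1/2$ is not even used in this bound (it only serves, in the application of Lemma~\ref{lemcs2} inside Proposition~\ref{proscreen}, to keep the effective background density bounded), and the one thing to be checked is that $C$ is independent of $l$ and of the aspect ratio, which is precisely why one normalizes by $l$ first and then relies on the uniformity of the trace and Poincar\'e constants over the compact family of reference shapes. The explicit factor $l$ in \eqref{estlcs2} comes purely from bookkeeping of the scaling exponents.
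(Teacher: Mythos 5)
Your proof is correct, but it takes a different route from the paper's. The paper decomposes $u=u_1+u_2$: it solves for $u_1$ \emph{explicitly} as a quadratic polynomial (after rotating so the active face lies in $\{x_1=0\}$, $u_1(x)=\tfrac{m-1}{2}(x_1-l)^2$ up to a constant factor) handling the constant bulk source $c_d(m-1)$ together with a constant flux $\bar g$ on the active face; it then treats $u_2$ as the harmonic function with Neumann data $g-\bar g$ and bounds its energy by a scaling estimate $\int_{\mathcal{K}}|\nabla u_2|^2 \le Cl\,\|\nabla u_2\cdot\vec\nu\|^2_{L^2(\p\mathcal{K})}$; finally it closes by using the Cauchy--Schwarz bound $c_d|m-1|\le Cl^{(-d-1)/2}\|g\|_{L^2(\p\mathcal{K})}$ to express every $(m-1)$ factor in terms of $\int g^2$. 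Your argument bypasses the decomposition entirely: after rescaling to a fixed reference family $\widetilde{\mathcal{K}}$ with sidelengths in $[1/2,3/2]$, you test the weak formulation against the mean-zero solution $v$, so the bulk term $c_dl^2(m-1)\int_{\widetilde{\mathcal{K}}}v$ vanishes identically, and one application of Cauchy--Schwarz plus trace and Poincar\'e--Wirtinger (with constants uniform over that compact family of shapes) finishes the job. This is cleaner, makes the scaling exponent $l^{d-2}\cdot l^{3-d}=l$ transparent, and incidentally never uses that $g$ is supported on a single face; you are also right that $|m-1|<1/2$ plays no role in the energy estimate (it matters in Proposition~\ref{proscreen} to keep the effective background density bounded away from $0$ and allow the subsequent tiling). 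What the paper's route exposes, in exchange, is the explicit split between the contribution of the net flux (via $(m-1)$ and $\bar g$) and that of its oscillation $g-\bar g$, which mirrors the structure of the construction it is inserted into; but either proof is acceptable here.
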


\begin{proof}
We may write $u=u_1 + u_2  $ where
\begin{equation*}
\left\{\begin{array}{ll}
-\Delta u_1  = c_d (m-1 ) & \mbox{ in } \ \mathcal{K}\\
\nab u_1 \cdot \vec{\nu} = \bar{g} & \mbox{ on } \ \p \mathcal{K}
\end{array}\right.
\end{equation*}
with $\bar{g}$ a constant on the face of $\p \mathcal{K}$ where $g$ is nonzero, and zero otherwise (integrating the equation we find $\bar{g}= c_d(m-1)l$)  and \begin{equation*}
\left\{\begin{array}{ll}
-\Delta u_2  =0  & \text{in} \ \mathcal{K}\\
\nab u_2 \cdot \vec{\nu} = g- \bar{g}  & \text{on} \ \p \mathcal{K}.
\end{array}\right.
\end{equation*}
First we note that by \eqref{eq:flux m} and the Cauchy-Schwarz inequality we have
\begin{equation}\label{mm1}
c_d |m-1| \le  \frac{1}{|\mathcal{K}|}\int_{\p \mathcal{K}}  |g|  \le C  l^{\frac{-d-1}{2}} \left(\int_{\p \mathcal{K}} g^2 \right)^{1/2}.
\end{equation}

Next we observe that $u_1$ can be solved explicitly: if the face where $g$ is nonzero is  included in $\{x_1=0\}$ (after translation and rotation), then $u_1(x)= \frac{m-1}{2} (x_1-l)^2$ and we may compute
\begin{equation}\label{nrju1}
\int_{\mathcal{K}} |\nab u_1|^2 \le C (m-1)^2 l^{d-1}l^3 = C (m-1)^2 l^{d+2}.
\end{equation}
Secondly,  by scaling we check that
\begin{equation}\label{nrju3}
\int_{\mathcal{K}} |\nab u_2|^2 \le C l \left\|\nab u_2 \cdot \vec{\nu}\right\|_{L^2 (\p \mathcal{K})}^2
 \le C l \left(\|g\|_{L^2(\p \mathcal{K})}^2 + (m-1)^2 l^2 l^{d-1} \right)\end{equation}
Combining \eqref{nrju1},  \eqref{nrju3}, and inserting \eqref{mm1}, we obtain the result.
\end{proof}

\begin{proof}[Proof of Proposition \ref{proscreen}]
We set two lengths $l $ and $L$ such that $1\ll l \ll L \ll R $ as $R \to \infty$, and to be determined later. In all the sequel we denote by $C_1$ a generic constant which is a multiple (by a universal factor) of the constant $C$ of \eqref{eq:bound ener cube}.\smallskip

\noindent
- {\it Step 1: We find a good boundary.}
First we  claim that
there exists $t\in [R-2L, R-L]$ such that
\begin{equation}\label{bonbordg}
\int_{\p K_t} |\j_\eta|^2  \le  C_1 R^d/ L
\end{equation}
and
\begin{equation}
\label{contrautour}
\int_{K_{t+1}\backslash K_{t-1}} |\j_\eta|^2 \le  C_1 R^d/L.
\end{equation}
Indeed, since \eqref{eq:bound ener cube} holds, we may split $K_{R-L}\backslash K_{R-2L}$ into $L/2$ ``annular" regions of width~$2$. On one of them \eqref{contrautour} must hold, and by a mean-value argument we may also shift $t$ so that  \eqref{bonbordg} holds.

Since we assume \eqref{cond1}--\eqref{cond2} and since $L\ge 1$,  the boundary $\p K_t $ intersects balls $B(p, \eta)$ with $p\in \Lambda$ which are all at distance $\ge \hal r_0>0$ from all other such balls (for that it suffices to take $\eta_0<r_0/4$). Without l.o.g. we assume $r_0<1$.
Let $\Lambda_0$ denote the set of $p$'s for which $B(p,\eta) \cap \p K_t\neq \varnothing$.
For each $p\in \Lambda_0$, let $K_s(p)$ denote the hypercube of sidelength $s$ centered at $p$.
By a mean-value argument similar to the above, and using \eqref{contrautour} we can find $ s \in [r_0/16,r_0/8]$ such that
\begin{equation}\label{petitca}
\sum_{p\in \Lambda_0} \int_{\p K_s(p)} |\j_\eta|^2 \le  C_1 R^d/L.
\end{equation}
We now denote $\Gamma= K_t \cup \left( \cup_{p\in \Lambda_0} K_s(p)  \right) $. By construction $\p \Gamma$ does not intersect any $B(p, \eta)$ and we have
\begin{equation}
\label{Gamma}
\int_{\p \Gamma} |\j_\eta|^2 \le  C_1 R^d/L.
\end{equation}
By perturbing this construction a little we may define the resulting set $\Gamma$ such that $|\Gamma| \in \mn$ and~\eqref{Gamma} still holds. 

\medskip

\noindent {\it - Step 2: definition of $\hj$ in $K_{t+\alpha}$.}
First we define $\hj$ to be $\Phi_\eta^{-1}(\j_\eta)$ in $\Gamma$, that is we set 
$$ \hj = \j_\eta - \sum_{p\in\Lambda\cap \Gamma } f_{\eta} (x-p)\quad \text{in} \ \Gamma $$
where $\Lambda$ is the set of points corresponding to $\j_\eta$ as in~\eqref{eq:Laplace h} and $f_\eta$ is defined as in~\eqref{eqf0}. We note that this definition is unambiguous and coincides with the restriction to $\Gamma$ of $\Phi_{\eta}^{-1} (\j_\eta)$ since $\p \Gamma $ does not intersect any $B(p, \eta)$.

 We next wish  to extend $\hj$ to $K_{t+\alpha}\backslash \Gamma$. We take $\alpha>1$ to be such that $|K_{t+\alpha}\backslash \Gamma|$ is an integer. Since $|\Gamma| \in \mn$ by construction and $|K_R| \in \mn$ by construction we thus have 
\begin{equation}\label{eq:quant volume}
|K_{t+\alpha}| \in \mn \mbox{ and } |K_R \setminus K_{t+\alpha}| \in \mn.  
\end{equation}
Then $K_{t+\alpha}\backslash \Gamma$ can be split into  a disjoint union of hyperrectangles of volume 1, whose sidelengths are all bounded below by $r_0/8$, and bounded above by $1$, i.e. by universal constants.
In each of these hyperrectangles we apply Lemma \ref{lemcs1} with $m=1$. Since the  normal derivatives of the functions obtained that way are zero on the boundary, their gradients can be glued together into  a global vector field, and no divergence will be  created at the boundaries. More precisely this means that there exists a vector field $X$  defined in
 $K_{t+\alpha}\backslash \Gamma$ and satisfying
$$-\div X = c_d \Big( \sum_{p\in \Lambda_1} \delta_p -1\Big)  \quad \text{in} \ K_{t+\alpha}\backslash \Gamma$$
for some set $\Lambda_1$ equal to the union of the centers of these hyperrectangles,
and
\begin{equation}\label{nrjf}
\int_{K_{t+\alpha}\backslash \Gamma}|X_\eta|^2 \le C_\eta R^{d-1} \end{equation}
since the energy cost given by Lemma \ref{lemcs1} is proportional to  the volume concerned.
Next we claim that we can find a vector field  $Y$ satisfying $\div Y=0$ in $K_{t+\alpha}\backslash \Gamma$ and $g$ defined over $\p K_{t+\alpha}$ such that
$Y \cdot \vec{\nu}  =g $ on $\p   K_{t+\alpha}$, while $Y \cdot \vec{\nu} = \j_\eta\cdot \vec{\nu} $ on $\p \Gamma$ and
\begin{equation}\label{borg}
\int_{\p K_{t+\alpha}} g^2 \le C\int_{\p \Gamma} |\j_\eta|^2 \le C_1R^d/L,\end{equation}
and
\begin{equation}\label{nrju}
\int_{K_{t+\alpha}\backslash \Gamma} |Y|^2 \le C_1 R^d/L.\end{equation}
To see this,  we may split $K_{t+\alpha}\backslash \Gamma$ into hyperrectangles $\mathcal{K}_i$ which all have one face included in $\partial \Gamma$  and one face included in $\partial K_{t+\alpha}$, and  whose sidelengths are all bounded above by $2$ and below by $r_0/8$.
Then we solve in each $\mathcal{K}_i$
$$\left\{\begin{array}{ll}
\Delta u_i= 0 & \text{in} \ \mathcal{K}_i\\
\nab u_i \cdot \vec{\nu}= \j_\eta\cdot \vec{\nu} & \text{on} \ \p \mK_i\cap \partial \Gamma\\
\nab  u_i \cdot \vec{ \nu}= g_i & \text{on} \ \p \mK_i \cap \p K_{t+\alpha}\\
\nab u_i \cdot \vec{\nu}= 0 & \text{else on } \ \mK_i
\end{array}\right.$$
where $g_i$ is the unique constant that makes the equation solvable. It is straightforward to check that $\int_{\p \mK_i\cap \p K_{t+\alpha}} g_i^2 \le C \int_{ \p \mK_i \cap \p \Gamma} |\j_\eta|^2 $ (again by Green's formula applied to the equation).
Defining $g$  on $\p K_{t+\alpha}$ as $g=g_i$ on $\p K_{t+\alpha}\cap \p \mK_i$, we then have  \eqref{borg}.
Since $u_i$ is harmonic, we also  have the estimate
$$\int_{\mK_i} |\nab u_i|^2 \le C l \left\|\nab u_i \cdot \vec{\nu}\right\|_{L^2(\p K_i)}^2\le C \int_{\p \mK_i \cap \p \Gamma} |\j_\eta|^2 .
$$
Defining now the vector field $Y$ to be $\nab u_i$ on each $\mK_i$, we see that $Y$ satisfies the desired properties and that \eqref{nrju} holds.

We finally set  $\hj = X + Y $ in $ K_{t+\alpha}\backslash \Gamma$. It satisfies
$-\div \hj =  c_d \left( \sum_{p\in \Lambda_1} \delta_p -1\right)  $ and, since
no divergence is created at the interface $\p \Gamma$, the vector field  $\hj$ now defined in the whole on $K_{t+\alpha}$  still satisfies
 \begin{equation}\label{concleq}
 \left\{\begin{array}{ll} 
 -\div \hj = c_d \Big(  \sum_{p\in \Lambda_1 \cup (\Gamma \cap \Lambda) } \delta_p  -1 \Big)  \quad & \text{in} \ K_{t+\alpha}\\
 \hj \cdot \vec{\nu}=g & \text{on} \ \p K_{t+\alpha}.
 \end{array}\right.
 \end{equation}
Moreover,  in view of \eqref{nrjf} and \eqref{nrju}, we have
\begin{equation}\label{franginter}
\int_{K_{t+\alpha}} |\hj_\eta|^2 \le \int_{\Gamma} |\j_\eta|^2 +  C_\eta R^{d-1} + C R^d/L.\end{equation}
 Note that  by construction the distances between the points in $\Lambda_1 \cup (\Gamma \cap \Lambda)$ are all bounded below by   $r_0/8$.
We now discard the notations used for this step, except for  the conclusions \eqref{concleq}--\eqref{franginter} and \eqref{borg}.

\smallskip

\noindent
{- \it Step 3:  correcting the flux}. We construct a new domain\footnote{We have $\Gamma \subset K_{t+\alpha} \subset \Gamma' \subset K_R$. Of these sets, only $K_{t+\alpha}$ and $K_R$ are hyperrectangles.} $\Gamma'$ on the boundary of which we make the flux vanish. This uses Lemma~\ref{lemcs2} in an essential way to construct a configuration in $\Gamma' \setminus K_{t+\alpha}$. 

First we split $\p K_{t+\alpha}$ into  $O((R/l)^{d-1})$ hyperrectangles (of dimension $d-1$) $I_i$ of sidelength $\in [l/2,3l/2]$.
For each of them we consider a hyperrectangle (of dimension $d$) included in $K_R\backslash K_{t+\alpha}$ which has one side equal to $I_i$. By perturbing the sizes of the sides, we may have a hyperrectangle with aspect ratios in $[1/2,3/2]$ (so all sides have sizes in $[l/2, 3l/2]$).  This forms a disjoint collection $\mathcal{K}_i$ (we use the same notation as in the previous step, even though it does not correspond to the same rectangles). We let $g_i$ be the restriction of $g$ on $\p K_{t+\alpha}$ to $I_i$,  and let $m_i$ be defined by
\begin{equation}\label{eq:pick m_i}
c_d (m_i-1) |\mathcal{K}_i | = - \int_{\p \mathcal{K}_i} g_i .
\end{equation}
Let us check that $|m_i-1|<\hal$.
Using the Cauchy-Schwarz inequality and  \eqref{borg}  we have
\begin{equation}\label{eq:estim m_i}
 c_d|m_i-1|\le  l^{\frac{-d-1}{2}} \left(\int_{\p \mathcal{K}_i} g_i^2\right)^{1/2} \le  l^{\frac{-d-1}{2}}  \left(\int_{\p K_{t+\alpha} }g^2\right)^{1/2}  \le C l^{-1/2-d/2}  \frac{R^{d/2}}{L^{1/2}}.
\end{equation}
It is clear that we may choose 
\begin{equation}\label{eq:lengthscales}
1\ll l\ll L\ll R 
\end{equation}
such that this is $o(1)$ as $R\to \infty$, which we do from now on. Since $m_i \sim 1$ and $|\mathcal{K}_i|\sim C l ^d \gg 1$ it is also clear that we may perturb once more the sizes of the sides of $\mathcal{K}_i$ to ensure in addition that $m_i |\mathcal{K}_i| \in \mn$, with the previous conditions preserved. We may then apply Lemma \ref{lemcs2}  over $\mK_i$, this yields a $u_i$ satisfying
$$
\begin{cases}
-\Delta u_i  =c_d ( m_i-1)    \text{ in }  \mathcal{K}_i\\
\nab u_i \cdot \vec{\nu} =g_i  \text{ on }   I_i\\
\nab u_i \cdot \vec{\nu} = 0  \text{ on }  \partial \mathcal{K}_i \setminus I_i
\end{cases}
$$
and
$$
\int_{\mathcal{K}_i} |\nab u_i|^2 \le  C l \int_{I_i} g^2,
$$
where $C$  depends only on $d$.

We next split each $\mathcal{K}_i$ into hyperrectangles of sides $\in [1/2,3/2]$ and of volume $m_i ^{-1}$, which is possible since by construction $m_i |\mathcal{K}_i| \in \mn$. On each of these hyperrectangles we apply Lemma \ref{lemcs1}. Pasting together the $\nab u$'s given by that lemma  yields an $X_i$ defined over $\mathcal{K}_i$, such that
\begin{equation*}
\left\{\begin{array}{ll}
-\div X_i = c_d \left( \sum_k \delta_{x_k} -m_i\right)  & \text{in} \ \mathcal{K}_i\\
X_i\cdot \vec{\nu} =0 & \text{on} \ \p \mathcal{K}_i.
\end{array}\right.
\end{equation*}
Also, denoting $(X_i)_\eta = \Phi_\eta (X_i)$, it follows from the fact that $|m_i -1| < \hal$ that
\begin{equation}\label{nrjf2}
\int_{\mathcal{K}_i} |(X_i)_\eta|^2 \le C_\eta l^d.
\end{equation}
We note that by construction, all of these $x_k$'s are at distance at least $1/2$ of $\p \mathcal{K}_i$ hence, since $\eta \le 1/2$, the balls $B(x_k, \eta)$ do not intersect $\p \mathcal{K}_i$.

Defining now $Y_i=X_i+\nab u_i$ in $\mathcal{K}_i$ we have obtained  a solution of
\begin{equation*}
\left\{\begin{array}{ll}
-\div Y_i = c_d \left( \sum_k \delta_{x_k} -1\right)  & \text{in} \ \mathcal{K}_i\\
Y_i \cdot \vec{\nu} =g_i & \text{on} \ \p \mathcal{K}_i
\end{array}\right.
\end{equation*}
 and
\begin{equation}\label{nrjv2}
\int_{\mathcal{K}_i} |(Y_i)_\eta|^2\le 2 \int_{\mathcal{K}_i} |(\nab f_i)_\eta|^2+2\int_{\mathcal{K}_i} |\nab u_i|^2\le
C_\eta l^d  + C l \int_{\p \mathcal{K}_i} g_i^2.
\end{equation}
The new domain $\Gamma'$ is $K_{t+\alpha} \cup\cup_i \mathcal K_i$ and we may paste the $Y_i$ constructed above to obtain a $Y$ satisfying
\begin{equation*}
\left\{\begin{array}{ll}
-\div Y = c_d \left( \sum_k \delta_{x_k} -1\right)  & \text{in} \ \Gamma' \\
Y_i \cdot \vec{\nu} =g & \text{on} \ \partial K_{t+\alpha}\\
Y_i \cdot \vec{\nu} = 0 & \text{on} \ \partial \Gamma' \partial K_{t+\alpha}
\end{array}\right.
\end{equation*}

\noindent {\it - Step 4: completing the construction.} There remains to complete the construction in $K_R \setminus \Gamma' = K_R \backslash (K_{t+\alpha}\cup \cup_i \mathcal{K}_i)$. To this end, note that~\eqref{eq:pick m_i} implies 
$$ c_d \sum_i m_i |\mathcal{K}_i| - c_d \sum_i |\mathcal{K}_i| = \int_{\partial K_{t+\alpha}} g $$
whereas integrating the first equation of~\eqref{concleq} yields 
$$ \int_{\partial K_{t+\alpha}} g \in c_d \mn$$
because by construction $|K_{t+\alpha}| \in \mn$. Since we have also ensured $m_i|\mathcal{K}_i| \in \mn$ we deduce that $ \sum_i |\mathcal{K}_i| \in \mn$. Recalling~\eqref{eq:quant volume} we deduce that  
$$ \left| K_R \backslash (K_{t+\alpha}\cup \cup_i \mathcal{K}_i) \right| \in \mn.$$
We may thus tile  $K_R \backslash (K_{t+\alpha}\cup \cup_i \mathcal{K}_i)$ by hyperrecangles volume $1$ ands sidelengths $\in [1/2,3/2]$ on which we apply Lemma \ref{lemcs1} with $m=1$.  We paste together the  gradients of the functions  obtained this way. It gives a contribution to the energy proportional to the volume i.e.  of $C R^{d-1} L$.
We also paste this with the $Y_i$'s of the preceding step. It gives a final vector field  $\hj$ defined in $K_R \backslash K_{t+\alpha}$, satisfying
\begin{equation*}
\left\{\begin{array}{ll}
-\div \hj = c_d \left(\sum_k \delta_{x_k}  -1\right) & \text{in} \  K_R \backslash K_{t+\alpha}\\
\hj \cdot \vec{\nu}=g& \text{on} \ \p K_{t+\alpha}\\
\hj \cdot \vec{\nu} = 0 & \text{on} \ \p K_R.
\end{array}\right.
\end{equation*}
Here all the points $x_k$ are at distance $\ge \hal >\eta$ from $\p K_{t+\alpha}$. The energy of $\hj$ can now be controlled as follows: 
\begin{multline}\label{nrjh2}
\int_{K_R \backslash K_{t+\alpha}} |\hj_\eta|^2 \le C R^{d-1}L + C_\eta  R^{d-1} l^{1-d}   l^d + l \sum_i\int_{\p \mathcal{K}_i} g_i^2
=  C_\eta  R^{d-1} L+     l\int_{\p K_{t+\alpha}} g^2\\
\le C_\eta (R^{d-1}L + l R^d/L) .
\end{multline}
Combining with \eqref{concleq}, we have thus obtained an $\hj$  on all $K_R$ satisfying \eqref{eq:proscreen field}--\eqref{eq:proscreen flux}.
Using \eqref{franginter}, \eqref{nrjh2} and the conditions $l\ll L\ll R$ we have 
$$\int_{K_R} |\hj_\eta|^2 \le \int_{K_R} |\j_\eta |^2 + C_\eta( R^{d-1} L + lR^d/L) \le\int_{K_R} |\j_\eta|^2 +o(R^d),$$

\smallskip

\noindent
{\it - Step 5: making $\hj$ a gradient.} $\hj$ satisfies \eqref{eq:proscreen field} but is not necessarily a gradient. 
We claim that we may add  to it a vector-field $\mathcal{X}$ to make it a gradient, while still having \eqref{eq:proscreen field}--\eqref{eq:proscreen flux}, and decreasing the energy.  
It is standard by Hodge decomposition that we may find  $\mathcal{X}$ satisfying  $\div \mathcal{X}=0$ in $K_R$ and $\mathcal{X} \cdot \vec{\nu}=0$ on $\p K_R$, and such that $\bar{\j}:= \hj+ \mathcal{X}$ is a gradient.  Then $\bar{\j}$ still satisfies \eqref{eq:proscreen field}--\eqref{eq:proscreen flux}, and 
we note that 
$$\int_{K_R} |\hj_\eta|^2 =\int_{K_R} |\bar{\j}_\eta |^2 + \int_{K_R}|\mathcal{X}|^2 - 2\int_{K_R} \bar{\j}_\eta \cdot \mathcal{X}.$$
Since $\bar{\j}$ is a gradient and thus $\bar{\j}_\eta$ too,  and since $\div \mathcal{X}=0 $ and $\mathcal{X} \cdot \vec{\nu}=0$ on $\p K_R$, we find that the last integral vanishes, and so
$$\int_{K_R} |\bar{\j}_\eta|^2 \le \int_{K_R} |\hj_\eta|^2.$$
This last operation has thus not increased the total energy and the vector-field $\bar{\j}$ satisfies all the desired conclusions.
\end{proof}

An immediate consequence of the screening process is the following

\begin{pro}[\textbf{Periodic minimizing sequences}]\label{rem5.9}\mbox{}\\
Given  $\eta>0$ small enough, for any $R$ large enough and any  hyperrectangle $K_R$ of sidelengths in $[2R, 3R]$ such that $|K_R|\in \mn$,    there exists an $\bar{\j}$ satisfying \eqref{eq:proscreen field}, \eqref{eq:proscreen flux}, \eqref{eq:mindistr plus} and 
\begin{equation}\label{eqminw1}
\limsup_{R\to \infty} \dashint_{K_R} |\bar{\j}_\eta|^2 - (\kappa_d \g(\eta)+ \gamma_2 \indic_{d=2}) \le \inf_{\bai} \W_\eta\end{equation}
Also,  $\min_{\bai}\W_\eta$  admits a minimizing sequence made of periodic vector fields, and so does $\min_{\bai} \W$.
\end{pro}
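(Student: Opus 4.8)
The statement is essentially an immediate consequence of the screening construction of Proposition~\ref{proscreen}, once this is applied to a configuration on $K_R$ that is near-optimal for $\W_\eta$ and has well-separated points; the remaining assertions then follow by periodizing and, in the case of $\W$, by letting $\eta\to0$ with the help of Lemma~\ref{lemequiv}. I would argue as follows.

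\emph{Step 1 (transplanting a near-optimal configuration onto $K_R$).} Fix $\eta$ small and $\epsilon>0$, and pick $\j\in\bai$ with $\W_\eta(\j)\le\inf_{\bai}\W_\eta+\epsilon$, so that $\limsup_{S\to\infty}\dashint_{K_S}|\j_\eta|^2\le L:=\inf_{\bai}\W_\eta+\kappa_d\g(\eta)+\gamma_2\indic_{d=2}+\epsilon$. Given a hyperrectangle $K_R$ with sidelengths in $[2R,3R]$ and $|K_R|\in\mn$, average $\dashint_{K_R+\tau}|\j_\eta|^2$ over translations $\tau\in K_S$: Fubini's theorem gives $\dashint_{K_S}\bigl(\dashint_{K_R+\tau}|\j_\eta|^2\bigr)d\tau\le\tfrac{|K_{S+3R/2}|}{|K_S|}\dashint_{K_{S+3R/2}}|\j_\eta|^2\to L$ as $S\to\infty$, so for $R,S$ large some translate satisfies $\dashint_{K_R+\tau}|\j_\eta|^2\le L+\epsilon$. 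Translating back to $K_R$, the restriction of $\j_\eta$ to $K_R$ is, after an inessential modification near $\p K_R$ turning it into a field of the form \eqref{eq:Laplace h}, a field $h$ on $K_R$ with $\int_{K_R}|\nabla h|^2\le(L+2\epsilon)|K_R|$ for $R$ large.

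\emph{Step 2 (separation and screening).} Whenever \eqref{cond1}--\eqref{cond2} fail for the points of $h$ in $K_{R-1}$, Proposition~\ref{pro:sep points} produces a new admissible $h$ with $\int_{K_R}|\nabla h|^2$ decreased by at least a fixed constant; since this energy is nonnegative, finitely many such steps yield an $h$ satisfying \eqref{cond1}--\eqref{cond2} with $\int_{K_R}|\nabla h|^2\le(L+2\epsilon)|K_R|$ still. Feeding $\nabla h$ into Proposition~\ref{proscreen} gives $\bar\j$ on $K_R$ satisfying \eqref{eq:proscreen field}, \eqref{eq:proscreen flux} and \eqref{eq:mindistr plus}, with $\int_{K_R}|\bar\j_\eta|^2\le\int_{K_R}|\nabla h|^2+o(R^d)$, hence
$$\dashint_{K_R}|\bar\j_\eta|^2-\bigl(\kappa_d\g(\eta)+\gamma_2\indic_{d=2}\bigr)\le\inf_{\bai}\W_\eta+3\epsilon+o_R(1).$$
Letting $\epsilon=\epsilon_R\to0$ gives \eqref{eqminw1}.

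\emph{Step 3 (periodization, and the case of $\W$).} Since $\bar\j\cdot\vec\nu=0$ on $\p K_R$, $\bar\j$ has exactly $|K_R|\in\mn$ simple points \eqref{eq:num points pre}, and these lie at distance $\ge r_0/10>\eta$ from $\p K_R$, tiling $\mr^d$ by translates of $K_R$ and pasting copies of $\bar\j$ produces a field $\widetilde\j\in\bai$ (no divergence is created across the interfaces, and the density is $1$) whose smeared field is exactly the periodization of $\bar\j_\eta$; therefore $\W_\eta(\widetilde\j)=\dashint_{K_R}|\bar\j_\eta|^2-(\kappa_d\g(\eta)+\gamma_2\indic_{d=2})$, which tends to $\inf_{\bai}\W_\eta$ as $R\to\infty$, a periodic minimizing sequence for $\W_\eta$. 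For $\W$: pick $\j\in\bai$ with $\W(\j)=\liminf_{\eta\to0}\W_\eta(\j)\le\inf_{\bai}\W+\epsilon$; then $\W_\eta(\j)\le\inf_{\bai}\W+2\epsilon$ for some arbitrarily small $\eta$, so $\inf_{\bai}\W_\eta\le\inf_{\bai}\W+2\epsilon$, and by the above there is a periodic $\widetilde\j$ with well-separated points and $\W_\eta(\widetilde\j)\le\inf_{\bai}\W+3\epsilon$. By Lemma~\ref{lemequiv} on a period cell $U$ (with $a\equiv1$ and $\#(\Lambda\cap U)=|U|$), $\W_\eta(\widetilde\j)=\tfrac1{|U|}W(\widetilde\j,\indic_U)+o_\eta(1)$ with $o_\eta(1)$ depending only on $d$; letting $\eta\to0$ identifies $\W(\widetilde\j)$ with $\tfrac1{|U|}W(\widetilde\j,\indic_U)$, whence $\W(\widetilde\j)\le\W_\eta(\widetilde\j)+|o_\eta(1)|\le\inf_{\bai}\W+4\epsilon$ for $\eta$ small, and $\epsilon\to0$ finishes the proof. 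All the substance is in Propositions~\ref{pro:sep points} and \ref{proscreen}; the only genuinely delicate points here are the translation-averaging that transplants a near-optimal configuration onto the prescribed $K_R$ (and the adjustment near $\p K_R$) and the fact that smearing commutes with periodization, which is exactly what the zero-flux condition \eqref{eq:proscreen flux} and the uniform separation \eqref{eq:mindistr plus} from $\p K_R$ are there to ensure.
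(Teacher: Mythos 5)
Your argument follows the same overall route as the paper (near-optimal field on a box, separation via Proposition~\ref{pro:sep points}, screening via Proposition~\ref{proscreen}, periodization), and your Step~1 translation-averaging actually spells out the inequality between $\limsup_R F_{\eta,R}/|K_R|$ and $\inf_{\bai}\W_\eta$ that the paper dismisses as ``obvious''; that part is a useful clarification rather than a deviation.

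There is, however, a genuine gap in Step~3. Tiling $\mr^d$ by translates of $K_R$ and pasting copies of $\bar\j$ does \emph{not} in general produce an element of $\bai$: admissibility requires $\widetilde\j=\nabla H$ for some globally defined $H$, and the Neumann condition $\bar\j\cdot\vec\nu=0$ on $\partial K_R$ only ensures the normal components match across interfaces, not that the potential $h_R$ matches (zero normal derivative on a face does not force $h_R$ to be constant there). Thus $\widetilde\j$ is generically an $L^2_{loc}$ field with the right divergence but no global potential, so the assertion $\widetilde\j\in\bai$ is unjustified, and the subsequent computation of $\W_\eta(\widetilde\j)$ is not defined. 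The paper avoids this precisely by \emph{reflecting} $h_R$ across $\partial K_R$ first (so the reflected function is $C^1$ across the reflection planes because $\partial_\nu h_R=0$) and then extending the reflected function, now defined on a hyperrectangle of doubled sidelengths, periodically; this yields a bona fide periodic gradient. Alternatively you could project your pasted $\widetilde\j$ onto the gradient part of the Hodge decomposition on the periodicity torus (this keeps the divergence, decreases the energy, and commutes with $\Phi_\eta$ because the added correction is the same gradient $\sum_p\nabla f_\eta(\cdot-p)$ in both cases), but either way this step needs to be made explicit. The remainder of your argument, including the use of Lemma~\ref{lemequiv} in place of the paper's diagonal argument to pass from $\W_\eta$ to $\W$, is fine once the periodized field is genuinely in $\bai$.
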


The construction of periodic minimizing sequences proves the corresponding claim in Theorem \ref{thm:renorm ener}. The existence of a minimizer is a direct consequence of Theorem \ref{th1} as  discussed in Section \ref{sec7}. We also give  a self-contained proof in Appendix \ref{sec:appendix}.

\begin{proof}
It suffices to take $K_R= [-R,R]^d$ and $\nab h_\eta$ which approximates $F_{\eta, R}$ and apply Propositions  \ref{pro:sep points} and \ref{proscreen}. This yields the desired  $\bar{\j}$ satisfying \eqref{eq:proscreen field}, \eqref{eq:proscreen flux}, \eqref{eq:mindistr plus}, \eqref{eq:num points pre} and 
$$\limsup_{R\to \infty} \dashint_{K_R} |\bar{\j}_\eta|^2 - (\kappa_d \g(\eta)+ \gamma_2 \indic_{d=2}) \le  \limsup_{R\to \infty} \frac{F_{\eta, R}}{|K_R|} -  (\kappa_d \g(\eta)+ \gamma_2 \indic_{d=2})$$
and then  \eqref{eqminw1}  is an obvious consequence of the definitions of $\W_\eta$ and $F_{\eta, R}$.
In addition, $\bar{\j}$  is  the gradient of a function $h_R$ over $K_R$, with $\nab h_R \cdot \vec{\nu}=0$ on $\p K_R$.  We may then reflect $h_R$ with respect to $\p K_R$ to make it into a function over $[-R, 3R]^d$  which can then be extended periodically to the whole $\mr^d$. Its gradient then defines a periodic vector field $\mathcal Y_R$ over $\mr^d$, which belongs to $\mathcal A_1$.  By periodicity of $\mathcal Y_R$ and definition of $\W_\eta$, we obviously have $\W_\eta(\mathcal Y_R) =  \dashint_{K_R} |\bar{\j}_\eta|^2 - (\kappa_d \g(\eta)+ \gamma_2 \indic_{d=2}) $. In view of \eqref{eqminw1}, this implies that  we have 
$\limsup_{R\to \infty}\W_\eta(\mathcal Y_R) \le \inf_{\bai} \W_\eta$.  We thus conclude that $\inf_{\bai} \W_\eta$ 
admits a minimizing sequence made of periodic vector fields. Using a diagonal argument to deal with the $\eta \to 0$ limit, $\inf_{\bai} \W$ also does.
\end{proof}

\begin{proof}[Proof of Proposition \ref{pro:Wbb}]
In view of Proposition \ref{rem5.9}, to bound from below $\W_\eta$,  it suffices to bound from below $\dashint_{K_R}| \bar{\j}_\eta|^2$ with $\bar{\j}$ satisfying \eqref{eq:proscreen field}, \eqref{eq:proscreen flux}, \eqref{eq:mindistr plus} and \eqref{eq:num points pre}. But then, 
 we are in a situation where Lemmas \ref{lemequiv} applies. With \eqref{eq:num points pre}, the combination of these results easily yields 
$$\limsup_{R\to \infty} \dashint_{K_R} |\bar{\j}_\eta|^2 - (\kappa_d \g(\eta)+ \gamma_2 \indic_{d=2}) \ge - C $$ where $C$ depends only on $d$ and on $r_0$, which itself depends also only on $d$. 
In view of \eqref{eqminw1} this concludes the proof.
\end{proof}

\section{Upper bound to the ground state energy}\label{sec7}

In this section we complete the proof of Theorem \ref{th1} by proving the following

\begin{pro}[\textbf{Energy upper bound}]\label{promajo}\mbox{}\\
For any $\ep>0$ there exists $r_1>0$ and for any $n$ a set $A_n \subset (\mr^d)^n$ such that 
\begin{equation}\label{eq:volume min}
|A_n|\ge n! \left(\pi (r_1)^d/n\right)^n  
\end{equation}
and for any $(y_1, \dots, y_n) \in A_n$ we have
\begin{equation}\label{bsup 3d}
\limsup_{n\to \infty}  n^{2/d-2} \left(\w(y_1, \dots, y_n) - n^2 \I[\mu_0] \right) \le 
{\xi_d + \ep}  \quad \text{if } d\ge 3
\end{equation}
\begin{equation}\label{bsup 2d}
\limsup_{n\to \infty} n^{-1} \left(
\w(y_1, \dots, y_n) - n^2 \I[\mu_0] +  \frac{n}{2}\log n \right)\\ \le {\xi_2+\ep}\quad \text{if }  d=2.
\end{equation}

\end{pro}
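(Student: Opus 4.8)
\textbf{Proof plan for Proposition \ref{promajo}.} The plan is to construct a near-optimal test configuration by a ``copy-paste'' procedure, reversing the lower bound analysis of Sections \ref{sec5}--\ref{sec:univ low bound}. First I would fix $\ep>0$ and use Theorem \ref{thm:renorm ener}, together with Proposition \ref{rem5.9}, to obtain for each small $\eta>0$ a periodic vector field $\bar{\j}^{(\eta)}$ on some large hyperrectangle $K_{R_\eta}$ with $|K_{R_\eta}|\in\mn$, satisfying the screening properties \eqref{eq:proscreen field}--\eqref{eq:proscreen flux}, the separation bound \eqref{eq:mindistr plus}, the volume-point count \eqref{eq:num points pre}, and the energy bound $\dashint_{K_{R_\eta}}|\bar{\j}^{(\eta)}_\eta|^2 - (\kappa_d\g(\eta)+\gamma_2\indic_{d=2}) \le \inf_{\bai}\W_\eta + \ep$. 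Letting $\eta\to 0$ and using Proposition \ref{pro:Wbb} (to control $\W_\eta$ from below uniformly) together with a diagonal extraction, I can arrange that this quantity is $\le \alpha_d + \ep$, i.e. I have a ``screened brick'' of jellium of density $1$ with nearly optimal renormalized energy, whose points are well-separated all the way to the boundary and whose field has vanishing normal flux on $\partial K_{R_\eta}$.

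Next I would use this brick to build the configuration at density $\mu_0(x)$. The idea, following \cite{gl13,ss2d}: partition the droplet $\E$ (up to a boundary layer of volume $o(n)$ after blow-up, harmless by \eqref{ass1}) into small cells $Q_j$ of size $\delta$ on which $\mu_0$ is essentially constant, equal to some $m_j\in[\bw,\wb]$ by \eqref{ass2}. After the blow-up $x'=n^{1/d}x$, each cell $Q_j$ becomes a large region $n^{1/d}Q_j$, which I further tile by rescaled copies of $K_{R_\eta}$ shrunk by the factor $m_j^{-1/d}$ (so that the local density becomes $m_j$), adjusting sizes slightly and correcting fluxes via Lemma \ref{lemcs2} and Lemma \ref{lemcs1} exactly as in Step 3--4 of Proposition \ref{proscreen}, so that the pasted field has zero normal component across all cell boundaries and the total number of points equals $n$ (using that $\int\mu_0=1$ and rounding errors are $O(\delta^{-d})$ per cell, negligible). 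The scaling relation \eqref{eq:scale renorm} then converts the per-cell energy $\alpha_d+\ep$ at density $1$ into $m_j^{2-2/d}(\alpha_d+\ep)$ (resp. $m_j\alpha_2 - m_j\log m_j + m_j\ep$ in $d=2$); summing over cells and using a Riemann-sum approximation gives $\frac{1}{c_d}|\E|\int_\E \W$-type integral converging to $c_d\,\xi_d$ (resp. $\xi_2$) as $\delta\to 0$. The term $2n\sum_i\zeta(x_i)$ vanishes since all constructed points lie in $\E=\{\zeta=0\}$, and feeding everything into the splitting formula of Lemma \ref{lemsplit} (which is an equality here because the points are well-separated at scale $\ge r_0/(10)$ in the blown-up picture, i.e. $\ge c n^{-1/d}$ in original variables, so $\min_{i\ne j}|x_i-x_j|\ge 2\ell$ for $\ell = n^{-1/d}\eta$ once $\eta$ is small) yields \eqref{bsup 3d}--\eqref{bsup 2d}.

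For the measure estimate \eqref{eq:volume min}: since the constructed points are separated by at least $c\,n^{-1/d}$, one may wiggle each point independently within a ball of radius $r_1 n^{-1/d}$ (for $r_1$ small depending only on $d$ and $r_0$) without destroying the separation, and by continuity of $\w$ and $\mu_0$ (via \eqref{ass2}) the energy bound is preserved up to $\ep$; this gives a set $A_n$ of the required product form with $|A_n|\ge n!\,(\pi r_1^d/n)^n$, the $n!$ accounting for relabelings of indistinguishable points. Here I would need to check that perturbing the points changes $\sum_i\zeta(x_i)$ by $o(n^{2/d-1})$ and changes the electrostatic energy controllably; since $\zeta$ is Lipschitz near $\E$ and a displacement is $O(n^{-1/d})$, the first is $O(n^{1-1/d})=o(n^{2-2/d})$ for $d\ge 2$, and the second follows from the usual estimates on Coulomb energy under small displacements of separated charges. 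The main obstacle is the second paragraph: carrying out the screening-and-pasting on a varying-density droplet so that \emph{fluxes match across every cell interface while the point count is exactly $n$ globally} requires careful bookkeeping of rounding errors and a clean Riemann-sum argument, and one must ensure the $\eta\to 0$ and $\delta\to 0$ limits are taken in the right order relative to $n\to\infty$ (a triple limit, handled by diagonal extraction, as in \cite{ss2d}).
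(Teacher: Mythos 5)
Your construction is essentially the paper's: screened periodic near-minimizers from Proposition \ref{rem5.9}, rescaled to local density, glued across cell interfaces using the vanishing normal flux, energy evaluated by the scaling relation \eqref{eq:scale renorm} and a Riemann sum, fed into the exact-splitting case of Lemma \ref{lemsplit} (points well-separated by $\gtrsim r_0 n^{-1/d}$), and $A_n$ obtained by wiggling each point in a ball of radius $r_1 n^{-1/d}$. Two remarks on where the details differ from the paper and where your writeup has a gap.

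First, the organization of scales. The paper tiles $\E'=n^{1/d}\E$ directly by hyperrectangles $K$ of blown-up side $\sim 2R$ (Lemma \ref{rect}), arranged so that $\int_K\mu_0'\in\mn$, and puts \emph{one} brick per $K$ at density $m_K=\dashint_K\mu_0'$. You instead introduce an intermediate macroscopic scale $\delta$ on which $\mu_0$ is nearly constant and then tile each blown-up cell by many copies of the brick. This double tiling works but is more cumbersome: the integer-mass constraint that Lemma \ref{rect} delivers for free must now be re-established by adjusting brick counts across the $\sim\delta^{-d}$ macroscopic cells, and your invocation of ``rounding errors $O(\delta^{-d})$ per cell, negligible'' is not a proof; the paper's Lemma \ref{rect} is precisely what makes this clean.

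Second, and more importantly, you have a gap at the step where you ``feed everything into the splitting formula of Lemma \ref{lemsplit}.'' That lemma (applied with equality for well-separated points) expresses the energy in terms of $\int|\nabla \hel|^2$, where $\hel$ is defined via the \emph{exact} background $\mu_0$, not your piecewise-constant approximant $\tilde\mu_0$. Your glued field solves $-\div \j = c_d(\nu'_n - \tilde\mu_0')$, not $-\div\j= c_d(\nu_n'-\mu_0')$, and the mismatch is not automatically negligible. The paper resolves this by adding, within each cell $K$, a correction $\nabla u_K$ with $-\Delta u_K = c_d(m_K-\mu_0')$ and Neumann boundary data, so that the constructed $\j_K=\nabla u_K+\sigma_{m_K}\hj$ has divergence exactly matching $\mu_0'$; the estimate \eqref{ellreg}, $\|\nabla u_K\|_{L^2(K)}\le C_R n^{-1/d}$, uses $\mu_0\in C^1$ to show this correction costs $o(1)$ per cell. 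Only with a field of the correct divergence can one write $\int|\j_\eta|^2=\int|\nabla\het|^2+\int|\j_\eta-\nabla\het|^2\ge\int|\nabla\het|^2$ (the cross term vanishes by Hodge), which is how the paper compares the trial energy to the splitting formula. You should insert this Neumann correction (per macroscopic cell or per brick) before invoking Lemma \ref{lemsplit}; otherwise the chain of inequalities does not close. Everything else you flag (the triple limit $n\to\infty$, $R\to\infty$ or $\delta\to0$, $\eta\to0$; the $\zeta$-term vanishing on $\E$; the perturbation stability giving $|A_n|\ge n!(\pi r_1^d/n)^n$) matches the paper's treatment.
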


To deduce Theorem \ref{th1} it suffices to let $\ep \to 0$ in \eqref{bsup 3d} (respectively \eqref{bsup 2d}). This yields the upper bound corresponding to \eqref{eq:intro 222} (respectively \eqref{eq:intro 223}).  On the other hand, we have a lower bound for the ground state energy, given by Proposition \ref{promino}. Comparing the two yields the result of Theorem \ref{th1}, where  it also comes as a by-product  that $P$ minimizes $\W$, which can only happen if  $\j$ minimizes $\W$ over $\overline{\mathcal{A}}_{\mu_0(x)}$ for $P-$a.e. $(x,\j)\in X$. 
Note that this also proves the existence result in Theorem~\ref{thm:renorm ener}: for $P$ obtained in Theorem~\ref{th1}, we have $P$-a.e. $(x,\j)$ minimizes $\W$ in $\overline{\mathcal{A}}_{\mu_0(x)}$, hence  this implies the existence of a minimizer for $\W$ in some $\overline{\mathcal A}_m$, hence in all $\overline{\mathcal A}_m$ for all $m$ by \eqref{eq:scale renorm}. The difficult part is of course the existence of the probability measure $P$, which employs the ergodic framework of \cite[Theorem 7]{ss2d}. In Appendix \ref{sec:appendix}  we provide a more direct proof where this method is applied directly to the renormalized energy.

\begin{remark}\label{remzeta2}
In view of Remark \ref{remzeta}, comparing upper and lower bounds  also yields that 
for minimizers (or almost minimizers) of $\w$ we have $\sum_{i=1}^n\zeta (x_i)= o(n^{1-2/d}) $ as $n\to \infty$. Since $\zeta$ is expected to typically grow quadratically away   from $\E$, this provides a control on how far from $\E$ the points can be. In fact, arguing as in \cite{rns} one can show that for minimizers, there are no points outside $\E$. 
\end{remark}

The fact that the upper bound holds, up to a small error $\ep$, for any configuration in $A_n$, a set which has a reasonably large volume in configuration space, will be crucial when studying the fluctuations of the Gibbs measure at finite temperature. It would be very interesting to know more precisely the behavior of $\W$ around a minimizing configuration.

\medskip

The proof relies on an explicit construction, using the test charge configurations and electric fields of Corollary \ref{rem5.9}. Roughly speaking, we fix $\eta>0$ small and  split $\E$ into hyperrectangles of sidelengths of order $R n^{-1/d}$ for some $R$ that will ultimately tend to $\infty$. Equivalently we split $\E'$, the blown-up of $\Sigma$ at scale $n ^{1/d}$, into hyperrectangles of size $R$.  In each hyperrectangle $K$ we put a configuration of points constructed via Corollary~\ref{rem5.9}, properly scaled so that the local electric field is in $\overline{\mathcal{A}}_{m_K}$ where $m_K= \dashint_K \mu_0' $ is the mean value of $\mu_0'$ in the hyperrectangle $K$. Thanks to the screening property \eqref{eq:proscreen flux} the electric field is then defined globally in the domain by just gluing together the fields defined in each hyperrectangle. Since all the points in the test configuration so constructed are well-separated, we are in the case of equality of Lemma \ref{lemsplit} so the next to 
leading order in the energy upper bound will exactly 
be given by the electrostatic energy of charges smeared on a length $\eta n ^{-1/d}$ around the positions of our configuration, up to remainder terms that will ultimately become negligible. The conclusion in the limit $R\to \infty$ (tiling on a scale much larger than the interparticle distance) and $\eta \to 0$ (smearing charges on a scale much smaller than the interparticle distance) then follows by splitting this energy into the contribution of each hyperrectangle, using the scaling property \eqref{eq:scale renorm} and a Riemann sum argument to recover the $\mu_0$-dependent factors in the right-hand sides of \eqref{bsup 3d}, \eqref{bsup 2d}. Of course there is a boundary layer near $\p \E'$ that we cannot tile properly, and we will have to complete the construction there in a way that does not cost too much energy.

\begin{proof}
{\it - Step 1: we define the configuration in the interior of $\E'$ and estimate its energy.} 
We need to tile the interior of $\E'$ into hyperrectangles $K\in \cK_n$ and put in each a number of points equal to the charge of the background $\mu_0$. This requires that $\int_K  \mu_0'$ be an integer\footnote{Note that by scaling $\int_{\E'} \mu_0'= n \gg 1$.} for any $K$.
The following lemma provides such a tiling. It corresponds to \cite[Lemma 7.5]{ss2d} (with $q=1$) where the proof is provided in dimension 2. The adaptation to  higher dimension is immediate and left to the reader.

\begin{lem}[\textbf{Tiling the interior of $\E'$}]\label{rect}\mbox{}\\ 
There exists a constant $C_0>0$ such that, given any $R>1$,  there exists for any $n\in\mn^*$ a collection $\cK_n$ of closed hyperrectangles in $\E'$ with disjoint interiors,  whose sidelengths are between  $ 2R$ and $2R+C_0/R$,  and which are such that
\begin{equation}\label{eq:tiling 1}
\left\{x\in \E': d(x,\p \E')\ge C_0 R \right\} \subset  \bigcup_{K\in\cK_n} K:= \E'_{\rm{int}},
\end{equation}
\begin{equation}\label{eq:tiling 2}
\bigcup_{K\in\cK_n} K \subset \left\{ x\in \E': d(x,\p \E')\ge 2R \right\},
\end{equation}and 
\begin{equation}\label{entier}
\forall K\in\cK_n, \quad \int_K \mu_0' \in \mn.
\end{equation}
\end{lem}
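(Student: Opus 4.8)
\medskip
\noindent\emph{Plan of proof.} The plan is to follow the two–dimensional argument of \cite[Lemma 7.5]{ss2d} essentially verbatim; the only place where the dimension enters quantitatively is in the size of the correction needed to enforce the integer constraint \eqref{entier}, and there the estimate only improves as $d$ grows. Throughout I would work in the blown–up coordinates, so that $\E'=n^{1/d}\Sigma$ has $C^1$ boundary by \eqref{ass1}, $\mu_0'$ is supported in $\E'$ with $0<\underline m\le \mu_0'\le \overline m$ on $\E'$ by \eqref{ass2}, and $\int_{\E'}\mu_0'=n$. (For $n$ small the statement is vacuous, $\cK_n=\varnothing$, so one may assume $n$ large.)

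First I would fix a large constant $C_0$, depending only on $d$ and $\underline m$, and consider the regular grid of closed cubes of side $2R$. Let $\mathcal Q$ be the subcollection of those cubes that meet the ``deep interior'' $\{x\in\E':\dist(x,\p\E')\ge C_0R\}$. Since a cube of $\mathcal Q$ has diameter $2R\sqrt d$ and touches a point at distance $\ge C_0R$ from $\p\E'$, every point of such a cube is at distance $\ge (C_0-2\sqrt d)R\ge 2R$ from $\p\E'$ as soon as $C_0\ge 2+2\sqrt d$; running the same estimate on the cubes lying within two grid steps of $\mathcal Q$ shows that, after taking $C_0$ a bit larger (say $C_0\ge 2\sqrt{d+8}+2$), a fixed finite grid–neighbourhood of $\bigcup\mathcal Q$ still lies in $\{\dist(\cdot,\p\E')\ge 2R\}\subset\E'$, so on that neighbourhood $\mu_0'\ge\underline m$. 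By construction $\{\dist(\cdot,\p\E')\ge C_0R\}\subset\bigcup\mathcal Q\subset\{\dist(\cdot,\p\E')\ge 2R\}\cap\E'$, which will produce \eqref{eq:tiling 1}--\eqref{eq:tiling 2}.

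It then remains to re‑tile $\bigcup\mathcal Q$ by hyperrectangles satisfying \eqref{entier} with sidelengths only slightly larger than $2R$. I would keep the grid hyperplanes of the first $d-1$ coordinates exactly in place — this makes the corresponding hyperrectangle faces mutually compatible and fixes $d-1$ of the sidelengths equal to $2R$ — and adjust the cuts only in the last coordinate, tube by tube, a tube being (a $(2R)^{d-1}$–cube)$\times\R e_d$. Inside one tube the cubes of $\mathcal Q$ split into maximal ``runs'' of consecutive cubes; along each run I would stack hyperrectangles starting from the bottom, choosing each successive cutting height so that the $\mu_0'$–mass of the hyperrectangle just produced equals the ceiling of its mass at height $2R$, and stopping the first time the stack covers the run. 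Because $\mu_0'\ge\underline m$ on the region involved, the mass of a hyperrectangle of cross‑section $(2R)^{d-1}$ increases in the last sidelength at rate at least $\underline m(2R)^{d-1}$; hence an extra length $\le \underline m^{-1}(2R)^{1-d}\le C_0/R$ — here $d\ge 2$ and $R\ge 1$ are used — suffices to reach the next integer, by the intermediate value theorem, and the stopping rule forces the overshoot past the run to be at most one sidelength, so the stack stays in the safe neighbourhood. Letting $\cK_n$ be the family of all hyperrectangles so produced finishes the proof: the ratios of the sidelengths are bounded, $\cK_n$ is finite, and \eqref{eq:tiling 1}--\eqref{entier} hold.

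The only genuine (though mild) obstacle I foresee is the bookkeeping of the last paragraph: one must simultaneously keep the perturbed hyperrectangles consistent, so that they have disjoint interiors and tile a region containing $\bigcup\mathcal Q$ — this is why I perturb in a single coordinate and treat runs independently — and ensure that perturbations near $\p\E'$ never push a hyperrectangle into $\{\dist(\cdot,\p\E')<2R\}$, which is exactly what the stopping rule together with the choice of $C_0$ large (in terms of $\sqrt d$) guarantees. Everything else is routine and dimension‑independent.
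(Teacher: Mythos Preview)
Your proposal is correct and matches the paper's own treatment: the paper does not give an independent proof but simply states that the lemma ``corresponds to \cite[Lemma 7.5]{ss2d} (with $q=1$) where the proof is provided in dimension~$2$'' and that ``the adaptation to higher dimension is immediate and left to the reader.'' Your sketch carries out precisely this adaptation --- fixing the grid in $d-1$ coordinates and perturbing the cuts in the remaining one, with the integer correction of order $\underline m^{-1}(2R)^{1-d}\le C_0/R$ --- and the bookkeeping you flag (disjointness via tube-wise construction, overshoot controlled by the stopping rule and the choice of $C_0$) is exactly what is needed.
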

Note that the sizes of the hyperrectangles are controlled  thanks to the positive lower bound assumption on the density $\mu_0'$.

We apply this lemma, which yields a collection $\cK_n$.
For each $K\in \cK_n$ we denote 
\begin{equation}\label{eq:mK}
m_{K}:= \dashint_{K} \mu_0'. 
\end{equation}
 We  need to correct for the difference between $m_{K}$ and $\mu'_0$ by setting
$u_{K}$  to be the solution to 
\[
\left\{\begin{array}{ll} 
-\Delta u= c_d  \left(m_{K}- \mu_0'\right) & \text{in} \ K\\
\nab u \cdot \vec{\nu}=0 & \text{ on} \  \p K.\\
\int_K u = 0 \end{array}\right. 
\]
By standard elliptic regularity we have
\begin{equation}\label{ellreg}
\left\|\nab u_{K}\right\|_{L^2(K)} \le C_{ R} \left\|m_{K}- \mu_0'\right\|_{L^\infty(K)}\le C_{R} n^{-1/d}.
\end{equation}
Indeed, $\mu_0$ is assumed to be $C^1$ on its support (this is where we use this part of \eqref{ass2}) hence after scaling $\left\|\nab \mu_0'\right\|_{L^\infty(\E')} \le C n^{-1/d}.$
Next we denote $\sigma_m$ the rescaling to scale $m$: 
\[
\sigma_m \j= m^{1- 1/d} \j(m^{1/d} . )
\]
and define $\j_K $ to be 
\begin{equation}\label{premdef}
\j_K = \nab u_{K} +\sigma_{m_K } \hj\quad \text{if} \ x\in K
\end{equation}
 where $\hj$ is provided  by Proposition~\ref{rem5.9} applied to $ m_K^{1/d}\eta$ over the hyperrectangle $m_K^{1/d} K$ (and suitably translated),  hence 
 satisfies
\begin{equation}\label{njhj}
\int_{m_K^{1/d} K} | \hj_{m_K^{1/d} \eta}|^2-|m_K^{1/d} K|  
(\kappa_d \g(m_K^{1/d}\eta) +\gamma_2\indic_{d=2}) \le |m_K^{1/d} K|( \min_{\bai} \W_{m_K^{1/d} \eta}+o_R(1)). 
\end{equation}
The vector field $\j_K$ then satisfies 
\begin{multline}\label{elreg2} 
\int_K |\left(\j_K\right)_\eta|^2\le \int_{K} \left|\left(\sigma_{m_K} \hj\right)_\eta\right|^2 + \left\|\nab u_K\right\|_{L^2(K)}^2 + 2 \left\|\nab u_K\right\|_{L^2(K)}\|(\sigma_{m_K}\hj)_\eta\|_{L^2(K)}.
\end{multline}
But by change of scales, we have 
$$\int_{K} \left|\left(\sigma_{m_K} \hj\right)_\eta\right|^2 = m_K^{1-2/d} \int_{m_K^{1/d}K} \left|\hj_{ m_K^{1/d}\eta}\right|^2 $$
so \eqref{njhj} gives that  this is bounded above by 
$$m_K^{1-2/d} \left( m_K |K|( \kappa_d\g(\eta m_K^{1/d})+\gamma_2\indic_{d=2}) +m_K |K| \min_{\bai} \W_{ m_K^{1/d}\eta}+o_R(1))\right) .
$$ Indeed since $\mu_0$ is bounded above and below on its support (see \eqref{ass2}), $m_K$ also is.

Combining with the above,  \eqref{ellreg} and \eqref{elreg2}, and using the definition of $\g$, \eqref{eq:scale renorm} and the Cauchy-Schwarz inequality we are led to  
\begin{equation}\label{elreg3} 
\int_K|(\j_K)_\eta|^2\le
 m_K |K| \kappa_d\g(\eta) +  |K| \Big( \min_{\overline{\mathcal A}_{m_K} } \W_{\eta }+o_R(1)\Big) + C_R n^{-2/d} + C_{R, \eta}  n^{-1/d}
\end{equation} 
if $d=3$;  respectively, if $d=2$,
\begin{multline}\label{elreg4} 
\int_K|(\j_K)_\eta|^2\le
m_K |K|\left(\kappa_2\g(\eta)+\gamma_2  -\hal\kappa_2  \log m_K\right)    +  |K|  \left(\min_{  \overline{\mathcal A}_{m_K}  } \W_{\eta}+ \kappa_2 \frac{m_K}{2} \log m_K+o_R(1)\right) \\
=    m_K |K|(\kappa_2 \g(\eta)+\gamma_2)    +  |K| \left( \min_{  \overline{\mathcal A}_{m_K}    }\W_{\eta} +o_R(1)\right) + C_R n^{-1} + C_{R, \eta}  n^{-1/2}
\end{multline} 
The electric field in $\E'_{\mathrm{int}}$ is then set to be 
$\j_{\mathrm{int}}= \sum_{K\in \cK_n}\j_K.$
We can extend it by $0$ outside of $\E'_{\mathrm{int}}$, and it then satisfies 
\begin{equation}
\label{divjjjj}
-\div \j_{\mathrm{int}} = c_d \Big( \sum_{p\in \Lambda_{\mathrm{int}}} \delta_p - \mu_0'\Big)  \quad \text{in} \ \mr^d
\end{equation}
for some discrete set $\Lambda_{\mathrm{int}}$. Indeed, no divergence is created at the interfaces between the hyperrectangles since the normal components of $\nab u_{K}$ and $\hj$ are zero. In view of \eqref{eq:mindistr plus} all the points in $\Lambda_{\rm{int}}$ are of  simple multiplicity and  at a distance $>\frac{r_0}{10 \|\mu_0\|_{L^\infty}} $ from all the others. Note that integrating \eqref{divjjjj} we have 
$$\# \Lambda_{\mathrm{int}}= \int_{\E'_{\mathrm{int}}} \mu_0',$$ 
which is an integer.

\medskip
\noindent
{\it  - Step 2: we define the configuration near the boundary}. Since $\p \E'\in C^1$,  the set 
$$\E'_{\mathrm{bound}}:=\E'\backslash \E'_{\mathrm{int}}$$
is a strip near $\p \E'$ of volume $\le C n^{\frac{d-1}{d}}$ and width $\ge cR$ by  Lemma \ref{rect}. Since $\int_{\E'}\mu_0'= n$, $\int_{\E'_{\mathrm{bound}}} \mu_0'$ is also an integer.
We just need to place $\int_{\E'_{\mathrm{bound}}} \mu_0'$  points in $ \E'_{\mathrm{bound}}$, all separated by distances bounded below by some constant $r_0>0$ independent of $n$, $\eta$, and $R$  (up to changing $r_0$ if necessary).  Proceeding as in \cite[Section 7.3, Step 4]{ss2d}, using the fact that $\p \E'$  is $C^1$, we may split $\E'_{\mathrm{bound}}$ into regions $\mathcal{C}_i$ such that $\int_{\mathcal{C}_i}\mu_0'\in \mn$ and  $\mathcal{C}_i$ is a set with piecewise $C^1$ boundary, containing a ball of radius $C_1 R$  and contained in a ball $B_i$ of radius $C_2R$, where $C_1, C_2>0$ are universal. We then place $\int_{\mathcal{C}_i} \mu_0'$ points in $\mathcal{C}_i$, in such a way that their distances (and their distance to $\p \mathcal{C}_i$)  remain bounded below by $r_0>0$, and call $\Lambda_i$ the resulting set of points.
We then define $v_i$ to solve 
\begin{equation}
\label{divjjjjbound}
\left\{\begin{array}{ll}
-\Delta v_i = c_d \left( \sum_{p\in \Lambda_{i} } \delta_p - \mu_0'\indic_{\mathcal{C}_i} \right)  &  \quad \text{in} \ B_i\\
\nab v_i \cdot \vec {\nu}= 0 & \text{on} \ \p B_i,\end{array}\right.\end{equation}  and extend $\nab v_i$ by $0$ outside  $B_i$, this way, we have globally
$$-\div (\nab v_i)=  c_d \Big( \sum_{p\in \Lambda_{i} } \delta_p - \mu_0'\indic_{\mathcal{C}_i}\Big)\quad \text{in } \mr^d.$$
We then  set
$$\j_{\mathrm{bound}}:= \sum_i \nab v_i.$$

We can also  check that, arguing as in \cite[Section 7.3]{ss2d}, the energy of $\j_{\mathrm{bound}}$ is bounded by a constant depending on the number of points involved, times the volume of the boundary strip, that is
\begin{equation}\label{energybord}
\sum_i \int_{B_i} \left|\left( \j_{\mathrm{bound}}\right)_\eta \right| ^2 \le  C_{R,\eta} n^{1-1/d}.
\end{equation}

\noindent
{\it  - Step 3: we define the configuration globally and evaluate the energy}. We set 
$$\j =\j_{\mathrm{bound}}+ \j_{\mathrm{int}} \mbox{ in } \E' $$   
 and extend it by $0$ outside $\E'$,  and  we let 
$$\Lambda= \Lambda_{\mathrm{int}} \cup \cup_i \Lambda_i.$$ 
Then $\j$ satisfies
\begin{equation}\label{divjglob}
-\div \j = c_d \Big( \sum_{p \in \Lambda} \delta_p- \mu_0'\Big) \quad \text{in} \ \mr^d.
\end{equation} 
We also have $\# \Lambda= \int_{\E'} \mu_0'= n$ and we can define thus our test configuration as 
\begin{equation}\label{eq:test config}
\xbf = \{x_i= n^{-1/d} x_i'\}_{i=1}^n \mbox{ where } \Lambda = (x_1', \dots, x_n').
\end{equation}
Note that it depends on $R $ and $\eta$. 
There remains to bound $\w(x_1,\dots, x_n)$ from above. Since all the points in $\Lambda$ are separated by distances $>r_0$ fixed, the points in the collection $(x_1,\ldots,x_n)$ are separated by distances $>r_0 n ^{-1/d}$, where $r_0$ is independent of $\eta$. We may then  apply Lemma \ref{lemsplit} with $\ell = \eta n ^{-1/d}$, and if $\eta$ is small enough, we have $|x_i- x_j| \ge 2\ell$. We are then in the case of equality  in that lemma:  
\begin{equation}\label{eq:up pred}
\w(x_1,\dots,x_n)-n^2 \I[\mu_0] + \left(\frac{n}{2}\log n\right) \indic_{d=2} \le n^{2-2/d}\left(J_n(x_1, \dots, x_n) + C \eta^2\right),
\end{equation}
where we used that all the points are in $\E$ where the function $\zeta$ vanishes, and by definition of $J_n$,  and letting $h_{n,\eta}'$ be as in \eqref{defh},
\begin{multline}\label{eq:up F}
J_n (x_1, \dots, x_n) 
: = \frac{1}{c_d}\left(\frac{1}{n}\int_{\mr^d} |\nab \het|^2 -( \kappa_d \g(\eta)+\gamma_2\indic_{d=2})\right)
\\ \leq \frac{1}{c_d}\left(\frac{1}{n}\int_{\mr^d} |\j_\eta|^2 - (\kappa_d \g(\eta)+\gamma_2\indic_{d=2})\right).
\end{multline}
The last inequality is a consequence of \eqref{divjglob}:
\begin{align*}
\int_{\mr^d} |\j_\eta|^2 &= \int_{\mr^d} |\nab \het|^2 + \int_{\mr^d} |\j_\eta - \nab \het|^2 + 2 \int_{\R ^d} \nab \het \cdot \left( \j_\eta - \nab \het \right)\\
&= \int_{\mr^d} |\nab \het|^2 + \int_{\mr^d} |\j_\eta - \nab \het|^2 - 2 \int_{\R ^d} \het \,\div\left( \j_\eta - \nab \het \right)\\
&= \int_{\mr^d} |\nab \het|^2 + \int_{\mr^d} |\j_\eta - \nab \het|^2
\end{align*}
where the integration by parts is justified by the decay at infinity \footnote{Remark that the right-hand side of \eqref{defh} always has zero total charge and is compactly supported.} of $\j_\eta$ and $\nab \het$, and we have $\div (\j_\eta - \nab \het)=0$ by definition \eqref{hh}. Next we recall that by construction of  $\j$ we may write \begin{equation}\label{eq:rid boundary}
 \int_{\mr^d} |\j_\eta|^2 \leq \sum_{K \in \mathcal K_n} \int_{K} \left|(\j_K)_\eta\right|^2 + C_R n ^{1-1/d}
\end{equation}
where we used \eqref{energybord} to estimate the contribution of the boundary terms, and  summing  the bounds \eqref{elreg3}--\eqref{elreg4} over $K$ and inserting into \eqref{eq:rid boundary}, we obtain
\begin{multline}\label{eq:up presque 3d}
c_d n J_n (x_1, \dots, x_n) \\
\leq ( \kappa_d \g(\eta) + \gamma_2 \indic_{d=2})  \Big( \sum_{K\in \mathcal K_n} m_K |K|- n \Big) + \sum_{K\in \mathcal K_n} |K|  \Big(\min_{\overline{\mathcal A}_{m_K} } \W_{\eta} +o_R(1) \Big)  + C_R n ^{1-1/d}.
\end{multline}
First we note that from \eqref{eq:mK} 
$$\sum_{K\in \mathcal K_n} m_K |K| - n= \int_{\E'_{\mathrm{int}}  }\mu_0' - n = - \int_{\E'_{\mathrm{bound}} } \mu_0'= o(n).$$
 In view of the regularity of $\mu_0$ (which implies $\| m_K-\mu_0'\|_{L^\infty(K)}\le C_R n^{-1/d}$) and the properties of our tiling $\cK_n$, with \eqref{eq:mK} again, we can then also recognize a Riemann sum to see that 
$$ \sum_{K\in \cK_n} |K|\min_{\overline{\mathcal A}_{m_K} } \W_{\eta} \le \int_{\E'} \min_{\overline{\mathcal A}_{\mu_0'(x)} } \W_\eta\, dx + o_R(n),$$
using the continuity of $m\mapsto \min_{\overline{\mathcal{A}}_m}  \W_\eta$ which can be checked from \eqref{We}, \eqref{eq:scale renorm} and the continuity of $\eta\mapsto \Phi_\eta$. 
The proof is concluded by dividing \eqref{eq:up presque 3d}  by $c_d n$, using \eqref{eq:up pred}, and taking successively $n\to \infty$, $R$ large enough and then $\eta$ small enough (and changing the configuration of points accordingly). Using that 
\[
\min_{\overline{\mathcal A}_m} \W_{\eta} \to  \min_{\overline{\mathcal A}_m} \W
\] 
along a sequence  $\eta \to 0$, by definition, we can make 
\begin{equation}\label{enx}
\w(x_1, \dots, x_n) -n^2  \I[\mu_0] + \left(\frac{n}{2}\log n\right) \indic_{d=2} \le n^{2-2/d}\Big(\frac{1}{c_d} \int_{\E} \min_{\overline{\mathcal A}_{\mu_0(x)} }  \W \, dx+\hal \ep\Big).\end{equation}
Finally, inserting \eqref{eq:scale renorm} gives the upper bound result at the configuration $(x_1, \dots, x_n)$. 

There remains to perturb the points and prove the statement about the volume of the set $A_n$. 
We note that (still for the $R$ large enough and $\eta$ small enough that \eqref{enx} holds), the construction of $(x_1', \dots, x_n')$ has been made 
(at the blown up scale) by gluing together configurations over an order $n/R^d$ of hyperrectangles $K\in \mathcal{K}_n$ (or cells near the boundary $\mathcal C_i$) of volumes $O(R^d)$, each containing a number of points $O(R^d)$ which are at distances bounded below by $r_0>0$. If each of these points is moved by $r_1<r_0/2$ small enough, by continuity of the energy in each box\footnote{which follows from the techniques of Section~\ref{sec:univ low bound}.}, this induces an error  in $\int |\j_\eta|^2 $ in each hyperrectangle which can be made $<\frac{\ep}{2CR^d}$ if $r_1$ is chosen small enough (depending on $\ep$, $R$, $\eta$), for example in each estimate \eqref{elreg3}. Choosing $C$ large enough depending on $R$, and summing these errors over all the  hyperrectangles leads to an error $\hal n\ep$, which, once divided by $n$, yields an error term $\frac{\ep}{2}$ in \eqref{eq:up F}. Continuing on leads to an error $\ep$ instead of $\ep/2$ in \eqref{enx}. We have thus shown that we have the desired estimate 
in the set $A_n$ consisting of those $(y_1, \dots, y_n)$ such that   $\sup_i (y_i'-x_i')<r_1$ 
(equivalent to $\sup_i (y_i-x_i) <  r_1 n^{ -1/d}$)  for $r_1$ is small enough. 
Clearly this set has volume $n! (r_1^d/n)^n$ in configuration space: the $r_1 ^d /n$ term is the volume of the ball $B(x_i,r_1 n ^{-1/d})$, it is raised to the power $n$ because there are $n$ points in the configuration, and multiplied by $n !$ because permuting $y_1,\ldots,y_n$ does not change the energy.
\end{proof}
%

%
\section{Applications to the partition function and large deviations}\label{sec8}

\subsection{Estimates on the partition function: proof of Theorem \ref{thm:partition}}\label{sec:partition}

In this section we prove Theorem \ref{thm:partition}. 

\subsubsection*{Low temperature regime}

The result is proved by finding upper and lower bounds to $\Fnbetae$, which are themselves direct consequences of the upper and lower bounds we have obtained for $\w$.
We  recall that $\Fnbetae$ is linked to the partition function via \eqref{eq:free ener N min}, so that
\begin{equation}\label{eq:free ener partition}
\Fnbetae = - \frac{2}{\beta} \log \left(\int_{\R ^{dn}} e ^{-\frac{\beta}{2} H_n } \right). 
\end{equation}

\medskip

\noindent\textit{Lower bound.} 
Proposition \ref{promino}  and Remark \ref{remzeta} yield
  that for any $(x_1,\ldots,x_n) \in \R ^{dn}$  
\[
 H_n (x_1,\ldots,x_n) \geq n ^2 \En [\mu_0] - \left(\frac{n}{2}\log n \right) \indic_{d=2}+  n ^{2-2/d} \xi_d + 2n \sum_{i=1} ^n \z (x_i) + o_n (n ^{2-2/d})
\]
where $o_n (1) \to 0$ when $n\to \infty$, and $\xi_d$ is defined in  \eqref{eq:gamma d}. Consequently,
\[
 e^{-\frac{\beta}{2}H_n (x_1,\ldots,x_n)} \leq \exp\left( - \frac{n ^2 \beta}{2} \En [\mu_0] +\left(\frac{\beta n}{4}\log n \right) \indic_{d=2} - \frac{n ^{2-2/d}\beta}{2} \left( \xi_d + o_n(1)\right) \right) \prod_{i=1} ^n \exp\left( -n \beta \z (x_i) \right)
\]
and we deduce from \eqref{eq:free ener partition}, separating variables when integrating $\prod_{i=1} ^n \exp\left( -n \beta \z (x_i) \right)$ over $(\R ^{d})^n$,  that
\[
 \Fnbetae \geq n ^2 \En [\mu_0] - \left(\frac{n}{2}\log n \right) \indic_{d=2} + n ^{2-2/d} \left( \xi_d + o_n(1)\right) -\frac{2n}{\beta} \log \left( \int_{\R ^d} \exp(-n \beta \z (x)) \, dx\right). 
\]
On the other hand, with the assumption \eqref{integrabilite}, the dominated convergence theorem yields that 
{
\begin{equation}\label{eq:integr zeta}
 \int_{\R ^d} \exp(-n \beta \z (x)) \, dx\to |\{\zeta=0\}| = C \qquad \text{as} \  n\beta \to +\infty
\end{equation}}
where $C$ is a constant. Note that the assumptions on $\beta$ in Theorem \ref{thm:partition}, Item 1, ensure that $\beta n\to \infty$ as $n \to \infty$. The lower bound corresponding to \eqref{eq:partition low T} --\eqref{eq:partition low T 2d} then follows. 

\medskip

\noindent\textit{Upper bound.} The key tool for the upper bound is Proposition \ref{promajo}. For any $\ep$ we have an $r_1$ (depending on $\ep$) and a set $A_n$ as described therein and we may write
\[
\Fnbetae \leq - \frac{2}{\beta} \log \left(\int_{A_n} e ^{-\frac{\beta}{2} H_n } \right). 
\]
This corresponds to taking as a trial state for the free energy functional $\Fnbeta$ the probability measure  
$ \frac{    \one_{A_n} \Q  }{\Q(A_n)} ,$
where $\one_{A_n}$ is the characteristic function of the set $A_n$. Using \eqref{bsup 3d}--\eqref{bsup 2d} we deduce 
\[
 \Fnbetae \leq - \frac{2}{\beta} \log |A_n| + n ^2 \En [\mu_0]  - \left(\frac{n}{2}\log n \right) \indic_{d=2} + n ^{2-2/d} \left(\xi_d +\ep \right). 
\]
Using the estimate on $|A_n|$ and Stirling's formula we have 
\[
\log |A_n| \geq n \log \frac{\pi r_1 ^d }{e} -C  
\]
and thus 
\begin{equation}\label{eq:up bound final}
\Fnbetae \leq   n ^2 \En [\mu_0] - \left(\frac{n}{2}\log n \right) \indic_{d=2}+ n ^{2-2/d} \left(\xi_d +\ep \right) - \frac{2n}{\beta} \log \frac{\pi r_1 ^d }{e} + \frac{C}{\beta}.  
\end{equation}
The upper bound corresponding to \eqref{eq:partition low T}--\eqref{eq:partition low T 2d} follows, with $C_\ep$ depending only on $\log r_1$ and thus bounded when $\ep $ is bounded away from $0$.

\subsubsection*{High temperature regime}

In this regime we will use the mean-field density at positive temperature $\mubet$. We recall some of its properties in a lemma. The proof uses arguments already provided in \cite[Section 3.2]{RSY2} and is omitted.

\begin{lem}[\textbf{The mean-field density at positive temperature}]\label{lem:MF positive T}\mbox{}\\
For any $\beta > 0$ the functional \eqref{eq:MF free ener func} admits a unique minimizer $\mubet$ among probability measures. It satisfies the bounds 
\begin{equation}\label{eq:unif bound mubeta}
0 < \mubet \leq C  \mbox{ on } \R ^d  
\end{equation}
where $C$ is some constant depending only on the dimension and the potential $V$. Moreover we have the variational equation
\begin{equation}\label{eq:MF equation pos T}
2 h_{\mubet} + V +  \frac{2}{n\beta} \log (\mubet) = \F[\mubet] +  D(\mubet,\mubet) \mbox{ on } \R ^d  
\end{equation}
where 
\begin{equation}\label{eq:defi U mubet}
h_{\mubet} = \g * \mubet. 
\end{equation}
\end{lem}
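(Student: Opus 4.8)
\medskip

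\noindent\emph{Proof plan.} The plan is to run the standard variational machinery for entropy-regularized potential-theoretic functionals, in the spirit of \cite[Section 3.2]{RSY2}.

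\textbf{Existence.} First I would note that $\F\not\equiv+\infty$: any bounded probability density with compact support has finite Coulomb energy, finite potential energy (by continuity of $V$) and finite entropy. For coercivity and lower semicontinuity, write $\En[\mu]=D(\mu,\mu)+\int V\,d\mu$ and use the elementary bound $\int\mu\log\mu\ge-\tfrac{\lambda}{2}\int V\,d\mu-\log Z_\lambda$, obtained by comparing $\mu$ to the probability density $e^{-\lambda V/2}/Z_\lambda$ and using nonnegativity of relative entropy, where $\lambda:=\min(\beta_1,n\beta/2)$ and $Z_\lambda<\infty$ by \eqref{integrabilite}. This gives
\[
\F[\mu]\ \ge\ D(\mu,\mu)+\Bigl(1-\tfrac{\lambda}{n\beta}\Bigr)\int V\,d\mu-\tfrac{2}{n\beta}\log Z_\lambda,
\]
and since $1-\lambda/(n\beta)\ge\tfrac12$, together with the fact that $D(\mu,\mu)$ (resp., when $d=2$, the symmetric kernel $-\log|x-y|+\tfrac{V(x)}2+\tfrac{V(y)}2$) is bounded below, this yields coercivity: minimizing sequences are tight thanks to \eqref{eq:trap pot} and have bounded Coulomb energy. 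Lower semicontinuity of $D(\cdot,\cdot)$ (Fatou applied to the potentials, equivalently to $|\nabla h_\mu|^2$) and of the entropy (jointly convex and weakly lower semicontinuous) then produce a minimizer $\mubet$ by the direct method.

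\textbf{Uniqueness, Euler--Lagrange equation, positivity.} The map $\mu\mapsto D(\mu,\mu)=\tfrac1{c_d}\int|\nabla h_\mu|^2$ is strictly convex on finite-energy measures (since $D(\mu-\nu,\mu-\nu)\ge0$, with equality only if $\mu=\nu$), $\int V\,d\mu$ is affine, and $\int\mu\log\mu$ is strictly convex, so $\F$ is strictly convex and has at most one minimizer. Testing with convex perturbations $(1-t)\mubet+t\nu$, $\nu\in\mathcal P(\mr^d)$, and letting $t\downarrow0$ gives
\[
\int_{\mr^d}\bigl(2h_{\mubet}+V+\tfrac2{n\beta}\log\mubet\bigr)\,d(\nu-\mubet)\ge0\qquad\text{for every }\nu\in\mathcal P(\mr^d).
\]
If $\mubet$ vanished on a set of positive measure, placing a little mass of $\nu$ there would send this expression to $-\infty$ (infinite slope of $t\mapsto t\log t$ at $0$), so $\mubet>0$ a.e., the inequality becomes an equality, and the bracket equals a constant; integrating it against $\mubet$ and using $D(\mubet,\mubet)=\int h_{\mubet}\,d\mubet$ identifies that constant as $\F[\mubet]+D(\mubet,\mubet)$, which is \eqref{eq:MF equation pos T}. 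This also gives the lower bound in \eqref{eq:unif bound mubeta}.

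\textbf{Upper bound and main obstacle.} Solving \eqref{eq:MF equation pos T} for $\mubet$ yields
\[
\mubet(x)=\exp\Bigl(\tfrac{n\beta}{2}\bigl(\F[\mubet]+D(\mubet,\mubet)-V(x)-2h_{\mubet}(x)\bigr)\Bigr),
\]
so the upper bound reduces to bounding the exponent from above. One uses that $2h_{\mubet}+V\ge-C(V,d)$ (because $h_{\mubet}\ge0$ when $d\ge3$, resp. $2h_{\mubet}+V$ is bounded below when $d=2$ by \eqref{eq:trap pot}), and that the Lagrange constant $\F[\mubet]+D(\mubet,\mubet)$ is bounded above, uniformly in $n$ and $\beta$, by comparing $\F[\mubet]$ to $\F$ at a fixed competitor and invoking the a priori lower bounds on $\int\mubet\log\mubet$ and $\int V\,d\mubet$ from the coercivity step. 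The one genuinely delicate point — the main obstacle — is precisely closing this loop so that the constant in \eqref{eq:unif bound mubeta} depends only on $V$ and $d$ and does not degenerate with $n\beta$; the superharmonicity of $h_{\mubet}$ on $\{\mubet>0\}$ and the confinement \eqref{eq:trap pot} are what make it work, exactly as in \cite[Section 3.2]{RSY2}.
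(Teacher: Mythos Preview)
The paper does not actually prove this lemma: it writes ``The proof uses arguments already provided in \cite[Section 3.2]{RSY2} and is omitted.'' So there is nothing to compare against, and your sketch is in effect supplying what the paper leaves out, following the same reference.

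Your treatment of existence, uniqueness, the Euler--Lagrange identity \eqref{eq:MF equation pos T}, and the strict positivity $\mubet>0$ is correct and standard; the relative-entropy coercivity bound and the identification of the Lagrange constant are exactly the right moves.

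The one place where your sketch is genuinely incomplete is the uniform upper bound. Writing $\mubet=\exp\bigl(\tfrac{n\beta}{2}(\text{const}-V-2h_{\mubet})\bigr)$ and then bounding the constant from above and $V+2h_{\mubet}$ from below gives only $\mubet\le e^{Cn\beta}$, which blows up as $n\beta\to\infty$; so the ``closing the loop'' you flag is not a cosmetic issue but the whole content of \eqref{eq:unif bound mubeta}. Invoking superharmonicity of $h_{\mubet}$ alone does not close it. What works (and is presumably what \cite{RSY2} does, under the regularity implicit in \eqref{ass1}--\eqref{ass2}) is a maximum-principle argument on the Laplacian of the Euler--Lagrange equation: at an interior maximum $x_0$ of $\mubet$ one has $\Delta\log\mubet(x_0)\le 0$, so applying $-\Delta$ to \eqref{eq:MF equation pos T} yields $2c_d\mubet(x_0)\le \Delta V(x_0)$, i.e.\ $\|\mubet\|_{L^\infty}\le (2c_d)^{-1}\sup\Delta V$, a bound depending only on $V$ and $d$. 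You should make this (or an equivalent device) explicit rather than gesture at it.
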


An upper bound to the $n$-body free energy is obtained by taking the trial state $\mubet ^{\otimes n}$ in \eqref{eq:free ener N}. Independently of the dimension this yields
\begin{equation}\label{eq:up bound high T}
\Fnbetae \leq \Fnbeta [\mubet ^{\otimes n}] = n ^2  \F [\mubet]  - n D(\mubet,\mubet) \leq n ^2  \F [\mubet]  - C n 
\end{equation}
where the second term is due to the fact that there are $n(n-1)/2$ and not $n^2 /2$ pairs of particules.  The upper bound in \eqref{eq:partition high T},  \eqref{eq:partition high T 2d} then follows from the definition \eqref{eq:MF free ener func}.

\medskip

For the lower bound we use a variant of Lemma \ref{lemsplit}. 

\begin{lem}[\textbf{Alternative splitting lower bound}]\label{lemsplit high T}\mbox{}\\
For  any $x_1,\ldots,x_n \in \mr^d $, we have
\begin{equation}\label{eq:split Onsager high T}
\w(x_1,\dots,x_n) \ge n^2 \F [\mubet] - \frac{2}{\beta} \sum_{i = 1 ^n} \log \mubet(x_i) - \left(\frac{n}{2} \log n \right) \indic_{d=2} - Cn^{2-2/d},
\end{equation} where $C$ only depends on the dimension.
\end{lem}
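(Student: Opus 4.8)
The plan is to mimic the proof of Lemma~\ref{lemsplit}, replacing the equilibrium measure $\mu_0$ by the finite-temperature mean-field density $\mubet$ and the variational inequality \eqref{eq:zeta} by the Euler--Lagrange equation \eqref{eq:MF equation pos T}. I would fix once and for all the smearing scale $\ell = n^{-1/d}$ (i.e.\ the choice $\eta = 1$ in the notation of \eqref{defh}) and apply Onsager's lemma (Lemma~\ref{lem:Onsager}) with $\mu = n\mubet$ and the given points $x_1,\dots,x_n$. Dropping the nonnegative Coulomb square $D\bigl(n\mubet - \sum_i \delta_{x_i}^{(\ell)},\, n\mubet - \sum_i \delta_{x_i}^{(\ell)}\bigr)\ge 0$, this yields
\[
\sum_{i\neq j}\g(x_i-x_j) \ge -n^2 D(\mubet,\mubet) + 2n\sum_{i=1}^n D\bigl(\mubet,\delta_{x_i}^{(\ell)}\bigr) - n D\bigl(\delta_0^{(\ell)},\delta_0^{(\ell)}\bigr).
\]

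Next I would replace the smeared interaction by the pointwise potential at the cost given by Lemma~\ref{lem:Lieb}: since $\mubet$ is bounded in $L^\infty(\mr^d)$ uniformly in $\beta$ and $n$ by \eqref{eq:unif bound mubeta}, one has $2n\sum_i D(\mubet,\delta_{x_i}^{(\ell)}) = 2n\sum_i h_{\mubet}(x_i) + O(n^2\ell^2) = 2n\sum_i h_{\mubet}(x_i) + O(n^{2-2/d})$. Then I would insert the variational equation \eqref{eq:MF equation pos T}, evaluated at each $x_i$, in the form $2h_{\mubet}(x_i) = \F[\mubet] + D(\mubet,\mubet) - V(x_i) - \tfrac{2}{n\beta}\log\mubet(x_i)$ --- note that this is an equality on all of $\mr^d$ because $\mubet>0$ everywhere (Lemma~\ref{lem:MF positive T}), so there is no analogue of the $\zeta$-term of the zero-temperature splitting. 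Summing over $i$ and adding back the one-body term $n\sum_i V(x_i)$, the $n^2 D(\mubet,\mubet)$ contributions cancel and the $n\sum_i V(x_i)$ contributions cancel, leaving
\[
\w(x_1,\dots,x_n) \ge n^2\F[\mubet] - \frac{2}{\beta}\sum_{i=1}^n\log\mubet(x_i) - n D\bigl(\delta_0^{(\ell)},\delta_0^{(\ell)}\bigr) + O(n^{2-2/d}).
\]

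Finally I would compute $n D(\delta_0^{(\ell)},\delta_0^{(\ell)})$ with $\ell = n^{-1/d}$ using the scaling relation \eqref{chvard}: for $d\ge 3$ it equals $\tfrac{\kappa_d}{c_d}\,n\,\g(n^{-1/d}) = \tfrac{\kappa_d}{c_d}\,n^{2-2/d}$, which is absorbed into $Cn^{2-2/d}$; for $d=2$ it equals $n\bigl(\g(n^{-1/2}) + \gamma_2/c_2\bigr) = \tfrac{n}{2}\log n + O(n)$, which produces precisely the $\tfrac{n}{2}\log n$ term (and $O(n)=O(n^{2-2/d})$ in that dimension). This gives the stated inequality. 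I do not expect a real obstacle here; the only thing to be careful about is that the self-energy of the smeared charges and the Lieb remainder both be of the right order, which is exactly why $\ell = n^{-1/d}$ is the natural choice, and that $\mubet>0$ pointwise so that $\log\mubet(x_i)$ is meaningful.
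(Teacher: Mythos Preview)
Your proposal is correct and follows essentially the same approach as the paper's proof: apply Onsager's lemma with $\mu=n\mubet$ and $\ell=n^{-1/d}$, drop the nonnegative square, replace $D(\mubet,\delta_{x_i}^{(\ell)})$ by $h_{\mubet}(x_i)$ via Lemma~\ref{lem:Lieb} and the uniform $L^\infty$ bound on $\mubet$, substitute the Euler--Lagrange equation \eqref{eq:MF equation pos T}, and handle the self-energy term via \eqref{chvard}. Your remark that $\mubet>0$ everywhere is exactly why no $\zeta$-type term appears, which the paper leaves implicit.
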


\begin{proof}
We only sketch the proof since it follows exactly that of Lemma \ref{lemsplit}. Applying Onsager's Lemma \ref{lem:Onsager}  with $\ell=n ^{-1/d}$  and $\mu= n\mubet$, we find
\begin{align*}
\sum_{i\neq j} \g( x_i-x_j) & \geq D\left( n\mubet  - \sum_{i=1}^n\delta_{x_i}^{(\ell)} ,n\mubet  -   \sum_{i=1}^n\delta_{x_i}^{(\ell)}   \right) - n^2 D(\mubet ,\mubet) + 2 n \sum_{i=1}^n D(\mubet, \delta_{x_i}^{(\ell)}    )\\
& - n  D(\delta_{0}^{(\ell)} ,\delta_{0}^{(\ell)} ) \\
&\geq - n^2 D(\mubet ,\mubet) + 2 n \sum_{i=1}^n D(\mubet, \delta_{x_i}^{(\ell)}    ) - n \(\frac{\kappa_d}{c_d} \g(\ell)  + \frac{\gamma_2}{c_2}\indic_{d=2} \)     \\
&\geq  - n^2 D(\mubet ,\mubet) + 2 n \sum_{i=1}^n h_{\mubet} (x_i) - n
 \(\frac{\kappa_d}{c_d} \g(\ell)  + \frac{\gamma_2}{c_2}\indic_{d=2} \) 
 - C  n ^2 \ell ^2 \\
&\geq  n ^2 \F [\mubet] - n \left(\sum_{i=1}^n V (x_i) + \frac{2}{\beta n} \sum_{i=1} ^n \log (\mubet (x_i))\right) - n  \frac{\kappa_d}{c_d} \g(n^{-1/d})  
- C n ^{2-2/d}.
\end{align*}
The second inequality is obtained by dropping a positive term and using \eqref{chvard}, the third follows from Lemma \ref{lem:Lieb} and \eqref{eq:unif bound mubeta} as in \eqref{eq:split pot terms} and the fourth from the variational equation \eqref{eq:MF equation pos T} and computing $\g(n^{-1/d}) $ explicitly. Plugging this inequality in the expression of the Hamiltonian, we get the result. 
\end{proof}

Using the results of Lemma \ref{lemsplit high T} and the expression \eqref{eq:free ener N} we deduce that for any probability measure $\mu$ 
\begin{equation}\label{eq:low bound partition high T}
\Fnbeta [\mu] \geq n^2 \F [\mubet] + \frac{2}{\beta} \int_{\R ^{dn}} \mu  \: \log \Big(\frac{\mu}{\mubet ^{\otimes n}}\Big) -
\left(\frac{n}{2} \log n \right) \indic_{d=2} - Cn^{2-2/d}. \end{equation}
The integral term in this equation is (minus) the entropy of $\mu$ relative to $\mubet ^{\otimes n}$ and is thus positive (see e.g. \cite[Lemma 3.1]{RSY2}) for any probability measure $\mu \in \P (\R ^{dn})$. Dropping this term, combining this lower bound with the upper bound \eqref{eq:up bound high T}, and noting that $n ^{2-2/d} \gg n$ for $n$ large and $d\geq 2$, we conclude the proof of Theorem \ref{thm:partition}, Item 2.

\begin{remark}[\textbf{Total variation estimates for reduced densities}]\label{rem:TV estimates}\mbox{}\\
Instead of simply dropping the relative entropy in \eqref{eq:low bound partition high T} for our final estimate of the $n$-body free-energy we may combine our upper and lower bounds to control this term. Interesting estimates then follow from the coercivity properties of the relative entropy.  Indeed, using subadditivity of entropy (see e.g. \cite[Proposition 2]{Kie2})
\begin{equation}\label{eq:subadd 1}
\int_{\R ^{dn}} \mubf \log \mubf \geq \left\lfloor\frac{n}{k}\right\rfloor \int_{\R ^{dk}} \mubf ^{(k)} \log \mubf ^{(k)} + \int_{\R ^{dn[k]}} \mubf ^{(n[k])} \log \mubf ^{(n[k])} 
\end{equation}
where $\mubf$ is a symmetric probability on $\R ^{dn}$, $\mubf ^{(k)}$ is its $k$-th marginal, $\left\lfloor\: . \:\right\rfloor$ stands for the integer part and $n[k]$ is $n$ modulo $k$. On the other hand 
\begin{equation}\label{eq:subadd 2}
\int_{\R ^{dn}} \mubf \log \left(\mubet  ^{\otimes n} \right)= \left\lfloor\frac{n}{k}\right\rfloor \int_{\R ^{dk}} \mubf ^{(k)} \log \left(\mubet ^{\otimes k}\right) + \int_{\R ^{dn[k]}} \mubf ^{(n[k])} \log \mubet ^{\otimes n[k]}.
\end{equation}
By positivity of relative entropies the contribution of the difference of the second terms in equations \eqref{eq:subadd 1} and \eqref{eq:subadd 2} is non negative. One can then use the Csisz\'{a}r-Kullback-Pinsker inequality (see e.g. \cite[Lemma 3.1]{RSY2} and \cite{BV} for a proof) to bound from below the difference of the first terms and obtain 
\[
\int_{\R ^{dn}} \mubf \log \frac{\mubf}{\mubet ^{\otimes n}} \geq \left\lfloor\frac{n}{k}\right\rfloor \int_{\R ^{dk}} \mubf ^{(k)} \log \frac{\mubf ^{(k)}}{\mubet ^{\otimes k}} \geq  \frac12 \left\lfloor\frac{n}{k}\right\rfloor \left\Vert \mubf ^{(k)} - \mubet ^{\otimes k} \right\Vert_{\rm TV} ^2
\]
where $\left\Vert\, . \, \right\Vert _{\rm TV}$ stands for the total variation norm. A control on the $n$-body relative entropy thus provides estimates in the spirit of Corollary \ref{thm:marginals}, but it turns out (combining our free energy upper and lower bounds) that those are meaningful only in the high temperature regime. See \cite{RSY2} (in particular Remark 3.4 and Section 3.4) where this method is used in such a regime. 
\end{remark}

\subsection{The Gibbs measure at low temperature: proof of Theorem \ref{th4}}\label{sec:deviations}
Let $A_n$ be any event, i.e. subset of $(\mr^d)^n$.
Using   \eqref{eq:free ener N min}  we may write
\begin{multline}
\label{pan}
\Q (A_n) =  \frac{1}{\Z}\int_{A_n} \exp \left( - \frac{\beta}{2} H_n (x_1,\ldots,x_n) \right) \, dx_1\ldots dx_n\\
= \int_{A_n} \exp \left( \frac{\beta}{2} \left( \Fnbetae - H_n (x_1,\ldots,x_n) \right) \right) \, dx_1\ldots dx_n .\end{multline}
But the result of Proposition \ref{promino} and Remark \ref{remzeta} give us that  for any $\xbf_n \in A_n$, 
$$\liminf_{n\to \infty} n^{2/d-2}\( \w(\xbf_n) - n^2 \En(\mu_0)+ \left(\frac{n}{2}\log n\right) 
\indic_{d=2} \right) \ge \frac{|\E|}{c_d} \int \W(\j) \, dP(x, \j)$$
where $P=\lim_n P_{\nu_n}$ and $P_{\nu_n}$ is defined as in \eqref{eq:Pnun}. Since $P_{\nu_n}\in i_n(A_n)$ we have $P\in A_{\infty}$, where $A_\infty$ is as in the statement of the theorem,  by definition, and thus   
$$\liminf_{n\to \infty} n^{2/d-2}\( \w(\xbf_n) - n^2 \En(\mu_0)+ \left(\frac{n}{2}\log n\right) 
\indic_{d=2} \right) \ge \frac{|\E|}{c_d} \inf_{P \in A_\infty}\int \W(\j) \, dP(x, \j).$$
 Inserting this into \eqref{pan} and using \eqref{eq:partition low T}--\eqref{eq:partition low T 2d}, we find, for every $\ep>0$
\begin{multline*}
\limsup_{n\to \infty} \frac{\log \Q (A_n)}{n ^{2-2/d}} \leq - \frac{\beta}{2}\left( \frac{|\E|}{c_d} \inf_{P\in A_{\infty}} \int \W(\j) dP(x,\j) - (\xi_d+\ep)  - C_\ep \lim_{n\to \infty} \frac{n ^{2/d-1}}{\beta} \right) \\
+ \limsup_{n\to \infty} \frac{1}{n^{2-2/d}} \log \int_{A_n} \exp\left(-\beta \sum_{i=1} ^n n \zeta(x_i)\right)dx_1\ldots dx_n
\end{multline*}
and there only remains to treat the term involving $\zeta$ as before using \eqref{eq:integr zeta} to arrive at the desired result \eqref{ldr}.

We next turn to the tightness of $\widetilde{\Q}$.
Starting from \eqref{pan} and inserting the   upper bound on $\Fnbetae$ implied by \eqref{eq:partition low T} and the lower bound \eqref{pred2}, we obtain  that for any event $A_n$, for any $\eta\le 1$,
\begin{multline}\label{fhn1}
\Q (A_n) \leq \int_{A_n} \exp \left( - \frac{\beta}{2} n ^{2-2/d} \left( \frac{1}{c_d n} \int_{\R ^d} |\nabla h'_{n,\eta}|^2  -  (\kappa_d \g(\eta)+\gamma_2 \indic_{d=2} )-C\eta^2  \right)\right)\\
 \prod_{i=1} ^n \exp(-n\beta \z(x_i)) dx_1 \dots dx_n. 
\end{multline}
 Let now
$$A_{n,M}:= \left\{\xbf \in (\mr^d)^n, \forall \eta<\hal , \frac{1}{n} \int_{\mr^d} |\nab \het|^2- (\kappa_d \g(\eta)+\gamma_2 \indic_{d=2} )\le M\right\}.$$
Inserting into \eqref{fhn1} and using \eqref{eq:integr zeta}, we obtain 
$$\Q((A_{n,M})^c)  
\le  \exp \(Cn +\frac{\beta}{2}(Cn^{2-2/d}   - \frac{M}{c_d} n^{2-2/d})\right).$$
Using that $\beta \ge cn^{2/d-1}$, it follows that we can find a large enough  $M$ (independent of $n$ and $\beta$) for which $\Q((A_{n,M})^c) \to 0$ as $n\to \infty$. To prove the tightness of $\widetilde{\Q}$, in view of \cite[Lemma 6.1]{ss2d}, it  then suffices to show that  if $P_n \in i_n(A_{n,M})$ then $P_n$ has a convergent subsequence. But this has been  precisely established  in Section \ref{sec:promino}. The fact that the limits of $\widetilde{\Q}$ are concentrated on admissible $P$'s satisfying $\widetilde{\W} (P) \le \xi_d + C_{\bar{\beta}}$ is an easy consequence of what precedes.

\subsection{Charge fluctuations: proof of Theorem \ref{thm:charge fluctu}}\label{sec:fluctu charge}
We start  from \eqref{fhn1} applied with some arbitrary $0<\eta<1$ and insert the result of  Lemma \ref{lem:fluctu charge}.  We deduce that for any event $A_n$, 
\begin{multline*}
\Q (A_n) \leq\int_{A_n} \exp \left( - C \beta n ^{1-2/d} \left( \frac{D(x',R) ^2}{R^{d-2}}\min \left(1,\frac{D(x',R)}{R ^d}\right)  - C n \right)\right) \\\prod_{i=1} ^n \exp(-n\beta \z(x_i))dx_1 \dots dx_n, 
\end{multline*}where we recall that $D(x',R)$ depends on $(x_1,\ldots,x_n) \in A_n$ via the ``empirical measure" $\nu_n'$.
Then, using \eqref{eq:integr zeta}, we find
\begin{equation}\label{eq:charge fluctu main}
\Q (A_n) \leq C \sup_{A_n} \exp \left( - C \beta n ^{1-2/d} \left( \frac{D(x',R) ^2}{R^{d-2}}\min \left(1,\frac{D(x',R)}{R ^d}\right)  - C n \right)\right).
\end{equation}
 Equation \eqref{eq:charge fluctu main} is our main bound on the charge fluctuations. It implies a global control similar to \cite[Eq. (1.49)]{ss2d} on the fluctuations which is in $L^2$ for large fluctuations and in $L ^3$ for small fluctuations. Here we only prove explicitely the statements of Theorem \ref{thm:charge fluctu}, recalling that estimates at intermediate scales follow from \eqref{eq:charge fluctu main} in the same way.
\medskip

\noindent \emph{Proof of Item $1$.} We pick a sequence of microscopic balls of radii $R_n$.   According to \eqref{eq:fluctu charge micro macro} and using \eqref{eq:charge fluctu main}, we have
\begin{align*}
\Q \left( |D(x,R_n)| \geq \lambda n R_n ^{d}\right) &\leq C \exp\left( -C \beta n ^{1-2/d} \left(  \lambda^2 n^2 R_n^{d+2}\min (1, \lambda n)
 - C n \right) \right)\\ 
&\le  C \exp\left( -C \beta n ^{2-2/d} ( C_R \lambda^2- C ) \right),   
\end{align*} if $\lambda \ge 1/n$ and $ R_n \ge C_R n^{-1/(d+2)}$.
If $\lambda\le 1/n$ then the inequality is trivially true for $n$ large enough anyway.
It follows  that \eqref{eq:fluctu charge micro} holds.

\medskip

\noindent \emph{Proof of Item $2$.} The argument is similar as above: Fixing some radius $R$, we have
\begin{multline*}
\Q \left(  |D(x,R)| \geq \lambda n^{1-1/d} \right) \\ 
\leq C \exp\left( -C \beta n ^{1-2/d} \left( \lambda^2 n^{2-2/d}  R ^{2-d}   n ^{(2-d)/d} \min(1,\lambda  n^{1-2/d}  R ^{-d} ) -Cn\right) \right)\\ 
= C \exp\left( -C \beta n ^{2-2/d} \left( \lambda^2  R^{2-d}  - C  \right) \right)   
\end{multline*} if $d\ge 3$,
for $n$ large enough, and $\le C  \exp\left( -C \beta n \min(\lambda^2   R^{2-d}, \lambda^3  R^{2-2d} )- C  \right)$ for $d=2$.
\medskip
 
\noindent \emph{Proof of Item $3$.} We start again from \eqref{fhn1}. Using Lemma \ref{lem:fluctu field} in the ball of radius $R n ^{1/d}$,  the fact that the total mass of $\nu'_n$ is $n$,  and then a change of variables, we have
\begin{align*}
\int_{\R ^d} |\nabla h'_{1,n}|^2 &\geq C(R n^{1/d})^{ 1- \frac{2}{q}}\left(\left\Vert \nabla h'_n \right\Vert_{L^q (B_{Rn ^{1/d}})}^2  -C n^2 \right)  \\
&\geq 
 C(R n^{1/d})^{ 1- \frac{2}{q}}
\left( n ^{\frac{2}{d} + \frac{2}{q} - 2} \left\Vert \nabla h_n \right\Vert_{L^q (B_{R})}^2 -C n^2 \right).
\end{align*}In particular, for $\lambda \ge 2C$, we have
\begin{align*}
\left\{ \left\Vert \nabla h_n \right\Vert_{L^q (B_{R})}    \geq \lambda n ^{ 2- \frac{1}{d}-\frac{1}{q}} \right\} \subset \left\{ n ^{-1} \int_{\R ^d} |\nabla h'_{1,n}|^2 \geq C \lambda^2 R ^{1-\frac{2}{q}   } n ^{\frac{1}{d}- \frac{2}{qd}+1} \right\} .
\end{align*}Since $q \ge 1$, we have $\frac{1}{d}- \frac{2}{qd}+1\ge 0$ and thus, for $n$ large enough,
inserting into \eqref{fhn1} we deduce 
that
\begin{equation}\label{eq:field fluctu final}
\Q \left( \left\Vert \nabla h_n \right\Vert_{L^q (B_{R})} \geq \lambda n ^{t_{q,d}} \right) \leq C \exp \left( - C_R \lambda^2 \beta n ^{ \tilde{t}_{q,d}} \right) 
\end{equation}
where $t_{q,d}$ and $\tilde{t}_{q,d}$ are as in the statement of the theorem. This ends the proof since, in view of the definition \eqref{hh},
$
- \Delta h_n =  c_d \left( \nu_n - n \mu_0\right)$, 
and thus
\[
\left\Vert \nabla h_n \right\Vert_{L^q (B_{R})} = c_d \left\Vert \nu_n - n \mu_0 \right\Vert_{W^{-1,q} (B_{R})}. 
\]

\subsection{Estimates on reduced densities: proof of Corollary \ref{thm:marginals}}

We first prove a simple lemma that formalizes in our setting the idea that a control on the fluctuations of the empirical measure in a symmetric probability measure  implies a control of the marginals of that measure. With this additional ingredient in hand, Corollary \ref{thm:marginals} becomes a consequence of Theorem \ref{thm:charge fluctu}, Item $3$. We have used a similar idea in \cite[Lemma 3.6]{RSY2}. 

\begin{lem}[\textbf{Control of the empirical measure implies control of the marginals}]\label{lem:empirical marginal}\mbox{}\\
Let $\mubf_n$ be a symmetric probability measure over $(\R ^{d})^n$ with reduced densities $\mubf ^{(k)}$. Let 
$$1 \leq q < \frac{d}{d-1}  \mbox{ and } p = \frac{q}{q-1}$$.
Recall the definition \eqref{hh} of $h_n$ as a function of $\xbf=(x_1,\ldots,x_n) $.
\begin{enumerate}
\item For any $\varphi \in C ^{\infty}_c (\R ^d)$ we have
\begin{equation}\label{eq:control 1 body}
\left| \int_{\R ^d} (\mubf ^{(1)} - \mu_0) \varphi \right| \leq \frac{1}{c_d n} \left\Vert \nabla \varphi \right\Vert_{L ^p}  \int_{\xbf \in \R ^{dn}} \left\Vert \nabla h_n \right\Vert_{L ^q}  \mubf (\xbf) d\xbf. 
\end{equation}
\item For any $\varphi \in C ^{\infty}_c (\R ^{dk})$ symmetric in the sense that for any permutation $\sigma$
\[
 \varphi(x_1,\ldots,x_k) = \varphi(x_{\sigma (1)},\ldots,x_{\sigma(k)}) 
\]
we have 
\begin{multline}\label{eq:control k body}
\left| \int_{\R ^{dk}} (\mubf ^{(k)} - \mu_0 ^{\otimes k}) \varphi \right| \leq \left(\frac{k}{c_d n} \int_{\xbf \in \R ^{dn}} \left\Vert \nabla h_n \right\Vert_{L ^q}  \mubf (\xbf) d\xbf + C \frac{k ^2}{n} \right) 
\\ \sup_{x_1 \in \R ^d} \ldots \sup_{x_{k-1} \in \R ^d }   \left\Vert \nabla \varphi (x_1,\ldots,x_{k-1}, \:. \: ) \right\Vert_{L ^p (\R ^d)}   
. 
\end{multline}
\end{enumerate}

\end{lem}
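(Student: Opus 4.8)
The plan is to exploit the basic relation $-\Delta h_n = c_d(\nu_n - n\mu_0)$ between the potential $h_n$ and the difference of the empirical measure and the (rescaled) equilibrium measure, and to test it against a suitable function. For Item 1, first I would write, for any $\varphi \in C^\infty_c(\R^d)$,
\[
\int_{\R^d} (\nu_n - n\mu_0)\varphi = \frac{1}{c_d}\int_{\R^d} (-\Delta h_n)\varphi = \frac{1}{c_d}\int_{\R^d} \nabla h_n \cdot \nabla \varphi,
\]
where the integration by parts is legitimate because $\nu_n - n\mu_0$ has compact support and zero total mass, so $\nabla h_n$ decays at infinity (as noted in the excerpt). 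H\"older's inequality with exponents $q$ and $p = q/(q-1)$ then gives
\[
\left| \int_{\R^d} (\nu_n - n\mu_0)\varphi \right| \leq \frac{1}{c_d} \|\nabla h_n\|_{L^q} \|\nabla \varphi\|_{L^p}.
\]
Dividing by $n$ and recalling that $\mubf_n^{(1)}$ is precisely the average of $\frac1n \nu_n$ under $\mubf_n$ (because $\mubf_n$ is symmetric, the first marginal equals $\int \frac1n\sum_i \delta_{x_i}\, d\mubf_n(\xbf)$), one integrates the above against $\mubf_n(\xbf)\, d\xbf$ and obtains \eqref{eq:control 1 body}.

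For Item 2, the idea is to reduce the $k$-point statement to the one-point one by a telescoping argument over the variables, at the cost of controlling the ``diagonal'' terms coming from the fact that $\mubf_n^{(k)}$ sees $k$ \emph{distinct} particles while $\frac{1}{n^k}(\sum_i\delta_{x_i})^{\otimes k}$ does not. Concretely, I would write, using symmetry of both $\mubf_n$ and $\varphi$,
\[
\int_{\R^{dk}} \mubf_n^{(k)} \varphi = \int_{\xbf\in\R^{dn}} \frac{(n-k)!}{n!}\sum_{i_1\neq\cdots\neq i_k} \varphi(x_{i_1},\ldots,x_{i_k})\, d\mubf_n(\xbf),
\]
and compare the sum over distinct indices with the full tensor sum $\big(\sum_i \delta_{x_i}\big)^{\otimes k}$ tested against $\varphi$. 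The difference between these two (after dividing by $n^k$) involves collisions of indices and is $O(k^2/n)\,\|\varphi\|_\infty$ by a standard combinatorial count — this accounts for the $Ck^2/n$ term; a more careful bookkeeping replaces $\|\varphi\|_\infty$ by the stated iterated $L^p$ norm of $\nabla\varphi$ (integrating the collided variables out against a compactly supported profile). Then I would replace one variable at a time: fixing $x_1,\ldots,x_{k-1}$ and applying Item 1's estimate to $\varphi(x_1,\ldots,x_{k-1},\cdot)$ with $\mu_0$ in the last slot contributes $\frac{1}{c_dn}\|\nabla h_n\|_{L^q}\,\|\nabla\varphi(x_1,\ldots,x_{k-1},\cdot)\|_{L^p}$, and iterating over the $k$ slots produces the factor $k$ in front of $\frac{1}{c_dn}\int\|\nabla h_n\|_{L^q}\,d\mubf_n$ together with the supremum over the remaining variables. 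Collecting the two contributions yields \eqref{eq:control k body}.

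The main obstacle is the bookkeeping in Item 2: one must carefully organize the telescoping so that at each step only \emph{one} free variable is replaced and the equilibrium measure inserted, while simultaneously tracking the collision error terms so that the supremum in the final estimate falls on $\nabla\varphi$ in a single variable rather than on $\varphi$ itself. This requires using the symmetry of $\varphi$ to always move the ``worst'' variable to the last slot, and estimating the collided contributions by integrating the repeated variable against the smearing profile (or simply against Lebesgue measure on a compact set), which turns an $L^\infty$ bound on $\varphi$ into the iterated $L^p$ bound on $\nabla\varphi$ via Poincar\'e/Sobolev on the support. Item 1, by contrast, is essentially immediate once the integration by parts and H\"older step are in place; the only subtlety there is justifying the integration by parts, which follows from the compact support and neutrality of $\nu_n - n\mu_0$ exactly as invoked elsewhere in the paper. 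Finally, to deduce Corollary \ref{thm:marginals} one simply applies this lemma with $\mubf_n = \Q$, the Gibbs measure, and bounds $\int_{\xbf}\|\nabla h_n\|_{L^q}\,d\Q$ using the $W^{-1,q}$ estimate \eqref{eq:fluctu field} of Theorem \ref{thm:charge fluctu}, Item 3 (integrating the exponential tail to convert a large-deviation bound into a bound on the expectation), which gives $\int\|\nabla h_n\|_{L^q}\,d\Q = O(n^{t_{q,d}}) = O(n^{2-1/d-1/q})$ and hence $\frac{1}{n}\int\|\nabla h_n\|_{L^q}\,d\Q = O(n^{1-1/d-1/q})$, yielding \eqref{eq:result marginal 1} and \eqref{eq:result marginal k}.
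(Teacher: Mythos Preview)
Your proposal is correct and follows essentially the same approach as the paper: Item~1 via integration by parts and H\"older, Item~2 via the combinatorial comparison of distinct-index sums with the full tensor power followed by the telescoping identity
\[
\Big(\sum_i\delta_{x_i}\Big)^{\otimes k} - (n\mu_0)^{\otimes k}=\sum_{j=0}^{k-1}(n\mu_0)^{\otimes j}\otimes\Big(\sum_i\delta_{x_i}-n\mu_0\Big)\otimes\Big(\sum_i\delta_{x_i}\Big)^{\otimes(k-j-1)},
\]
and then applying the one-variable estimate slot by slot. One clarification: the passage from the $\|\varphi\|_{L^\infty}$ bound on the collision error to the iterated $L^p$ norm of $\nabla\varphi$ is not done by ``integrating collided variables against a profile'' but simply by the Sobolev embedding $W^{1,p}(\R^d)\hookrightarrow L^\infty(\R^d)$, which holds precisely because $q<d/(d-1)$ forces $p>d$; this is how the paper does it and is the cleanest route.
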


\begin{proof}
Using the symmetry of $\mu$ we write 
\begin{align*}
\int_{\R ^d} (\mubf ^{(1)} - \mu_0) \varphi &= \frac{1}{n} \int_{\xbf \in \R ^{dn}} \mubf (\xbf) \Big( \sum_{i=1} ^n \varphi (x_i) - n \int_{\R ^d} \varphi \mu_0 \Big) d\xbf \\
&= \frac{1}{n} \int_{\xbf \in \R ^{dn}} \mubf (\xbf) \left( \int_{\R^d} \Big( \sum_{i=1} ^n \delta_{x_i} - n  \mu_0 \Big) \varphi \right) d\xbf.
\end{align*}
Then 
\begin{equation}\label{eq:ipp marginals}
\left| \int_{\R^d} \Big( \sum_{i=1} ^n \delta_{x_i} - n  \mu_0 \Big) \varphi \right|= \frac{1}{c_d} \left| \int_{\R ^d} \nabla \varphi\cdot \nabla h_n \right|  \leq \frac{1}{c_d} \left\Vert \nabla \varphi \right\Vert_{L ^p} \left\Vert \nabla h_n \right\Vert_{L ^q}
\end{equation}
where we used \eqref{hh}, the assumption that $\varphi$ has compact support to justify the integration by parts and H\"older's inequality. This proves Item 1 since only the term $\left\Vert \nabla h_n \right\Vert_{L ^q}$ in the right-hand side of \eqref{eq:ipp marginals} depends on $\xbf$.

We now turn to Item 2, for which a little bit more algebra is required. We first note that, using the symmetry under particle exchange, 
\begin{align*}
\int_{\R ^{dk}} \mubf ^{(k)} \varphi &= \int_{\xbf \in \R ^{dn}} \mubf(\xbf) \frac{(n-k)!}{n!}\sum_{1\leq i_1 \neq \ldots \neq i_k \leq n} \varphi(x_{i_1},\ldots,x_{i_k}) d\xbf \\
&= \int_{\xbf \in \R ^{dn}} \mubf(\xbf) \frac{1}{n ^k}\sum_{1\leq i_1, \ldots, i_k \leq n} \varphi(x_{i_1},\ldots,x_{i_k}) d\xbf + \int_{\xbf \in \R ^{dn}} \mubf(\xbf) \tilde{\varphi} (\xbf) d\xbf
\end{align*}
where 
\begin{equation}\label{eq:norm phi tilde}
\left\Vert \tilde{\varphi} \right\Vert_{L ^{\infty}} \leq C\frac{k ^2}{n} \left\Vert \varphi \right\Vert_{L ^{\infty}}.  
\end{equation}
This is a classical combinatorial trick that seems to go back to~\cite{Gru} and can be found e.g. in~\cite[Proof of Lemma 8]{HM} or~\cite[Proof of Theorem~2.2]{Roucdf}. In fact we are here using the de Finetti-type theorem of~\cite{DF} in disguise. Then, in view of the condition on $q$ we have $p>d$ and thus the Sobolev embedding $W^{1,p} \hookrightarrow L ^{\infty}$, so that, in view of~\eqref{eq:norm phi tilde},  
\begin{multline}\label{eq:use deF}
 \left|\int_{\R ^{dk}} \mubf ^{(k)} \varphi - \int_{\xbf \in \R ^{dn}} \mubf(\xbf) \frac{1}{n ^k}\sum_{1\leq i_1, \ldots, i_k\leq n} \varphi(x_{i_1},\ldots,x_{i_k}) d\xbf \right| \leq
 \\ C \frac{k ^2}{n} \sup_{x_1 \in \R ^d} \ldots \sup_{x_{k-1} \in \R ^d }   \left\Vert \nabla \varphi (x_1,\ldots,x_{k-1}, \:. \: ) \right\Vert_{L ^p (\R ^d)}
\end{multline}
follows from the above.

We then use this and the triangular inequality to obtain 
\begin{align}\label{eq:calcul k marginals}
\left|\int_{\R ^{dk}} (\mubf ^{(k)} - \mu_0 ^{\otimes k}) \varphi \right| &\leq  n ^{-k} \left| \int_{\xbf\in \R ^{dn}} \mubf(\xbf) \left(\int_{\R ^{dk}} \Big(\sum_{1\leq i_1, \ldots, i_k \leq n} \delta_{x_{i_1}}\otimes \ldots \otimes \delta_{x_{i_k}} - (n\mu_0) ^{\otimes k}\Big) \varphi\right) d\xbf\right| \nonumber\\
& + C \frac{k ^2}{n} \sup_{x_1 \in \R ^d} \ldots \sup_{x_{k-1} \in \R ^d }   \left\Vert \nabla \varphi (x_1,\ldots,x_{k-1}, \:. \: ) \right\Vert_{L ^p (\R ^d)}
\end{align}
and note that  
\begin{multline*}
n ^{-k} \int_{\xbf\in \R ^{dn}} \mubf(\xbf) \left(\int_{\R ^{dk}} \Big(\sum_{1\leq i_1, \ldots, i_k \leq n} \delta_{x_{i_1}}\otimes \ldots \otimes \delta_{x_{i_k}} - (n\mu_0) ^{\otimes k}\Big) \varphi\right) d\xbf = 
\\ n ^{-k} \int_{\xbf\in \R ^{dn}} \mubf(\xbf) \left(\int_{\R ^{dk}} \Big(\Big(\sum_{i=1} ^n \delta_{x_i} \Big) ^{\otimes k} - (n\mu_0) ^{\otimes k} \Big) \varphi \right)d\xbf. 
\end{multline*}

We then write
\begin{multline*}
\Big(\sum_{i=1} ^n \delta_{x_i} \Big) ^{\otimes k} - (n\mu_0) ^{\otimes k} = \Big( \sum_{i=1} ^n \delta_{x_i} - n\mu_0 \Big)\otimes \Big( \sum_{i=1} ^n \delta_{x_i} \Big) ^{\otimes (k-1)} \\  + (n\mu_0) \otimes \Big( \Big(\sum_{i=1} ^n \delta_{x_i} \Big) ^{\otimes (k-1)} - (n\mu_0) ^{\otimes (k-1)} \Big)
\end{multline*}
from which
\begin{equation}\label{eq:algebra k marginal}
\Big(\sum_{i=1} ^n \delta_{x_i} \Big) ^{\otimes k} - (n\mu_0) ^{\otimes k} = \sum_{j=0} ^{k-1} (n\mu_0) ^{\otimes j} \otimes \Big( \sum_{i=1} ^n \delta_{x_i} - n\mu_0 \Big) \otimes \Big( \sum_{i=1} ^n \delta_{x_i} \Big) ^{\otimes (k-j-1)} 
\end{equation}
follows by induction. On the other hand, recalling that $\left\Vert n\mu_0 \right\Vert_{\rm TV}=\left\Vert \sum_{i=1} ^n \delta_{x_i} \right\Vert_{\rm TV} = n$ and using the symmetry of $\varphi$ we have
\begin{multline}\label{eq:final k marginal}
\left|\int_{\R ^{dk}} \varphi \sum_{j=0} ^{k-1} (n\mu_0) ^{\otimes j} \otimes \Big( \sum_{i=1} ^n \delta_{x_i} - n\mu_0 \Big) \otimes \Big( \sum_{i=1} ^n \delta_{x_i} \Big) ^{\otimes (k-j-1)} \right| \\
\leq k n ^{k-1} \sup_{x_1 \in \R ^d} \ldots \sup_{x_{k-1} \in \R ^d }   \left|\int_{\R ^d} \Big( \sum_{i=1} ^n \delta_{x_i} -n\mu_0 \Big)  \varphi (x_1,\ldots,x_{k-1}, \:. \: )\right| \\
\leq \frac{k n ^{k-1}}{c_d} \sup_{x_1 \in \R ^d} \ldots \sup_{x_{k-1} \in \R ^d }   \left\Vert \nabla \varphi (x_1,\ldots,x_{k-1}, \:. \: ) \right\Vert_{L ^p (\R ^d)} \left\Vert \nabla h_n \right\Vert_{L ^q (\R ^d)} 
\end{multline}
by definition of the total variation norm 
and an estimate exactly similar to \eqref{eq:ipp marginals}. Combining \eqref{eq:calcul k marginals}, \eqref{eq:algebra k marginal} and \eqref{eq:final k marginal} we obtain the desired result.
\end{proof}

\begin{proof}[Proof of Corollary \ref{thm:marginals}]
From \eqref{eq:field fluctu final} it is easy to deduce that for some large enough $\lambda$
\begin{align*}
\int_{\xbf \in \R ^{dn}} \left\Vert \nabla h_n \right\Vert_{L ^q} d\Q (\xbf) &= \int_{\left\{\left\Vert \nabla h_n \right\Vert_{L ^q} \leq \lambda n^{t_{q,d}} \right\} }  \left\Vert \nabla h_n \right\Vert_{L ^q} d \Q(\xbf) \\ &+ \int_{\left\{\left\Vert \nabla h_n \right\Vert_{L ^q} \geq \lambda n^{t_{q,d}} \right\} }  \left\Vert \nabla h_n \right\Vert_{L ^q} d\Q(\xbf)
\\&\leq \lambda n^{t_{q,d}} \left( 1+ o(1)\right).
\end{align*}
Indeed, the second term is negligible because of \eqref{eq:field fluctu final} and for the first one we only use the fact that $\Q$ is a probability. The results of Corollary \ref{thm:marginals} then follow from Lemma \ref{lem:empirical marginal}.\end{proof}

\appendix

\section{Existence of a minimizer for $\W$: direct proof}\label{sec:appendix}

We start by proving that there exists a minimizer for $\W_\eta$. Picking a minimizing sequence $\j_{n}$ of admissible electric fields, a standard concentration argument yields a sequence $R_n$ of radii such that 
\begin{equation}\label{eq:ap 1}
\lim_{n \to \infty} \W_\eta (\j_n) = \lim_{n \to \infty} \limsup_{R \to \infty}\dashint_{K_R} |\j_{n,\eta}| ^2- \kappa_d \g (\eta) = \lim_{n\to \infty}  \dashint_{K_{R_n}} |\j_{n,\eta}| ^2- \kappa_d \g (\eta).
\end{equation}
Next we write
\begin{equation}\label{eq:ap 2}
\lim_{n\to \infty}  \dashint_{K_{R_n}} |\j_{n,\eta}| ^2 = \lim_{n\to \infty}  |K_{R_n}| ^{-1} \int_{\R ^d} \chi*\indic_{K_{R_n}} |\j_{n,\eta}| ^2 
\end{equation}
for a fixed compactly supported $\chi$ and introduce 
\[
 \mathbf{f}_n (\j) := \begin{cases}
                       \int \chi(y) |\j | ^2 \mbox{ if } \exists \, \lambda \in K_{R_n} \,:\, \j = \theta_{\lambda} (\j_{n,\eta})\\
                       +\infty \mbox{ otherwise }
                      \end{cases}
\]
where $\theta_\lambda$ is the action of the translation by $\lambda$. Then
\[
\lim_{n\to \infty}  \dashint_{K_{R_n}} |\j_{n,\eta}| ^2 =  \lim_{n\to \infty} \dashint_{K_{R_n}} \mathbf{f}_n (\theta_\lambda (\j_{n,\eta}) ) d\lambda
\]
and we may apply the framework of \cite[Theorem 7]{ss2d}, or even the simpler version in \cite[Theorem 3]{gl13}. We use the latter. Arguing as in the proof of Lemma \ref{lemprel}, Assumptions 1 and 2 about the functional $\mathbf{f}_n$ simply follow from compactness in $L^2_{loc}$ and lower semi-continuity. Indeed, if $\j_n \wto \j$ in $L ^2_{loc} (\R ^d, \R ^d)$ we have 
\[
 \liminf_{n\to \infty} \mathbf{f}_n (\j_n) \geq \mathbf{f} (\j) 
\]
where 
\[
\mathbf{f} (\j) := \begin{cases}
                    \int \chi(y) |\j | ^2 \mbox{ if } \exists \, \j' \,:\, \j = \Phi_\eta (\j')\\
                    +\infty \mbox{ otherwise.}
                   \end{cases}
\]
Applying \cite[Theorem 3]{gl13} we deduce
\[
\lim_{n\to \infty}  \dashint_{K_{R_n}} |\j_{n,\eta}| ^2 =  \liminf_{n\to \infty} \dashint_{K_{R_n}} \mathbf{f}_n (\theta_\lambda (\j_{n,\eta}) ) d\lambda \geq \int  \left(\lim_{R\to \infty} \dashint_{K_R} \mathbf{f} (\theta_\lambda \j)d\lambda \right) dP_\eta(\j).
\]
where $P_\eta$ is a probability measure on $L^2 _{loc} (\R ^d, \R ^d)$. Recalling \eqref{eq:ap 1}  we thus have
\begin{equation}\label{eq:ap 3}
\inf_{\bai} \W_\eta = \lim_{n \to \infty} \W_\eta (\j_n) \geq \lim_{R\to \infty}  \frac{1}{|K_R|}  \chi * \indic_{K_R} |\j|^2 dP_\eta(\j) - (\kappa_d \g(\eta) +\gamma_2 \indic_{d=2}).
\end{equation}
 We denote $P$ the push-forward of $P_\eta$  by $\Phi_\eta ^{-1}$, i.e. $dP (\Phi_\eta(\j)) = dP_\eta (\j)$. Recalling that $P_\eta$ is defined using \cite[Theorem 3]{gl13}, we can argue as in the proof of Proposition \ref{promino} to prove that $P$ does not depend on $\eta$. We then rewrite \eqref{eq:ap 3} as  
\begin{equation}\label{eq:ap 4}
\inf_{\bai} \W_\eta \geq \int  \W_\eta (\j) dP (\j) 
\end{equation}
and since $\W_\eta$ is bounded below independently of $\eta$ by Proposition \ref{pro:Wbb} we can use Fatou's lemma to pass to the $\liminf$ as $\eta \to 0$:
\[
\inf_{\bai} \W = \liminf_{\eta \to 0} \inf_{\bai} \W_\eta  \geq \int  \left( \liminf_{\eta \to 0} \W_\eta (\j) \right) dP (\j) =\int  \W (\j) dP (\j);
\]
from which it follows that $P$ is concentrated on the set of minimizers of $\W$, which has to be nonempty.

\section{Comparison of $W$ and $\W$ and the periodic case}\label{sec:appendix 2}

In this appendix we discuss the relation between the two version of the renormalized energy functionals. In the case of periodic configurations, both co\"incide with a simplified expression dependind only on the points. 

Lemma \ref{lemequiv} gives a convenient way to bound from below the energy of well-separated charge configurations. We can also use it to compare $W$ and $\W$ when $d=2$:

\begin{pro}[\textbf{$W$ and $\W$ coincide in 2D for well-separated points}]\label{propequiv}\mbox{}\\
Assume $d=2$, and let $\j \in \mathcal{A}_1$ be such that $\W(\j) <+\infty$ and the associated set of points satisfies $\min_{p\neq p' \in \Lambda } |p-p'| \ge \eta_0>0$ for some $\eta_0>0$. Then 
$\W(\j)= W(\j)$.
\end{pro}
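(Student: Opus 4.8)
The strategy is to extract $\W(\j)=W(\j)$ directly from Lemma~\ref{lemequiv}, which already computes the smeared energy on a bounded region in terms of $W(\j,\cdot)$ up to an error of size $o_\eta(1)$ per point. First I would fix $\eta<\eta_0/2$, so that the balls $B(p,\eta)$, $p\in\Lambda$, are pairwise disjoint, and apply \eqref{lequi} on large cubes. Since $\p K_R$ will in general cut through some of these balls, instead of $K_R$ I would work on the slightly reduced domain $W_R:=K_R\setminus\bigcup\{B(p,\eta):B(p,\eta)\cap\p K_R\neq\varnothing\}$, which has the property that each $B(p,\eta)$ is either contained in $W_R$ or disjoint from it. The proof of \eqref{lequi} only uses that each $B(p,\eta)$ lies in $U$ and that the $B(p,\eta)$ are disjoint, so it applies on $U=W_R$ with $a\equiv 1$, giving
\begin{equation*}
\int_{W_R}|\j_\eta|^2-\#(\Lambda\cap W_R)\bigl(\kappa_2\g(\eta)+\gamma_2\bigr)=W(\j,\indic_{W_R})+\#(\Lambda\cap W_R)\,o_\eta(1).
\end{equation*}
Next I would let $R\to\infty$. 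Using the separation bound $\eta_0$, the symmetric difference $W_R\triangle K_R$ sits in a boundary shell of width $2\eta$ and meets at most $O_\eta(R^{d-1})$ points, so $\#(\Lambda\cap W_R)=\#(\Lambda\cap K_R)+o(R^d)$, which by Lemma~\ref{lembornnu} (applicable since $\W(\j)<\infty$ forces $\W_{\eta'}(\j)<\infty$ for some $\eta'$, whence the density of $\Lambda$ is $1$) equals $|K_R|(1+o(1))$.

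I would then select $R$ along a subsequence for which the $\j_\eta$-energy and the renormalized energy $W(\j,\cdot)$ carried by a bounded-width shell around $\p K_R$ are both $o(R^d)$; this is granted by mean-value arguments, the separation $\eta_0$ controlling in particular the sum $\sum_{p:\dist(p,\p K_R)<\eta}\int_{B(p,\eta)\setminus K_R}|\j|^2$, which is the only place where individual points close to $\p K_R$ could contribute a large term. Along such radii one may replace $W_R$ by $K_R$ in the identity above, then $\indic_{K_R}$ by the smooth cutoff $\chi_{K_R}$ of Definition~\ref{def:renorm SS}, both at the cost of $o(R^d)$, using the linearity of $\chi\mapsto W(\j,\chi)$. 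Dividing by $|K_R|$ and taking $\limsup_{R\to\infty}$ then yields
\begin{equation*}
\W_\eta(\j)=\limsup_{R\to\infty}\Bigl(\dashint_{K_R}|\j_\eta|^2-\kappa_2\g(\eta)-\gamma_2\Bigr)=W(\j)+o_\eta(1).
\end{equation*}
Taking $\liminf_{\eta\to 0}$ gives $\W(\j)=\liminf_{\eta\to 0}\W_\eta(\j)=W(\j)$, which is the claim; incidentally $\W_\eta(\j)$ then has a genuine limit as $\eta\to 0$, and since $W(\j)$ is independent of the smearing profile $\ro$ by Newton's theorem, so is $\W(\j)$.

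The algebraic input from Lemma~\ref{lemequiv} and the mean-value selection of good radii are routine. The real obstacle is the interchange of the limits $\eta\to 0$ and $R\to\infty$ at the level of a fixed configuration: Lemma~\ref{lemequiv} controls $\int_{K_R}|\j_\eta|^2$ only for fixed $\eta$, and one must show that the boundary layer near $\p K_R$ — where smeared charges are truncated and where points of $\Lambda$ can lie arbitrarily close to $\p K_R$, making individual contributions $\int_{B(p,\eta)\setminus K_R}|\j|^2$ large — costs only $o(R^d)$ after summation and along a well-chosen sequence $R\to\infty$. This is precisely what upgrades the fixed-$\eta$ identity into a comparison of $\W_\eta$ with $W$ that is uniform enough in $\eta$ to pass to the limit $\eta\to 0$.
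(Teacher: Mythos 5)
Your overall plan matches the paper's: apply Lemma~\ref{lemequiv} on a domain adapted so that the smeared balls do not straddle the boundary, use Lemma~\ref{lembornnu} to control the point count, and compare $W(\j,\indic_{K_R})$ with $W(\j,\chi_{K_R})$. But the crux of the proof is precisely the step you dispatch in one clause — ``replace $\indic_{K_R}$ by the smooth cutoff $\chi_{K_R}$\ldots at the cost of $o(R^d)$, using the linearity of $\chi\mapsto W(\j,\chi)$'' — and the ``mean-value argument'' you invoke for the shell does not apply here as stated, because the renormalized density defining $W(\j,\cdot)$ is \emph{not} bounded below. Mean-value selection of good radii works for a nonnegative integrand like $|\j_\eta|^2$; it does not by itself control $W(\j,\indic_{U_R\setminus U_{R-1}})$ or $W(\j,(\indic_{K_R}-\chi_{K_R}))$, since each point near $\partial K_R$ contributes a term of the form $\lim_{\eta\to 0}\int_{B(p,r)\setminus B(p,\eta)}|\j|^2 - c_d\g(\eta)$ which a priori can be a large negative number.

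What the paper actually does at this point (Appendix~\ref{sec:appendix 2}): it first proves the pointwise lower bound $\lim_{\eta\to 0}\int_{B(p,r)\setminus B(p,\eta)}|\j|^2 - c_d\g(\eta)\ge -c_d\g(r)-C$ via~\eqref{minfacile}, infers $W(\j,\indic_{U_R\setminus U_{R-1}})=o(R^d)$ from the telescoping of $W(\j,\indic_{U_R})/|K_R|\to\W(\j)$, and then handles the third term in~\eqref{decm1}, the one involving $|\chi_{K_R}-\chi_{K_R}(p)|$, by an integration by parts in $r$ whose boundedness hinges on $\int_0^{\eta_0/2}\g(r)\,dr<\infty$, i.e. the integrability of $-\log r$ near the origin. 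This is flagged explicitly as ``the one point where we use that $d=2$.'' Your write-up never uses $d=2$ at all — a strong sign that something essential is missing, since the statement is specifically two-dimensional and the paper remarks that the analogous comparison is unclear (perhaps false) for $d\ge 3$. You also do not explain how to pass from the identity for $W(\j,\indic_{W_R})$ (an indicator, so the point term is evaluated at $0$ or $1$) to the Lipschitz cutoff $\chi_{K_R}$: ``linearity'' does give $W(\j,\indic_{K_R})-W(\j,\chi_{K_R})=W(\j,\indic_{K_R}-\chi_{K_R})$, but bounding this quantity is exactly the missing estimate, and it does not reduce to a mean-value choice of $R$.
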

\begin{proof}
Without loss of generality, we may assume that $\eta_0<\hal$.  By well-separation of the points, for any $R>1$ we may find  a set $U_R$  such that $K_R \subset U_R \subset K_{R+1},$    and \eqref{wellsep} is verified in $U_R$ with $\eta_0/2$.
We may then apply Lemmas \ref{lemequiv}  in $U_R$ and obtain
\begin{equation}\label{fed0}
\int_{U_R} |\j_\eta|^2 - \#(\Lambda\cap U_R)( \kappa_d \g(\eta) + \gamma_2 \indic_{d=2}) = W(\j, \indic_{U_R}) +  o_\eta(1)\# (\Lambda \cap U_R) .\end{equation}
with $C$ depending only on $\eta_0$ and $d$.
Next, let $\chi_{K_R}$ be as in Definition \ref{def:renorm SS}. We have 
\begin{equation}\label{fed1}W(\j,\chi_{K_R})- W(\j, \indic_{U_R}) = W( \j,\chi_{K_R} \indic_{U_R\backslash U_{R-1} }) - W(\j, \indic_{U_R\backslash U_{R-1}}).\end{equation} 
Note that by construction, the $B(p, \eta_0/2)$ do not intersect $\p U_R$ and $\p U_{R-1}$. We may next write, by definition of $W$, for any $0<r\le\eta_0/2$,
\begin{multline*}W(\j,  \indic_{U_R\backslash U_{R-1}})= 
\int_{(U_R\backslash U_{R-1} )\backslash \cup_p B(p,r) } |\j|^2 +
\sum_{p \in \Lambda \cap ( U_R\backslash U_{R-1}   ) } 
 \left(  \lim_{\eta \to 0} \int_{B(p,r)  \backslash  B(p,\eta) } |\j|^2 - c_d \g(\eta)\right), 
 \end{multline*}
and using \eqref{minfacile} (applied to $r$ instead of $\eta_0/2$)  we have 
$$\lim_{\eta \to 0} \int_{B(p,r)  \backslash  B(p,\eta) } |\j|^2 - c_d \g(\eta)\ge - c_d \g (r) - C$$
for each $p$ (where $C$ depends only on $d$). It follows that we may write 
\begin{multline}\label{decm0}
\int_{(U_R\backslash U_{R-1} )\backslash \cup_p B(p,r) } |\j|^2 +
\sum_{p \in \Lambda \cap ( U_R\backslash U_{R-1}   ) } 
 \left|  \lim_{\eta \to 0} \int_{B(p,r)  \backslash  B(p,\eta) } |\j|^2 - c_d \g(\eta)\right|\\
 \le W(\j,  \indic_{U_R \backslash U_{R-1}   })+ ( c_d \g(r)+C ) \# (\Lambda\cap (U_R\backslash U_{R-1})).
 \end{multline}
Similarly as above, we have
\begin{multline}\label{decm1}\left|
W(\j, \indic_{U_R\backslash U_{R-1}})-
W(\j, \chi_{K_R}\indic_{U_R\backslash U_{R-1}  })  \right|\le
\int_{(U_R\backslash U_{R-1} )\backslash \cup_p B(p,\eta_0/2) }(1-\chi_{K_R}) |\j|^2 \\ +
\sum_{p \in \Lambda \cap ( U_R\backslash U_{R-1}   ) } 
(1-\chi_{K_R}(p)) \left(  \lim_{\eta \to 0} \int_{B(p,\eta_0/2)  \backslash  B(p,\eta) } |\j|^2 - c_d \g(\eta)\right)\\
+ \sum_{p \in \Lambda \cap ( U_R\backslash U_{R-1} )}  \lim_{\eta \to 0} \int_{B(p,\eta_0/2)  \backslash  B(p,\eta) } |\chi_{K_R}- \chi_{K_R}(p)||\j|^2\end{multline}
In view of \eqref{decm0} applied with $r=\eta_0/2$ we can bound the first two terms on the right-hand side by 
$W(\j,  \indic_{U_R\backslash U_{R-1}})+ C \# (\Lambda\cap (U_R\backslash U_{R-1}))$, where $C$ depends only on $d$ and $\eta_0$.
We turn to the last term. Let us set 
$$\phi(r)= \int_{(U_R\backslash U_{R-1} )\backslash \cup_p B(p,r) } |\j|^2 .$$
Since $\chi_{K_R}$ is Lipschitz, we may write
\begin{multline*}
\sum_{p \in \Lambda \cap ( U_R\backslash U_{R-1} )}  \lim_{\eta \to 0} \int_{B(p,\eta_0/2)  \backslash  B(p,\eta) } |\chi_{K_R}- \chi_{K_R}(p)||\j|^2\\
\le C \lim_{\eta \to 0}  \sum_{p \in \Lambda \cap ( U_R\backslash U_{R-1} )}  \int_{B(p,\eta_0/2)  \backslash  B(p,\eta) } |x-p||\j|^2= - C \lim_{\eta\to 0} \int_{\eta}^{\eta_0/2} r\phi'(r)\, dr\\
= - C\lim_{\eta\to 0} \left( \hal \eta_0\phi(\eta_0/2) - \eta\phi(\eta) + \int_{\eta}^{\eta_0/2}\phi(r)\, dr \right).
\end{multline*}
Using \eqref{decm0} to bound $\phi(r)$, and using the integrability of $w(r) = - \log r$  near $0$ (this is the one point where we use that  $d=2$), we deduce that the third term in \eqref{decm1} can also  be bounded by  $CW(\j,  \indic_{U_R\backslash U_{R-1}})+ C \# (\Lambda\cap (U_R\backslash U_{R-1}))$.
Combining with \eqref{fed0} and \eqref{fed1} we are led to 
\begin{multline}\label{fed3}
\left|\int_{U_R} |\j_\eta|^2 - \#(\Lambda\cap U_R)( \kappa_d \g(\eta)+\gamma_2 \indic_{d=2}) - W(\j, \chi_{K_R}) \right|\\ \le   o_\eta(1)\# (\Lambda \cap U_R) + CW(\j,  \indic_{U_R\backslash U_{R-1}})+ C \# (\Lambda\cap (U_R\backslash U_{R-1})).\end{multline}

On the other hand, since $\W(\j) <\infty$, Lemma \ref{lembornnu} applies and gives that 
$\lim_{R\to \infty} \frac{1}{|K_R|} \# (\Lambda \cap U_R)=1$ and $\#(\Lambda\cap (U_R\backslash U_{R-1}))=o(R^d)$ as $R \to \infty$.  It also implies, dividing \eqref{fed0} by $|K_R|$ and letting $ R \to \infty$ and then $\eta\to 0$, that $\lim_{R\to \infty} \frac{1}{|K_R|} W(\j, \indic_{U_R}) = \W(\j)$. This in turns implies that 
$W(\j, \indic_{U_R\backslash U_{R-1}})=o(R^d)$. Inserting these into  \eqref{fed3}, dividing by $|K_R|$ and letting $R \to \infty$ and then $\eta \to 0$, we obtain $\W(\j)- W(\j)=0$.
\end{proof}

\begin{remark} The fact that Lemma \ref{lemequiv} holds in any dimension $d \ge 2$ but Proposition \ref{propequiv} only for $d=2$ seems to indicate that the cutoff procedure with $\chi_{K_R}$ is probably not adapted for dimension $d \ge 3$ and the value of $W$ defined in \eqref{WWold}--\eqref{Wold} may depend on the choice of cutoff  $\chi_{K_R}$, contrarily to what happens for $d=2$ as proven in \cite{gl13}. This has no consequence for us however as we will never need the result of  Proposition \ref{propequiv}.
\end{remark}

We now turn to consequences of the previous result: when configurations are periodic with simple points, then these are automatically well-separated, and we can deduce an explicit expression for $\W$ in terms of the points only. 

\begin{pro}[\textbf{The energy of periodic configurations}]\label{pro:periodic config}\mbox{}\\
Let $a_1,\dots a_N$ be  distinct points on a flat torus $\mathbb{T}$ of volume $N$. Let $h$  be the mean-zero $\mathbb{T}$-periodic  function
satisfying
$$-\Delta h= c_d \Big( \sum_{i=1}^N \delta_{a_i} -1\Big) \quad \text{in} \ \mathbb{T}.$$
Then \begin{equation}\label{formuleper}
\W(\nab h) = c_d^2  \frac{1}{N}\sum_{i\neq j} G(a_i-a_j) + c_d^2 R\end{equation}
where $G$ is the Green's function of the torus, i.e. solves
\begin{equation}
\label{greentorus}
- \Delta G= \delta_0- \frac{1}{N} \quad \text{in} \ \mathbb{T}\end{equation}
with $\dashint_{\mathbb{T}} G (x)\, dx=0$ and $R$ is a constant, equal to $\lim_{x\to 0} \left(G(x)- c_d^{-1}\g(x)\right).$
\end{pro}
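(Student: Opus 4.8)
The plan is to compute $\W(\nabla h)$ directly from Definition \ref{def:renorm ener} by exploiting the periodicity of the configuration. Since the points $a_1,\dots,a_N$ are distinct on the torus $\mathbb{T}$ and the configuration is periodic, they are automatically well-separated: there is some $\eta_0>0$ such that $\min_{i\ne j}|a_i-a_j|\ge\eta_0$ (taking the minimum distance over the torus and using periodicity to get the same bound on all of $\mathbb{R}^d$). Moreover, by Lemma \ref{lembornnu}, or more simply by the explicit density $1$ of the periodic configuration, the limit defining $\W_\eta$ exists as a true limit and equals the average over one fundamental domain. So first I would reduce the $\limsup_{R\to\infty}$ in \eqref{We} to $\frac{1}{|\mathbb{T}|}\int_{\mathbb{T}}|\j_\eta|^2$, using that $|\j_\eta|^2$ is $\mathbb{T}$-periodic; here $\j_\eta=\Phi_\eta(\nabla h)=\nabla h + \sum_{p}\nabla f_\eta(\cdot-p)$ in the periodic sense.

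Next I would compute $\int_{\mathbb{T}}|\j_\eta|^2$ for fixed $\eta<\eta_0/2$ by a variant of the computation in Lemma \ref{lemequiv}. Concretely, write $h = c_d \sum_{i=1}^N G(\cdot - a_i)$ (this is the unique mean-zero periodic solution, by \eqref{greentorus} and linearity), and split $\int_{\mathbb{T}}|\j_\eta|^2 = \int_{\mathbb{T}\setminus\cup_i B(a_i,\eta)}|\nabla h|^2 + (\text{terms supported in the balls }B(a_i,\eta))$, using that $f_\eta\equiv 0$ outside $B(0,\eta)$ and the balls are disjoint. The term $\int_{\mathbb{T}\setminus\cup_i B(a_i,\eta)}|\nabla h|^2$ is handled by Green's formula: integrating by parts, it equals $c_d\int_{\mathbb{T}\setminus\cup B(a_i,\eta)} h\,(1) - \sum_i \int_{\partial B(a_i,\eta)} h\,\partial_\nu h$, and as $\eta\to 0$ the boundary terms produce $c_d \g(\eta)$ per point plus the regularized self-interaction, exactly as in \eqref{partief0}--\eqref{partief}. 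The contribution inside each ball $B(a_i,\eta)$, coming from $\nabla f_\eta$ and its cross-term with $\nabla h$, is computed precisely as in the proof of Lemma \ref{lemequiv}: it yields $\kappa_d\g(\eta)+\gamma_2\indic_{d=2}$ per point plus a term $\to 0$. Collecting everything and using $h = c_d\sum_j G(\cdot-a_j)$, the diagonal self-energies of $G$ near $0$ get renormalized into the constant $R=\lim_{x\to 0}(G(x)-c_d^{-1}\g(x))$ (this is where $R$ enters), and the off-diagonal terms give $c_d^2\sum_{i\ne j}G(a_i-a_j)$.

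After dividing by $|\mathbb{T}|=N$ and subtracting $1\cdot(\kappa_d\g(\eta)+\gamma_2\indic_{d=2})$ (the density $m=1$ times the per-point renormalization), the $\g(\eta)$-divergences cancel exactly — this is the whole point of the renormalization — and one is left, after letting $\eta\to 0$, with $\W(\nabla h)=\frac{c_d^2}{N}\sum_{i\ne j}G(a_i-a_j)+c_d^2 R$, which is \eqref{formuleper}. Alternatively, and perhaps more cleanly, one can avoid redoing the ball computation by invoking Proposition \ref{propequiv} (valid here since $d=2$ in that proposition's setting, but for $d\ge 3$ one uses Lemma \ref{lemequiv} directly together with $\W=\frac{1}{|\mathbb{T}|}\lim W(\nabla h,\indic_{U_R})$) to identify $\W(\nabla h)$ with $\frac{1}{N}W(\nabla h,\indic_{\mathbb{T}})$, and then evaluate $W(\nabla h,\indic_{\mathbb{T}})$ from its definition \eqref{Wold} as the $\eta\to 0$ limit of $\int_{\mathbb{T}\setminus\cup B(a_i,\eta)}|\nabla h|^2 - c_d\g(\eta)N$, which is a pure Green's-function computation.

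The main obstacle, I expect, is purely bookkeeping: tracking the precise constants $\kappa_d$, $\gamma_2$, and $R$ through the Green's-formula computation, and making sure the self-interaction term $c_d^2 D(\delta_0^{(\eta)},\delta_0^{(\eta)})$ combines correctly with the regular part of $G$ at the origin to produce exactly $c_d^2 R$ (rather than $c_d^2 R$ plus some leftover $\eta$-dependent constant). The definitions in \eqref{kapd} and the scaling relation \eqref{chvard} are designed so that everything matches; verifying this requires care but no new ideas. There are no analytic difficulties here — periodicity makes all limits genuine limits and all integrals finite away from the points — so the proof is essentially a careful rerun of Lemma \ref{lemequiv} specialized to the periodic lattice.
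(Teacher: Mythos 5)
Your proposal is correct and your "alternative, more cleanly" route is precisely the paper's proof: apply Lemma \ref{lemequiv} (not Proposition \ref{propequiv}, which is $d=2$ only) together with periodicity to identify $\W(\nabla h)=\frac{1}{|\mathbb{T}|}W(\nabla h,\indic_{\mathbb{T}})$, then evaluate $W(\nabla h,\indic_{\mathbb{T}})$ by Green's formula using $h=c_d\sum_j G(\cdot-a_j)$ and the decomposition $G(x)=c_d^{-1}\g(x)+R(x)$ with $R$ regular. Your primary route (directly computing $\int_{\mathbb{T}}|\j_\eta|^2$ and subtracting the per-point renormalization) is just an unwound version of the same argument, since that computation is exactly the content of Lemma \ref{lemequiv} applied on the torus; both collapse to the same periodic-Green's-function bookkeeping.
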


This result should be compared with~\cite[Lemma 1.3]{gl13}.

\begin{proof} In view of Lemma \ref{lemequiv} and the periodicity of $h$,  we easily check that $\W (\nab h)= \frac{1}{|\mathbb{T}|}W(\nab h, \indic_{\mathbb{T}})$. This is a renormalized energy computation \`a la \cite{BBH}: first, using Green's formula and the equation satisfied by $h$, we compute 
\begin{equation}\label{compper}
\int_{\mathbb{T} \backslash \cup_{i=1}^N B(a_i, \eta)} |\nab h|^2 = -  \sum_{i=1}^N 
\int_{\p B(a_i, \eta)} h \nab h \cdot \vec{\nu}- c_d \int_{\mathbb{T} \backslash \cup_{i=1}^N B(a_i, \eta)}  h ,\end{equation}with $\vec{\nu}$ the outer unit normal.
The second term tends to $0$ as $\eta \to 0$ since $h$ has mean zero.
For the first term we note that  $h= c_d \sum_{i=1}^N G(x-a_i)$, and that  $G(x)=c_d^{-1}\g(x)  +R(x)$ with $R$ a $C^1$ function, and  insert this to find 
\begin{multline*}\int_{\p B(a_i, \eta)} h \nab h \cdot \vec{\nu}=    \Big(  \g(\eta)  + c_d \sum_{j\neq i} G(a_i-a_j)\Big)\int_{\p B(a_i, \eta)} \nab h\cdot \vec{\nu} \\ + \int_{\p B(a_i, \eta)} (c_d R(x-a_i)+f(x-a_i))  \nab h\cdot \vec{\nu}\end{multline*}where $f$ is a $C^1$ function equal to $0$ at $0$. We then use that, by Green's theorem,  
$$\int_{\p B(a_i, \eta)} \nab h\cdot \vec{\nu} =  \int_{B(a_i, \eta)}\Delta h =- c_d + o_\eta(1),$$
and that $|\nab h|(x)\le  C|\nab \g|(x-a_i)+C \le C |x-a_i|^{1-d}$ to conclude that 
$$\lim_{\eta\to 0}  \int_{\p B(a_i, \eta)} - h \nab h \cdot \vec{\nu} - c_d \g(\eta)  = c_d^2\sum_{j\neq i} G(a_i-a_j) + c_d^2 R(0).$$ Inserting into \eqref{compper}, in view of the definition of $W(\nab h, \indic_{\mathbb{T}})$,  we get the result.
\end{proof}

If there is a multiple point in a periodic configuration, it is easy to see that both $\W$ and $W$ are $+\infty$ for this configuration, e.g. as a limit case of the above result. Both ways of computing the renormalized energy are thus perfectly equivalent for any periodic configuration.

Configurations that form a simple lattice correspond to this situation but with only one point in the torus,  in that case
 $\W$ is thus equal to $c_d^2 R=\lim_{x\to 0} \left(c_d^2G(x)-c_d \g(x)\right).$
In addition, we may compute explicitly the Green's function of the torus, using Fourier series.
Using the normalization of the Fourier transform
$$\hat{f}(y)= \int_{\mr^d} f(x) e^{-2i\pi x\cdot y}\, dx$$
 we have
 $$\hat{G}(y)= \frac{1}{4\pi^2 |y|^2} \sum_{p \in \Lambda^*\backslash\{0\}}  \delta_p(y)$$where $\Lambda^* $ is the dual lattice of $\Lambda $ i.e.  the set of $q$'s such that $p\cdot q \in \mathbb{Z} $ for every $p\in \Lambda$.
 By Fourier inversion formula we obtain the expression of $G$ in Fourier series
\begin{equation}
\label{G}G(x)=\sum_{p \in \Lambda^* \backslash\{0\}} \frac{e^{2i\pi p \cdot x}}{4\pi^2 |p|^2}.
\end{equation}
Thus, we obtain that 
$$\W(\Lambda) := \W(\nab h)= c_d^2 \lim_{x\to 0} \sum_{p \in \Lambda^* \backslash\{0\}} \frac{e^{2i\pi p \cdot x}}{4\pi^2 |p|^2}- \frac{ \g(x)}{c_d},$$
where $\Lambda$ denotes the lattice.

The series that appears here is an Eisenstein series $E_\Lambda(x)= \sum_{p \in \Lambda^* \backslash\{0\}} \frac{e^{2i\pi p \cdot x}}{4\pi^2 |p|^2}.$
In dimension $2$, using the ``first Kronecker limit formula" (see also \cite{gl13} for an analytic proof), one can  show that this series is related to   the Epstein Zeta function of the lattice $\Lambda^*$ defined by 
$$\zeta_{\Lambda}(s)= \sum_{p \in \Lambda\backslash \{0\}} \frac{1}{|p|^{2+s}},$$ more precisely that 
\begin{equation}\label{Ez}
 \lim_{x\to 0} E_{\Lambda_1}(x) - E_{\Lambda_2}(x)= C \lim_{s>0, s\to 0} \zeta_{\Lambda_1^*}(s) - \zeta_{\Lambda_2^*}(s).\end{equation}
Thus minimizing $W$ over lattices reduces to minimizing the Zeta  function over lattices $\Lambda$ of volume $1$, and this question was solved by Cassels, Rankin, Ennola, Diananda, in the 60's (see a self-contained proof in \cite{montgomery}) in dimensions $2$, where the unique minimizer is the triangular lattice  (the minimizer is  also identified in dimensions $8$ and $24$). Minimizing the Zeta function over lattices  remains an open question in dimension $\ge 3$, even though the face-centered cubic (FCC) lattice is conjectured to be a local minimizer (cf. \cite{sarns}). In dimension $d\ge 3$ however,  \eqref{Ez} is not proven and it is not clear how to give it a meaning since  the series involved are all divergent (one would at least need to use the meromorphic extension of the Zeta function), see e.g. \cite{lang,siegel}.

\bigskip

\noindent Nicolas Rougerie:
\newline {\tt nicolas.rougerie@grenoble.cnrs.fr} \newline
Sylvia Serfaty:
\newline {\tt serfaty@ann.jussieu.fr}

\end{document}